\newcommand{\kibitz}[2]{\ifnum\Comments=1{\color{#1}{#2}}\fi}
\newcommand{\kibitzAdd}[2]{\ifnum\AddComments=1{\color{#1}{#2}}\fi}
\newcommand{\vsq}[1]{\ifnum\vsqCounter=1{\vspace{#1}}\fi}
\newcommand{\driverthickness}{thick}
\newcommand{\driverOneColor}{violet}
\newcommand{\driverTwoColor}{orange}
\newcommand{\driverThreeColor}{lightblue}
\newcommand{\driverOneTextColor}{violet}
\newcommand{\driverTwoTextColor}{orange}
\newcommand{\driverThreeTextColor}{lightblue}
\newcommand{\driverOneRiderColor}{violet}
\newcommand{\driverTwoRiderColor}{orange}
\newcommand{\driverThreeRiderColor}{lightblue}
\newcommand{\driverOneLgdColor}{violet}
\newcommand{\driverTwoLgdColor}{orange}
\newcommand{\driverThreeLgdColor}{lightblue}
\colorlet{darkblue}{blue!40!black}
\definecolor{auburn}{rgb}{0.43, 0.21, 0.1}
\definecolor{orange}{rgb}{1, 0.5, 0}
\definecolor{lightblue}{rgb}{0.1176, 0.5647, 1}
\definecolor{matlabblue}{rgb}{0    0.4470    0.7410}
\definecolor{matlabred}{rgb}{0.8500    0.3250    0.0980}
\definecolor{matlabyellow}{rgb}{ 0.9290    0.6940    0.1250}
\definecolor{matlabpurple}{rgb}{0.4940    0.1840    0.5560}
\theoremstyle{plain}
\newtheorem{theorem}{Theorem}
\theoremstyle{definition}
\newtheorem{definition}{Definition} 
\newtheorem{example}{Example}
\newtheorem{lemma}{Lemma}
\newcommand{\horizon}{T}					%
\newcommand{\timeSet}{[\horizon]}
\newcommand{\activeTimeSet}{[\horizon - 1]}
\newcommand{\loc}{\calL}					%
\newcommand{\dist}{\delta}					%
\newcommand{\trips}{\mathcal{T}} 			%
\newcommand{\driverSet}{\mathcal{D}}		%
\newcommand{\nd}{m}						%
\newcommand{\driverEntrance}{\beta}				%
\newcommand{\te}{\underline{\tau}}			%
\newcommand{\tl}{\bar{\tau}}				%
\newcommand{\re}{\ell}						%
\newcommand{\cost}{c}
\newcommand{\pathCost}{\lambda} 			
\newcommand{\exitCost}{\kappa}
\newcommand{\riders}{\mathcal{R}}		%
\newcommand{\origin}{o}					%
\newcommand{\dest}{d}					%
\newcommand{\tr}{\tau}					%
\newcommand{\val}{v}					%
\newcommand{\surplus}{w}				%
\newcommand{\pathSet}{\mathcal{Z}}		%
\renewcommand{\path}{Z}					%
\newcommand{\actPathSet}{\tilde{\pathSet}}
\newcommand{\actpath}{\tilde{z}}
\newcommand{\price}{p}					%
\newcommand{\rPayment}{q}			%
\newcommand{\dPayment}{r}			%
\newcommand{\state}{s}					%
\newcommand{\history}{h}				%
\newcommand{\historySet}{\mathcal{H}}	%
\newcommand{\action}{\alpha}			%
\newcommand{\actionSet}{\mathcal{A}}	%
\newcommand{\strategy}{\sigma} 		%
\newcommand{\actualPayment}{\hat{\dPayment}}
\newcommand{\actualDriverUtil}{\hat{\pi}}
\newcommand{\actualRiderPayment}{\hat{\rPayment}}
\newcommand{\actualAssignment}{\hat{x}}
\newcommand{\econ}{E}
\newcommand{\sw}{W}						%
\newcommand{\node}{n}					%
\newcommand{\edgeCost}{\gamma}			%
\newcommand{\edge}{e}					%
\newcommand{\graph}{G}					%
\newcommand{\edgeSet}{\mathcal{E}}		%
\newcommand{\nodeSet}{\mathcal{N}}		%
\newcommand{\driverNode}{D}				%
\newcommand{\sink}{S}
\newcommand{\riderEdge}{R}
\newcommand{\resGraph}{\tilde{\graph}}		%
\newcommand{\resEdgeSet}{\tilde{\edgeSet}}	%
\newcommand{\resEdge}{\tilde{\edge}}		%
\newcommand{\resRiderEdge}{\tilde{\riderEdge}}
\newcommand{\flow}{f}						%
\newcommand{\tail}{\partial^+}
\newcommand{\head}{\partial^-}
\newcommand{\capacity}{	\zeta}
\newcommand{\minCap}{\underline{\capacity}}
\newcommand{\maxCap}{\bar{\capacity}}
\newcommand{\flowMatrix}{F}
\newcommand{\barphi}{\bar{\varphi}}
\newcommand{\undphi}{\underline{\varphi}}
\newcommand{\barmu}{\bar{\mu}}
\newcommand{\undmu}{\underline{\mu}}
\newcommand{\barp}{\bar{\price}}
\newcommand{\undp}{\underline{\price}}
\newcommand{\CS}{CS}
\newcommand{\FCS}{CS$_F$}
\newcommand{\cs}[1]{(\CS-\ref{item:cs#1})}
\newcommand{\fcs}[1]{(\FCS-\ref{item:fcs#1})}
\newcommand{\shorteq}{\hspace{-0.2em}=\hspace{-0.2em}}
\newcommand{\set}[2]{\left\lbrace \left. #1 ~\right|~  #2\right\rbrace}
\newcommand{\tprime}{t'}
\newcommand{\0}{^{(0)}}
\newcommand{\supt}{^{(t)}}
\newcommand{\suptprime}{^{(t')}}
\newcommand{\calL}{\mathcal{L}}
\newcommand{\setR}{\mathbb{R}}
\newcommand{\setZ}{\mathbb{Z}}
\newcommand{\setN}{\mathbb{N}}
\newcommand{\txtif}{~\mathrm{if}~}
\newcommand{\txtfor}{~\mathrm{for}~}
\newcommand{\txtst}{~\mathrm{s.t.}~}
\newcommand{\txtand}{~\mathrm{and}~}
\renewcommand{\th}{^{\mathrm{th}}}
\newcommand{\one}[1]{\mathds{1} \{ #1\}}
\newcommand{\mytimes}{\times}
\newcommand{\myqed}{\hfill $\square$}
\title{Spatio-Temporal Pricing for Ridesharing Platforms%
\thanks{
Thanks to Itai Ashlagi, Moshe Babaioff, Yakov Babichenko, John Byers, Yeon-Koo Che, Peter Frazier, Keith Chen, Amos Fiat, Sergey Gitlin, Ramesh Johari, Myrto Kalouptsidi, Cinar Kilcioglu, Fuhito Kojima, Scott Kominers, Robin Lee,  Kevin Leyton-Brown, Shengwu Li, Jake Marcinek, Reshef Meir, Paul Milgrom, Hamid Nazerzadeh, Michael Ostrovsky, Ariel Pakes, Garrett van Ryzin, Amin Saberi, Lior Seeman, Peng Shi, Rakesh Vohra, E. Glen Weyl, Adam Wierman, %
conference and seminar participants, and anonymous referees and editors for helpful feedback.
Ma was supported by a Siebel scholarship. Fang was partially supported by Harvard Center for Research on Computation and Society fellowship.
}
\author{Hongyao Ma%
\thanks{Decision, Risk, and Operations Division, Columbia Business School, 3022 Broadway, Uris Hall 423,
New York, NY, 10027, USA. Email: hongyao.ma@columbia.edu.} 
\and Fei Fang%
\thanks{School of Computer Science, Carnegie Mellon University, Hamerschlag Dr, 4126 Wean Hall, Pittsburgh, PA 15213, USA. Email: feifang@cmu.edu.}
\and David C. Parkes%
\thanks{John A. Paulson School of Engineering and Applied Sciences, Harvard University, 33 Oxford Street, Maxwell Dworkin 229, Cambridge,MA 02138, USA. Email: parkes@eecs.harvard.edu.}
}
}
\begin{document}

\maketitle

\begin{abstract}
Ridesharing platforms match drivers and riders to trips, using dynamic prices to balance supply and demand. A challenge is to set prices that are appropriately smooth in space and time, so that drivers with the flexibility to decide how to work will nevertheless choose to accept their dispatched trips, rather than drive to another area or wait for higher prices or a better trip. 
In this work, we propose a complete information model that is simple yet rich enough to incorporate spatial imbalance and temporal variations in supply and demand--- conditions that lead to market
failures in today's platforms.
We introduce the {\em Spatio-Temporal Pricing (STP) mechanism}. The mechanism is incentive-aligned, in that it is a subgame-perfect equilibrium for drivers to always accept their trip dispatches.
From any history onward, the equilibrium outcome of the STP mechanism is welfare-optimal, envy-free, individually rational, budget balanced, and core-selecting.
We also prove the impossibility of achieving the same economic properties in a dominant-strategy equilibrium. Simulation results show that the STP mechanism can achieve substantially improved social welfare and earning equity than a myopic mechanism.
\end{abstract}

\section{Introduction} \label{sec:intro}

Uber connected its first rider to a driver in San Francisco in the summer of 2010. %
Within one decade's time, ridesharing platforms such as Uber and Lyft have radically changed the way people get around in urban areas. 
Comparing with traditional taxi systems, a distinct feature of ridesharing platforms is the emphasis on reliable transportation. For example, Uber's mission is stated as
``transportation as ubiquitous and reliable as running water, everywhere, for everyone''~\citep{travis,ubermissionChicago}, and
Lyft's mission is ``to provide the best, most reliable service possible by making sure drivers are on the road, when and where you need them most'' \citep{lyftmission}.
When demand exceeds supply, these platforms use dynamic ``surge" pricing to guarantee that rider wait times would not exceed a few minutes~\citep{rayle2014app}.

In addition to reliability for riders, the platforms also provide the flexibility for drivers to drive on their own schedule. Uber, for example, advertises itself as ``work that put you first--- drive when you want, earn what you need" \citep{uberdrive}, 
and Lyft promises drivers ``To drive or not to drive? It's really up to you"~\citep{lyftdrive}.
The ``real-time flexibility'' to decide when and where to drive is an important reason that drivers drive for Uber~\citep{hall2016analysis}, and substantially increases both driver supply and driver surplus in comparison to alternative, less flexible arrangements~\citep{chen2019value}. More recently, Uber also started to provide drivers in some markets the option to accept only trips they want, based on the trip destinations and expected earnings~\citep{driversSeatPart1}.

Despite their success, there remain a number of problems with the pricing and dispatching rules governing these ridesharing platforms, leading %
to various kinds of market failure and undercutting the endeavor to provide reliable yet flexible transportation. A particular concern, is that trips may be mispriced relative to each other, incentivizing drivers to cherry-pick~\citep{cook2018gender,changeRuleAdjustStrategy,castro2021randomized}.\footnote{There are also other incentive problems, including inconsistencies across classes of service, competition among platforms, drivers' bonuses and off-platform incentives. In the interest of simplicity, we only model a single class of service and ignore cross-platform competition.}
Drivers can also benefit %
from strategic behavior in the following scenarios, where there is spatial imbalance and temporal variation of supply and demand:

\vspace{-0.2em}
\begin{enumerate}[$\bullet$]
	\setlength\itemsep{0.0em}
	\item (Spatial mispricing) When the price is substantially higher for trips that start in location $A$ (e.g. downtown Austin as in Figure~\ref{fig:ride_austin_space}) in comparison to an adjacent location $B$ (e.g. neighborhoods in the north and across the river), drivers in location $B$ that are close to the boundary can usefully decline trips. This spatial mispricing leads to drivers' ``chasing the surge''--- turning off a ridesharing app while relocating to another location~\citep{dontchase,keithChenTalk}. %
	\item (Temporal mispricing) When large events such as a sports game will soon end, and right before bars are required to close, drivers can anticipate that prices will increase substantially in order to balance supply and demand (see Figure~\ref{fig:ride_austin_time}). In this case, many drivers around the stadium or downtown will decline trips %
	and even go off-line in order to wait in place~\citep{gridwise}. %
	\item (Network externalities) The origin-based ``surge pricing" used by many platforms does not correctly factor market conditions at the destination of a trip. %
	This incentivizes drivers to decline trips to destinations where the continuation payoffs are low, e.g. quiet suburbs with low prices and long wait times~\citep{cancellationMarketWatch,Uber2021NetworkPricingBlog}. %
\end{enumerate}

\newcommand{\imageHeight}{1.5}

\begin{figure}[t!]
\vspace{-0.5em}
\centering
\begin{subfigure}[t]{0.48\textwidth}
	\centering
    \includegraphics[height= \imageHeight in]{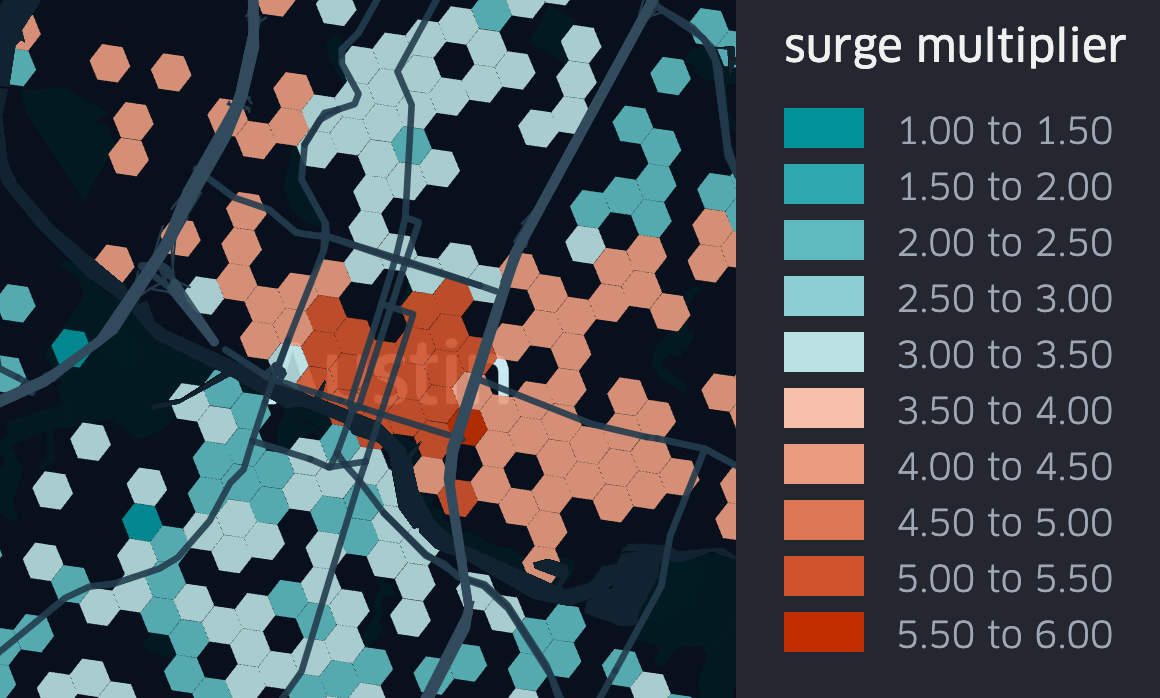}
    \caption{Spatial mispricing.\label{fig:ride_austin_space}}
\end{subfigure}%
\begin{subfigure}[t]{0.48 \textwidth}
	\centering
    \includegraphics[height=\imageHeight in]{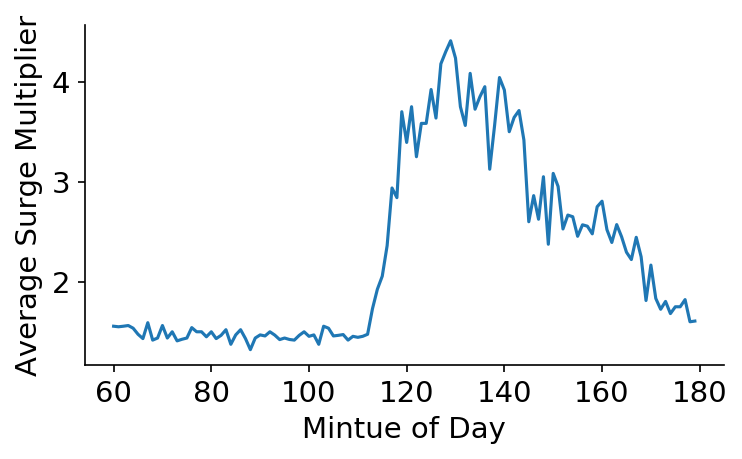}
    \caption{Temporal mispricing. \label{fig:ride_austin_time}}
\end{subfigure}%
\caption[Compact for the LOF]{
Illustration of pricing problems %
using data made public by %
RideAustin.\footnotemark\hspace{-0.01em} %
(a) shows %
the surge multiplier by trip origin, averaged over trips starting between 1am-3am on %
New Year's Eve 2017. (b) plots the average surge multiplier for trips originating from downtown Austin on March 17, 2017 (during the South by Southwest festival). Bars in Austin are required %
to close at 2am.  \vspace{-0.5em}
\label{fig:ride_austin_smoothness_failures}    
}
\end{figure}

\footnotetext{\url{https://data.world/ride-austin}}

These pricing problems undercut the mission of reliable transport, with even high willingness-to-pay riders unable to get access to reliable service for certain trips, such as trips leaving the stadium before a game ends, and trips going to a quiet suburb.    
These kinds of mispricing also makes it difficult %
to accommodate drivers' idiosyncratic preferences, for example over locations, since such features are used extensively by drivers to strategize for better earnings~\citep{destinationFilterDrop}. 
This can also lead to inequity, with demonstrated learning effects leading to differences in drivers' long-run earnings (e.g. a gender gap in driver hourly earnings~\citep{cook2018gender}), with potential consequences around driver churn from the platform.

Simple fixes by limiting dispatching transparency or drivers' flexibility are not fully effective. 
For example, when a platform hides trip destinations %
before the pick-up, experienced drivers will call riders to ask about trip details, and cancel those trips that are not worthwhile~\citep{cook2018gender}. 
Nor does the imposition of penalties on drivers solve these problems, since drivers may decide to go offline, or choose not to participate in the platform from certain locations or times. %

We conceptualize many of the problems with today's platforms as arising from prices failing  to be appropriately ``smooth'' in space and time--- if prices for trips are higher in one location then they should be appropriately higher in adjacent locations; if demand would soon increase in a location then the current prices should already be appropriately higher; and if destinations differ in continuation payoffs then trip prices to these destinations need to reflect this.
With appropriately smooth prices that correctly reflect the on-trip
cost and network externality of completing each trip, drivers who retain the
flexibility to decide how to work will still choose to accept any trip to which they are dispatched, providing reliability for riders. 
Many of the strategic behaviors %
are also symptoms of inefficiencies in dispatching, for example when dispatching drivers to low-priced trips that send them away from a sports stadium, five minutes before a game ends. 
Correctly designed, ridesharing platforms can succeed in optimally orchestrating trips and providing reliable transpiration for riders, while leaving drivers with flexibility to decide how to work. %

\vsq{-0.1em}

\subsection{Our results} \label{sec:intro_contribution}

In this work, we propose a framework for studying pricing and dispatching in the context of a ridesharing
platform. 
The model that we introduce is simple, but is the first in the literature to be rich enough to incorporate spatial imbalance and temporal variation of supply and demand--- market conditions under which mispricing and strategic behavior arise in today's platforms. 
We propose the \emph{Spatio-Temporal Pricing mechanism} (STP), which achieves the following properties that we consider very important for a sharing economy platform:
\vsq{-0.1em}
\begin{enumerate}[$\bullet$]
	\setlength\itemsep{0.0em}
	\item Welfare-optimality: maximizing total rider values minus driver costs. 
	\item Incentive-alignment: the prices are appropriately smooth in space and time, such that drivers will always choose to accept any dispatched trips.
	\item Robustness: the mechanism updates the downstream plans after deviations from its %
	dispatches.
	\item Temporal-consistency: plans are computed and updated based on the current state but not past history, without using penalties or time-extended contracts.
	\item Envy-freeness: drivers at the same location and time do not envy each other's future payoffs; riders requesting the same trips do not envy each other's outcomes.
	\item Core-selecting: no coalition of riders and drivers can make a better plan among themselves.	
\end{enumerate}

Welfare-optimality and incentive-alignment are necessary, but are not enough to guarantee efficient and reliable operation. Robustness ensures the other properties from any point of time onward regardless of past deviations. This is important in the face of erroneous predictions, mistakes by participants, or unmodeled idiosyncratic preferences, and without robustness any solution would be brittle and poorly suited to practice. Temporal-consistency matters, since using penalties (or threatening to fire drivers, or shut down the system) is incompatible with the spirit of the sharing economy, and the real-time flexibility of being able to choose how to work. 
Envy-freeness and core-selecting properties relate to fairness, and to
the long-run health of the marketplace. An envy-free mechanism removes
unnecessary fluctuations in daily income that depend on lucky
dispatches, and reduces the long-run inequity in earnings from
``learning-by-doing.''%
A core outcome ensures that no group of riders and drivers are %
disadvantaged, getting a worse outcome than the best plan they can form among themselves. %
In practice, a violation of the core makes a platform vulnerable to a competitor that may %
take-over such disadvantaged parts of the network. %

\paragraph{The model.} We work in a complete information, discrete time, multi-period and multi-location model, addressing the challenge of promoting desirable behavior by drivers in the absence of time-extended contracts. %
At the beginning of each time period, based on the history, current positioning of drivers, and current and future demand, the STP mechanism dispatches each available driver to a rider trip, or to relocate, or to exit the platform for the planning horizon. The mechanism also determines a payment to be made if the driver follows the dispatch. Each driver then decides whether to follow the dispatch, or to decline and stay, or to relocate to any location, or to exit. After observing the driver actions in a period, the mechanism collects payments from the riders who are picked-up, and makes payments to the drivers who followed the dispatches.
The main assumptions that we make are:   
\vsq{-0.2em}
\begin{enumerate}[(i)]
	\setlength\itemsep{0.0em}
	\item Complete information about supply and demand over a finite planning horizon,
	\item Impatient, price-taking riders, with a value for being picked-up at a particular time and location (and without preferences over drivers), and
	\item Drivers who each face the same costs for completing the same trip from some origin to some destination at some particular time (and without idiosyncratic preference over riders or locations), and who are willing to provide trips until the end of the planning horizon. %
\end{enumerate}

We do allow for heterogeneity in rider values and trip details (the origin, destination, and time of a trip). For drivers, we allow them to become available at different times and locations, and we also model the distinction between a driver who is already driving in the platform (for example, finishing a trip), and a driver who has not yet started driving and thus needs to make an entry decision (for example, dropping off a child at school at a specific location and time, and willing to drive afterwards). We also allow a driver who is asked to exit the platform earlier than their intended exit time to incur a one-time exit cost, modeling the forgone opportunity of outside options after the driver has been driving in the platform for some time.

\paragraph{Main results.} 

We first show that the welfare-optimal planning problem can be reduced to a minimum cost flow (MCF) problem. The integrality of the linear program (LP) of  MCF guarantees the existence of anonymous (not depending on the identity of rider or driver), origin-destination, competitive equilibrium (CE) prices, allowing the price of a trip to depend on market conditions at both the origin and destination.
The lattice structure of the dual LP also implies that drivers' total utilities among all CE plans form a lattice. 
The STP mechanism uses {\em driver-pessimal CE prices}, computing a driver-pessimal CE plan at the beginning of the planning horizon, as well as after any deviations from the current plan. 
This induces an extensive-form game among the drivers, where the total payoff to each driver is determined by the mechanism's dispatch and payment rules, as well as the actions taken by the other drivers. 
The main result is that the STP mechanism satisfies all the desiderata outlined above.

The STP mechanism uses driver-pessimal rather than driver-optimal CE prices. 
The driver-optimal analog of the STP mechanism%
reflects the payments of a Vickrey-Clarke-Groves (VCG) mechanism, but fails to align incentives. %
In a driver-pessimal plan, a driver's continuation payoff from some location and time onward is equal to the welfare-gain in the economy from adding an additional driver at this location and time.
The $M^\natural$ concavity of the MCF problems~\citep{murota2003discrete} 
implies a stronger substitution between drivers at the same location at the same time (than between drivers at different locations or times), allowing us to reason about the welfare gains and prove that accepting the mechanism's dispatches %
forms a subgame-perfect equilibrium.
Under a driver-optimal plan, each driver's continuation payoff is equal to the welfare-loss in the economy from losing the driver. %
This ``marginal product'' can increase over time, as the set of trips that can be
completed by the rest of the drivers becomes smaller. 
This breaks incentives so that a driver can usefully deviate from a suggested dispatch
and trigger a plan update.

We also prove an impossibility result, that no dominant-strategy mechanism has the same economic properties. 
For three stylized scenarios (the end of an event, the morning rush hour, and trips to and from the airport with unbalanced flows), we compare the STP mechanism with a \emph{myopic pricing mechanism} that simply clears the market for each location at each time, without taking future demand or supply into consideration. 
Extensive simulation results %
suggest that STP achieves substantially higher social welfare, and highlight the failure of incentive alignment and the large variance in driver earnings due to non-smooth prices in myopic mechanisms.

The main operational insight from this paper is that each trip should
be priced (as it is under the STP mechanism) as the welfare
contribution of an extra driver at the origin of the trip, minus the
welfare contribution of an extra driver at the destination of the
trip, plus the cost to a driver to complete this trip. This
 correctly ``prices in'' the externality imposed on the system
by moving a driver from the origin to the destination.
As an example, an
extra driver at a stadium five minutes before a game ends is very
valuable, so with this approach to pricing, the price will go up
smoothly before a game ends, providing riders with high
willingness-to-pay reliable access to transportation.
Prices  determined in this way will also be smooth in space, reducing drivers'
incentives to ``chase the surge.''
  From this insight comes the
  importance to practice of  estimating the welfare contribution of an extra
  driver,  likely the expected welfare contribution in  an actual
  deployment, using
  data from the current, typically suboptimal, operations.

\subsection{Related Work} \label{sec:related_work}

To the best of our knowledge, this current paper is unique in that it considers both multiple locations and multiple time periods, along with rider demand, rider willingness-to-pay, and driver supply that can vary across both space and time.

\citet{banerjee2015pricing} adopt a queuing-theoretic approach in analyzing the effect of dynamic pricing on the revenue and throughput of ridesharing platforms, assuming %
stationary system state. %
When the platform correctly estimates supply and demand, the optimal dynamic pricing strategy %
does not achieve better performance than the optimal static pricing strategy. However, dynamic pricing is more robust to fluctuations and to mis-estimation of system parameters. %

By analyzing %
a continuum model %
 with stationary demand and unlimited driver supply at fixed costs, \citet{Bimpikis2016} %
show that a %
platform's profit is maximized when the demand pattern across %
locations is balanced. They show in simulation that in comparison to a single price, origin-based pricing improves profit, while there is not a substantial gain %
from using origin-destination based pricing. Our model is quite distinct, in that it is not a continuum or stationary model, does not have unlimited driver supply, %
and is focused on welfare.
\citet{banerjee2017pricing} model a shared vehicle system as a continuous-time Markov chain, and establish approximation guarantees for a static, state-independent policy %
w.r.t. the optimal, state-dependent policy.

\citet{castillo2017surge} study the ``wild goose chase" phenomena in platforms that myopically dispatch the closest drivers to rider requests. When demand much exceeds supply, drivers spend too much time driving to pick up riders, leading to long wait times and decreased welfare and revenue. Under a model with stationary supply and demand,  %
the authors establish the importance of dynamic pricing for these %
myopic dispatching schemes, in keeping enough open cars to avoid inefficient long pick-ups. %
\citet{yan2020dynamic} provide a review of matching and dynamic pricing in ridesharing platforms. 
\citet{garg2020driver} study a dynamic stochastic model, and show that when the ``surge'' pricing needs to be origin-based and cannot depend on the length of the trip, additive surge is more incentive compatible for drivers in practice than is multiplicative surge.

There are various empirical studies of the Uber platform as a two-sided marketplace~\citep{hall2017labor,hall2016analysis,cohen2016using}, analyzing the labor market of Uber's drivers, the longer-term labor market equilibration, and consumer surplus. By analyzing drivers' hourly earnings, \citet{chen2019value} show that drivers' reservation wages vary significantly over time, and that the real-time flexibility of being able to choose when to work increases %
driver surplus and driver supply.
In regard to dynamic pricing, \citet{chen2015dynamic} show 
that surge pricing increases the supply of drivers at times when the surge pricing is high, and \citet{lu2018surge} show that surge pricing incentivizes drivers to relocate to higher surge areas.
A case study into an outage of Uber's surge pricing during the 2014-2015 New Year's Eve %
found a large increase in riders' waiting times,  %
and a large decrease in the percentage of requests completed~\citep{hall2015effects}.

\smallskip

In this work, we made use of the connections between LP duality %
and market equilibrium to prove the existence and structure of CE. %
These connections are widely applied in many matching and assignment settings~\citep{shapley1971assignment,bikhchandani2002package,parkes2000iterative}. The existing literature, however, %
does not provide a framework to study the dynamic incentive properties
in our setting, in regard to drivers' decisions about accepting dispatches.

In combinatorial auctions,
the connection between LP duality and CE allows the design of
iterative auctions that can be interpreted as primal-dual algorithms
that solve the optimal allocation
problem~\citep{parkes2000iterative,mishra2007ascending}.
The challenge is to elicit valuation functions, and
these designs induce
an extensive-form game among agents who participate in each round of
the auction, and before an allocation is determined.
This is different from our setting,
where the challenge is not asymmetric information but rather
agency, and where drivers' strategic behavior is realized
after a plan has been computed.

       	The {\em matching with substitutes} literature
        \citep{kelso1982job,gul1999walrasian} establishes the
        existence and lattice structure of CE payoffs, for
        two-sided matching where the preference of an agent over the
        agents %
        on the other side satisfy the {\em gross substitutes} (GS) condition. Our problem cannot be considered as two-sided matching with substitutes, since from a driver's perspective, riders on different segments of the same path may be complements to each other.
	
The literature on {\em trading networks} studies economic models where agents in a network can transact via bilateral contracts~\citep{hatfield2013stability,hatfield2015chain,ostrovsky2008stability}. %
A CE exists when preferences satisfy a {\em full substitution}
property, and the utilities of agents on either end of an acyclic
network forms a lattice.
Although the optimal planning problem we study can be
reduced to a trading network, where drivers and riders trade the right
to use a car for the rest of the planning horizon, it is not
  apparent how to use this mapping to establish the incentive
  properties of a ridesharing mechanism
(see Appendix~\ref{appx:trading_network}). %

Dynamic variations of the VCG mechanism~\citep{athey2013efficient,bergemann2010dynamic,cavallo2009efficient} are able to truthfully implement efficient decision policies, where agents receive private information over time. The payment to an agent in a single period in the dynamic VCG mechanism is equal to the flow marginal externality imposed on the other agents by its presence in the current period~\citep{cavallo2009efficient}.
These mechanisms are not suitable for our problem, because the existence of a driver for only one period may exert negative externality on the rest of the economy by inducing suboptimal positioning of the rest of the drivers in the subsequent time periods. As a result, some drivers may be paid negative payments for certain periods of time, which would lead them to decline dispatches. See Appendix~\ref{appx:dynamic_vcg} for examples and detailed discussions.

Principal-agent problems are, of course, studied extensively in
contract
theory~\citep{bolton2005contract,salanie2005economics}. When
information asymmetry arises before the time of contracting,
with private information, this is a problem of {\em adverse
  selection}.
When information asymmetry arises after the time of contracting,
through hidden actions, this is a problem of {\em moral hazard}. 
Where contracts cannot be perfectly enforced, {\em relational incentive contracts}~\citep{levin2003relational} %
are self-enforcing by threatening to terminate an agent following poor performance. 
In our model, there is neither private information nor hidden
actions. Rather, the challenge that we face is one of incentive alignment in the absence of time-extended contracts, so that %
drivers retain the flexibility  to decide on actions without incurring penalties or facing termination threats.

\section{Preliminaries} 
\label{sec:preliminaries}

Let $\horizon$ be the length of the planning horizon, starting at time $t = 0$ and ending at time $t = \horizon$. We adopt a discrete time model, and refer to each time point $t$ as ``time $t$", and call the duration between time $t$ and time $t + 1$ a \emph{time period}. We may think about each time period as $\sim5$ minutes, and with $\horizon = 6$ the planning horizon would be half an hour. Trips start and end at time points. 
Denote $\timeSet = \{0, 1, \dots, \horizon \}$ and $\activeTimeSet = \{ 0, 1, \dots, \horizon - 1 \}$.

Let $\loc = \{A, B, \dots, \}$ be a set of $|\loc|$ discrete
locations, and we adopt $a$ and $b$ to denote generic locations. For all $a,b \in \loc$ and $t \in \timeSet$, the triple $(a,b,t)$ denotes a \emph{trip} with origin $a$, destination $b$, starting at time $t$.
Each trip can represent (i) taking a rider from $a$ to $b$ at time $t$, (ii) relocating without a rider from $a$ to $b$ at time $t$, and (iii) staying in the same location for one period of time (in which case $a = b$). 
Let the distance $\dist: \loc \mytimes \loc \rightarrow \setN $ be the number of time periods needed to travel between locations, so that trip $(a,b,t)$ ends at $t + \dist(a, b)$.\footnote{We can also allow the distance between a pair of locations to change over time, modeling the changes in traffic conditions, i.e. a trip from $a$ to $b$ starting at time $t$ ends at time $t + \dist(a,b,t)$. This does not affect the results presented in this paper, and we keep $\dist(a,b)$ for simplicity of notation.} 
We allow $\dist(a, b) \neq \dist(b, a)$ for locations $a \neq b$, modeling asymmetric traffic flows.
We assume $\dist(a, b) \geq 1$ for all $a, b \in \loc$, and $\dist(a,a) = 1$ for all $a \in \loc$. 
Set $\trips \triangleq \set{(a,b,t)}{a \in \loc, ~b \in \loc, ~ t \in \{0, 1,\dots, \horizon - \dist(a,b)\}}$ denotes the set of all feasible trips within the planning horizon.

Let $\driverSet$ denote the set of drivers, with $\nd \triangleq |\driverSet|$. Each driver $i \in \driverSet$ is characterized by
{\em type} $\theta_i = (\driverEntrance_i, \re_i, \te_i, \tl_i)$--- driver $i$ is able to enter the platform at location $\re_i$ and time $\te_i$, and plans to exit the platform at time $\tl_i$ (with $\te_i < \tl_i$).
$\driverEntrance_i$ indicates driver $i$'s entrance status. 
A driver with $\driverEntrance_i = 0$ has not yet entered the platform, and needs to make an entry decision--- consider a driver who is willing to drive after dropping her daughter at school at location $\re_i$ at time $\te_i$.
A driver with $\driverEntrance_i = 1$ has already entered the platform (she may be completing an earlier trip, or relocating to another location), and will become available to pick up again at $(\re_i, \te_i)$ . 
Here we make the assumption (S1) that {\em driver types are known to
  the mechanism and that all drivers stay until at least the end of
  the planning horizon, and do not have a
  preference over riders or location, including where they finish their last trip in the planning horizon.}

A driver who completes a trip $(a,b,t) \in \trips$ incurs a cost $\cost_{a,b,t} \geq 0$, which models the cost of time, driving, fuel, wear-and-tear, etc. A driver who has already entered the platform may exit earlier than her intended exit time, in which case she will not be able to complete any trip in the remainder %
of this planning horizon. Exiting $\Delta$ periods earlier than time $\horizon$ incurs a one-time cost of $\exitCost_\Delta \geq 0$ (with $\exitCost_0 = 0$), modeling the forgone opportunity of outside employment options, after driving for the platform for some time. A driver with $\driverEntrance_i = 0$ who does not enter the platform at $(\re_i, \te_i)$ does not incur any cost, and will not enter at a later time. %
Drivers have quasi-linear utilities, and seek to maximize the total payments received over the planning horizon minus the total costs.

Denote $\riders$ as the set of riders, each intending to take a single trip during the planning horizon.  The {\em type} of rider $j \in \riders$ is $(\origin_j, \dest_j, \tr_j, \val_j)$, where $\origin_j$ and $\dest_j$ are the trip origin and destination, $\tr_j$ the requested start time, and $\val_j\geq 0$ the {\em value} for the trip.\footnote{
The value $\val_j$ models the rider's willingness-to-pay over and above a base payment that covers the additional cost that a driver incurs for picking up a rider in comparison to just relocating (e.g., extra wear-and-tear, loss of privacy, inconvenience). This base payment is always made for a matched trip, and allows us to model a driver's cost as depending on  the origin, destination, and time of a trip, and irrespective of whether there is a passenger in the car. With this, the prices we determine are the amount to  pay on top of the base amount.}
We assume (S2) {\em that riders are impatient, only value trips starting at $\tr_j$, are not willing to relocate or walk from a drop-off point to their actual, intended destination, and do not have preference over drivers.}
Rider utility is quasi-linear, with utility $\val_j - \price$ to rider $j$ for a trip at (incremental to base) price $\price$.

We assume the platform has complete information about supply and demand over the planning horizon (travel times, trip costs, driver and rider types, including driver entry during the planning horizon). We assume drivers have the same information, and that this is common knowledge amongst drivers (more generally, it is sufficient that it be common knowledge amongst drivers that the platform has the correct information).
Unless otherwise noted, we assume properties~(S1), (S2), and complete, symmetric information throughout the paper. Detailed discussions on the effect of relaxing these assumptions are provided in Section~\ref{sec:conclusion}.

\medskip

At each time $t$, a driver is \emph{en route} if she started her last trip from $a$ to $b$ at time $\tprime$ (with or without a rider), and $t < \tprime + \dist(a,b)$. A driver is \emph{available} if she has entered or is able to enter the platform, and has not yet exited, and is not \emph{en route}. 
A driver who is available at time $t$ and location $a$ is able to complete a pick-up at this location and time. We allow a driver to drop-off a rider and pick-up another rider in the same location at the same time point (see Appendix~\ref{appx:cont_time}).

A {\em path} is a sequence of tuples $(a,b,t)$, representing driver entrance, exit, and the trips she takes over the planning horizon. Let $\pathSet_i$ denote the set of all \emph{feasible paths} of driver $i$, with $\path_{i,k}\in \pathSet_i $ to denote her $k\th$ feasible path. The path $\path_{i,0}$ includes no trip: for a driver with $\driverEntrance_i = 0$,  $\path_{i,0}$ models the option to not enter the platform at all; for a driver s.t. $\driverEntrance_i = 1$,  $\path_{i,0}$ models the option exit immediately at $(\re_i, \te_i)$. For each $k = 1, \dots, |\pathSet_i|$, $\path_{i,k}$ is a path that starts at $(\re_i, \te_i)$, with the starting time and location of each successive trip equal to the ending time and location of the previous trip. Denote $(a,b,t) \in \path_{i,k}$ if path $\path_{i,k}$ includes (or {\em covers}) trip $(a,b,t)$, and let $\pathCost_{i,k}$ be the total cost of the $k\th$ path to driver $i$. We know that $\pathCost_{i,0} = 0$ if $\driverEntrance_i = 0$, $\pathCost_{i,0} = \exitCost_{T - \te_i}$ if $\driverEntrance_i = 1$, and for $k>1$, $\pathCost_{i,k} = \sum_{(a,b,t)\in  \path_{i,k}} \cost_{a,b,t} + \exitCost_\Delta$, if path $\path_{i,k}$ ends $\Delta$ periods earlier than $T$.

Driver $i$ who takes the path $\path_{i,k}$ is able to pick up rider $j$ if $(\origin_j, \dest_j, \tr_j) \in \path_{i,k}$, however, a path specifies only the movement in space and time, and does not specify whether a rider is picked up for each of the trips on the path. 
Let an \emph{action path} for driver $i$ be a sequence of tuples, each of them can either be of the form $(a,b,t)$, representing a relocation trip from $a$ to $b$ at time $t$ without a rider, or be of the form $(a,b,t,j)$, in which case the driver sends rider $j$ from $a$ to $b$ at time $t$ (thus requiring $(a,b,t) = (\origin_j, \dest_j, \tr_j)$). 
Let $\actPathSet_{i}$ be the set of all feasible action paths of driver $i$ (the feasibility of an action path is similar to that of a path).
For an action path $\actpath_i \in \actPathSet_{i}$, denote $(a,b,t) \in \actpath_i$ or $(a,b,t,j) \in \actpath_i$ if the action path includes a relocation or rider trip from $a$ to $b$ at time $t$. A driver taking action path $\actpath_i$ that is \emph{consistent} with path $\path_{i,k}$ (i.e. results in the same movement in space and time) incurs a total cost of $\pathCost_{i,k}$. 

\begin{example}
\label{exmp:toy_econ_1}

\newcommand{\nodeScaleI}{0.8}
\newcommand{\hdistI}{3cm}
\newcommand{\vdistI}{1.9cm}	

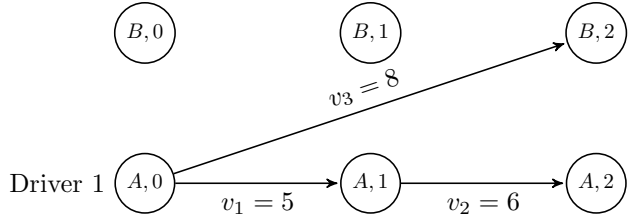
\begin{figure}[t!]
\vspace{-0.5em}
\centering
\begin{tikzpicture}[->,>=stealth',shorten >=1pt,auto,node distance=2cm,semithick][font = \small]
\tikzstyle{vertex}=[fill=white,draw=black,text=black,scale=0.9]

\node[state]         (A0) [scale = \nodeScaleI] {$A,0$};
\node[state]         (B0) [above of=A0, node distance = \vdistI, scale=\nodeScaleI] {$B,0$};
\node[state]         (A1) [right of=A0, node distance = \hdistI, scale=\nodeScaleI] {$A,1$};
\node[state]         (B1) [right of=B0, node distance = \hdistI, scale=\nodeScaleI] {$B,1$};
\node[state]         (A2) [right of=A1, node distance = \hdistI, scale=\nodeScaleI] {$A,2$}; 
\node[state]         (B2) [right of=B1, node distance = \hdistI, scale=\nodeScaleI] {$B,2$};

\path (A0) edge	node[pos=0.5, sloped, below] {$\val_1=5$} (A1);
\path (A1) edge	node[pos=0.5, sloped, below] {$\val_2=6$} (A2);
\path (A0) edge	node[pos=0.5, sloped, above] {$\val_3=8$} (B2);

\node[text width=3cm] at (-0.3, 0) {Driver 1};

\end{tikzpicture}
\caption{The economy in Example~\ref{exmp:toy_econ_1}, with two locations $A$, $B$, two time periods and three riders.  \vspace{-0.3em}
\label{fig:toy_econ_1}}
\end{figure}

The planning horizon is $\horizon = 2$ and there are two locations $\loc = \{A, B\}$ with distance $\dist(A,A) = \dist(B,B) = 1$ and $\dist(A,B) = \dist(B,A) = 2$. See Figure~\ref{fig:toy_econ_1}. Trip costs are $2$ per period of time, i.e. $\cost_{a,b,t} = 2 \dist(a,b)$ for all $(a,b,t) \in \trips$, and the opportunity cost of exiting early is $\exitCost_\Delta = \Delta$.
There is one driver, who has not yet entered the platform (i.e. $\driverEntrance_1 = 0$), but is able to enter at time $\te_1 = 0$ at location $\re_1 = A$, and plans to leave at time $\tl_1 = 2$. There are three riders with:
\vspace{-1em}
\begin{multicols}{2}
\begin{enumerate}[$\bullet$]
	\setlength\itemsep{0.0em}
	\item Rider 1: $\origin_1 = A$, $\dest_1 = A$, $\tr_1 = 0$, $\val_1=5$, 
	\item Rider 2: $\origin_2 = A$, $\dest_2 = A$, $\tr_2 = 1$, $\val_1=6$,
	\item Rider 3: $\origin_3 = A$, $\dest_3 = B$, $\tr_3 = 0$, $\val_3 = 8$.
\end{enumerate}
\end{multicols}
\vspace{-0.8em}

In addition to not entering the platform at all, which corresponds to path $\path_{1,0}$ with cost $\pathCost_{1,0} = 0$, there are three more feasible paths for driver 1: $\path_{1,1} = ((A,A,0),~(A,A,1))$, $\path_{1,2} = ((A,B,0))$, and $\path_{1,3} = ((A,A,0))$. 
In $\path_{1,3}$, the driver exits one period before the end of planning horizon. The path costs are $\pathCost_{1,1} = \cost_{A,A,0} + \cost_{A,A,1} = 2$, $ \pathCost_{1,2} = \cost_{A,B,0} = 4$, and $\pathCost_{1,3} = \cost_{A,A,0} + \exitCost_1 = 3$. Path $((A,A,0),~(A,B,1))$ is infeasible, since the last trip ends later than the driver's leaving time. Similarly, paths $((A,B,0),~(B,B,1))$ and $((A,A,0), (B,B,1))$ are infeasible.

In addition to not entering, there are eight feasible actions paths of rider $1$. $((A,B,0))$, relocating from $A$ to $B$ at time $0$, and $((A,B,0,3))$, sending rider $3$ from $A$ to $B$ at time $0$, are both consistent with the path $\path_{1,2}$, and both have cost $4$. 
Four action paths, $((A,A,0),(A,A,1))$, $((A,A,0,1),(A,A,1))$, $((A,A,0),(A,A,1,2))$, $((A,A,0,1),(A,A,1,2))$, are consistent with $\path_{1,1}$ and have cost $4$. Both $((A,A,0))$ and $((A,A,0,1))$ are consistent with $\path_{1,3}$ and have cost $3$. 
\end{example}

We now provide an informal timeline of a {\em ridesharing mechanism} (see Section~\ref{sec:st_pricing} for a formal definition). 
At each time point $t \in \activeTimeSet$, given the history of trips, current positioning and availability of drivers, and current and future driver supply and rider demand for trips:
\begin{enumerate}[1.]
	\setlength\itemsep{0.0em}

	\item The ridesharing mechanism determines for each rider with trip start time $t$, whether a driver will be dispatched to pick her up, and if so, the price of her trip.
	\item The mechanism dispatches available drivers to pick up riders, to relocate, or to exit (for drivers already in the platform), or not to enter (for drivers who have not entered, with $\te_i = t$ and $\driverEntrance_i = 0$). The mechanism also determines the payments offered to drivers for accepting the dispatches.
	\item Each available driver decides whether to accept the dispatch, or to deviate and either stay in the same location, or relocate, or exit/not enter. A driver may still decide to enter the platform even if asked not to do so. %
	The mechanism collects and makes payments based on driver actions.
\end{enumerate}

Any undispatched, available driver makes their own choices of actions. %
We assume that any driver already \emph{en route} will continue their current trip. A driver's payment in a period in which the driver declines a dispatch is zero, so that drivers are not charged penalties for deviation.

As a baseline, we define the following myopic pricing mechanism. 
For each rider $j \in \riders$, denote the per-period surplus of her trip as $\surplus_j \triangleq (\val_j -\cost_{\origin_j, \dest_j, \tr_j})/\dist(\origin_j, \dest_j)$.

\begin{definition}[Myopic pricing mechanism] \label{defn:myopic_mech}
At each time point $t \in \activeTimeSet$, for each location $a \in \loc$, the {\em myopic pricing mechanism} dispatches available drivers at $(a,t)$ to riders %
  with $(\origin_j,\tr_j) = (a,t)$ and $\surplus_j \geq 0$, in decreasing order of $\surplus_j$. The mechanism sets a market clearing rate $\rho_{a,t}$ (i.e. between highest unallocated $\surplus_j$ and lowest allocated $\surplus_j$), and sets prices $\price_{a,b,t} = \dist(a,b) \rho_{a,t} + \cost_{a, b, t}$ for each destination $b \in \loc$, which is offered to all dispatched drivers and collected from all riders. 
\end{definition}

The market clearing prices may not be unique, and a fully defined myopic mechanism must provide a rule for picking a particular set of  prices. This mechanism has anonymous, origin-based pricing, and is very simple in ignoring the need for smooth pricing, or future supply and demand. But its simplicity means that it fails to optimize social welfare or to set prices that are spatially and temporally smooth, leading to various market failures. 

\begin{example}[Super Bowl example] \label{exmp:super_bowl} 

\newcommand{\nodeScaleII}{0.8}
\newcommand{\hdistII}{3.5}
\newcommand{\vdistII}{2}	

\begin{figure}[t!]
\centering
\begin{tikzpicture}[->,>=stealth',shorten >=1pt, auto, node distance=2cm,semithick][font = \small]
\tikzstyle{vertex} = [fill=white,draw=black,text=black,scale=0.9]

\draw[line width=0.25mm] 	(-1, \vdistII*2+0.8) -- (\hdistII * 3 + 1, \vdistII*2+0.8);

\node[text width=1.2cm] at (0, \vdistII*2+1.2) {9:50pm};
\node[text width=1.3cm] at (\hdistII, \vdistII*2+1.2) {10:00pm};
\node[text width=3cm] at (\hdistII + 0.2, 5.7) {Super Bowl Ends};
\node[text width=1.3cm] at (\hdistII*2, \vdistII*2+1.2) {10:10pm};
\node[text width=1.3cm] at (\hdistII*3, \vdistII*2+1.2) {10:20pm};

\node[state]         (A0) [scale = \nodeScaleII] {$A,0$};
\node[state]         (B0) [above of=A0, node distance = \vdistII cm, scale=\nodeScaleII] {$B,0$};
\node[state]         (A0) [scale = \nodeScaleII] {$A,0$};
\node[state]         (C0) [above of=B0, node distance = \vdistII cm, scale=\nodeScaleII] {$C,0$};

\node[state]         (A1) [right of=A0, node distance = \hdistII cm, scale=\nodeScaleII] {$A,1$};
\node[state]         (B1) [right of=B0, node distance = \hdistII cm, scale=\nodeScaleII] {$B,1$};
\node[state]         (C1) [right of=C0, node distance = \hdistII cm, scale=\nodeScaleII] {$C,1$};

\node[state]         (A2) [right of=A1, node distance = \hdistII cm, scale=\nodeScaleII] {$A,2$}; 
\node[state]         (B2) [right of=B1, node distance = \hdistII cm, scale=\nodeScaleII] {$B,2$};
\node[state]         (C2) [right of=C1, node distance = \hdistII cm, scale=\nodeScaleII] {$C,2$};

\node[state]         (A3) [right of=A2, node distance = \hdistII cm, scale=\nodeScaleII] {$A,3$}; 
\node[state]         (B3) [right of=B2, node distance = \hdistII cm, scale=\nodeScaleII] {$B,3$};
\node[state]         (C3) [right of=C2, node distance = \hdistII cm, scale=\nodeScaleII] {$C,3$};

\node[text width=3cm] at (-0.3, 2 * \vdistII + 0.2) {{\color{\driverOneTextColor} Driver 1}};
\node[text width=3cm] at (-0.3, 2 * \vdistII - 0.2) {{\color{\driverTwoTextColor} Driver 2}};
\node[text width=3cm] at (-0.3, \vdistII) {{\color{\driverThreeTextColor} Driver 3}};

\path (C0) edge	node[pos=0.53, sloped, above] {{\color{\driverOneRiderColor}$\boldsymbol{\val_1 \shorteq 20}$},  \hspace{0.7em} {\color{\driverTwoRiderColor} $\boldsymbol{\val_2 \shorteq 30}$}} (B1);
\path (B0) edge	node[pos=0.17, sloped, above] {$\val_3 \shorteq 10$} (C1);
\path (B0) edge	node[pos=0.5, sloped, above] {{\color{\driverThreeRiderColor} $\boldsymbol{\val_4 \shorteq 20}$}} (A1);
\path (B1) edge	node[pos=0.5, sloped, above] {{\color{\driverOneRiderColor} $\boldsymbol{\val_5 \shorteq 20}$} } (B2);
\path (C1) edge	node[pos=0.5, sloped, above] {$\val_6 \shorteq 100$} (B2);

\draw[->]  (3.9, 4)  to[out=-2, in= 120 ](10.15, 0.1);

\node[text width=3cm] at (9, 3.3) {$\val_7 \shorteq 100$};
\node[text width=3cm] at (9.5, 2.9) {$\val_8 \shorteq 90$};
\node[text width=3cm] at (10, 2.5) {$\val_9 \shorteq 80$};

\draw[dashed, \driverthickness, \driverOneColor] (0.27, \vdistII * 2 - 0.3) -- (\hdistII - 0.38, \vdistII + 0.07 );
\draw[dashdotted, \driverthickness, \driverTwoColor] (0.2, \vdistII * 2 - 0.4) -- (\hdistII - 0.38, \vdistII - 0.1  );
\draw[dotted, \driverthickness, \driverThreeColor] (0.27, \vdistII * 1 - 0.3) -- (\hdistII - 0.4,  + 0.07 );

\draw[dashed, \driverthickness, \driverOneColor] (\hdistII + 0.4, \vdistII - 0.15) -- (\hdistII * 2 - 0.35, \vdistII - 0.15);

\draw[dashed, \driverthickness, \driverOneColor](\hdistII * 3 + 1.5, 1) -- (\hdistII * 3 + 2.3, 1);
\draw[dashdotted, \driverthickness, \driverTwoColor](\hdistII*3 + 1.5, 0.5) -- (\hdistII*3 + 2.3, 0.5);
\draw[dotted, \driverthickness, \driverThreeColor](\hdistII * 3 + 1.5, 0) -- (\hdistII * 3 + 2.3, 0);

\node[text width=0.32cm] at (\hdistII * 3 + 2.5, 1.1) {{ \color{\driverOneLgdColor} $z_1$}};
\node[text width=0.32cm] at (\hdistII * 3 + 2.5, 0.6) {{ \color{\driverTwoLgdColor} $z_2$}};
\node[text width=0.32cm] at (\hdistII * 3 + 2.5, 0.1) {{ \color{\driverThreeLgdColor}$z_3$}};

\node[text width=1cm] at (1.3, \vdistII + 0.15) {{\color{gray}\emph{10}}};
\node[text width=1cm] at (2.35, \vdistII + 0.15) {{\color{gray}\emph{10}}};
\node[text width=1cm] at (1.9, \vdistII /2 - 0.4) {{\color{gray}\emph{10}}};

\node[text width=1cm] at (\hdistII  + 1.2, \vdistII * 1.5) {{\color{gray}\emph{100}}};
\node[text width=1cm] at (\hdistII  + 2, \vdistII - 0.3) {{\color{gray}\emph{10}}};
\node[text width=1cm] at (\hdistII * 2 + 2.2, \vdistII /2 ) {{\color{gray}\emph{200}}};

\end{tikzpicture}
\caption{A Super Bowl game: time $0$ plan under the myopic pricing mechanism.  Color coded paths $\{z_i\}_{i \in \driverSet}$ and rider values in bold indicate the movement of drivers in space and time, as well as the riders picked up by each driver. Numbers in italics below trips is the set of lowest origin-based market clearing prices.  \label{fig:super_bowl_myopic} 
}
\end{figure}

Consider the economy in Figure~\ref{fig:super_bowl_myopic}, modeling the end of a sports event, with $\horizon = 3$ time periods. Each time period is 10 minutes. Time $t = 0$ is 9:50pm, and 10 minutes before the game ends. There are three locations $A$, $B$ and $C$ with symmetric distances $\dist(A,A) = \dist(B,B) = \dist(C,C) = \dist(A,B) = \dist(B,A) = \dist(B,C) = \dist(C,B) = 1$ and $\dist(A,C) = \dist(C,A) = 2$.
Drivers $1$ and $2$ enter at location $C$ at time $0$, while driver $3$ enters at $B$ at time $0$, with exit times $\tl_i = \horizon$ for all $i \in \driverSet$. Riders' trips and values are as shown in the figure. The game ends at location $C$ at time $1$, where many riders with high values will request rides. 
Trips cost $10$ per time period, i.e. $\cost_{a,b,t} = 10\dist(a,b)$, $\forall (a,b,t) \in \trips$, and early exiting costs are $\exitCost_{\Delta} = 5 \Delta$.

\smallskip
\noindent{\textit{Suboptimal welfare.}} 
Under the myopic pricing mechanism, at time $0$, drivers $1$ and $2$
are dispatched to pick up riders $1$ and $2$, respectively, and driver
$3$ is dispatched to pick up rider $4$. At time $1$, driver $1$ picks
up rider $5$. Assuming optimal exiting (i.e. driver $2$ exits at time
$2$ at a cost of $\exitCost_{1} = 5$, while drivers $2$ and $3$ exit
at time $1$ and each incurs a cost of $\exitCost_{2} = 10$), the total
social welfare achieved %
is only $\val_1 + \val_2
+ \val_4 + \val_5 - 10 \times 4 - 5 - 10 - 10 = 25$. 
This also illustrates that the myopic pricing mechanism does not achieve any constant fraction of the optimal welfare.

\smallskip
\noindent{\textit{Unsmooth prices in space and time.}} 
The set of market clearing rates for $C$ at time $0$ is $\rho_{C,0} \in [0, \surplus_1] = [0, 10]$, thus the possible market clearing prices for the trip $(C,B,0)$ is $\price_{C,B,0} \in [10,~20]$. The price for the $(B,B,1)$ trip is $\price_{B,B,1} = \cost_{B,B,1} = 10$, since there is excess supply. 
At time $1$, since no driver is able to pick up the four riders at location $C$, the lowest market clearing rate is $\rho_{C,1} = \surplus_6 = 90$. The prices therefore must be at least $\price_{C,B,1} \geq \dist(C,B) \rho_{C,1} + \cost_{C,B,1} = 100$ and $\price_{C,A,1} \geq \dist(C,A)\rho_{C,1} + \cost_{C,A,1} = 200$. The set of lowest market clearing prices are shown in italics in Figure~\ref{fig:super_bowl_myopic}, below the edges corresponding to the trips. We can see that the price for the $(C,B)$ trip jumps from $10$ to $100$ within one period of time. Moreover, there exist large gaps between prices for trips originating from neighboring locations (compare e.g. $\price_{C,B,1} = 100$ and $\price_{B,B,1} = 10$).

\smallskip

\noindent{\textit{Incentivizing strategic behavior.}} 
Since $\price_{C,B,0} \leq 20$, the highest possible total utility to
driver $1$ under any myopic pricing mechanism would be $(20-10) +
(10-10) - 5 = 5$, and the utility to driver $2$ will not exceed
$(20-10) - 10 = 0$. 
Note that when all drivers follow the dispatches, the outcome fails to be envy-free since the two drivers who start at the same location and time have different total payoffs.
Now suppose driver $1$ deviates from the dispatch, and stays in $C$ until time $1$. The mechanism would then dispatch her to pick up rider $6$, and driver $1$ would be paid the new market clearing price of at least $\dist(C,B) \surplus_7 + \cost_{C,B,1} = 50$. This is a useful deviation, since by exiting at time $2$, her utility is now at least $- 10 + (50 - 10) - 5 = 25$. %
Also observe that by pricing in a myopic manner, strategic drivers are rewarded substantially higher earnings than  drivers who are straightforward and accept all dispatches.
\end{example}

\section{A Static CE Mechanism} \label{sec:static_mech}  

In this section, we formulate the welfare-optimal planning problem,
define competitive equilibrium (CE) prices, and prove a welfare
theorem, the core equivalence, and a lattice structure of drivers'
utilities among all CE plans.

\subsection{Plans} 

A {\em plan} describes the paths taken by all drivers until the end of the planning horizon, rider pick-ups, as well as payments for riders and drivers for each trip associated with these paths.

Formally, a plan is the 4-tuple $(x, \actpath, \rPayment, \dPayment)$, where: $x$ is the indicator of rider pick-ups, i.e. for all riders $j \in \riders$, $x_j = 1$ if rider $j$ is picked-up according to the plan, and $x_j = 0$ otherwise; $\actpath$ is a vector of action paths, where $\actpath_i \in \actPathSet_i$ is the dispatched action path taken by driver $i$; $\rPayment_j$ denotes the payment made by rider $j$, $\dPayment_{i,t}$ denotes the payment made to driver $i$ at time $t$, and let $\dPayment_i \triangleq \sum_{t = 0}^\horizon \dPayment_{i,t}$ denote the total payment to driver $i$. 
If $\actpath_i$ is consistent with $\path_{i,k}$, the $k\th$ feasible path of driver $i$, then the driver incurs a total cost of  $\pathCost_{i,k}$, and her utility is $\pi_i \triangleq \dPayment_i - \pathCost_{i,k}$.

A plan $(x, \actpath, \rPayment, \dPayment)$ is \emph{feasible} if for each rider $j \in \riders$, $x_j = \sum_{i \in \driverSet} \one{(\origin_j, \dest_j, \tr_j, j) \in \actpath_i} \in \{0,~1\}$, where $\one{\cdot}$ is the indicator function.
Unless stated otherwise, when we mention a plan in the rest of the paper, it is assumed to be feasible.
For the {\em budget balance} (BB) of a plan, we need: \vsq{-0.5em}
\begin{align}
	\sum_{j \in \riders} \rPayment_j \geq \sum_{i \in \driverSet} \dPayment_i, \label{equ:BB}
\end{align}
with strict budget balance if \eqref{equ:BB} holds with equality. 
A plan is \emph{individually rational for riders} if  \vsq{-0.5em}
\begin{align*}
	x_j \val_j \geq \rPayment_j, ~ \forall j \in \riders.
\end{align*}
A plan is \emph{individually rational for drivers} if $\pi_i \geq 0$ for all $i \in \driverSet$ s.t. $\driverEntrance_i = 0$, i.e. drivers that are not yet in the platform do not get negative utility from participating. 
A plan is {\em envy-free for riders} if no rider strictly prefers the outcome of another rider requesting the same trip, that is \vsq{-0.5em}
\begin{align}
	x_jv_j - \rPayment_j \geq x_{j'} \val_j - \rPayment_{j'} \text{ for all } j,~j' \in \riders \txtst \origin_j = \origin_{j'},~\dest_j = \dest_{j'}, \txtand \tr_j = \tr_{j'}. 
\end{align}
A plan is \emph{envy-free for drivers} if any pair of drivers with the same type have the same utility: \vsq{-0.5em}
\begin{align}
	\pi_i = \pi_{i'} \text{ for all } i,~i' \in \driverSet \txtst \te_i = \te_{i'},~  \re_i = \re_{i'}, \txtand \driverEntrance_i = \driverEntrance_{i'}.
\end{align}

Plans with a particular kind of anonymity structure can also be defined by associating prices with every possible trip.
\begin{definition}[Anonymous trip prices]  \label{defn:anon_trip_prices} A plan $(x, \actpath, \rPayment, \dPayment)$ uses \emph{anonymous trip prices} if there exist prices $\price = \{ \price_{a,b,t} \}_{ (a,b,t) \in \trips}$ such that for all $(a,b,t) \in \trips$, we have:
\begin{enumerate}[(i)]
	\setlength\itemsep{0.0em}
	\item all riders taking the same $(a,b,t)$ trip are charged the same payment $\price_{a,b,t}$, and there is no payment by riders who are not picked up, %
	and
	\item all drivers that are dispatched on a rider trip from $a$ to $b$ at time $t$ are paid the same amount $\price_{a,b,t}$ for the trip at time $t$, and there is no other payment to or from any driver.
\end{enumerate}
\end{definition}
Given dispatches $(x,\actpath)$ and anonymous trips prices $\price$, all payments are fully determined: the total payment to driver $i $ is $\dPayment_i = \sum_{j \in \riders} \one{(\origin_j,\dest_j,\tr_j,j) \in \actpath_i} \price_{\origin_j,\dest_j,\tr_j}$ and
the payment made by rider $j$ is $\rPayment_j = x_j \price_{\origin_j,
  \dest_j, \tr_j}$. For this reason, we will represent
plans with anonymous trip prices as $(x, \actpath,\price)$. 
By construction, plans with anonymous trip prices are strictly budget balanced.

\begin{definition}[Competitive equilibrium] \label{defn:CE}
A plan with anonymous trip prices $(x, \actpath,\price)$ forms a \emph{competitive equilibrium} (CE) if: 
\begin{enumerate}[(i)]
	\setlength\itemsep{0.0em}
	\item (rider best response) all riders $j \in \riders$ that can afford the ride are picked up, i.e.  $\val_j > \price_{\origin_j, \dest_j, \tr_j} \Rightarrow x_j = 1$, and all riders that  are picked up can afford the price: $ x_j = 1 \Rightarrow \val_j \geq \price_{\origin_j, \dest_j, \tr_j}$,
	\item (driver best response) $\forall i \in \driverSet$, $\pi_i  =  \max_{k = 0, \dots, |\pathSet_i|}  \left\lbrace \sum_{(a,b,t) \in  \path_{i,k} } \max \{ \price_{a,b,t} , 0 \} - \pathCost_{i,k} \right\rbrace$, i.e. each driver achieves the highest possible utility given prices and the set of feasible paths.%
\end{enumerate}
\end{definition}

Given any set of anonymous trip prices $\price$, let anonymous trip prices $\price^+$ be defined as $\price^+_{a,b,t} \triangleq \max\{\price_{a,b,t}, ~ 0\}$ for each $(a,b,t) \in \trips$. 

\begin{restatable}{lemma}{lemCEPlan} \label{lem:CE_plan_properties} Given any CE plan $(x, \actpath, \price)$, the plan with anonymous prices $(x, \actpath, \price^+)$ also forms a CE, and has the same driver and rider payments and utilities as those under $(x, \actpath, \price)$. 
\end{restatable}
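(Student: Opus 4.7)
The plan is to leverage the precise form of the driver best-response condition in Definition~\ref{defn:CE}(ii), with its inner $\max\{\price_{a,b,t},0\}$, to show that $\price$ and $\price^+$ must actually coincide on every trip where a rider is picked up in the plan $(x,\actpath,\price)$. Once this is established, the claim about equal payments and utilities is a bookkeeping consequence, and rechecking the two CE conditions under $\price^+$ is a short verification.

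The central step I would prove first is: in any CE plan $(x,\actpath,\price)$, on every trip $(a,b,t)$ at which some dispatched driver picks up a rider, $\price_{a,b,t} \geq 0$. To see this, fix driver $i$, let $\path_{i,k^*}$ be the path consistent with $\actpath_i$, and split its trips into pickup trips and relocation trips. Then $\pi_i = \dPayment_i - \pathCost_{i,k^*}$ equals the sum of $\price_{a,b,t}$ over pickup trips, minus $\pathCost_{i,k^*}$. On the other hand, the best-response condition applied at $k=k^*$ gives $\pi_i \geq \sum_{(a,b,t)\in\path_{i,k^*}} \max\{\price_{a,b,t},0\} - \pathCost_{i,k^*}$. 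Rearranging and using $\sum_{\mathrm{pickup}}\price_{a,b,t} \leq \sum_{\mathrm{pickup}}\max\{\price_{a,b,t},0\}$ pointwise forces equality throughout, which yields $\price_{a,b,t}\geq 0$ on every pickup trip of $i$ (and, incidentally, $\price_{a,b,t}\leq 0$ on every relocation trip actually taken by $i$).

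Given this, $\price^+_{a,b,t} = \price_{a,b,t}$ on every trip that carries a rider in $(x,\actpath,\price)$. Therefore $\rPayment_j = x_j \price^+_{\origin_j,\dest_j,\tr_j}$ and $\dPayment_i$ both agree with their values under $\price$, so all rider and driver payments, and hence utilities, are preserved. For the CE conditions under $\price^+$: on the rider side, the monotonicity $\price^+\geq\price$ gives $\val_j>\price^+\Rightarrow\val_j>\price\Rightarrow x_j=1$, while the previous step gives $x_j=1\Rightarrow\price^+_{\origin_j,\dest_j,\tr_j}=\price_{\origin_j,\dest_j,\tr_j}\leq\val_j$; on the driver side, $\max\{\price^+_{a,b,t},0\}=\price^+_{a,b,t}=\max\{\price_{a,b,t},0\}$ for every trip, so the best-response expression is identical under $\price^+$ and is still attained by the unchanged $\pi_i$.

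The main obstacle is the first step: cleanly extracting the sign constraint on pickup-trip prices from the slightly unusual best-response formula with the inner $\max\{\cdot,0\}$. Everything thereafter is a direct, coordinate-wise verification using what we have already proved.
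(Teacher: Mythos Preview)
Your proof is correct and follows essentially the same route as the paper: both derive non-negativity of prices on pickup trips from the driver best-response condition evaluated at the dispatched path, and then use $\max\{\price_{a,b,t},0\}=\max\{\price^+_{a,b,t},0\}$ to carry the driver condition over to $\price^+$. The paper phrases the first step as a contradiction (a negative pickup price would let the driver ``choose not to be paid'' on that trip and strictly improve), whereas you run it as an inequality chain; as a bonus your chain also yields $\price_{a,b,t}\le 0$ on relocation trips, which the paper states separately as Lemma~\ref{lem:excessive_supply_zero_price}.
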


The lemma implies that when studying the set of possible rider and driver payments and utilities among all CE outcomes, it is without loss to consider only anonymous trip prices that are non-negative. 
We leave the full proof of this lemma to Appendix~\ref{appx:proof_lem_CE_plan}. 
Intuitively, prices must be non-negative for any trip that is requested by any rider, thus changing prices from $\price$ to $\price^+$ does not affect the payments for any rider or driver, or the best response on the riders' side. The driver best response
property also continues to hold, since $\max \{ \price_{a,b,t} , 0 \} = \max \{ \price^+_{a,b,t} , 0 \}$ for all $(a,b,t) \in \trips$.

For a given mechanism that dispatches all available drivers at all times, and with knowledge of supply and demand and assuming drivers follow suggested dispatches, we can compute the intended outcome through the planning horizon. %
We call this the  ``time $0$ plan",  which consists of the assignments of 
riders, the action paths taken by drivers, and the payment schedule. 

\subsection{Optimal Plans and CE Prices}

The welfare-optimal planning problem can be formulated as an integer linear program (ILP) that determines rider pick-ups and driver paths, followed by an assignment of riders to drivers whose paths cover the rider trips. 
Let $x_j$ be the indicator that rider $j \in \riders$ is picked up, and $y_{i,k}$ be the indicator that driver $i$ takes
$\path_{i,k}$, her $k\th$ feasible path in $\pathSet_i$. We have: \vsq{-0.5em}
\begin{align}
	\max_{x,y} ~& \sum_{j \in \riders} x_j \val_j - \sum_{i \in \driverSet} \sum_{k = 0}^{|\pathSet_i|} y_{i,k}\pathCost_{i,k} \label{equ:ilp} \\
	\txtst & \sum_{j \in \riders} x_j  \one{(\origin_j, \dest_j, \tr_j) = (a,b,t)} \leq \sum_{i \in \driverSet} \sum_{k=0}^{|\pathSet_i|}  y_{i,k} \one{(a,b,t) \in \path_{i, k}}, & \forall (a,b,t) \in \trips \label{equ:ilp_trip_capacity} \\ 
		& \sum_{k = 0}^{|\pathSet_i|} y_{i,k} = 1, & \forall i \in \driverSet \label{equ:ilp_driver_constraint} \\
		& x_j \in \{0, 1\}, & \forall j \in \riders \\
		& y_{i,k} \in \{0, 1\}, & \forall i \in \driverSet,~k = 1, \dots, |\pathSet_i| 
\end{align}

Constraint \eqref{equ:ilp_driver_constraint} requires that each driver takes exactly one path (which includes the path $\path_{i,0}$ representing not entering/exiting immediately). 
The feasibility constraint \eqref{equ:ilp_trip_capacity} requires that for each trip $(a,b,t) \in \trips$, the number of riders who request this trip and are picked up is no greater than the total number of drivers whose  paths cover this trip.
Once the rider pick-ups $x$ and driver paths $y$ are computed, \eqref{equ:ilp_trip_capacity} guarantees that each rider with $x_j = 1$ can be assigned to a driver.%

Relaxing the integrality constraints on variables $x$ and $y$, we obtain the following linear program (LP) relaxation of the ILP: \vsq{-0.5em}
\begin{align}
	\max_{x,y} ~& \sum_{j \in \riders} x_j \val_j - \sum_{i \in \driverSet} \sum_{k = 0}^{|\pathSet_i|} y_{i,k}\pathCost_{i,k} \label{equ:lp} \\
	\txtst & \sum_{j \in \riders} x_j  \one{ (\origin_j, \dest_j, \tr_j) = (a,b,t) } \leq \sum_{i \in \driverSet} \sum_{k = 0}^{|\pathSet_i|}  y_{i,k} \one{(a,b,t) \in \path_{i, k}}, & \forall (a,b,t) \in \trips \label{equ:lp_trip_capacity_constraint} \\ 
		& \sum_{k = 0}^{|\pathSet_i|} y_{i,k} = 1, & \forall i \in \driverSet \label{equ:lp_driver_constraint} \\
		& x_j \leq 1, & \forall j \in \riders \label{equ:lp_rider_constraint} \\
		& x_j \geq 0,  & \forall j \in \riders \label{equ:lp_rider_nonneg}  \\
		& y_{i,k} \geq 0, & \forall i \in \driverSet,~k = 1, \dots, |\pathSet_i| \label{equ:lp_driver_nonneg}
\end{align}

We refer to \eqref{equ:lp} as the primal LP.
The constraint $y_{i,k} \leq 1$, that each path is taken by each driver at most once is guaranteed by imposing \eqref{equ:lp_driver_constraint} and \eqref{equ:lp_driver_nonneg}, and is omitted. 

\begin{restatable}[Integrality]{lemma}{thmLPIntegrality} \label{thm:lp_integrality}
There exists an integer optimal solution to the linear program~\eqref{equ:lp}. 
\end{restatable}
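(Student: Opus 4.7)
The plan is to reformulate the primal LP~\eqref{equ:lp} as a min-cost flow problem on a time-expanded digraph and then invoke the total unimodularity of the node-arc incidence matrix to conclude integrality. Build a digraph $G$ with a node $(a,t)$ for each location-time pair $a \in \loc,~t \in \timeSet$, together with a super-source $\driverNode$ and super-sink $\sink$. For each driver $i \in \driverSet$, include a unit-capacity supply arc from $\driverNode$ to $(\re_i, \te_i)$, and include a parallel $\driverNode \to \sink$ arc encoding the ``outside'' path $\path_{i,0}$ (cost $0$ when $\driverEntrance_i = 0$ and cost $\exitCost_{\horizon - \te_i}$ when $\driverEntrance_i = 1$). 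For each trip $(a,b,t)\in\trips$, include a relocation arc from $(a,t)$ to $(b, t+\dist(a,b))$ of cost $\cost_{a,b,t}$ and unbounded capacity, and for each rider $j \in \riders$ include a parallel unit-capacity rider arc of cost $\cost_{\origin_j,\dest_j,\tr_j} - \val_j$. Finally, for each $(a,t)$ include an exit arc to $\sink$ of cost $\exitCost_{\horizon - t}$.

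Next I would rewrite~\eqref{equ:lp} in terms of edge flows instead of path variables: replace each $y_{i,k}$ by edge flows that send one unit from $\driverNode$ through $(\re_i,\te_i)$ to $\sink$; the driver-selection constraint~\eqref{equ:lp_driver_constraint} is then implied by flow conservation at $\driverNode$ together with the unit capacity on $i$'s supply arc. Identify $x_j$ with the flow on rider $j$'s rider arc, so that the trip-capacity constraint~\eqref{equ:lp_trip_capacity_constraint} becomes the inequality between rider-arc flow and total parallel-arc flow on the same trip---which is automatic from flow decomposition at $(a,t)$ and $(b,t+\dist(a,b))$---while~\eqref{equ:lp_rider_constraint} becomes the unit capacity on the rider arc. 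A standard path-flow decomposition then certifies that this edge-flow LP and~\eqref{equ:lp} have the same optimal value, and that an integer edge-flow solution lifts to an integer feasible solution of~\eqref{equ:lp}.

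The resulting edge-flow LP is a min-cost flow problem on $G$ with integer supplies/demands and integer arc capacities, so its constraint matrix---the incidence matrix of a digraph augmented with simple variable bounds---is totally unimodular. Hence every vertex of its feasible polyhedron is integral, and in particular the LP attains its optimum at an integer vertex; decomposing that flow into driver paths yields the desired integer optimum of~\eqref{equ:lp}.

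I expect the one bit of real work to be the reduction itself, particularly getting the ``outside'' and ``early-exit'' options to fit uniformly into the flow network so that every path $\path_{i,k} \in \pathSet_i$ (including $\path_{i,0}$ and paths that leave the platform $\Delta$ periods early) corresponds to a $\driverNode$-$\sink$ path of cost exactly $\pathCost_{i,k}$. Once the graph is set up correctly, integrality is immediate from standard min-cost flow theory, which is the same machinery later used to obtain efficient algorithms and to invoke $M^\natural$-concavity elsewhere in the paper.
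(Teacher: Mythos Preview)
Your overall approach—reduce to min-cost flow and invoke total unimodularity—is exactly what the paper does, and the rider arcs, relocation arcs, and exit arcs are set up correctly. But there is a genuine gap on the driver side: a single super-source $\driverNode$ does not enforce the per-driver constraint~\eqref{equ:lp_driver_constraint}. Take driver~$1$ with $\driverEntrance_1=1$ at $(A,0)$ and driver~$2$ with $\driverEntrance_2=0$ at $(B,0)$. Your $\driverNode$ has supply~$2$ and four unit-capacity outgoing arcs: $\driverNode\to(A,0)$, $\driverNode\to(B,0)$, a cost-$\exitCost_{\horizon}$ arc $\driverNode\to\sink$ (driver~$1$'s outside option), and a cost-$0$ arc $\driverNode\to\sink$ (driver~$2$'s). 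Nothing prevents the flow from sending one unit on $\driverNode\to(B,0)$ and the other on the cost-$0$ arc $\driverNode\to\sink$—so driver~$2$ both enters and takes the free outside option while driver~$1$ disappears at zero cost. If there is a valuable rider at $(B,0)$ and nothing useful reachable from $(A,0)$, this flow strictly beats every feasible solution of~\eqref{equ:lp}, which must pay at least $\exitCost_{\horizon}$ for driver~$1$. Hence your flow LP and~\eqref{equ:lp} do not have the same optimum, and your claim that flow conservation at $\driverNode$ plus unit arc capacities yields~\eqref{equ:lp_driver_constraint} is false.

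The fix is what the paper does: give each driver its own source node $\driverNode_i$ with supply~$1$, with an arc $\driverNode_i\to(\re_i,\te_i)$ and, only when $\driverEntrance_i=0$, an arc $\driverNode_i\to\sink$ at cost~$0$. This forces each driver's unit through exactly one of its own outgoing arcs and restores the correspondence between $\driverNode_i$--$\sink$ paths and feasible paths $\path_{i,k}$. With that change, the rest of your argument goes through as written.
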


We leave the proof of this lemma to Appendix~\ref{appx:proof_thm_lp_integrality}, showing there a correspondence to a minimum cost flow (MCF) problem, where drivers flow through a network with vertices corresponding to (location, time) pairs, edges corresponding to trips, and with edge costs equal to driver's costs minus riders' values. The MCF has integral optimal solutions due to total-unimodularity, and this reduction to MCF can also be used to efficiently solve for the optimal plans.

Let $\price_{a,b,t}$, $\pi_i$ and $u_j$ denote the dual variables corresponding to the primal constraints \eqref{equ:lp_trip_capacity_constraint}, \eqref{equ:lp_driver_constraint} and \eqref{equ:lp_rider_constraint}, respectively. The dual LP of \eqref{equ:lp} is as follows: \vsq{-0.2em}
\begin{align}
	\min~ &\sum_{i \in \driverSet} \pi_i + \sum_{j \in \riders} u_j & \label{equ:dual} \\
	\txtst & \pi_{i} \geq \sum_{(a,b,t) \in \path_{i,k}} \price_{a,b,t} - \pathCost_{i,k} &  \forall k = 0, 1, \dots, |\pathSet_{i}|, ~ \forall i \in \driverSet  \label{equ:dual_cnst_driver} \\
	& u_j \geq \val_j - \price_{\origin_j, \dest_j, \tr_j},  & \forall j \in \riders \label{equ:dual_cnst_rider} \\
		 & \price_{a,b,t} \geq 0, &\forall (a,b,t) \in \trips & \label{equ:dual_price_nonneg}\\
	 & u_j
	  \geq 0, & \forall j \in \riders \label{equ:dual_util_nonneg}
\end{align}

\begin{restatable}[Welfare Theorem]{lemma}{thmOptPlans} \label{thm:optimal_plans}
A dispatching $(x, \actpath)$ is welfare-optimal if and only if there exists anonymous trip prices $\price$ s.t.~the plan $(x, \actpath, \price)$ forms a competitive equilibrium.
Such CE plans always exist and are efficient to compute. Moreover, these plans are strictly budget balanced, and are individually rational and envy-free for both riders and drivers.
\end{restatable}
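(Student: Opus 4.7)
The plan is to prove the equivalence between welfare-optimal dispatchings and competitive equilibria via strong LP duality applied to the primal \eqref{equ:lp} and its dual \eqref{equ:dual}, interpreting the dual variables $\price_{a,b,t}$ as anonymous trip prices, $\pi_i$ as driver utilities, and $u_j$ as rider surpluses. Existence and efficient computation follow at once from Lemma~\ref{thm:lp_integrality} (which gives an integer primal optimum, computable via the min-cost flow reduction) together with an efficient solver for the dual; in fact an optimal $\price^*$ can be read off directly from LP duality in polynomial time.

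For the forward direction, let $(x,\actpath)$ be welfare-optimal. By Lemma~\ref{thm:lp_integrality} the induced integer solution $(x,y)$ is primal-optimal, so strong duality yields a dual-optimal $(\price^*,\pi^*,u^*)$ with $\price^*\geq 0$ by \eqref{equ:dual_price_nonneg}. Complementary slackness on \eqref{equ:lp_driver_constraint} combined with dual feasibility \eqref{equ:dual_cnst_driver} gives $\pi_i^{*}=\max_{k}\bigl\{\sum_{(a,b,t)\in\path_{i,k}}\price^{*}_{a,b,t}-\pathCost_{i,k}\bigr\}$, which by Lemma~\ref{lem:CE_plan_properties} (prices are non-negative, so $\max\{\price^*,0\}=\price^*$) is precisely the driver best-response condition; the maximum is attained by the actually-taken path $\path_{i,k}$ for which $y_{i,k}=1$. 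Similarly, CS on \eqref{equ:lp_rider_constraint} together with \eqref{equ:dual_cnst_rider} gives: $\val_j>\price^*_{\origin_j,\dest_j,\tr_j}\Rightarrow u_j^{*}>0\Rightarrow x_j=1$, and $x_j=1\Rightarrow u_j^{*}=\val_j-\price^*_{\origin_j,\dest_j,\tr_j}\geq 0$, i.e.\ rider best-response.

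For the reverse direction, given any CE $(x,\actpath,\price)$, set $\pi_i$ equal to the driver utility and $u_j=\max\{0,\val_j-\price_{\origin_j,\dest_j,\tr_j}\}$. This is dual-feasible by construction, and strict budget balance of anonymous-price plans gives
\[
\sum_i\pi_i+\sum_j u_j \;=\; \sum_j x_j\val_j-\sum_{i,k}y_{i,k}\pathCost_{i,k},
\]
matching the primal objective; by weak duality both are optimal, so $(x,\actpath)$ is welfare-optimal. The remaining properties are then read off from the CE definition: strict budget balance is automatic from anonymous trip prices; rider IR follows from $x_j=1\Rightarrow\val_j\geq\price_{\origin_j,\dest_j,\tr_j}$; driver IR (for $\driverEntrance_i=0$) follows since path $\path_{i,0}$ of cost $0$ is available, so $\pi_i\geq 0$ by driver best-response. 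Envy-freeness for drivers is immediate because drivers with identical types have identical feasible path sets $\pathSet_i$, so their best-response utilities coincide. Envy-freeness for riders is a short case analysis on $(x_j,x_{j'})$ for riders requesting the same trip: anonymous prices plus the two CE rider best-response inequalities leave only the cases where both obtain equal payoffs.

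The main technical obstacle is handling the path-zero option uniformly across the two driver types: for $\driverEntrance_i=0$ the ``do not enter'' path has cost $0$ (enforcing $\pi_i\geq 0$), while for $\driverEntrance_i=1$ the ``exit immediately'' path has cost $\exitCost_{\horizon-\te_i}$ (allowing $\pi_i$ to be negative in principle). The proof has to verify that the same dual-feasibility inequality \eqref{equ:dual_cnst_driver}, applied at $k=0$, yields the correct IR statement in each case, and that the CS equality selects the actually-taken path rather than the outside option.
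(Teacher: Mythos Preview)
Your approach is essentially the paper's: both directions go through LP duality and complementary slackness between \eqref{equ:lp} and \eqref{equ:dual}, with the dual variables interpreted as prices and utilities. Your reverse direction is actually a bit cleaner than the paper's---you use strict budget balance to match primal and dual objectives directly, whereas the paper verifies each CS condition individually---though you should invoke Lemma~\ref{lem:CE_plan_properties} there (not in the forward direction, where $\price^*\geq 0$ is already a dual feasibility constraint) to ensure $\price\geq 0$ so that \eqref{equ:dual_price_nonneg} holds for your constructed dual solution.

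There is one genuine gap in your forward direction. You establish that the dual variable $\pi_i^{*}$ equals $\max_{k}\{\sum_{(a,b,t)\in\path_{i,k}}\price^{*}_{a,b,t}-\pathCost_{i,k}\}$ and that the maximum is attained at the dispatched path. But the CE driver best-response condition is a statement about the driver's \emph{actual} utility under the plan $(x,\actpath,\price^{*})$, and under anonymous trip prices a driver is paid only for \emph{rider} trips, not for relocation legs. So you still owe the identity $\pi_i^{*}=\pi_i$, i.e.\ that every relocation trip on the dispatched path has $\price^{*}_{a,b,t}=0$. The paper closes this via the CS condition on \eqref{equ:lp_trip_capacity_constraint} (their \cs{5}): if $\price^{*}_{a,b,t}>0$ then the trip-capacity constraint is tight, meaning every driver whose path covers $(a,b,t)$ is carrying a rider on that leg; contrapositively, any relocation leg has price zero. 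Without this step, your forward implication does not go through.

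A minor remark on efficiency: the dual \eqref{equ:dual} has exponentially many constraints (one per feasible path), so ``read off $\price^{*}$ from LP duality'' is not immediately polynomial. The paper routes this through the flow-LP dual (Lemma~\ref{lem:dual_payment_correspondence}), which has polynomial size; you should cite the min-cost-flow reduction for the dual side as well as the primal.
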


See Appendix~\ref{appx:proof_thm_opt_plan} for the proof of this lemma. Briefly, the dual variables $\pi$ and $u$ can be interpreted as the utilities of drivers and riders, when the anonymous trip prices are given by $\price$. We then make use of Lemma~\ref{lem:CE_plan_properties}, and the standard observations about complementary slackness conditions and their connection with competitive equilibria~\citep{parkes2000iterative,bertsekas1990auction}. 
By integrality,  CE plans always exist, and can be efficiently computed by solving the primal and dual LPs of the MCF problem.

\smallskip

For two driver utility profiles $\pi = (\pi_1, \dots, \pi_\nd)$, $\pi' = (\pi'_1, \dots, \pi_\nd')$ that correspond to CE plans,  let the \emph{join} $\bar{\pi} = \pi \vee \pi'$ and the \emph{meet}
$\underline{\pi} = \pi \wedge \pi'$ be defined as $\bar{\pi}_i \triangleq \max\{\pi_i, \pi_i'\}$ and $\underline{\pi}_i \triangleq \min \{ \pi_i, \pi_i'\}$ for all $i\in \driverSet$.
The following lemma shows that drivers' utilities among all CE outcomes form a lattice, meaning that there exist CE plans where driver utilities are given by $\bar{\pi}$ or $\underline{\pi}$. 

The lemma also shows a connection between the top/bottom of the lattice and the welfare differences from losing/replicating a driver, which plays an important role in establishing the incentive properties of the STP mechanism.
Denote $\sw(\driverSet,~\riders)$ as the highest welfare achievable by drivers $\driverSet$ and riders $\riders$ (i.e. the optimal objective of \eqref{equ:lp}). 
For each driver $i \in \driverSet$, define the {\em social welfare gain from replicating driver} $i$, and the {\em social welfare loss from losing driver} $i$, as: \vsq{-0.2em}
\begin{align}
	\Phi_{\driverNode_i} & \triangleq \sw(\driverSet \cup \{i'\},~\riders) - \sw(\driverSet,~\riders), \label{equ:welfare_gain_driver}\\
	\Psi_{\driverNode_i}  & \triangleq \sw(\driverSet,~\riders) - \sw(\driverSet \backslash \{ i \},~\riders), \label{equ:welfare_loss_driver}
\end{align}
where driver $i'$ with $\theta_{i'} = \theta_i$ is a replica of driver $i$. A \emph{driver-optimal plan} has a driver utility profile at the top of the lattice, and a \emph{driver-pessimal plan} has a utility profile at the bottom of the lattice.

\begin{restatable}[Lattice Structure]{lemma}{lemOptPesPlans} \label{thm:opt_pes_plans} Drivers' utility profile $\pi$ among all CE outcomes form a lattice. Moreover, for each driver $i \in \driverSet$,  $\Phi_{\driverNode_i}$ and  $\Psi_{\driverNode_i}$ are equal to utility of driver $i$ in the driver-pessimal and driver-optimal CE plans, respectively.
\end{restatable}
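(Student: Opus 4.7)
The plan is to leverage Lemma~\ref{thm:optimal_plans}, which identifies CE plans with primal-dual optimal pairs of the LP~\eqref{equ:lp}, so that the structural claims reduce to facts about the set of dual optimal solutions. Throughout, I view driver $i$'s utility $\pi_i$ as the dual variable attached to the primal constraint~\eqref{equ:lp_driver_constraint}.

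First I would establish the bounds $\Phi_{\driverNode_i} \le \pi_i \le \Psi_{\driverNode_i}$ for every CE plan. For the upper bound, given any dual optimal $(\pi, u, \price)$ for $(\driverSet, \riders)$, simply dropping the single dual constraint~\eqref{equ:dual_cnst_driver} for driver $i$ produces a dual feasible solution for the reduced economy $(\driverSet\setminus\{i\},\riders)$ with objective $\sw(\driverSet,\riders) - \pi_i$, so weak duality forces $\pi_i \le \sw(\driverSet,\riders) - \sw(\driverSet\setminus\{i\},\riders) = \Psi_{\driverNode_i}$. For the lower bound, add a clone $i'$ of driver $i$ and extend the dual by $\pi_{i'} = \pi_i$; since $i'$ has the same feasible paths and costs as $i$, all of $i'$'s dual constraints are satisfied. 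This gives a dual feasible solution for $(\driverSet\cup\{i'\},\riders)$ of value $\sw(\driverSet,\riders) + \pi_i$, and weak duality yields $\pi_i \ge \Phi_{\driverNode_i}$.

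Second, I would argue tightness of both bounds via LP sensitivity. Treating the right-hand side $b_i$ of~\eqref{equ:lp_driver_constraint} as a parameter, the optimal value $\sw(b_i)$ is a concave, piecewise-linear function of $b_i$; by integrality (Lemma~\ref{thm:lp_integrality}) together with the MCF reduction, it is linear on each unit interval with slope equal to the marginal welfare gained from one additional unit of driver-$i$ supply. By the shadow-price theorem, the range of the optimal dual multiplier at $b_i = 1$ is the subdifferential
$$[\sw'_+(1),\ \sw'_-(1)] = [\sw(2)-\sw(1),\ \sw(1)-\sw(0)] = [\Phi_{\driverNode_i},\ \Psi_{\driverNode_i}],$$
where $b_i = 0$ corresponds to removing driver $i$ and $b_i = 2$ to duplicating her. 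This simultaneously recovers the bounds and their achievability: there exist CE plans in which $\pi_i$ attains each endpoint.

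Finally, for the lattice structure, I would invoke the $M^\natural$-concavity of the min-cost flow objective cited by the paper. Under the MCF reduction in Lemma~\ref{thm:lp_integrality}, driver $i$'s utility $\pi_i$ can be written as the potential difference between driver $i$'s entry node and the common sink in the flow network, and the $M^\natural$-concavity of the MCF objective implies that the set of optimal node potentials is a sublattice under componentwise $\max$ and $\min$. Taking componentwise join (respectively meet) of the potentials from two dual optimal solutions thus yields another dual optimal solution, and pushing this through the potential-to-$\pi$ map induces a CE plan realizing $\bar{\pi}$ (respectively $\underline{\pi}$). Combining with the sensitivity result, the top of the lattice realizes $\pi_i = \Psi_{\driverNode_i}$ \emph{simultaneously} for all drivers, and the bottom realizes $\pi_i = \Phi_{\driverNode_i}$ simultaneously; this is the main delicacy, and I expect the hard part of the proof to be verifying that the lattice operations on MCF potentials translate cleanly into componentwise operations on driver utility profiles while preserving both feasibility and the per-driver shadow-price identification.
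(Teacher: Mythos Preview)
Your proposal is correct and follows the same overall architecture as the paper: reduce to the min-cost-flow dual, use that optimal node potentials form a lattice, and identify the extremes via sensitivity/subgradients of the optimal value with respect to the driver's supply. A few points of comparison are worth noting. First, your weak-duality arguments for the bounds $\Phi_{\driverNode_i}\le\pi_i\le\Psi_{\driverNode_i}$ are carried out directly on the path-based dual~\eqref{equ:dual}; the paper instead first passes to the flow dual (via an explicit correspondence lemma between $\pi_i$ and the node potential $\varphi_{\driverNode_i}$), constructs $\Phi$ and $\Psi$ as shortest-path potentials in the residual graph, and only then invokes the subgradient characterization. Your route is more elementary here. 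Second, for the lattice structure the paper does not appeal to $M^\natural$-concavity as a black box; it directly checks that the coordinatewise join and meet of two optimal potential vectors remain feasible and satisfy complementary slackness with a fixed optimal flow. This is more work but is self-contained. Finally, the ``main delicacy'' you flag---that lattice operations on full potential vectors project to coordinatewise operations on the $\pi_i$'s---is exactly what the paper's correspondence lemma handles, and you would need an analogous statement to close the argument; citing $M^\natural$-concavity alone does not automatically give you that the driver-utility projection of the potential lattice is itself the lattice of CE utility profiles.
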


We leave the proof of this lemma to Appendix~\ref{appx:proof_lemma_opt_pes_plans}. 
The lattice structure follows from the correspondences between driver utilities, the dual LP \eqref{equ:dual}, and the dual of the flow LP, and the fact that optimal dual solutions of MCF form a lattice.
Standard arguments on shortest paths in the residual graph~\citep{ahuja1993network}, and the connection between optimal dual solutions and subgradients (w.r.t. flow boundary conditions), then imply the correspondence between welfare gains/losses and driver pessimal/optimal utilities.

\medskip

A plan is in the \emph{core} if no coalition of riders and drivers can break out of this plan and make a plan among themselves, s.t. all drivers and riders in the coalition get at least their utilities from the original plan, and at least one of the drivers or riders is strictly better off. %

\begin{restatable}[Core Equivalence]{lemma}{lemCoreCE} \label{lem:core_equal_CE} All CE plans are in the core. Moreover, for any budget-balanced core outcome $(x, \actpath, \rPayment, \dPayment)$, there exists prices $\price$ such that the plan with anonymous prices $(x, \actpath, \price)$ forms a CE, and has the same driver and rider total utilities.
\end{restatable}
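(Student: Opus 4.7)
The plan is to prove the two directions separately.

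First, every CE plan lies in the core. Suppose $(x,\actpath,\price)$ is a CE but is blocked by some coalition $S = S_R \cup S_D$ via an internal plan $(x',\actpath',\rPayment',\dPayment')$ that uses only drivers of $S_D$ to serve only riders of $S_R$, weakly improves every member of $S$, and strictly improves at least one. Since the internal transfers cancel, the sum of alternative utilities inside $S$ equals the coalition welfare $W'_S \triangleq \sum_{j \in S_R} x'_j v_j - \sum_{i \in S_D} \pathCost(\actpath'_i)$. By Lemma~\ref{lem:CE_plan_properties}, we may assume $\price \geq 0$. The CE best-response conditions then give $u_j \geq x'_j(v_j - \price_{\origin_j,\dest_j,\tr_j})$ for every rider $j$, and $\pi_i \geq \sum_{(a,b,t) \in \actpath'_i} \price_{a,b,t} - \pathCost(\actpath'_i)$ for every driver $i$; because a driver picks up at most one rider per trip appearing on her path, summing these bounds over $S$ yields $W'_S \leq \sum_{j \in S_R} u_j + \sum_{i \in S_D} \pi_i$, contradicting the strict improvement for some member.

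For the converse, let $(x,\actpath,\rPayment,\dPayment)$ be a budget-balanced core outcome with driver and rider utilities $\pi^0_i, u^0_j$. First, $(x,\actpath)$ must be welfare-optimal: otherwise the grand coalition could implement an optimal dispatching and redistribute the positive welfare surplus so that every member strictly improves, producing a blocking coalition. Hence $\sum_i \pi^0_i + \sum_j u^0_j = \sw(\driverSet,\riders)$. By Lemma~\ref{thm:optimal_plans}, it then suffices to exhibit anonymous prices $\price \geq 0$ such that $(\pi^0, u^0, \price)$ is feasible for the dual LP \eqref{equ:dual}; complementary slackness against the primal optimum $(x,\actpath)$ will then identify the induced anonymous-price payments with exactly $\pi^0$ and $u^0$.

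I would construct such $\price$ by Farkas' lemma applied to the feasibility LP over $\price \geq 0$ with constraints $\sum_{(a,b,t) \in \path_{i,k}} \price_{a,b,t} \leq \pi^0_i + \pathCost_{i,k}$ for all $(i,k)$ and $\price_{\origin_j,\dest_j,\tr_j} \geq v_j - u^0_j$ for all $j$. An infeasibility certificate would supply nonnegative multipliers $\{y_{i,k}\}$ and $\{x_j\}$ satisfying $\sum_j x_j \one{(\origin_j,\dest_j,\tr_j) = (a,b,t)} \leq \sum_{i,k} y_{i,k} \one{(a,b,t)\in\path_{i,k}}$ for every trip together with $\sum_{i,k} y_{i,k}(\pi^0_i + \pathCost_{i,k}) < \sum_j x_j(v_j - u^0_j)$. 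Invoking Lemma~\ref{thm:lp_integrality}, which gives integrality of the welfare LP on every sub-population of drivers and riders, I would round this fractional certificate to an integral one, extracting a coalition $S = S_R \cup S_D$ together with a feasible within-coalition dispatching whose welfare $W(S)$ strictly exceeds $\sum_{i \in S_D} \pi^0_i + \sum_{j \in S_R} u^0_j$. Splitting the strict surplus among members of $S$ yields an internal plan strictly improving every member, contradicting the core property. Hence $\price$ exists and has the required properties.

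The main obstacle is the step from the fractional Farkas certificate to an integral blocking coalition in the second direction: it is here that the min-cost-flow-based integrality of Lemma~\ref{thm:lp_integrality} is indispensable, and the rounding must be done carefully so that the integer coalition welfare still strictly exceeds the sum of core payoffs inside $S$.
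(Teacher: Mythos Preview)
Your first direction is correct and essentially matches the paper's argument: both sum the CE best-response inequalities over a putative blocking coalition and bound the coalitional welfare by the sum of CE utilities inside it.

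Your second direction takes a genuinely different route from the paper, and the gap is exactly where you flag it. The Farkas certificate $(y,x)$ you obtain satisfies the trip-capacity inequalities but carries \emph{no} constraint of the form $\sum_k y_{i,k}\le 1$; it is therefore not a feasible point of the welfare LP~\eqref{equ:lp} for any sub-population of drivers and riders, and Lemma~\ref{thm:lp_integrality} cannot be invoked as you describe. To salvage the argument one must (i) scale the certificate so that $\sum_k y_{i,k}\le 1$ and $x_j\le 1$ everywhere, (ii) regard the scaled $(y,x)$ as a feasible point of a \emph{modified} LP whose objective subtracts $u^0_j$ from each rider value and adds $\pi^0_i$ to every path cost, with the per-driver constraint relaxed to $\le 1$, (iii) argue that this modified LP still reduces to an MCF and hence has an integral optimum with strictly positive objective, and (iv) read off the blocking coalition from that integral optimum. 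This works, but it is a substantive additional argument that your write-up does not supply; simply citing Lemma~\ref{thm:lp_integrality} ``on every sub-population'' does not do it.

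The paper sidesteps this difficulty by a direct construction rather than Farkas. Starting from the core outcome it uses small, explicit blocking coalitions to establish, in turn: strict budget balance between each driver and the riders she serves; anonymity of rider payments on the same trip (via a rider together with another rider's driver and that driver's remaining passengers); nonnegativity of those payments; and zero price on over-supplied trips. It then sets $\price_{a,b,t}$ to the common rider payment on that trip (or to the highest unserved value, or to $0$), checks that driver payments coincide with $\dPayment_i$, and verifies rider and driver best response directly. This is more hands-on than your LP approach but yields the supporting prices explicitly and requires no rounding.
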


See Appendix~\ref{appx:proof_lem_core} for the proof of this lemma. Intuitively, any CE plan is in the core since for any $\driverSet' \subseteq \driverSet$ and $\riders' \subseteq \riders$, the highest achievable coalitional welfare $\sw(\driverSet',~\riders')$ is no greater than the sum of utilities of all driver and riders in this coalition under any CE plan. Given any core outcome, we can construct anonymous trip prices $\price$ that support the outcome in CE, and have the same driver and rider total payments: $\rPayment_j = x_j \price_{\origin_j, \dest_j, \tr_j} $ and $\dPayment_i = \sum_{j \in \riders}\one{ (\origin_j, \dest_j, \tr_j, j) \in \actpath_i  }  \price_{\origin_j, \dest_j, \tr_j}$.%

\smallskip

We revisit the Super Bowl example, and show that CE plans employ prices that are more smooth in space and time, in comparison to the the outcome under the myopic pricing mechanism.

\addtocounter{example}{-1} 
\begin{example}[Continued] \label{exmp:super_bowl_cont}

\newcommand{\nodeScaleII}{0.8}
\newcommand{\hdistII}{3.2}
\newcommand{\vdistII}{1.7}	

\begin{figure}[t!]
\centering
\vsq{-0.5em}
\begin{tikzpicture}[->,>=stealth',shorten >=1pt, auto, node distance=2cm,semithick][font = \small]
\tikzstyle{vertex} = [fill=white,draw=black,text=black,scale=0.9]

\draw[line width=0.25mm] 	(-1, \vdistII*2+0.5) -- (\hdistII * 3 + 1, \vdistII*2+0.5);

\node[text width=1.2cm] at (0, \vdistII*2+0.8) {9:50pm};
\node[text width=1.3cm] at (\hdistII, \vdistII*2+0.8) {10:00pm};
\node[text width=3cm] at (\hdistII + 0.2, \vdistII*2 + 1.2) {Super Bowl Ends};
\node[text width=1.3cm] at (\hdistII*2, \vdistII*2+0.8) {10:10pm};
\node[text width=1.3cm] at (\hdistII*3, \vdistII*2+0.8) {10:20pm};

\node[state]         (A0) [scale = \nodeScaleII] {$A,0$};
\node[state]         (B0) [above of=A0, node distance = \vdistII cm, scale=\nodeScaleII] {$B,0$};
\node[state]         (A0) [scale = \nodeScaleII] {$A,0$};
\node[state]         (C0) [above of=B0, node distance = \vdistII cm, scale=\nodeScaleII] {$C,0$};

\node[state]         (A1) [right of=A0, node distance = \hdistII cm, scale=\nodeScaleII] {$A,1$};
\node[state]         (B1) [right of=B0, node distance = \hdistII cm, scale=\nodeScaleII] {$B,1$};
\node[state]         (C1) [right of=C0, node distance = \hdistII cm, scale=\nodeScaleII] {$C,1$};

\node[state]         (A2) [right of=A1, node distance = \hdistII cm, scale=\nodeScaleII] {$A,2$}; 
\node[state]         (B2) [right of=B1, node distance = \hdistII cm, scale=\nodeScaleII] {$B,2$};
\node[state]         (C2) [right of=C1, node distance = \hdistII cm, scale=\nodeScaleII] {$C,2$};

\node[state]         (A3) [right of=A2, node distance = \hdistII cm, scale=\nodeScaleII] {$A,3$}; 
\node[state]         (B3) [right of=B2, node distance = \hdistII cm, scale=\nodeScaleII] {$B,3$};
\node[state]         (C3) [right of=C2, node distance = \hdistII cm, scale=\nodeScaleII] {$C,3$};

\node[text width=3cm] at (-0.3, 2 * \vdistII + 0.2) {{\color{\driverOneTextColor} Driver 1}};
\node[text width=3cm] at (-0.3, 2 * \vdistII - 0.2) {{\color{\driverTwoTextColor} Driver 2}};
\node[text width=3cm] at (-0.3, \vdistII) {{\color{\driverThreeTextColor} Driver 3}};

\path (C0) edge	node[pos=0.6, sloped, above] {$\val_1 \shorteq 20$ \hspace{0.7em} $\val_2 \shorteq 30$} (B1);
\path (B0) edge	node[pos=0.2, sloped, above] {{\color{\driverThreeRiderColor}  $\boldsymbol{\val_3 \shorteq 10}$}} (C1);
\path (B0) edge	node[pos=0.5, sloped, above] {$\val_4 \shorteq 20$} (A1);
\path (B1) edge	node[pos=0.5, sloped, above] {$\val_5 \shorteq 20$} (B2);
\path (C1) edge	node[pos=0.6, sloped, above] {{\color{\driverThreeRiderColor}  $\boldsymbol{\val_6 \shorteq 100}$}} (B2);

\draw[->]  (\hdistII + 0.35, \vdistII * 2)  to[out=-4, in= 130 ](\hdistII * 3 - 0.35, 0.1);

\node[text width=3cm] at (\hdistII * 2 + 2.2, \vdistII + 0.8) {{\color{\driverOneRiderColor}  $\boldsymbol{\val_7 \shorteq 100}$}};
\node[text width=3cm] at (\hdistII * 2 + 2.7, \vdistII + 0.5) {{\color{\driverTwoRiderColor}  $\boldsymbol{\val_8 \shorteq 90}$}};
\node[text width=3cm] at (\hdistII * 2 + 3.2, \vdistII + 0.2) {$\val_9 \shorteq 80$};

\draw[dashed,\driverOneColor, \driverthickness] (0.4, \vdistII *2 + 0.08) -- (\hdistII  - 0.35, \vdistII * 2  + 0.08);
\draw[dashdotted, \driverTwoColor, \driverthickness] (0.4, \vdistII *2 - 0.08) -- (\hdistII  - 0.35, \vdistII * 2  - 0.08);
\draw[dotted, \driverThreeColor, \driverthickness] (0.4, \vdistII + 0.07 ) -- (\hdistII - 0.27, \vdistII * 2 - 0.27);

\draw[dashed, \driverOneColor, \driverthickness]  (\hdistII + 0.4, \vdistII * 2 + 0.1)  to[out=-4, in= 130 ](\hdistII* 3 - 0.3, 0.2);
\draw[dashdotted, \driverTwoColor, \driverthickness]  (\hdistII + 0.4, \vdistII*2 - 0.1)  to[out=-4, in= 130 ](\hdistII* 3 - 0.34, - 0.08);
\draw[dotted, \driverThreeColor, \driverthickness] (\hdistII + 0.35, \vdistII * 2 - 0.3) -- (2 * \hdistII - 0.38, \vdistII + 0.07 );

\draw[dashed, \driverOneColor, \driverthickness](\hdistII * 3 + 1.1, 1) -- (\hdistII * 3 + 2.1, 1);
\draw[dashdotted, \driverTwoColor, \driverthickness](\hdistII*3 + 1.1, 0.5) -- (\hdistII*3 + 2.1, 0.5);
\draw[dotted, \driverThreeColor, \driverthickness](\hdistII * 3 + 1.1, 0) -- (\hdistII * 3 + 2.1, 0);

\node[text width=0.4cm] at (\hdistII * 3 + 2.3, 1) {{\color{\driverOneLgdColor} $z_1$}};
\node[text width=0.4cm] at (\hdistII * 3 + 2.3, 0.5) {{\color{\driverTwoLgdColor} $z_2$}};
\node[text width=0.4cm] at (\hdistII * 3 + 2.3, 0.0) {{\color{\driverThreeLgdColor} $z_3$}};

\node[text width=1cm] at (1.5, \vdistII + 0.15) {{\color{gray}\emph{0}}};
\node[text width=1cm] at (2.7, \vdistII + 0.15) {{\color{gray}\emph{55}}};
\node[text width=1cm] at (1.8, \vdistII /2 - 0.25) {{\color{gray}\emph{70}}};

\node[text width=1cm] at (\hdistII  + 1.3, \vdistII * 1.5-0.1) {{\color{gray}\emph{75}}};
\node[text width=1cm] at (\hdistII  + 1.9, \vdistII - 0.2) {{\color{gray}\emph{20}}};
\node[text width=1cm] at (\hdistII * 2 + 2, \vdistII /2 ) {{\color{gray}\emph{80}}};

\end{tikzpicture}
\vsq{-1.5em}
\caption{The Super Bowl example: the driver pessimal competitive
  equilibrium plan. Color coded paths $\{z_i\}_{i \in \driverSet}$ and rider values in bold indicate the movement of drivers in space and time, as well as the riders picked up by each driver. Numbers in italics %
  is the set of driver-pessimal CE prices. \label{fig:super_bowl_spatio_temporal} }\vsq{-0.5em}
\end{figure}

For the Super Bowl example introduced in Section~\ref{sec:preliminaries}, the driver-pessimal CE plan is as shown in Figure~\ref{fig:super_bowl_spatio_temporal}. All drivers stay at or re-position to location $C$, and pick up riders with high values at time $1$. %
The total rider value is $300$, and the total trip costs and exit costs incurred by the drivers are $80$ and $5$, respectively. This results in %
an optimal welfare of $215$, substantially higher than the welfare of 25 achieved under myopic pricing.

Trip prices are shown in italics, below the edges corresponding to the trips. For each feasible path of each driver, the total prices minus costs is $50$, which is the welfare gain from replicating the driver (an additional driver at $(C,0)$ or $(B,0)$ will be dispatched to $(C,1)$ and pick up rider $9$, improving rider values by $80$ and incurring a total cost of $30$). 
The outcome forms a CE, that there is no other path with a higher utility for any driver, and all riders are happy with their whether they are picked-up given the prices. We can also verify that there is no driver or rider envy, and that the outcome is in the core.

In contrast to the myopic pricing mechanism, where the price for the $(C,B)$ trip jumped from $10$ to $100$ between 9:50pm and 10pm, the CE prices started to increase more smoothly \emph{before} the end of the game in anticipation of higher future demand. Intuitively, sending a driver away from location $C$ right before the game ends is costly to the economy, and this is properly reflected  in the higher trip prices at 9:50pm. 
\end{example}

\subsection{The Static CE Mechanism}

Given the existence of welfare-optimal CE plans, we may consider a \emph{static CE mechanism}, which announces a CE plan at time $0$, and never again updates the plan even after driver deviations. Rather, each driver can choose to take any feasible path, but can only pick up riders that are dispatched to her, and is only paid for the subset of these rider trips that are completed.

\begin{definition}[Static CE mechanism] \label{defn:static_CE_mech} 
A static CE mechanism announces a CE plan $(x, \actpath, \price)$ at the beginning of the planning horizon. Each driver $i \in \driverSet$ then decides on the actual action path $\actpath'_i$ that she takes, and gets paid $ \hat{\dPayment}_i = \sum_{j \in \riders} \price_{\origin_j, \dest_j, \tr_j} \one{(\origin_j, \dest_j, \tr_j,j) \in \actpath_i,~(\origin_j, \dest_j, \tr_j,j) \in \actpath_i'}$. Each rider $j \in \riders$ pays $\rPayment_j$ only if she is picked up.
\end{definition}

A static CE mechanism can be defined for any set of CE prices.
Driver best response guarantees that no alternative path gives any driver a higher total utility, thus it is a dominant strategy for each driver to follow the dispatched action path $\actpath_i$.

\begin{theorem}\label{thm:static_CE_mech} A static CE mechanism implements an optimal CE plan in dominant strategy.
\end{theorem}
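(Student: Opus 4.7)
The static CE mechanism fixes the plan $(x, \actpath, \price)$ at time zero and never updates it, so each driver's payment depends only on her own chosen action path together with the fixed dispatch and prices---the choices of other drivers are irrelevant. It therefore suffices to show that for every driver, following the dispatched action path $\actpath_i$ maximizes her individual utility; this immediately yields a dominant-strategy equilibrium. By Lemma~\ref{lem:CE_plan_properties} I may assume without loss of generality that $\price_{a,b,t} \geq 0$ for all $(a,b,t) \in \trips$.

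Fix a driver $i \in \driverSet$, and let $\path_{i,k^*}$ be the feasible path consistent with her dispatched action path $\actpath_i$. If driver $i$ follows her dispatch, she is paid $\dPayment_i = \sum_{(a,b,t,j) \in \actpath_i} \price_{a,b,t}$ and incurs cost $\pathCost_{i,k^*}$; by the CE driver best response condition in Definition~\ref{defn:CE}, her resulting utility equals $\pi_i = \max_k \bigl\{ \sum_{(a,b,t) \in \path_{i,k}} \max\{\price_{a,b,t},0\} - \pathCost_{i,k} \bigr\}$. Now suppose she deviates to an arbitrary feasible action path $\actpath'_i \in \actPathSet_i$, consistent with some path $\path_{i,k'}$. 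Under the mechanism she is paid only for rider trips that appear in both $\actpath_i$ and $\actpath'_i$, so her payment is $\sum_{(a,b,t,j) \in \actpath_i \cap \actpath'_i} \price_{a,b,t}$. Because each such rider trip uses a distinct trip $(a,b,t) \in \path_{i,k'}$ and all prices are non-negative, this payment is bounded above by $\sum_{(a,b,t) \in \path_{i,k'}} \max\{\price_{a,b,t}, 0\}$; subtracting $\pathCost_{i,k'}$ and invoking CE driver best response yields a deviation utility of at most $\pi_i$. Hence following is a best response and, as noted, a dominant strategy.

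Since following is dominant for every driver, the realized outcome of the mechanism is exactly the announced plan $(x, \actpath, \price)$, which is by construction an optimal CE plan and hence welfare-optimal by Lemma~\ref{thm:optimal_plans}. The one delicate bookkeeping step will be verifying that following the dispatch yields \emph{exactly} $\pi_i$---that is, aligning the plan's payment formula (summing $\price_{a,b,t}$ over rider-pickup trips only) with the CE best response expression (summing $\max\{\price_{a,b,t},0\}$ over every trip on the path). This equality holds at an optimal CE plan because any positive-price trip on driver $i$'s assigned path must be covered by a dispatched rider; otherwise her actual utility would strictly undershoot the CE best response value $\pi_i$, contradicting the CE condition.
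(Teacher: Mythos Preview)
Your proof is correct and follows the same approach as the paper, which gives only a one-sentence argument: ``Driver best response guarantees that no other path gives her a higher total utility, and it is a dominant strategy for each driver to follow the dispatched action path $\actpath_i$.'' You spell out the two pieces the paper leaves implicit---that payments are independent of other drivers' actions (so best response is dominant) and that the deviation payoff is bounded by the CE best-response value---and your ``bookkeeping'' paragraph is essentially the content of Lemma~\ref{lem:excessive_supply_zero_price}, which the paper uses elsewhere for the same purpose.
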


In addition, if all riders and drivers follow the plan, the outcome under a static CE mechanism is budget balanced,  and envy-free for both riders and drivers.
The CE property also ensures that every rider that is picked up is happy to take the trip at the offered price, and that no rider who is not picked up has positive utility for the trip.\footnote{Still, Example~\ref{exmp:toy_econ_1_continued} in Appendix~\ref{appx:rider_incentives} shows that truthful reporting of a rider's value need not be a dominant strategy (and this can be the case whichever CE prices are selected).}

The optimal static mechanism enjoys many good properties. By not updating the plan, however, a static CE mechanism is fragile to driver deviations, which could occur for many reasons: mistakes, unexpected contingencies, unexpected traffic, %
or unmodeled idiosyncratic preferences, etc.
The Super Bowl example demonstrates this lack of robustness: once a driver has deviated, the resulting outcome in the subsequent periods may no longer be reliable or welfare-optimal.

\addtocounter{example}{-1} 
\begin{example}[Continued] \label{exmp:super_bowl_cont_2}

Suppose that driver $3$ in the Super Bowl example did not follow the plan at time $0$ to pick-up rider $3$, %
but stayed in location $B$ until time $1$. 
Under a static CE mechanism with driver-pessimal CE plan (as shown in Figure~\ref{fig:super_bowl_spatio_temporal}), the effect of this deviation and not updating the plan %
is that driver $3$ is no longer able to pick up rider $6$ at time $1$, who strictly prefers to be picked up given the original price of $\price_{C,B,1} = 75$. 
Driver $2$, who was supposed to pick up rider $8$ %
is actually able to pick up rider $6$ instead of rider $8$, and this would lead to a higher welfare. Moreover, driver $3$ is now able to pick up rider $5$, however, she wouldn't be dispatched to do so. 
\myqed
\end{example}

One may think of a naive fix for this robustness issue of the static CE mechanisms, simply repeating the computation of the plan at all times. The following example shows that the mechanism that recomputes a driver pessimal plan at all times fails to be incentive compatible.
Similarly, we show that the mechanism that repeatedly recomputes a {\em driver-optimal} plan is not envy-free and also have incentive issues (see Example~\ref{exmp:always_compute_driver_opt} in Appendix~\ref{appx:examples}).

\begin{example}\label{exmp:toy_econ_3} 

\newcommand{\nodeScaleI}{0.8}
\newcommand{\hdistI}{4}
\newcommand{\vdistI}{1.7}	

\begin{figure}[t!]
\centering
\begin{tikzpicture}[->,>=stealth',shorten >=1pt,auto,node distance=2cm,semithick][font = \small]
\tikzstyle{vertex}=[fill=white,draw=black,text=black,scale=0.9]

\node[state]         (A0) [scale = \nodeScaleI] {$A,0$};
\node[state]         (B0) [above of=A0, node distance = \vdistI cm, scale=\nodeScaleI] {$B,0$};
\node[state]         (A1) [right of=A0, node distance = \hdistI cm, scale=\nodeScaleI] {$A,1$};
\node[state]         (B1) [right of=B0, node distance = \hdistI cm, scale=\nodeScaleI] {$B,1$};
\node[state]         (A2) [right of=A1, node distance = \hdistI cm, scale=\nodeScaleI] {$A,2$}; 
\node[state]         (B2) [right of=B1, node distance = \hdistI cm, scale=\nodeScaleI] {$B,2$};

\path (B1) edge	node[pos=0.5, sloped, above] {$\boldsymbol{\val_1 \shorteq 8}$} (B2);

\path (A1) edge	node[pos=0.5, sloped, above] {$\boldsymbol{\val_2 \shorteq 6}$, $\val_3 \shorteq 5$, $\val_4 \shorteq 4$} (A2);

\node[text width=3cm] at (-0.3, \vdistI) {Driver 1};
\node[text width=3cm] at (-0.3, 0) {Driver 2};

\draw[dashed] (0.4, \vdistI - 0.1) -- (\hdistI - 0.38, \vdistI  - 0.1);
\draw[dashed] (\hdistI + 0.4, \vdistI - 0.1) -- (\hdistI * 2 - 0.38, \vdistI  - 0.1);

\draw[dashdotted] (0.4,  - 0.1) -- (\hdistI - 0.38, - 0.1);
\draw[dashdotted] (\hdistI + 0.4, - 0.1) -- (\hdistI * 2 - 0.38, - 0.1);

\draw[dashed](\hdistI * 2 + 1.5, 0.5) -- (\hdistI * 2 + 2.5, 0.5);
\draw[dashdotted](\hdistI * 2 + 1.5, 0) -- (\hdistI * 2 + 2.5, 0);

\node[text width=0.4cm] at (\hdistI * 2 + 2.7, 0.5) {$z_1$};
\node[text width=0.4cm] at (\hdistI * 2 + 2.7, 0) {$z_2$};

\node[text width=1cm] at (\hdistI * 1 + 2.35,  \vdistI - 0.3) {\emph{5}};
\node[text width=1cm] at (\hdistI * 1 + 2.35, -0.3) {\emph{5}};

\end{tikzpicture}
\caption{The economy in Example~\ref{exmp:toy_econ_3} and the driver pessimal CE plan computed at time $0$. %
\label{fig:toy_econ_3}}
\end{figure}
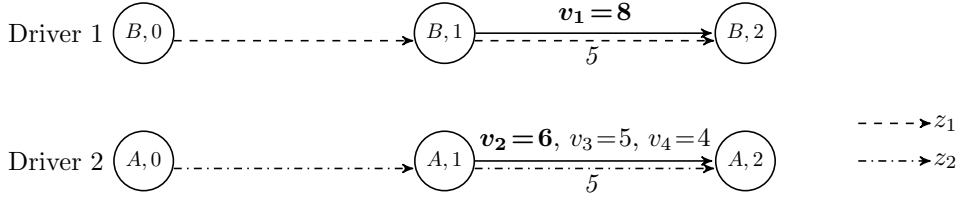

Consider the economy as shown in Figure~\ref{fig:toy_econ_3}, where there are two locations $\loc = \{A, B\}$ with distances $\dist(A,A) = \dist(A,B) = \dist(B,A) = \dist(B,B) = 1$. Assume for simplicity that all trip costs and opportunity costs are zero: $\cost_{a,b,t}= 0$ for all $(a,b,t) \in \trips$, and $\exitCost_\Delta = 0$ for all $\Delta = 0, 1, \dots, \horizon$. In the driver-pessimal plan computed at time $0$ as shown in the figure, the anonymous trip prices are $\price_{B,B,1} = \price_{A,A,1} = 5$.
Assume that both drivers $1$ and $2$ follow the plan at time $0$, and reach $(B,1)$ and $(A,1)$ respectively. If the mechanism re-computes the plan at time $1$, the new driver-pessimal plan would set a new price of $0$ for the trip $(B,B,1)$--- the updated lowest market-clearing price for the trip. Therefore, if driver $1$ follows the mechanism at all times, her total payment and utility would actually be $0$.
Now consider the scenario where driver $2$ follows the mechanism at
time $0$, but driver $1$ deviates and relocates to $A$, so that both
drivers are at location $A$ at time $1$. At time $1$, when the
mechanism recomputes a driver-pessimal plan, both drivers would take
the trip $(A,A,1)$ and pick up riders $2$ and $3$ respectively. The
updated price for the trip $(A,A,1)$ would be $4$, and
this is a useful deviation for driver $1$.  \qed
\end{example}

The challenge is to achieve robustness, but at the same time handle the new strategic considerations that can occur as a result of drivers being able to trigger re-planning through
deviations.

\section{The Spatial-Temporal Pricing Mechanism} \label{sec:st_pricing}

In this section, we introduce the Spatio-Temporal Pricing mechanism, and prove our main result, that it is a subgame-perfect equilibrium for drivers to always follow the mechanism's dispatch.

\subsection{A Dynamic Mechanism}

We first formally define a dynamic ridesharing mechanism, that can use the history of actions to update the plan forward from the current state.

Let $\state_t = (\state_{1,t}, \state_{2,t}, \dots, \state_{\nd, t})$ denote the \emph{state} of the ridesharing platform at time $t$, where each $\state_{i,t}$ describes the state of driver $i \in \driverSet$. If driver $i$ has entered the platform and is available at time $t$ at location $a \in \loc$, denote $\state_{i,t} = (1, a,t)$.
Otherwise, if driver $i$ is  {\em en route}, finishing the trip from $a$ to $b$ that she started at time $\tprime < t$ s.t. $\tprime + \dist(a,b) > t$, denote $\state_{i,t} = (a,b,\tprime)$ if she is relocating with no rider, or $\state_{i,t} = (a, b, \tprime, j)$ if she is taking a rider $j$ from $a$ to $b$ at time $\tprime$. %
For drivers that had already exited or decided not to enter, denote $\state_{i,t} = \phi$. 
For drivers with $\te_i \geq t$, i.e. who enters or is able to enter now or in the future, $\state_{i,t} = ( \driverEntrance_i, \re_i, \te_i)$. The initial state of the platform is $\state_0 = ((\driverEntrance_1, \re_1, \te_1), \dots, (\driverEntrance_{\nd}, \re_{\nd}, \te_{\nd}))$.

At each time $t$, each driver $i$ takes an \emph{action} $\action_{i,t}$. An available driver $i$ with $\state_{i,t} = (1,a,t)$ or $\state_{i,t} = (0,a,t)$ may (enter and then) relocate to any location $b$ within reach by the end of the planning horizon (i.e. $b \in \loc$ s.t. $t + \dist(a,b) \leq \horizon$), which we denote $\action_{i,t}= (a, b, t)$. She may pick up a rider $j \in \riders$ with $\tr_j = t$ and $\origin_j = a$, in which case we write $\action_{i,t} = (a, \dest_j, t, j)$. She may also decide to exit (if $\driverEntrance_i = 1$) or not enter (if $\driverEntrance_i = 0$), for both cases we denote $\action_{a,t} = \phi$. For a driver $i$ that is {\em en route} at time $t$, (i.e. $\state_{i,t} = (a,b,\tprime)$ or $\state_{i,t} = (a,b,\tprime,j)$ for some $\tprime$ s.t. $\tprime + \dist(a,b) > t$), $\action_{i,t} = \state_{i,t}$--- the only available action is to finish the current trip. For driver $i$ with $\te_i > t$, denote $\action_{i,t} = \state_{i,t} = (\driverEntrance_i, \te_i, \re_i)$. A driver with $\state_{i,t} = \phi$ takes no more actions: $\action_{i,t} = \state_{i,t} = \phi$. 

The action $\action_{i,t}$ taken by driver $i$ at time $t$ determines her state $\state_{i,t+1}$ at time $t+1$: 
\begin{itemize}
	\setlength\itemsep{0.0em}
	\item (will complete trips at $t+1$) if $\action_{i,t} = (a,b,\tprime)$ or $\action_{i,t} = (a,b,\tprime, j)$ s.t. $\tprime + \dist(a,b) = t+1$, then $\state_{i,t+1} = (1, b, t+1)$,  i.e. becoming available at time $t+1$ at the
destination of their trips,\footnote{Here we assume that a driver
that declines the mechanism's dispatch and decide to relocate from $a$ to $b$ also does so in time $\dist(a,b)$. We can also handle
drivers who move more slowly when deviating, just as long as the
mechanism knows when and where the driver will become available
again.}
	\item  (still \emph{en route}) if $\action_{i,t} = (a,b,\tprime)$ or $\action_{i,t} = (a,b,\tprime, j)$  s.t. $\tprime + \dist(a,b) > t+1$, then $\state_{i, t+1} = \action_{i,t}$, 
	\item (not yet entered) for $i \in \driverSet$ s.t. $\action_{i,t} = (\driverEntrance_i, \te_i, \re_i)$, we have $\state_{i, t+1} = (\driverEntrance_i, \te_i, \re_i)$,
	\item (already exited / never entered) if $\action_{i,t} = \phi$, then $\state_{i,t+1} = \phi$. 
\end{itemize}

Let $ \action_t = ( \action_{1,t}, \action_{2,t}, \dots,
\action_{\nd,t})$ be the \emph{action profile} of all drivers at time $t$, and let
\emph{history} $\history_t \triangleq (\state_0, \action_0, \state_{1},
\action_1, \dots, \state_{t-1}, \action_{t-1}, \state_{t})$, with
$\history_0 = (\state_0)$.  Finally, let $\driverSet_t(\history_{t})=
\{i \in \driverSet ~|~ \state_{i,t} = (1,a,t) \text{ or } \state_{i,t} = (0,a,t) \text{ for some } a \in \loc \}$ be the set of drivers available at time $t$.

\begin{definition}[Dynamic ridesharing mechanism]   \label{defn:ride_sharing_mech} 
A dynamic ridesharing mechanism is defined by its {\em  dispatch rule} $\action^\ast$, {\em driver payment rule} $\dPayment^\ast$ and {\em rider payment rule} $\rPayment^\ast$.
At each time $t$, given history $\history_{t}$ and rider information $\riders$, the mechanism:
\begin{itemize}%
	\setlength\itemsep{0.0em}
	\item uses its dispatch rule $\action^\ast$ to determine for each of a subset of available drivers, a {\em dispatch action} $\action_{i,t}^\ast (\history_{t})$ to either pick up a rider, or to relocate, or to exit/not enter.
	\item uses its driver payment rule $\dPayment^\ast$ to determine, for each dispatched driver,  a payment $\dPayment_{i,t}^\ast(\history_{t})$  in the event the driver takes the action ($\dPayment_{i,t}^\ast(\history_{t}) = 0$ for available drivers that are not dispatched).
	\item dispatches each {\em en route} driver to keep driving (i.e. $\action_{i,t}^\ast(\history_t) = \state_{i,t}$), and does not make any payment to driver $i$ in this period: $\dPayment_{i,t}^\ast (\history_t) = 0$.
	\item determines for drivers entering in the future ($i \in \driverSet$ s.t. $\te_i > t$), and drivers who had already exited ($i \in \driverSet$ s.t. $\state_{i,t} = \phi$), $\action_{i,t}^\ast(\history_t) = \state_{i,t}$ and $\dPayment_{i,t}^\ast (\history_t) = 0$. 
	\item uses its rider payment rule $\rPayment^\ast$ to determine, for each rider who receives a dispatch at time $t$, %
the payment $\rPayment_j^\ast (\history_t)$ in the event that 
the rider is picked up.
\end{itemize}

Each driver then decides on which action $\action_{i,t} \in \actionSet_{i,t}(\history_{t})$ to take, where $\actionSet_{i,t}(\history_{t})$ is the set of actions available to agent $i$ at time $t$ given history $\history_t$. For an available driver at $(a,t)$ with dispatched action $\action_{i,t}^\ast(\history_{t})$, $\actionSet_{i,t}(\history_{t}) = \{\action_{i,t}^\ast(\history_{t})\} \cup \{(a,b,t)~|~b \in \loc \txtst t + \dist(a,b) \leq \horizon \} \cup \{\phi\}$, i.e. the driver can either take the dispatched action, or to relocate to any location, or to exit or not enter; if an available driver at $(a,t)$ is not dispatched, $\action_{i,t}^\ast(\history_{t})$, $\actionSet_{i,t}(\history_{t}) = \{(a,b,t)~ |~b \in \loc \txtst t + \dist(a,b) \leq \horizon \} \cup \{\phi\}$; for an {\em en route} driver, or a driver that enters in the future, or a driver that has already exited, $\actionSet_{i,t}(\history_{t}) = \{\state_{i,t}\}$. After observing the action profile $\alpha_t$, the mechanism pays each dispatched driver $\actualPayment_{i,t} (\action_{i,t}, \history_t)= \dPayment_{i,t}^\ast(\history_{t}) \one{\action_{i,t} = \action_{i,t}^\ast}$, and charges each rider $j \in \riders$ with $\tr_j = t$ the amount $\actualRiderPayment_j (\action_t) = \rPayment^\ast_j (\history_t) \sum_{i \in \driverSet_t} \one{\action_{i,t}= (\origin_j, \dest_j, t, j)}$.
\end{definition}

A mechanism is \emph{feasible} if (i) at any time it is possible for each available driver to take the trip that is dispatched to her, i.e. $\forall t$, $\forall \history_t$, $\forall i \in \driverSet_t$, if $\state_{i,t} = (1, a,t)$ or $\state_{i,t} = (0, a, t)$ for some $a \in \loc$, $\action_{i,t}^\ast (\history_t) \in \set{(a,b,t)}{b \in \loc,~t + \dist(a,b) \leq \horizon} \cup \set{(\origin_j,   \dest_j, \tr_j, j)}{ j \in \riders,~\tr_j = t,~\origin_j = a}$, (ii) no rider is picked-up more than once, i.e. $\forall t$, $\forall \history_t$, $\forall j \in \riders$ s.t. $\tr_j = t$, $\sum_{i \in \driverSet_t} \one{\action_{i,t}^\ast(\history_t) = (\origin_j, \dest_j, \tr_j, j)} \leq 1$, and (iii) unavailable drivers are not dispatched.
We can see from Definition~\ref{defn:ride_sharing_mech} that there is no payment to or from unavailable or undispatched drivers, or a dispatched driver $i$ who deviated from $\action^\ast_{i,t}(\history_t)$ at time $t$, or riders who are not picked up.\sloppy

Let $\historySet_t$ be the set of all possible {\em histories} up to time $t$.
A \emph{strategy} $\strategy_{i}$ of driver $i$ defines for
all times $t \in \activeTimeSet$ and all histories $\history_t \in
 \historySet_t$, the action she takes
$\action_{i,t} = \strategy_{i}(\history_t) \in
\actionSet_{i,t}(\history_t)$. For a mechanism that always dispatches
all available drivers, $\strategy_i^\ast$
denotes the \emph{straightforward strategy} of always following the
mechanism's dispatches at all times.
Let $\strategy = (\strategy_1, \dots, \strategy_{\nd})$ be the
{\em strategy profile}, with $\strategy_{-i} =
(\strategy_1, \dots, \strategy_{i-1}, \strategy_{i+1},
\dots, \strategy_{\nd})$. The strategy profile $\strategy$,
together with the initial state $\state_0$ and the rules
of a mechanism, determine all actions and
payments of all drivers through the planning horizon.
Let $\strategy_i |_{\history_t}$, $\strategy |_{\history_t}$ and
$\strategy_{-i} |_{\history_t}$ denote the strategy profile from time $t$ and history $\history_t$ onward for driver $i$, all drivers, and all drivers but $i$, respectively.

For each rider $j \in \riders$, let $\actualAssignment_j(\strategy)\in \{0,1\}$ be the indicator that rider $j$ is 
picked-up given strategy $\strategy$, and 
let $\actualRiderPayment_j(\strategy) = \actualAssignment_j(\strategy) \rPayment^\ast_j(\history_{\tr_j})$ be her actual payment.
For each driver $i \in \driverSet$, $\actualPayment_i(\strategy) \triangleq \sum_{t=0}^{\horizon-1} \actualPayment_{i,t}(\strategy_i(\history_t), \history_{t})$ denotes the total actual payments made to driver $i$, where drivers follow $\strategy$ and the history $\history_t$ is induced by the initial state and strategy $\strategy$. 
Let $\actualDriverUtil_{i,t}(\strategy_i(\history_t), \history_{t})$ be the actual utility driver $i$ gets at time $t$ given history $\history_t$ and strategy $\strategy_i$. We know that if $\strategy_i(\history_t) = (a,b,t)$ or $\strategy_i(\history_t) = (a,b,t,j)$, then $\actualDriverUtil_{i,t}(\strategy_i(\history_t), \history_{t}) = \actualPayment_{i,t}(\strategy_i(\history_t), \history_{t}) - \cost_{a,b,t}$; if $\strategy_i(\history_t) = \phi$ and $\state_{i,t} = (1,a,t)$ for some $a \in \loc$, then $\actualDriverUtil_{i,t}(\strategy_i(\history_t), \history_{t}) = \actualPayment_{i,t}(\strategy_i(\history_t), \history_{t}) - \exitCost_{\horizon - t}$. For every other scenario, $\actualDriverUtil_{i,t}(\strategy_i(\history_t), \history_{t}) = \actualPayment_{i,t}(\strategy_i(\history_t), \history_{t})$. 
Denote $\actualDriverUtil_i(\strategy) \triangleq \sum_{t=0}^{\horizon-1} \actualDriverUtil_{i,t}(\strategy_i(\history_t), \history_{t})$ as driver $i$'s total utility. 

Fixing driver and rider types, a ridesharing mechanism induces a finite horizon extensive form game. At each time point $t$, each driver decides on an action $\alpha_{i,t} = \strategy_i(\history_t)\in \actionSet_{i,t}(\history_t)$ to take based on strategy $\strategy_i$ and the history $\history_t$, and receives utility $\actualDriverUtil_{i,t}(\alpha_{i,t}, \history_t)$.
The total utility $\actualDriverUtil_i(\strategy)$ to each driver is
determined by the rules of the mechanism.

We define the following properties.
\begin{definition}[Budget balance] A ridesharing mechanism is {\em  budget balanced} if for any set of riders and drivers, and any strategy profile $\strategy$ taken by the drivers, we have 
\begin{align}
	\sum_{j \in \riders} \actualRiderPayment_j (\strategy)  \geq \sum_{i \in \driverSet} \actualPayment_i(\strategy).
\end{align}
\end{definition}

\begin{definition}[Subgame-perfect incentive compatibility] A ridesharing mechanism that always dispatches all available drivers is \emph{subgame-perfect incentive compatible} (SPIC) for drivers if given any set of riders and drivers, following the mechanism's dispatches at all times forms a subgame-perfect equilibrium (SPE) among the drivers, meaning for all $t \in \activeTimeSet$, for any history $\history_{t} \in \historySet_t$, 
\begin{align}
	\sum_{\tprime = t}^{\horizon - 1} \actualDriverUtil_{i, \tprime}( \strategy_i^\ast |_{\history_t}, \strategy_{-i}^\ast |_{\history_t})
	 	\geq 
	 \sum_{\tprime = t}^{\horizon - 1} \actualDriverUtil_{i, \tprime}(\strategy_i |_{\history_t}, \strategy_{-i}^\ast |_{\history_t}),
	 ~\forall \strategy_i,~\forall i \in \driverSet.  \label{equ:ICSPE}
\end{align}
\end{definition}

A ridesharing mechanism is {\em dominant strategy incentive compatible (DSIC)} if for any driver, following the mechanism's dispatches at all time points that the driver is dispatched 
maximizes her total payment, regardless of the actions taken by the
rest of the drivers.

\begin{definition}[Individual rationality (IR)] A ridesharing mechanism that always dispatches all available drivers is \emph{individually rational in SPE for drivers} if for any set of riders and drivers, (i) the mechanism is SPIC for drivers, and (ii) assuming $\strategy^\ast$, drivers that have not yet entered do not get negative utility from participating, i.e. 
\begin{align*}
	\actualDriverUtil_i(\strategy^\ast) \geq  0 \txtfor i \in \driverSet \txtst \driverEntrance_i = 0.
\end{align*}
A ridesharing mechanism is {\em individually rational for riders} if for any set of riders and drivers, and any strategy profile $\strategy$ taken by the drivers,
\begin{align}
	\actualAssignment_j(\strategy) \val_j \geq \actualRiderPayment_j(\strategy), ~\forall i \in \riders.
\end{align}
\end{definition}

\begin{definition}[Envy-freeness in SPE] A ridesharing mechanism that always dispatches all available drivers is {\em envy-free in SPE for drivers} if for any set of riders and drivers, (i) the mechanism is SPIC for drivers, and (ii) for any time $t \in \activeTimeSet$, for all history $\history_{t} \in \historySet_t$, all drivers with the same state at time $t$ are paid the same total amount in the subsequent periods, assuming all drivers follow the mechanism's dispatches:
\begin{align}
	\sum_{\tprime = t}^{\horizon - 1} \actualDriverUtil_{i, \tprime}( \strategy^\ast|_{\history_t}) = \sum_{\tprime = t}^{\horizon - 1} \actualDriverUtil_{i', \tprime}( \strategy^\ast|_{\history_t}), ~ \forall i, i' \in \driverSet \txtst \state_{i,t} = \state_{i', t}. 
\end{align}
A ridesharing mechanism is {\em envy-free in SPE for riders} if (i) the mechanism is SPIC for drivers, and (ii) for all $j \in \riders$, for all possible $\history_{\tr_j} \in \historySet_{\tr_j}$, and all $j' \in \riders$ s.t. $(\origin_j, \dest_j, \tr_j) = (\origin_{j'}, \dest_{j'}, \tr_{j'})$
\begin{align}
	\actualAssignment_j(\strategy^\ast)\val_j - \actualRiderPayment_j(\strategy^\ast) \geq \actualAssignment_{j'}(\strategy^\ast)\val_j - \actualRiderPayment_{j'}(\strategy^\ast).
\end{align}
\end{definition}

\begin{definition}[Core-selecting] A ridesharing mechanism that always dispatches all available drivers is \emph{core-selecting} if for any set of riders and drivers, (i) the mechanism is SPIC, and (ii) for any time $t \in \activeTimeSet$ and any history $\history_t \in \historySet_t$ onward, the outcome under the straightforward strategy $\strategy^\ast$ is in the core. 
\end{definition}

Fix a mechanism with dispatch rule $\alpha^\ast$ and payment rules $\rPayment^\ast,~\dPayment^\ast$, where all available drivers are always dispatched. Recall that the outcome under the straightforward strategy $\strategy^\ast$ over the entire planning horizon can be computed at time $0$, and is called the {\em time $0$ plan} of the mechanism. %
If some driver deviated at time $t-1$ for some $t > 0$, the downward outcomes given the dispatching and payment rules, assuming all drivers follow $\sigma^\ast|_{\history_t}$, can be thought of as an updated time $t$ plan.

For any time $t \in \timeSet$, given any state $\state_t$ of the platform, let $\econ \supt (\state_t)$ represent the
\emph{time-shifted economy} starting at state $\state_t$, with
planning horizon $\horizon \supt = \horizon - t$, the same set of locations $\loc$
and distances $\dist$, and the remaining riders $\riders \supt = \{(\origin_j,
\dest_j, \tr_j - t, \val_j) ~|~ j \in \riders, \tr_j \geq t \}$.
For drivers, we have
$\driverSet \supt (\state_t) = \{ \theta_i\supt ~|~ i \in \driverSet \}$, with types $\theta_i\supt = (\driverEntrance_i \supt , \re_i\supt, \te_i\supt, \tl_i\supt)$ determined as follows: %
\begin{enumerate}[$\bullet$]
	\setlength\itemsep{0.0em}
	\item for available drivers $i \in \driverSet$ s.t.
$\state_{i, t} = (1, a, t)$ or $\state_{i,t} = (0, a, t)$ for some $a \in \loc$, let $\theta_i\supt = (\driverEntrance_i\supt, \re_i\supt, \te_i\supt, \tl_i\supt) = (1, a, 0, \tl_i-t)$ or $(1, a, 0, \tl_i-t)$, respectively, 
	\item for {\em en route} drivers
$i \in \driverSet$ s.t. $\state_{i, t} = (a, b, t')$ or $(a, b, t',
j)$ where $t'+ \dist(a,b) > t$, let $\theta_i\supt = (\driverEntrance_i \supt, \re_i\supt, \te_i\supt, \tl_i\supt) = (1, b,  t' + \dist(a,b) - t, \tl_i-t)$, 
	\item for each driver $i \in \driverSet$ with $\te_i >t$,  let $\theta_i\supt = (\driverEntrance_i \supt, \re_i\supt, \te_i\supt, \tl_i\supt) = (\driverEntrance_i, \re_i, \te_i-t, \tl_i-t)$, and
	\item exclude drivers that have already exited or chose not to enter.
\end{enumerate}

\begin{definition}[Temporal consistency] \label{defn:temp_invariance} A ridesharing mechanism is \emph{temporally consistent} if after deviation at time $t-1$, the updated plan is identical to that determined for economy $\econ \supt (\state_t)$.
\end{definition}

Upon deviation(s) at time $t-1$, a temporally consistent mechanism updates its plan from time $t$ onward as if $t$ is the beginning of the planning horizon, thus does not make use of time-extended contracts. %
A temporally inconsistent mechanism is able to trivially align incentives, by firing any driver who has deviated, or by threatening to ``shut down" after any deviation, for example. 

\subsection{The Spatio-Temporal Pricing Mechanism}

We define the STP mechanism by providing a method to plan or re-plan, this implicitly defining the dispatch and payment rules. For each $a \in \loc$ and $t \in \timeSet$, denote the welfare gain 
from an additional driver at $(a,t)$ that is already in the platform as,
\begin{align}
	\Phi_{a,t} \triangleq \sw(\driverSet \cup \{(1, a, t, \horizon) \}) - \sw(\driverSet), \label{equ:welfare_gain}
\end{align}
where $(1, a, t, \horizon)$ represents the type of this driver that stays until the end of the planning horizon.

\begin{definition}[Spatio-Temporal pricing mechanism] \label{defn:stp}  
The {\em spatio-temporal pricing (STP) mechanism} is a dynamic ridesharing mechanism that always dispatches all available drivers. Given economy $\econ\0$ at the beginning of the planning horizon, or economy $\econ\supt(\state_t)$ immediately after a deviation by one or more drivers, 
the mechanism completes the following planning step:
\begin{enumerate}[$\bullet$]
	\item {\em Dispatch rule:} To determine the dispatches ($\alpha^\ast$), compute an optimal solution $(x, y)$ to the LP~\eqref{equ:lp}, and dispatch each driver $i$ to take the path $\path_{i,k}$ for $k$ s.t. $y_{i,k} = 1$, and pick up riders with $x_j = 1$, %
	\item {\em Payment rules:} To determine driver and rider payments ($\dPayment^\ast$ and $\rPayment^\ast$), for each $(a,b,t) \in \trips$, set anonymous trip prices to be 
	\begin{align}
		\price_{a,b,t} = \Phi_{a,t} - \Phi_{b, t+\dist(a,b)} + \cost_{a,b,t}	\label{equ:stp_prices}
\end{align}		
\begin{enumerate}[-]
	\item For each rider $j \in \riders$, $\rPayment_j^\ast =  \price_{\origin_j, \dest_j, \tr_j} \sum_{i \in \driverSet} \one{\action^\ast_{i,\tr_j} = (\origin_j, \dest_j, \tr_j, j)} $, 
	\item For each driver $i \in \driverSet$, $\dPayment_{i,t}^\ast = \sum_{j \in \riders, \tr_j = t} \price_{\origin_j,\dest_j,t} \one{\action^\ast_{i,\tr_j} = (\origin_j, \dest_j, t, j)} $. %
\end{enumerate}
\end{enumerate}
\end{definition}

We now state the main result of the present paper.

\begin{restatable}{theorem}{spe} \label{thm:SPE}  
The spatio-temporal pricing mechanism is temporally consistent and subgame-perfect incentive compatible. It is also individually rational for riders and strictly budget balanced for any action profile taken by the drivers. From any history onward, the equilibrium outcome is welfare optimal, core-selecting, envy-free, and individually rational for drivers.
\end{restatable}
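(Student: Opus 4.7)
I will handle the easier properties first. \emph{Temporal consistency} is immediate from the definition of the mechanism's replanning rule, which depends only on $\state_t$ via the shifted economy $\econ^{(t)}(\state_t)$. \emph{Strict budget balance} holds after any action profile because all transfers occur via anonymous per-trip prices: each completed trip routes the same amount from the rider to the assigned driver, and no transfer occurs otherwise. \emph{Rider individual rationality} follows from the CE property of the driver-pessimal plan recomputed at each history, which ensures $\val_j \geq \price_{\origin_j, \dest_j, \tr_j}$ whenever rider $j$ is picked up. The four properties asserted ``from any history onward in SPE'' (welfare optimality, core-selecting, driver IR, and driver envy-freeness) then follow from Lemmas~\ref{thm:optimal_plans}, \ref{thm:opt_pes_plans}, and~\ref{lem:core_equal_CE} applied to the driver-pessimal CE plan computed in $\econ^{(t)}(\state_t)$; in particular, Lemma~\ref{thm:opt_pes_plans} gives driver envy-freeness since the driver-pessimal utility $\Phi_{a,t}$ depends only on a driver's current state. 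Rider envy-freeness is immediate from anonymous trip prices.

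The substance of the theorem is subgame-perfect incentive compatibility, which I would prove by backward induction on $t$. Fix history $\history_t$ and a driver $i$ available at $(a,t)$. By the inductive hypothesis, all drivers play straightforwardly in every subsequent subgame, so by Lemma~\ref{thm:opt_pes_plans} driver $i$'s continuation payoff under straightforward play at $\history_t$ equals $\Phi_{a,t}(\econ^{(t)}(\state_t))$. A one-step deviation by $i$ either (i)~relocates without a rider to some $b \in \loc$, yielding payoff $-\cost_{a,b,t} + \Phi_{b, t+\delta}(\econ')$, where $\delta = \dist(a,b)$ and $\econ'$ is the time-$(t+\delta)$ economy induced by the deviation and others' straightforward play; or (ii)~exits, yielding $-\exitCost_{T-t}$, which is bounded by $\Phi_{a,t}(\econ^{(t)}(\state_t))$ because an added driver could also choose to exit immediately, so $\Phi_{a,t} \geq -\exitCost_{T-t}$. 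Case (i) therefore reduces to the key inequality
\begin{align}
  \Phi_{a,t}(\econ^{(t)}(\state_t)) + \cost_{a,b,t} \;\geq\; \Phi_{b, t+\delta}(\econ') \qquad \text{for every feasible } b. \label{equ:SPIC_key_proposal}
\end{align}

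The main obstacle is proving~\eqref{equ:SPIC_key_proposal}. I would invoke the $M^\natural$ concavity of the min-cost flow value function, via the reduction in Lemma~\ref{thm:lp_integrality}: viewing the welfare $W(\mathbf{d})$ as a function of the driver-endowment vector $\mathbf{d}$ indexed by (location, time), the MCF correspondence makes $W$ an $M^\natural$-concave function. Applying the $M^\natural$-exchange property to the endowment vectors corresponding to $\econ^{(t)}(\state_t)$ augmented with a virtual driver at $(a,t)$ and $\econ'$ augmented with a virtual driver at $(b, t+\delta)$ produces an exchange whose witness is the path that traverses the deviation edge $(a,b,t)$ at cost $\cost_{a,b,t}$ and then plays driver-pessimal from $(b, t+\delta)$. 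The delicate part is reconciling the two economies, which differ because between times $t$ and $t+\delta$ the mechanism re-dispatches other drivers in response to $i$'s deviation. I would resolve this by using the inductive hypothesis together with the welfare-optimality of each re-planning step to show that the welfare of $\econ'$ plus the welfare realized in periods $t,\dots,t+\delta-1$ equals the welfare of $\econ^{(t)}(\state_t)$ minus $i$'s marginal contribution at $(a,t)$; this identity is what makes the $M^\natural$-exchange inequality applicable and yields~\eqref{equ:SPIC_key_proposal}, completing the induction.
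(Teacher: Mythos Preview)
Your treatment of temporal consistency, strict budget balance, rider IR, and the ``from any history onward'' properties is correct and in line with the paper. The substance is indeed SPIC, and the single-deviation reduction is the right frame. But your formulation of the key inequality is off in a way that would make the argument fail, and your ``delicate part'' is a symptom of having set up the wrong comparison.

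The deviation payoff is \emph{not} $-\cost_{a,b,t}+\Phi_{b,\,t+\dist(a,b)}(\econ')$ with $\econ'$ taken at time $t+\dist(a,b)$. The mechanism replans at time $t+1$, the moment the deviation is observed, and driver $i$'s continuation utility under straightforward play from that replanned economy is $\Phi^{(t+1)}_{b,\,t+\dist(a,b)}(\tilde\state_{t+1})$, evaluated in the time-$(t{+}1)$ economy. By the very monotonicity you would invoke elsewhere, $\Phi^{(t+1)}_{b,\,t+\dist(a,b)}(\tilde\state_{t+1}) \ge \Phi_{b,\,t+\dist(a,b)}(\econ')$, so your inequality bounds a quantity that is \emph{too small} to control the actual deviation payoff; proving it would not suffice. (Symmetrically, the follow-payoff at $(a,t)$ is $\Phi^{(t')}_{a,t}(\state_{t'})$ for the last replanning time $t'\le t$, not $\Phi_{a,t}(\econ^{(t)}(\state_t))$; the paper handles this with a separate monotonicity step $\Phi^{(t')}_{a,t}(\state_{t'})\ge\Phi^{(t)}_{a,t}(\state_t)$.)

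The structural observation you are missing is this: because all drivers act simultaneously at time $t$ before the deviation is observed, the states of the \emph{other} drivers at $t+1$ are unaffected, i.e.\ $\tilde\state_{-i,t+1}=\state_{-i,t+1}$. The two time-$(t{+}1)$ economies differ in exactly one coordinate---driver $i$'s own position---and this is precisely the setting of the local exchange axiom for $M^\natural$-concave functions: the marginal value of an added unit at node $(b,t+\dist(a,b))$ is weakly smaller when $i$ herself is already at that node than when she is at her dispatched node, giving $\Phi^{(t+1)}_{b,\,t+\dist(a,b)}(\state_{t+1})\ge\Phi^{(t+1)}_{b,\,t+\dist(a,b)}(\tilde\state_{t+1})$. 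Combined with the elementary bound $\Phi^{(t)}_{a,t}(\state_t)\ge\Phi^{(t+1)}_{b,\,t+\dist(a,b)}(\state_{t+1})-\cost_{a,b,t}$ (the replica could itself relocate to $b$), this closes the argument. There is no re-dispatching of others between $t$ and $t+\dist(a,b)$ to ``reconcile,'' and your proposed welfare identity at that later time is neither needed nor, as stated, a proof.
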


The proof of Theorem~\ref{thm:SPE} is provided in Appendix~\ref{appx:proof_thm_spe}.
We first show %
that the total utility of each driver under the STP mechanism is $\pi_i  = \Phi_{\driverNode_i}$, the welfare gain from replicating driver $i$. 
Setting $u_j = \max\{ \val_j - \price_{\origin_j, \dest_j, \tr_j} ,~ 0 \}$ for all $j \in \riders$, we show that $(\price, \pi, u)$ forms an optimal solution to the dual LP \eqref{equ:dual} by observing (i) $(\Phi, u)$ forms an optimal solution to the dual of the corresponding MCF problem (the proof of Lemma~\ref{thm:opt_pes_plans}), and (ii) a correspondence between the optimal solutions of the dual LP~\eqref{equ:dual} and the optimal solutions of the dual of the MCF (Lemma~\ref{lem:dual_payment_correspondence} in Appendix~\ref{appx:proof_thm_opt_plan}). 
This implies that the plan determined by the STP mechanism starting from any history onward forms a CE, and as a result is individually rational, budget balanced, envy-free, and resides in the core.

For incentive alignment, the single-deviation principle~\citep{osborne1994course} %
implies that we only need show that a single deviation from the  dispatching is not useful.
For any driver available at some  location $a$ and time $t$, her total utility from time $t$ onward, if all drivers follow the dispatches, is equal the welfare gain (at the time when the plan is computed) from adding an extra driver at location $a$ and time $t$. 
We establish that this welfare gain is weakly higher than the welfare gain for the economy (at time $t+1$) from replicating this driver at any location and time that the driver can deviate and relocate to.

For this, we use the $M^\natural$ concavity (and more specifically, the local exchange properties) of optimal objectives of MCF problems~\citep{murota2003discrete}. In particular, to maximize welfare, there is stronger substitution among drivers at the same location and time, than among drivers at different locations or times. This shows that declining the mechanism's dispatch to stay/relocate is not useful. We also show that none of (i) exiting earlier than dispatched, (ii) not entering/exiting when asked to, and (iii) entering when dispatched not to, is a useful deviation.

\newcommand{\nodeScaleII}{0.8}
\newcommand{\hdistII}{3.5}
\newcommand{\vdistII}{2}	

\addtocounter{example}{-2} 

\begin{example}[Continued] 

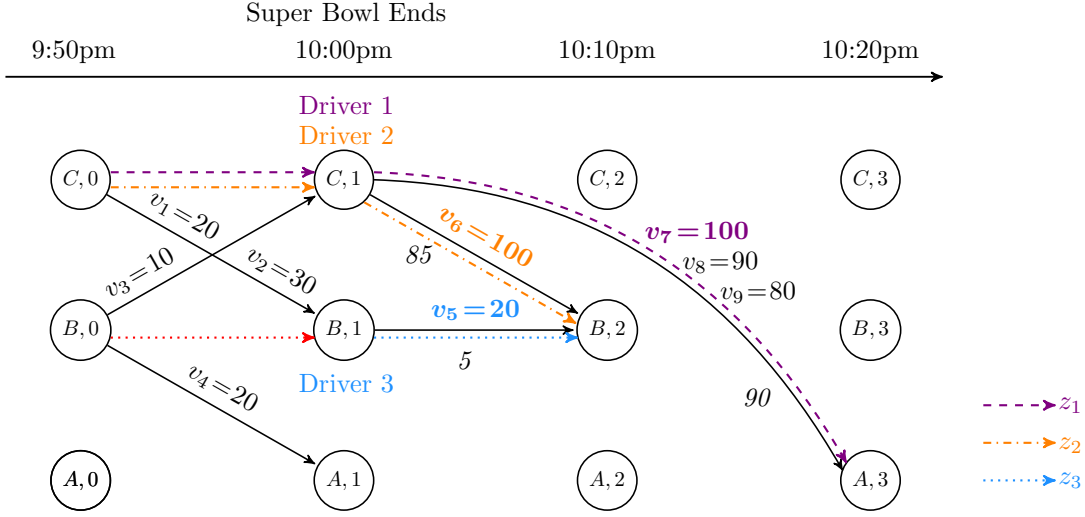
\begin{figure}[t!]
\centering
\begin{tikzpicture}[->,>=stealth',shorten >=1pt, auto, node distance=2cm,semithick][font = \small]
\tikzstyle{vertex} = [fill=white,draw=black,text=black,scale=0.9]

\draw[line width=0.25mm] 	(-1, \vdistII*2 + 1.37) -- (\hdistII*3 + 1, \vdistII*2 + 1.37);

\node[text width=1.3cm] at (0, \vdistII*2+1.7) {9:50pm};
\node[text width=1.3cm] at (\hdistII, \vdistII*2+1.7) {10:00pm};
\node[text width=3cm] at (\hdistII + 0.2, 6.2) {Super Bowl Ends};
\node[text width=1.3cm] at (\hdistII*2, \vdistII*2+1.7) {10:10pm};
\node[text width=1.3cm] at (\hdistII*3, \vdistII*2+1.7) {10:20pm};

\node[state]         (A0) [scale = \nodeScaleII] {$A,0$};
\node[state]         (B0) [above of=A0, node distance = \vdistII cm, scale=\nodeScaleII] {$B,0$};
\node[state]         (A0) [scale = \nodeScaleII] {$A,0$};
\node[state]         (C0) [above of=B0, node distance = \vdistII cm, scale=\nodeScaleII] {$C,0$};

\node[state]         (A1) [right of=A0, node distance = \hdistII cm, scale=\nodeScaleII] {$A,1$};
\node[state]         (B1) [right of=B0, node distance = \hdistII cm, scale=\nodeScaleII] {$B,1$};
\node[state]         (C1) [right of=C0, node distance = \hdistII cm, scale=\nodeScaleII] {$C,1$};

\node[state]         (A2) [right of=A1, node distance = \hdistII cm, scale=\nodeScaleII] {$A,2$}; 
\node[state]         (B2) [right of=B1, node distance = \hdistII cm, scale=\nodeScaleII] {$B,2$};
\node[state]         (C2) [right of=C1, node distance = \hdistII cm, scale=\nodeScaleII] {$C,2$};

\node[state]         (A3) [right of=A2, node distance = \hdistII cm, scale=\nodeScaleII] {$A,3$}; 
\node[state]         (B3) [right of=B2, node distance = \hdistII cm, scale=\nodeScaleII] {$B,3$};
\node[state]         (C3) [right of=C2, node distance = \hdistII cm, scale=\nodeScaleII] {$C,3$};

\node[text width=3cm] at (\hdistII + 0.9, 2 * \vdistII + 1) {{\color{\driverOneRiderColor} Driver 1}};
\node[text width=3cm] at (\hdistII+ 0.9, 2 * \vdistII + 0.6) {{\color{\driverTwoRiderColor} Driver 2}};
\node[text width=3cm] at (\hdistII+ 0.9, \vdistII-0.7) {{\color{\driverThreeRiderColor} Driver 3}};

\path (C0) edge	node[pos=0.55, sloped, above] {$\val_1 \shorteq 20$ \hspace{0.7em} $\val_2 \shorteq 30$} (B1);
\path (B0) edge	node[pos=0.2, sloped, above] { ${\val_3 \shorteq 10}$} (C1);
\path (B0) edge	node[pos=0.5, sloped, above] {$\val_4 \shorteq 20$} (A1);
\path (B1) edge	node[pos=0.5, sloped, above] {{\color{\driverThreeRiderColor} $\boldsymbol{\val_5 \shorteq 20}$}} (B2);
\path (C1) edge	node[pos=0.5, sloped, above] {{\color{\driverTwoRiderColor} $\boldsymbol{\val_6 \shorteq 100}$}} (B2);

\draw[->]  (3.9, 4)  to[out=-2, in= 120 ](10.15, 0.1);

\node[text width=3cm] at (9, 3.3) {{\color{\driverOneRiderColor} $\boldsymbol{\val_7 \shorteq 100}$}};
\node[text width=3cm] at (9.5, 2.9) {$\val_8 \shorteq 90$};
\node[text width=3cm] at (10, 2.5) {$\val_9 \shorteq 80$};

\draw[dashed, \driverOneColor, \driverthickness] (0.4, \vdistII *2 + 0.1) -- (\hdistII  - 0.35, \vdistII * 2  + 0.1);
\draw[dashdotted, \driverTwoColor, \driverthickness] (0.4, \vdistII *2 - 0.1) -- (\hdistII  - 0.35, \vdistII * 2  - 0.1);
\draw[dotted, red, \driverthickness] (0.4, \vdistII - 0.1) -- (\hdistII  - 0.35, \vdistII - 0.1);

\draw[dashed, \driverOneColor, \driverthickness]  (3.9, \vdistII * 2 + 0.1)  to[out=-2, in= 120 ](10.2, 0.2);
\draw[dashdotted, \driverTwoColor, \driverthickness] (\hdistII + 0.27, \vdistII * 2 - 0.3) -- (2 * \hdistII - 0.38, \vdistII + 0.07 );
\draw[dotted, \driverThreeColor, \driverthickness] (\hdistII +0.4, \vdistII - 0.1) -- (\hdistII *2  - 0.35, \vdistII - 0.1);

\draw[dashed, \driverOneColor, \driverthickness](\hdistII * 3 + 1.5, 1) -- (\hdistII * 3 + 2.5, 1);
\draw[dashdotted, \driverTwoColor, \driverthickness](\hdistII*3 + 1.5, 0.5) -- (\hdistII*3 + 2.5, 0.5);
\draw[dotted, \driverThreeColor, \driverthickness](\hdistII * 3 + 1.5, 0) -- (\hdistII * 3 + 2.5, 0);

\node[text width=0.4cm] at (\hdistII * 3 + 2.7, 1) {{\color{\driverOneLgdColor} $z_1$}};
\node[text width=0.4cm] at (\hdistII * 3 + 2.7, 0.6) {{ \color{\driverTwoLgdColor} $z_2$}};
\node[text width=0.4cm] at (\hdistII * 3 + 2.7, 0.0) {{\color{\driverThreeLgdColor} $z_3$}};

\node[text width=1cm] at (\hdistII * 2 + 2.3, 1.1) {\emph{90}};
\node[text width=1cm] at (\hdistII  + 2, 1.6) {\emph{5}};
\node[text width=1cm] at (\hdistII  + 1.3, 3) {\emph{85}};

\end{tikzpicture}

\caption{The Super Bowl example: replanning under the STP mechanism at time $1$ after driver $3$ deviated from the original plan and stayed in location $B$ until time $1$. \label{fig:super_bowl_spatio_temporal_replan}}
\end{figure}
Whereas the static CE mechanism fails to be welfare-optimal or envy-free for riders after driver $3$ deviates from the dispatch and stays in location $B$ until time $1$, the plan recomputed under the STP  mechanism at time $1$ is as illustrated in Figure~\ref{fig:super_bowl_spatio_temporal_replan}.
Driver $3$ is re-dispatched to pick up rider $5$ and then exit from $(B,2)$. Instead of picking up rider $8$ whose value is $90$, driver $2$ now picks up rider $6$ who was initially assigned to driver $3$.
If there existed an additional driver at $(C,1)$, the driver will be dispatched to pick up rider $8$, and contribute to a welfare gain of $\val_8 - \cost_{C,A,1} = 70$. An additional driver at $(A,3)$ has no effect on welfare, thus the price for the $(C,A,1)$ trip is updated to $\price_{C,A,1} =  \Phi_{C,1} - \Phi_{A,3} + \cost_{C,A,1} = 70 - 0 + 20 = 90$, and the utility of each driver from $(C,1)$ onward is equal to $\Phi_{C,1} = 70$. Similarly, we have $\price_{C,B,1} = \Phi_{C,1} - \Phi_{B,2} + \cost_{C,B,1} = 70 - (-5) + 10 = 85$ and $\price_{B,B,1} = \Phi_{B,1} - \Phi_{B,2} + \cost_{B,B,1} = -10 - (-5) + 10 = 5$. The utility of driver $3$ from $(B,1)$ onward is $\price_{B,B,1} - \cost_{B,B,1} - \exitCost_{1} = 5 - 10 - 5 = -10$. The outcome remains envy-free for riders, and welfare optimal from time $1$ onward. 
\end{example}
\addtocounter{example}{1}

\medskip

Under the STP mechanism, replanning can be triggered by the deviation of any driver, thus the utility of a driver is affected by the actions of others, and the mechanism is not DSIC. We also show in the following theorem that no mechanism can implement the desired properties in a dominant-strategy equilibrium. 
\begin{restatable}{theorem}{impossOfDSE} \label{thm:impossibility_of_DSE} Following the
  mechanism's dispatch at all times does not form a dominant strategy
  equilibrium under any dynamic ridesharing mechanism that is, from
  any history onward, (i) welfare-optimal, (ii) IR for riders, (iii) budget balanced, and (iv) envy-free for riders and drivers.
\end{restatable}

\begin{proof} We show that for the economy in Example~\ref{exmp:toy_econ_3}, as shown in Figure~\ref{fig:toy_econ_3}, under any mechanism that satisfies conditions (i)-(iv), following the dispatches at all times cannot be a DSE. We start by analyzing what must be the outcome at time $0$ under such a mechanism. At time $0$, optimal welfare is achieved by dispatching one of the two drivers to go to $(B,1)$ so that at time $1$ she can pick up rider $1$, and the other driver to go to $(A,1)$ to pick up rider $2$. Assume w.l.o.g.~that at time $0$, driver $1$ is dispatched to stay in $B$ and driver $2$ is dispatched to stay in $A$.

  Now consider the scenario where driver $2$ deviated, and took the
  trip $(A,B,0)$ at time $0$ instead. If driver $1$ followed the mechanism's dispatch at time $0$, both drivers are at $(B,1)$ at time $1$, and
  the welfare-optimal outcome is to pick up rider $1$. Individual
  rationality requires that the highest amount of payment we can
  collect from rider $1$ is $8$. Budget balance and envy-freeness of
  drivers then imply that drivers $1$ and $2$ are each paid at most
  $4$ at time $1$.
If driver $2$ is going to deviate at time $0$ and relocate to $B$, driver $1$ may
deviate from the mechanism's dispatch and relocate to $(A,1)$ instead.
In this case, at time $1$ it is welfare optimal for driver $1$ to pick
up rider $2$. Her payment for the trip $(A,A,1)$ is at least $5$, for
otherwise rider $3$ envies the outcome of rider $2$. This is better
than following the mechanism and get utility at most 4.
\end{proof}

A natural variation on the STP mechanism is the driver-optimal analog, which always computes a driver-optimal CE plan at the beginning of the planning horizon, or upon the deviation of any driver. Under this mechanism, a driver's continuation payoff from some location and time onward is equal to her ``marginal product'', i.e. the welfare-loss in the economy from losing a driver at this location and time. Despite the fact that it reflects the payments of a VCG mechanism, the following example shows that the driver-optimal mechanism is not incentive compatible. This is because a driver's marginal product can increase over time, as the set of trips that can be completed by the rest of the drivers becomes smaller. In these scenarios, the driver may deviate from the mechanism's dispatch, trigger a recomputation of the downstream plan, and get paid the updated, higher marginal product in the subsequent periods.

\begin{example} %
\label{exmp:toy_econ_2}
Consider the economy illustrated in Figure~\ref{fig:toy_econ_2}, with three locations, three time periods and symmetric distances $\dist(A,A) = \dist(B,B) = \dist(C,C) = \dist(B,C) = 1$, $\dist(A,B) = \dist(A,C) = 2$. All trip costs and early exit costs are zero. 
Two drivers enter the platform at time $0$ at location $B$, and three riders have types: \vspace{-1em}
\begin{multicols}{2}
\begin{enumerate}[$\bullet$]
	\setlength\itemsep{0.0em}
	\item Rider 1: $\origin_1 = C$, $\dest_1 = C$, $\tr_1 = 1$, $\val_1 = 1$, 
	\item Rider 2: $\origin_2 = C$, $\dest_2 = C$, $\tr_2 = 2$, $\val_2 = 5$,
	\item Rider 3: $\origin_3 = A$, $\dest_3 = A$, $\tr_3 = 1$, $\val_3 = 1$.
\end{enumerate}
\end{multicols}

\vspace{-0.5em}

\newcommand{\nodeScaleIII}{0.8}
\newcommand{\hdistIII}{3}
\newcommand{\vdistIII}{1.2}	

\begin{figure}[t!]
\centering
\begin{tikzpicture}[->,>=stealth',shorten >=1pt, auto, node distance=2cm,semithick][font = \small]
\tikzstyle{vertex} = [fill=white,draw=black,text=black,scale=0.9]

\node[state]         (A0) [scale = \nodeScaleIII] {$A,0$};
\node[state]         (B0) [above of=A0, node distance = \vdistIII cm, scale=\nodeScaleIII] {$B,0$};
\node[state]         (A0) [scale = \nodeScaleIII] {$A,0$};
\node[state]         (C0) [above of=B0, node distance = \vdistIII cm, scale=\nodeScaleIII] {$C,0$};

\node[state]         (A1) [right of=A0, node distance = \hdistIII cm, scale=\nodeScaleIII] {$A,1$};
\node[state]         (B1) [right of=B0, node distance = \hdistIII cm, scale=\nodeScaleIII] {$B,1$};
\node[state]         (C1) [right of=C0, node distance = \hdistIII cm, scale=\nodeScaleIII] {$C,1$};

\node[state]         (A2) [right of=A1, node distance = \hdistIII cm, scale=\nodeScaleIII] {$A,2$}; 
\node[state]         (B2) [right of=B1, node distance = \hdistIII cm, scale=\nodeScaleIII] {$B,2$};
\node[state]         (C2) [right of=C1, node distance = \hdistIII cm, scale=\nodeScaleIII] {$C,2$};

\node[state]         (A3) [right of=A2, node distance = \hdistIII cm, scale=\nodeScaleIII] {$A,3$}; 
\node[state]         (B3) [right of=B2, node distance = \hdistIII cm, scale=\nodeScaleIII] {$B,3$};
\node[state]         (C3) [right of=C2, node distance = \hdistIII cm, scale=\nodeScaleIII] {$C,3$};

\path (C1) edge	node[pos=0.5, sloped, above] {$\boldsymbol{\val_1=1}$} (C2);
\path (C2) edge	node[pos=0.5, sloped, above] {$\boldsymbol{\val_2=5}$} (C3);
\path (A2) edge	node[pos=0.5, sloped, above] {$\boldsymbol{\val_3=1}$} (A3);

\node[text width=3cm] at (-0.3, 1.4) {Driver 1};
\node[text width=3cm] at (-0.3, 1.0) {Driver 2};
    
\draw[dashed] (0.35, \vdistIII * 1 + 0.2) -- (2.7, \vdistIII * 2 - 0.2);
\draw[dashed] (\hdistIII + 0.35, \vdistIII * 2 - 0.2) -- (\hdistIII * 2 - 0.3, \vdistIII * 2 - 0.2);
\draw[dashed] (\hdistIII * 2 + 0.35, \vdistIII * 2 - 0.2) -- (\hdistIII * 3 - 0.3, \vdistIII * 2 - 0.2);

\draw[dashdotted] (0.35, 1.3) -- (5.6, -0.1);
\draw[dashdotted] (\hdistIII * 2 + .35, -0.2) -- (\hdistIII * 3 - 0.3, - 0.2);

\draw[dashed] (10.5, 0.5) -- (11.5, 0.5);
\draw[dashdotted] (10.5, 0.0) -- (11.5, 0.0);

\node[text width=0.4cm] at (11.8, 0.5) {$z_1$};
\node[text width=0.4cm] at (11.8, 0.0) {$z_2$};

\end{tikzpicture}
\caption{Illustration of the economy in Example~\ref{exmp:toy_econ_2} with three locations $A$, $B$, $C$, three time periods, two drivers starting at $(B, 0)$ and three riders with values $1$, $5$ and $1$, respectively. Under a welfare optimal plan, driver $1$ picks up riders $1$ and $2$ and driver 2 picks up rider $3$. 
\label{fig:toy_econ_2}}
\end{figure}

In a welfare-optimal dispatching as shown in Figure~\ref{fig:toy_econ_2}, driver $1$ is dispatched to take the path $z_1 = ((B,C,0),~(C,C,1),~(C,C,2))$ and to pick up riders $1$ and $2$. Driver 2 takes the path $z_2 = ((B,A,0),~(A,A,2))$ and picks up rider $3$. 
Every driver-optimal CE plan sets anonymous trip prices $\price_{A,A,2} = 1$ and $\price_{C,C,1} + \price_{C,C,2} = 1$, so that the total utility of each driver is equal $1$, the welfare loss if one driver is removed from the economy.

Assume that driver $2$ follows the mechanism's dispatch and starts to drive toward location $A$ at time $0$, we show a useful deviation of driver $1$ by rejecting the dispatched relocation to $C$ and staying in location $B$. %
At time $1$, when the platform updates the downstream plan, %
driver $2$ is already \emph{en route} to $A$ thus the only rider she is able to pick up in the future is rider $3$.
Driver $1$ would be asked to relocate to $C$ and then pick up rider $2$. The price $\price_{C,C,2}$ in the updated driver-optimal CE plan would be $5$, the welfare loss if the economy at time $1$ loses driver $1$ at time $1$. This is higher than driver $1$'s payment from following the dispatches at all times.
\myqed
\end{example}

This kind of useful deviation does not exist under the STP mechanism, since under the driver-pessimal CE plan each driver is paid the additional welfare the economy gains if we replicate this driver. It is always more useful to have an extra driver earlier, thus the ``replica welfare gain'' is monotonically non-increasing over time. %

A variation on the driver-optimal mechanism where drivers' payments are shifted in time is equivalent to the {\em dynamic VCG mechanism}~\citep{bergemann2010dynamic,cavallo2009efficient}. The dynamic VCG mechanism does not align incentives for drivers either, because the existence of a driver at some point of time may exert negative externality on the economy, in which case the payment to the driver would be negative for that time period. The driver would have incentives to decline the dispatch and avoid such payment. See discussions and examples in Appendix~\ref{appx:dynamic_vcg}.
The VCG mechanisms would address information asymmetry, but these examples highlight that the challenge we face is rather one of aligning incentives %
in the absence of time-extended contracts.

\section{Simulation Results} \label{sec:illustrations}

In this section, we compare, through numerical simulations, the performance of the STP mechanism against the myopic pricing mechanism, for three stylized scenarios: the end of a sporting event, the morning rush hour, and trips to and from the airport with unbalanced demand.

In addition to social welfare, we consider the \emph{time-efficiency} of drivers, which is defined as the proportion of time where the drivers have a rider in the car, divided by the
total time drivers spend on the platform.
We also consider the \emph{regret} to drivers for following the
straightforward strategy in a non incentive-aligned mechanism: the
highest additional amount a driver can gain by strategizing in
comparison to following a mechanism's dispatch, assuming that the rest of the drivers all follow the mechanism's dispatches at all times.
The analysis suggests that the STP mechanism achieves substantially
higher social welfare, as well as time-efficiency for drivers,
whereas, under the myopic pricing mechanism, prices are highly unstable, and drivers incur a high regret.

We define the myopic pricing mechanism to use the lowest market
clearing prices (which market clearing prices are chosen is unimportant for the results). 
In addition, since this mechanism need not always dispatch all available drivers, we model any available driver who is not dispatched as randomly choosing a location that is within reach, %
and relocates there if this trip cost is no greater than the cost for exiting immediately (in which case she exits).

\subsection{Scenario One: The End of a Sporting Event}

We first consider the scenario in Figure~\ref{fig:ex1_end_of_event}, modeling the end of a sport event.
There are three locations $\loc = \{A, B, C\}$ with unit distances $\dist(a,b) = 1$ for all $a, b\in \loc$, and two time periods. %
Trip costs are $3$ per period, and exiting early costs $1$ per period: $\cost_{a,b,t} = 3\dist(a,b)$, and $\exitCost_{\Delta} = \Delta$. 
In each economy, at time $0$, there are $15$ and $10$ drivers that are already in the platform becoming available at locations $C$ and $B$.\footnote{With the assumption that all drivers are already in the platform, the results do not reflect the disadvantage of myopic for not making optimal entrance decisions.}
$20$ riders request trip $(C,B,0)$, and $10$ riders request trips $(B,C,0)$ and $(B,A,0)$ respectively. When the event ends, there are $N_{C,B,1}$ riders hoping to take a ride from $(C,1)$ to $(B,2)$. The values of all riders are independently drawn from the exponential distribution with mean $10$.

\newcommand{\nodeScaleSI}{0.8}
\newcommand{\hdistSI}{4}
\newcommand{\vdistSI}{1.5}	

\begin{figure}[t!]
\centering
\begin{tikzpicture}[->, >=stealth',shorten >=1pt, auto, node distance=2cm,semithick][font = \small]

\tikzstyle{vertex} = [fill=white,draw=black,text=black,scale=0.9]

\node[state]         (A0) [scale = \nodeScaleSI] {$A,0$};
\node[state]         (B0) [above of=A0, node distance = \vdistSI cm, scale=\nodeScaleSI] {$B,0$};
\node[state]         (A0) [scale = \nodeScaleSI] {$A,0$};
\node[state]         (C0) [above of=B0, node distance = \vdistSI cm, scale=\nodeScaleSI] {$C,0$};

\node[state]         (A1) [right of=A0, node distance = \hdistSI cm, scale=\nodeScaleSI] {$A,1$};
\node[state]         (B1) [right of=B0, node distance = \hdistSI cm, scale=\nodeScaleSI] {$B,1$};
\node[state]         (C1) [right of=C0, node distance = \hdistSI cm, scale=\nodeScaleSI] {$C,1$};

\node[state]         (A2) [right of=A1, node distance = \hdistSI cm, scale=\nodeScaleSI] {$A,2$}; 
\node[state]         (B2) [right of=B1, node distance = \hdistSI cm, scale=\nodeScaleSI] {$B,2$};
\node[state]         (C2) [right of=C1, node distance = \hdistSI cm, scale=\nodeScaleSI] {$C,2$};

\node[text width=3cm] at (-0.7, 2 * \vdistSI ) {15 Drivers};
\node[text width=3cm] at (-0.7, \vdistSI) {10 Drivers};

\path (C0) edge node[pos=0.25, sloped, above] {$20$ riders} (B1);
\path (B0) edge	node[pos=0.27, sloped, below] {$10$ riders} (C1);
\path (B0) edge	node[pos=0.5, sloped, below] {$10$ riders} (A1);
\path (C1) edge	node[pos=0.5, sloped, above] {$N_{C,B,1}$ riders} (B2);

\end{tikzpicture}
\caption{An example to illustrate the end of an event. \label{fig:ex1_end_of_event}}
\end{figure}

As we vary the number of riders $N_{C,B,1}$ requesting the trip $(C,B,1)$ from $0$ to $100$, we randomly generate $1,000$ economies, and compare the average welfare and driver's time efficiency %
in Figure~\ref{fig:ex1_welfare_efficiency}. 
Figure~\ref{fig:ex1_event_welfare} shows that the STP mechanism achieves a substantially higher social welfare than the myopic pricing mechanism, especially when there 
are a large number of drivers taking the trip $(C,B,1)$. 
Figure~\ref{fig:ex1_event_efficiency} shows that the STP mechanism becomes less time-efficient as the number of $(C,B,1)$ riders increases, as more of the $15$ drivers starting at $(C,0)$ stay in the same location until time $1$. The high time-efficiency achieved by myopic is because of the fact that undispatched drivers decided to exit immediately. The effective use of driver's time under myopic (total amount of time driver spend driving riders, divided by the total time a driver is willing to work) is in fact around 60 percent. 

\renewcommand{\imageHeight}{1.7}

\begin{figure}[t!]
\centering
\begin{subfigure}[t]{0.4\textwidth}
	\centering
    \includegraphics[height= \imageHeight in]{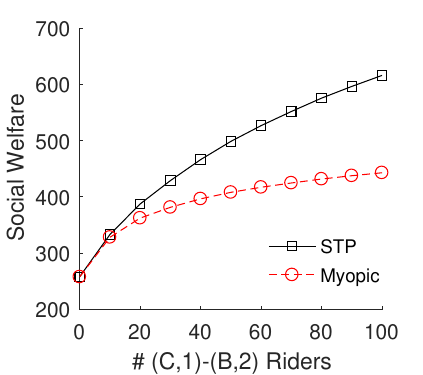}
    \caption{Social welfare.\label{fig:ex1_event_welfare}}
\end{subfigure}%
\begin{subfigure}[t]{0.4\textwidth}
	\centering
    \includegraphics[height=\imageHeight in]{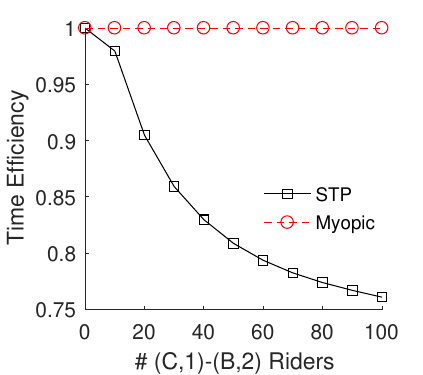}
    \caption{Driver time efficiency. \label{fig:ex1_event_efficiency}}
\end{subfigure}%
\caption{Comparison of social welfare and driver time efficiency for the end of an event.  \label{fig:ex1_welfare_efficiency} }    
\end{figure}

The average number of drivers taking each of the four trips of interest under the two mechanisms are shown in Figure~\ref{fig:ex1_drivers}. As $N_{C,B,1}$ increases, the STP mechanism dispatches more drivers to $(C,1)$ to pick-up higher-valued riders leaving $C$, while sending less drivers on trips $(C,B,0)$ and $(B,A,0)$. The myopic pricing mechanism, being oblivious to future demand, sends all drivers starting at $(C,0)$ to location $B$, and an average of only $5$ drivers to $(C,1)$ from $(B,1)$.

\begin{figure}[t!]
\centering
\begin{subfigure}[t]{0.4\textwidth}
	\centering
    \includegraphics[height= \imageHeight in]{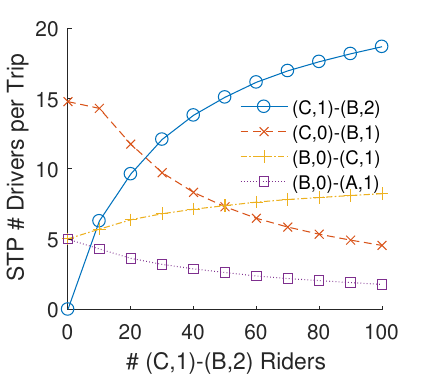}
    \caption{The STP mechanism. \label{fig:ex1_event_stp_drivers}}
\end{subfigure}%
\begin{subfigure}[t]{0.4\textwidth}
	\centering
    \includegraphics[height=\imageHeight in]{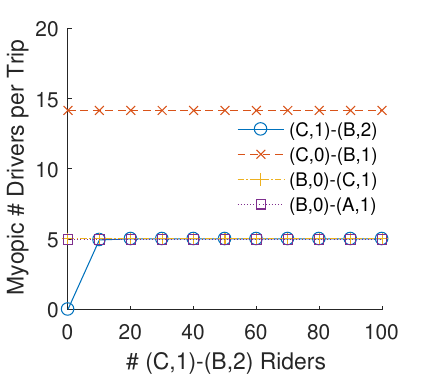}
    \caption{The myopic pricing mechanism.  \label{fig:ex1_event_myopic_drivers}}
\end{subfigure}%
\caption{Comparison of the number of drivers per trip for the end of an event. \label{fig:ex1_drivers} }    
\end{figure}

The average trip prices are 
plotted in Figure~\ref{fig:ex1_prices}. First of all, prices under STP are temporally ``smooth"--- trips leaving $C$ at times $0$ and $1$ have very similar prices. On average, $\price_{C,B,1}$ is higher than $\price_{C,B,0}$, since drivers taking the $(C,0)$-$(B,1)$ trip can exit at time $1$ and incur a smaller total cost. 
The price for the trip $(B,A,0)$ is the highest, so that a driver dispatched to $A$ does not envy those dispatched to take the trip $(B,C,0)$ and then $(C,B,1)$. 
In contrast, the price for the $(C,B)$ trip drastically increases from time $0$ to time $1$ under the myopic pricing mechanism, since there are few available driver at location $C$ at time $1$. 
The ``surge" for the trip $(C,B,1)$ is substantially higher under myopic pricing, implying that the platform is providing even less price reliability for the riders.

\begin{figure}[t!]
\centering
\begin{subfigure}[t]{0.4\textwidth}
	\centering
    \includegraphics[height= \imageHeight in]{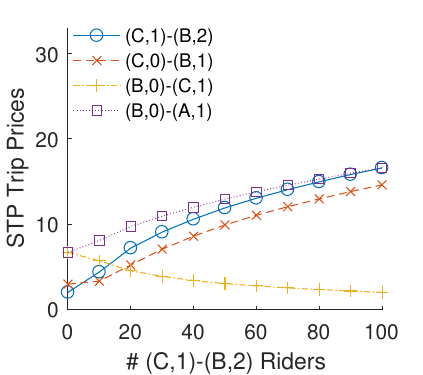}
    \caption{The STP mechanism.\label{fig:ex1_event_stp_prices}}
\end{subfigure}%
\begin{subfigure}[t]{0.4\textwidth}
	\centering
    \includegraphics[height=\imageHeight in]{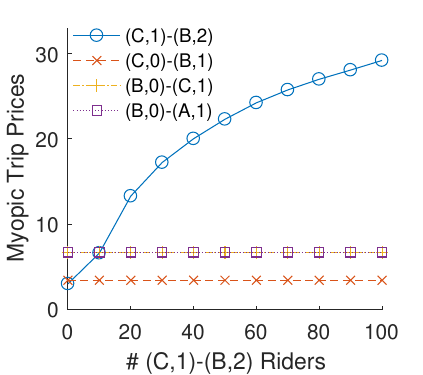}
    \caption{The myopic pricing mechanism. \label{fig:ex1_event_myopic_prices}}
\end{subfigure}%
\caption{Comparison of trip prices for the end of an event. \label{fig:ex1_prices} }    
\end{figure}

Figure~\ref{fig:ex1_regret_variance} illustrates the extent to which the myopic pricing mechanism failed to be incentive aligned or envy-free. With surging $\price_{C,B,1}$, drivers that are dispatched to  trips $(C,B,1)$ and $(B,A,1)$ may regret having not relocated to $C$ instead. %
Figure~\ref{fig:ex1_event_regret} shows that the average regret of the 25 drivers increases substantially as $N_{C,B,1}$ increases.
Among the 10 drivers who start at location $B$ at time $0$, the drivers taking the trip $(B,A,0)$ get a substantially smaller total payoff than those that take $(B,C,0)$ and subsequently $(C,B,1)$. Figure~\ref{fig:ex1_event_variance} shows the standard deviation (STD) of the total utilities of the drivers who start at $(B,0)$. The STP mechanism is incentive compatible and envy-free, thus the regret and earning variance are always zero.

\begin{figure}[t!]
\centering
\begin{subfigure}[t]{0.4\textwidth}
	\centering
    \includegraphics[height=\imageHeight in]{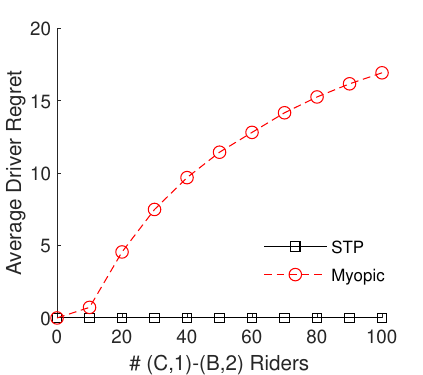}
    \caption{Average driver regret. \label{fig:ex1_event_regret}}
\end{subfigure}%
\begin{subfigure}[t]{0.4\textwidth}
	\centering
    \includegraphics[height=\imageHeight in]{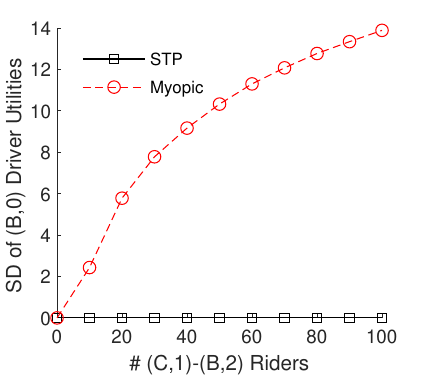}
    \caption{Standard deviation of $(B,0)$ drivers' utilities. \label{fig:ex1_event_variance}}
\end{subfigure}
\caption{Comparison of average driver regret and driver earning variance for the of an event. \label{fig:ex1_regret_variance} }    
\end{figure}

\subsection{Scenario Two: The Morning Rush Hour} \label{sec:sim_morning_rush}
We now compare the two mechanisms for the economy in
Figure~\ref{fig:ex2_morning_rush}, modeling the demand pattern of the morning rush hour. There are $\horizon = 20$ time periods and three locations $\loc = \{A, B, C\}$ with $\dist(a,b) = 1$ for all $a,b \in \loc$. Trip costs are $3$ per period, and exiting early costs $1$ per period: $\cost_{a,b,t} = 3\dist(a,b)$, and $\exitCost_{\Delta} = \Delta$. $C$ is a residential area, where there are a number of riders requesting rides to $B$, the downtown area, in every period. Location $A$ models some other area in the city.  %

\renewcommand{\nodeScaleSI}{0.8}
\renewcommand{\hdistSI}{2.7}
\renewcommand{\vdistSI}{1}	

\begin{figure}[t!]
\centering
\begin{tikzpicture}[->,>=stealth',shorten >=1pt, auto, node distance=2cm,semithick][font = \small]
\tikzstyle{vertex} = [fill=white,draw=black,text=black,scale=0.9]

\node[state]         (A0) [scale = \nodeScaleSI] {$A,0$};
\node[state]         (B0) [above of=A0, node distance = \vdistSI cm, scale=\nodeScaleSI] {$B,0$};
\node[state]         (A0) [scale = \nodeScaleSI] {$A,0$};
\node[state]         (C0) [above of=B0, node distance = \vdistSI cm, scale=\nodeScaleSI] {$C,0$};

\node[state]         (A1) [right of=A0, node distance = \hdistSI cm, scale=\nodeScaleSI] {$A,1$};
\node[state]         (B1) [right of=B0, node distance = \hdistSI cm, scale=\nodeScaleSI] {$B,1$};
\node[state]         (C1) [right of=C0, node distance = \hdistSI cm, scale=\nodeScaleSI] {$C,1$};

\node[state]         (A2) [right of=A1, node distance = \hdistSI cm, scale=\nodeScaleSI] {$A,2$}; 
\node[state]         (B2) [right of=B1, node distance = \hdistSI cm, scale=\nodeScaleSI] {$B,2$};
\node[state]         (C2) [right of=C1, node distance = \hdistSI cm, scale=\nodeScaleSI] {$C,2$};

\node[state]         (A9) [right of=A2, node distance = 1.2* \hdistSI cm, scale=\nodeScaleSI*0.93] {$A,19$}; 
\node[state]         (B9) [right of=B2, node distance = 1.2* \hdistSI cm, scale=\nodeScaleSI*0.93] {$B,19$};
\node[state]         (C9) [right of=C2, node distance = 1.2* \hdistSI cm, scale=\nodeScaleSI*0.93] {$C,19$};

\node[state]         (A10) [right of=A9, node distance = \hdistSI cm, scale = 0.7] {$A,20$}; 
\node[state]         (B10) [right of=B9, node distance = \hdistSI cm, scale = 0.7] {$B,20$};
\node[state]         (C10) [right of=C9, node distance = \hdistSI cm, scale = 0.7] {$C,20$};

\node[text width=3cm] at (8.2, 2 * \vdistSI){\dots};
\node[text width=3cm] at (8.2, \vdistSI){\dots};
\node[text width=3cm] at (8.2, 0 ) {\dots};

\node[text width=3cm] at (-0.6, 2 * \vdistSI ) {10 Drivers};
\node[text width=3cm] at (-0.6, \vdistSI) {10 Drivers};
\node[text width=3cm] at (-0.6, 0) {10 Drivers};

\path (C0) edge	node[pos=0.5, sloped, above] {$N_{C,B}$ riders} (B1);
\path (C1) edge	node[pos=0.5, sloped, above] {$N_{C,B}$ riders} (B2);
\path (C9) edge	node[pos=0.5, sloped, above] {$N_{C,B}$ riders} (B10);

\end{tikzpicture}
\caption{Morning rush hour. \label{fig:ex2_morning_rush}}
\end{figure}

In each economy, at time $t=0$, there are $10$ drivers starting in each of the three locations $A$, $B$ and $C$, who all stay until the end of the planning horizon. There are a total of $100$ riders with trip origins and destinations independently drawn at random from $\loc$ and trip starting times randomly drawn from $\activeTimeSet$. In addition, in each period there are $N_{C,B}$ commuters traveling from $C$ to $B$. We assume that the commuters' values for the rides are i.i.d. exponentially distributed with mean 20, whereas the random rides have values exponentially distributed with mean 10.

\begin{figure}[t!]
\centering
 \begin{subfigure}[t]{0.4\textwidth}
	\centering
    \includegraphics[height= \imageHeight in]{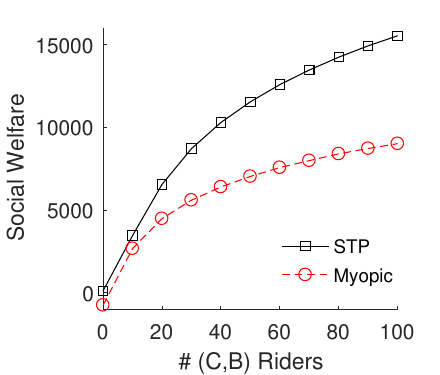}
    \caption{Social welfare.\label{fig:ex2_morning_welfare}}
\end{subfigure}%
 \hspace{2em}
\begin{subfigure}[t]{0.4\textwidth}
	\centering
    \includegraphics[height=\imageHeight in]{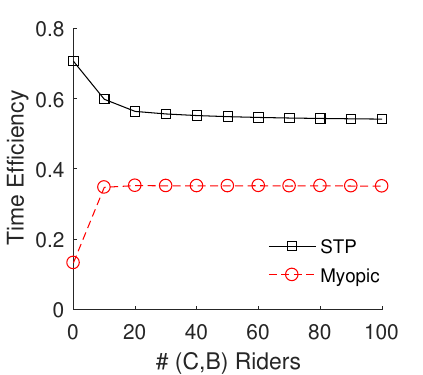}
    \caption{Driver time efficiency. \label{fig:ex2_morning_efficiency}}
\end{subfigure}%
\caption{Comparison of social welfare and driver time efficiency for the morning rush hours. \label{fig:ex2_welfare_efficiency}}    
\end{figure}

As we vary the $N_{C,B}$ from $0$ to $100$, the average social welfare achieved by the two mechanisms for $1,000$ randomly generated economies is as shown in Figure~\ref{fig:ex2_morning_welfare}. The STP mechanism achieves higher social welfare than the myopic pricing mechanism. 
Figure~\ref{fig:ex2_morning_efficiency} shows that the STP mechanism  achieves much higher driver time efficiency. The time efficiency of STP mechanism actually decreases as the number of $(C,B)$ riders per period increases above $10$, since the mechanism sends more empty cars to $C$ to pick up the higher value riders there. %

For the four origin-destination pairs, $(C,B)$, $(B,C)$, $(B,A)$ and $(A,B)$, Figures~\ref{fig:ex2_drivers} and~\ref{fig:ex2_prices} plot the average number of drivers getting dispatched 
to take these trips in each time period (including both trips with a rider, and repositioning without a rider), and the average trip prices. For each economy, the number of drivers for each origin-destination (OD) pair and the trip prices for this OD pair are averaged over the entire planning horizon.  Results on the other five trips, $(A,A)$, $(A,C)$, $(B,B)$, $(C,A)$ and $(C,C)$ can be interpreted similarly, therefore are deferred to Appendix~\ref{appx:additional_simulations}.

\begin{figure}[t!]
\centering
\begin{subfigure}[t]{0.4\textwidth}
	\centering
    \includegraphics[height= \imageHeight in]{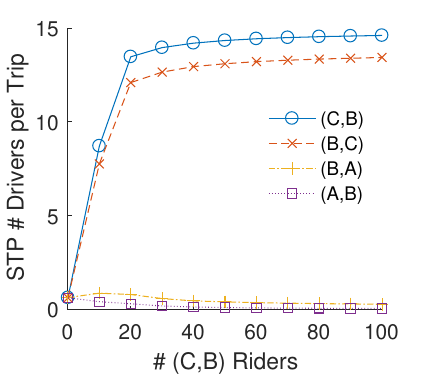}
    \caption{The STP mechanism. \label{fig:ex2_morning_stp_drivers}}
\end{subfigure}%
\begin{subfigure}[t]{0.4\textwidth}
	\centering
    \includegraphics[height=\imageHeight in]{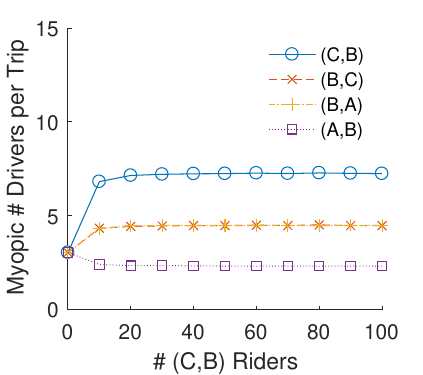}
    \caption{The myopic pricing mechanism.  \label{fig:ex2_morning_myopic_drivers}}
\end{subfigure}%
\caption{Comparison of the number of drivers per trip for the morning rush hour. \label{fig:ex2_drivers} }    
\end{figure}

Under the STP mechanism, given the large demand for trips from $C$ to
$B$ in each time period, there is a large number of drivers taking the
trip $(C,B)$, and also a large number of drivers relocating from $B$
to $C$ in order to pick up future riders from $C$ (see
Figure~\ref{fig:ex2_morning_stp_drivers}).
 A small number of drivers are dispatched from $B$ to $A$ due to the lack of future demand at $A$. Because of the abundance of supply at $B$ that are brought in by the $(C,B)$ trips, very few drivers are sent from $A$ to $B$. 
Under the myopic pricing mechanism, the number of drivers dispatched to take each trip, in contrast, does not contribute to the repositioning of drivers. See Figure~\ref{fig:ex2_morning_myopic_drivers}. There are an equal number of drivers traveling from $B$ to $A$ and $C$ despite the significant difference in the demand conditions at the two destinations. Moreover, a non-trivial number of drivers are traveling from $A$ to $B$ despite the fact that there are already too much of supply at location $B$.

\begin{figure}[t!]
\centering
\begin{subfigure}[t]{0.4\textwidth}
	\centering
    \includegraphics[height= \imageHeight in]{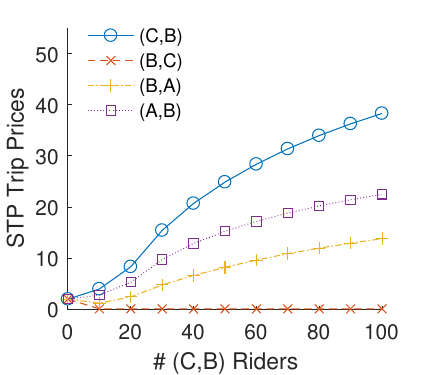}
    \caption{The STP mechanism.\label{fig:ex2_morning_stp_prices}}
\end{subfigure}%
\begin{subfigure}[t]{0.4\textwidth}
	\centering
    \includegraphics[height=\imageHeight in]{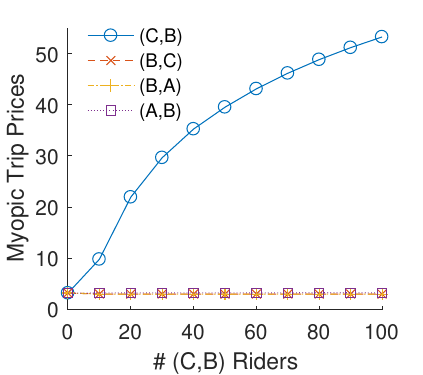}
    \caption{The myopic pricing mechanism. \label{fig:ex2_morning_myopic_prices}}
\end{subfigure}%
\caption{Comparison of trip prices for the morning rush hour. \label{fig:ex2_prices} }    
\end{figure}

Regarding the average prices under the STP mechanism plotted in Figure~\ref{fig:ex2_morning_stp_prices}, the morning commute route $(C,B)$ has a higher average price due to the large demand for this trip. The $(B,A)$ trip is less costly than the $(A,B)$ trip since there is plenty of supply of drivers that are brought to $B$ by the $(C,B)$ trips, so that the marginal value of supply at $B$ is low. The $(A,B)$ price is high so that not too many drivers are dispatched from $A$ to $B$. The $(B,C)$ trips are priced almost always at zero, despite the fact that the trip cost for the drivers is $3$, since it is beneficial for the economy for drivers to move to $C$ to pick up the commuters. 
Finally, Figure~\ref{fig:ex2_morning_myopic_prices} shows that the $(C,B)$ trip has a much higher average price under the myopic pricing mechanism than the STP mechanism, whereas the rest of the three trips are priced at the trip costs--- this is because without optimizing for the supply of drivers at each location,  $A$ and $B$ almost always have plenty of supply to pick up all riders starting from these locations, whereas there is far from enough drivers to satisfy the large demand at $C$.

\subsection{Scenario Three: Unbalanced Airport Trips} \label{sec:sim_airport}

In this scenario, we consider the imbalance between trips to and from the airport, as illustrated in Figure~\ref{fig:ex3_airport}. There are a total of $\horizon = 20$ time periods and two locations $\loc = \{A, D\}$, modeling the airport and the downtown area respectively. $\dist(A, A) = \dist(D,D) = 1$, whereas trips in between downtown and the airport are longer: $\dist(A,D) = \dist(D,A) = 2$. Trip costs are $3$ per period, and exiting early costs $1$ per period: $\cost_{a,b,t} = 3\dist(a,b)$, and $\exitCost_{\Delta} = \Delta$.

\renewcommand{\nodeScaleSI}{0.8}
\renewcommand{\hdistSI}{2.8}
\renewcommand{\vdistSI}{1.6}	

\begin{figure}[t!]
\centering
\begin{tikzpicture}[->,>=stealth',shorten >=1pt, auto, node distance=2cm,semithick][font = \small]
\tikzstyle{vertex} = [fill=white,draw=black,text=black,scale=0.9]

\node[state]         (A0) [scale = \nodeScaleSI] {$A,0$};
\node[state]         (B0) [above of=A0, node distance = \vdistSI cm, scale=\nodeScaleSI] {$D,0$};
\node[state]         (A0) [scale = \nodeScaleSI] {$A,0$};

\node[state]         (A1) [right of=A0, node distance = \hdistSI cm, scale=\nodeScaleSI] {$A,1$};
\node[state]         (B1) [right of=B0, node distance = \hdistSI cm, scale=\nodeScaleSI] {$D,1$};

\node[state]         (A2) [right of=A1, node distance = \hdistSI cm, scale=\nodeScaleSI] {$A,2$};
\node[state]         (B2) [right of=B1, node distance = \hdistSI cm, scale=\nodeScaleSI] {$D,2$};

\node[state]         (A3) [right of=A2, node distance = \hdistSI cm, scale=\nodeScaleSI] {$A,3$};
\node[state]         (B3) [right of=B2, node distance = \hdistSI cm, scale=\nodeScaleSI] {$D,3$};

\node[state]         (A10) [right of=A3, node distance = 1.2*\hdistSI cm, scale = 0.75] {$A,20$}; 
\node[state]         (B10) [right of=B3, node distance = 1.2*\hdistSI cm, scale = 0.75] {$D,20$};

\node[text width=1cm] at (\hdistSI * 3 + 2,  \vdistSI ) {\dots};
\node[text width=1cm] at (\hdistSI * 3 + 2, 0 ) {\dots};

\node[text width=3cm] at (-0.6, \vdistSI ) {10 Drivers};
\node[text width=3cm] at (-0.6, 0) {10 Drivers};

\path (B0) edge	node[pos=0.55, sloped, above] {$40$ riders} (B1);
\path (B1) edge	node[pos=0.55, sloped, above] {$40$ riders} (B2);
\path (B0) edge	node[pos=0.2, sloped, below] {$N_{D,A}$ } (A2);
\path (A0) edge	node[pos=0.25, sloped, below] {$40$ - $N_{D,A}$} (B2);
\path (B1) edge	node[pos=0.75, sloped, below] {$N_{D,A}$ } (A3);
\path (A1) edge	node[pos=0.8, sloped, below] {$40$ - $N_{D,A}$} (B3);

\end{tikzpicture}
\caption{Imbalance in trips to and from the airport. \label{fig:ex3_airport}}
\end{figure}

In each economy, there are $20$ available drivers at each of $A$ and $D$ at time $0$. Within the downtown area, there are $40$ riders requesting rides in each period, whereas in between the downtown area and the airport, there are a total of $40$ riders heading toward or leaving the airport, which may be unevenly distributed on the two directions. 
The value of each of the downtown riders are drawn i.i.d from the exponential distribution with mean $10$, and the value for each trip to or from the airport is drawn i.i.d from the exponential distribution with mean $40$. Since the airport trips are twice as long, we are modeling the scenario where the airport travelers are less price sensitive, and are willing to pay twice as much, in comparison to the downtown riders.

\begin{figure}[t!]
\centering
 \begin{subfigure}[t]{0.33\textwidth}
	\centering
    \includegraphics[height= \imageHeight in]{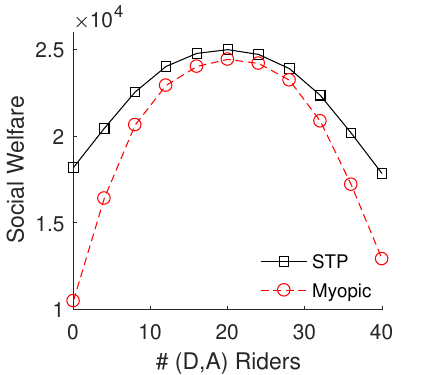}
    \caption{Social welfare.\label{fig:ex3_airport_welfare}}
\end{subfigure}%
\hspace{2em}
\begin{subfigure}[t]{0.33\textwidth}
	\centering
    \includegraphics[height=\imageHeight in]{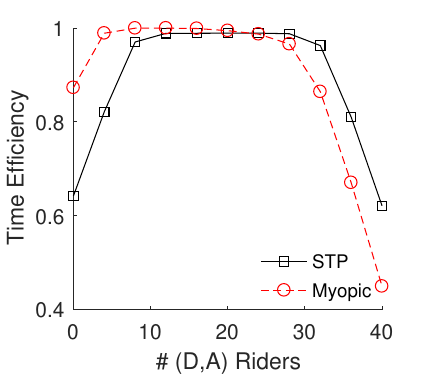}
    \caption{Driver time efficiency. \label{fig:ex3_airport_efficiency}}
\end{subfigure}%
\caption{Comparison of social welfare and driver time efficiency for the morning rush hours. \label{fig:ex3_welfare_efficiency}}    
\end{figure}

As we vary  $N_{D,A}$ from $0$ to $40$ (thus at the same time varying the number of $(A,D)$ riders from $40$ to $0$), the average social welfare and driver-time efficiency achieved by the two mechanisms over $1,000$ randomly generated economies are as shown in Figure~\ref{fig:ex3_welfare_efficiency}. We first observe that the more balanced the trip flow to and from the airport is (i.e. when $N_{D,A}$ is closer to $20$), the higher the social welfare and driver time efficiency achieved by STP. This is because when the trip flow is more balanced, it is more likely for a driver to pick up riders with high values for the trips both to and from the airport, whereas when the flow is unbalanced, drivers may relocate with an empty car or pick up riders with low values for one of the two directions. 
The myopic pricing mechanism achieves comparatively good welfare and driver time-efficiency when trip flows are reasonably balanced, however, the performance downgrades quickly as the flow becomes more unbalanced.

\begin{figure}[t!]
\centering
\begin{subfigure}[t]{0.4\textwidth}
	\centering
    \includegraphics[height= \imageHeight in]{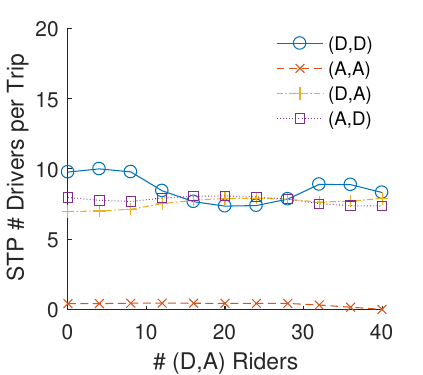}
    \caption{The STP mechanism. \label{fig:ex3_airport_stp_drivers}}
\end{subfigure}%
\begin{subfigure}[t]{0.4\textwidth}
	\centering
    \includegraphics[height=\imageHeight in]{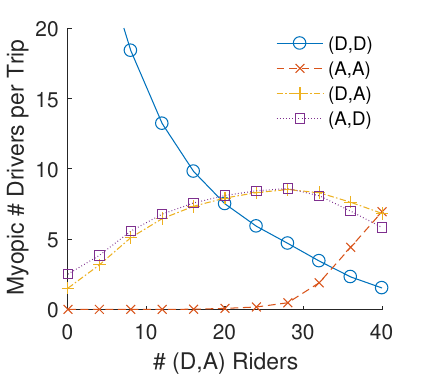}
    \caption{The myopic pricing mechanism.  \label{fig:ex3_airport_myopic_drivers}}
\end{subfigure}%
\caption{Comparison of driver numbers for the unbalanced trips to and from the airport. \label{fig:ex3_drivers}}    
\end{figure}

Figure~\ref{fig:ex3_airport_stp_drivers} shows that regardless of the flow imbalance, the number of drivers doing downtown or airport trips stay reasonably stable under the STP mechanism. %
However, Figure~\ref{fig:ex3_airport_myopic_drivers} shows that when $N_{D,A}$ is too small, too many drivers staying in downtown, forgoing the opportunity to pick up the large number of drivers hoping to return to downtown from the airport. Similarly, when $N_{D,A}$ is too large, the myopic pricing mechanism sends too many drivers to the airport, since they have higher per period surplus. As a result, a large number of drivers line up at the airport, and too few rider traveling within downtown are picked up, resulting in much lower efficiency.

\begin{figure}[t!]
\centering
\begin{subfigure}[t]{0.4\textwidth}
	\centering
    \includegraphics[height= \imageHeight in]{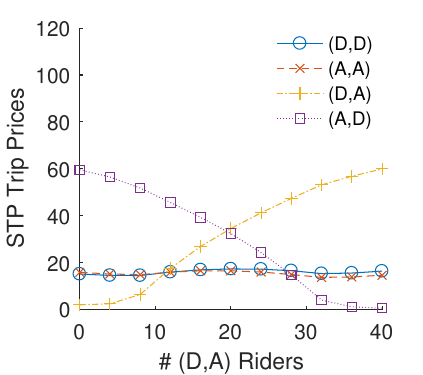}
    \caption{The STP mechanism.\label{fig:ex3_airport_stp_prices}}
\end{subfigure}%
\begin{subfigure}[t]{0.4\textwidth}
	\centering
    \includegraphics[height=\imageHeight in]{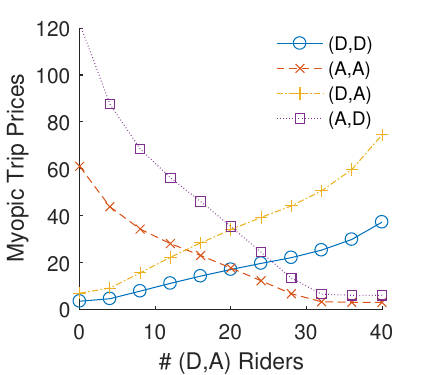}
    \caption{The myopic pricing mechanism. \label{fig:ex3_airport_myopic_prices}}
\end{subfigure}%
\caption{Comparison of trip prices for the unbalanced trips to and from the airport. \label{fig:ex3_prices} 
}
\end{figure}

The average prices for trips under the two mechanisms are shown in Figure~\ref{fig:ex3_prices}. Comparing the two mechanisms, we can see that the price for the $(D,D)$ downtown trip is almost constant, regardless of how unbalanced the airport trip flows are, however, the price for the downtown trip is seriously affected by the conditions at the airport. For the trips to and from the airport, we can see that both mechanisms increase the prices for the direction that is over-demanded, and lowers the price for other direction. The differences are (i) the price surges are much lower under STP than under myopic, providing riders more price stability, and (ii) the price for the under-demanded direction is close to zero under STP, reflecting the need to relocate cars even when there is no demand, however, myopic insists on setting a prices to at least cover the trip cost.

\section{Concluding Remarks} \label{sec:conclusion}

We study the problem of pricing and dispatching in ridesharing platforms in a way that drivers would always choose to accept the platform's dispatches. Under a complete information, discrete time, multi-period and multi-location model, we show that always accepting the dispatched trips forms a subgame-perfect equilibrium among the drivers under the STP mechanism. The STP mechanism is simple in design: computing a driver-pessimal competitive equilibrium plan at the beginning of the planning horizon, as well as after any deviation from this plan. %
The main operational insight from this paper, as discussed in the introduction, is the role of the marginal welfare contribution of adding an extra driver at the origin versus at the destination of a trip (as prescribed in \eqref{equ:stp_prices}), in determining prices that are appropriately smooth, and optimizing social welfare.

\paragraph{Welfare vs revenue optimization.}

We focus on welfare instead of revenue optimization in this paper. With the substantial network effect in ridesharing marketplaces, as well as fierce competition, major platforms such as Uber and Lyft have been prioritizing growth instead of profit. Welfare optimization is therefore aligned with this goal, %
and should also be the objective when a city or a non-profit organization (e.g. RideAustin) is operating a ridesharing platform.

In regard to revenue: the STP mechanism strictly balances budget. Alternatively, we many think about the platform taking a fixed fraction of the driver surplus. This does not affect the results presented in this paper, and the outcome under STP still resides in some $\epsilon$-core of the market, depending on the cut taken by the platform.\footnote{The (multiplicative) $\epsilon$-core of a game is the set of outcomes where there does not exist a coalition of participants who can form an alternative plan among themselves, such that all participants' utility increases by a fraction of $\epsilon$.%
}

\paragraph{Modeling assumptions.}

Throughout the paper, we assume that drivers have homogeneous costs and will all stay until the end of the planning horizon, and that riders are impatient. Examples~\ref{exmp:different_exit_times} and~\ref{exmp:patient_riders} in Appendix~\ref{appx:integrality} show that if either assumption is relaxed, the LP relaxation \eqref{equ:lp} for the optimal planning problem may no longer be integral since the reduction to the MCF problem fails. As a result, there may not exist anonymous, origin-destination CE prices. %

As the length of a time period becomes smaller, it is less likely for riders to be fully impatient, although prices that are appropriately smooth in time would reduce riders' incentives to strategically wait for a lower price. As a time period gets shorter, the envy-freeness
property %
also weakens--- the``equivalence of
rights'' depends on time and location, and there would be fewer riders
requesting the same trip, or drivers at the same location at the same time.

\paragraph{Complete information.} 

We assume complete information and a finite planning horizon throughout this paper. A natural next step is to generalize the model to settings where there is uncertainty about supply and demand, and where the planning horizon rolls forward as the uncertainty resolves over time. %
One challenge is that the class of $M^\natural$ concave functions is not closed under addition, thus the continuation value (i.e. the expected future welfare as a function of the positioning of drivers at the end of this immediate planning period) may no longer be $M^\natural$ concave. This may affect the integrality of the optimal planning problem and the existence of CE.
On the other hand, while we have made use of CE to establish envy-freeness and core-selecting properties, the use of CE prices is not necessary for incentive alignment, since a driver cannot simply pick a series of trips to complete, so this does not by itself prevent progress.
Building predictive models for supply and demand
in ridesharing platforms has proved challenging,
and it may be more promising and practical to directly estimate the welfare contributions of an extra driver using existing data from the current operations.

\paragraph{Information elicitation.} 

The focus of this paper is incentive alignment in a dynamic environment while maintaining driver
flexibility, and not on information asymmetry. This said, we do have some results in this regard.
On the driver's side, we may consider a scenario where the mechanism does not have the driver entrance information, but asks each driver to report at the beginning of the planning horizon %
the time and location at which she will enter.
We show in Appendix~\ref{appx:driver_entrance} that a driver with entrance location and time
$(\re_i, \te_i)$ has no incentive to report some
feasible entrance location and time $(\hat{\re_i}, \hat{\te_i})$ where
$\hat{\te_i} \geq \te_i + \dist(\re_i, \hat{\re_i})$, and then
actually enter the platform at $(\hat{\re_i}, \hat{\te_i})$.

On the rider side, Example~\ref{exmp:toy_econ_1_continued} in Appendix~\ref{appx:rider_incentives} shows that the STP mechanism is not truthful for riders.
This is a departure from the classical, unit-demand assignment problem~\citep{shapley1971assignment}, where the seller-pessimal outcome corresponds to the buyer-side VCG prices, and is truthful for buyers.
We show in Theorem~\ref{thm:minCE_riderVCG} that the rider-side VCG
payment for a rider is equal to the minimum price for her trip among
all CE outcomes, but such trip prices, when adopted for all
riders, may not form a CE (i.e., the anonymous trip prices among all CE outcomes does not form a lattice). This implies that no budget-balanced core-selecting mechanism is truthful for riders, and we also show that no budget-balanced, optimal and subgame-perfect incentive compatible mechanism is truthful for riders.

\bibliography{uber-arxiv-refs}

\newpage

\appendix

\noindent{}\textbf{\huge{Appendix}}

\bigskip

\noindent{}%
Appendix~\ref{appx:cont_time} provides a continuous-time interpretation of the discrete time model that we adopted. Appendix~\ref{appx:proofs} includes proofs that are omitted from the body of
the paper. 
Appendix~\ref{appx:examples} provides examples and discussions on %
the integrality of the LP relaxation and existence of CE, incentives of riders, entrance as drivers' private information, and the naive recomputation of optimal CE plans.
Appendix~\ref{appx:literature} discusses the relationship
with the literature on trading networks and the dynamic VCG mechanism, and why they do not solve the ridesharing problem. Finally, additional simulation results are presented in
Appendix~\ref{appx:additional_simulations}. %

\section{Continuous Time Interpretation} \label{appx:cont_time}

Under the discrete time model introduced in Section~\ref{sec:preliminaries}, trips within the same location takes $\delta(a,a) = 1$ unit of time for all $a \in \loc$, however, we also assume that a driver can drop-off a rider and pick-up a new rider in the same location at the same time point. 

\begin{figure}[htpb!]
\centering
\begin{tikzpicture}[scale = 0.98] %
\draw[-latex, line width=0.25mm] 	(-0.5,0) -- (15.5,0)node[anchor=north] {time};

\draw[-, , line width=0.5mm] 	(0,0.1) -- (0,-0.1);
\draw[-, , line width=0.5mm] 	(5,0.1) -- (5,-0.1);
\draw[-, , line width=0.5mm] 	(10,0.1) -- (10,-0.1);
\draw[-, , line width=0.5mm] 	(15,0.1) -- (15,-0.1);

\draw(0, 0.1) node[anchor=south] {$t=0$};
\draw(5, 0.1) node[anchor=south] {$t=1$};
\draw(10, -0.1) node[anchor=north] {$t=2$};
\draw(15, 0.1) node[anchor=south] {$t=3$};
    
\node[text width=2.3cm] at (1, -0.8) 
    {Pick-up at $A$};

\node[text width=2.3cm] at (4,-0.8) 
    {Drop-off at $A$};
        
\node[text width=2.3cm] at (7,-0.8) 
    {Pick-up at $A$};
            
\node[text width = 2.5cm] at (13.5,-0.8) 
    {Drop-off at $B$};

\draw[->](1, -0.5) -- (1, -0.1);
\draw[->](4, -0.5) -- (4, -0.1);
\draw[->](7, -0.5) -- (7, -0.1);
\draw[->](13.5, -0.5) -- (13.5, -0.1);
                    
\draw [decorate, decoration={brace, amplitude=10pt},xshift=0.4pt,yshift=-0.4pt] (1,0) -- (4,0) node[black,midway,yshift=0.6cm] { $A$-$A$ trip};

\draw [decorate, decoration={brace, amplitude=10pt},xshift=0.4pt,yshift=-0.4pt] (7,0) -- (13.5,0) node[black,midway,yshift=0.6cm] { $A$-$B$ trip};
                    
\end{tikzpicture}

\caption{Time-line for a within-location trip $A \rightarrow A$ which takes $\dist(A,A) = 1$ period of time, and a between-location trip $A \rightarrow B$ which takes $\dist(A, B) = 2$ periods of time.
\label{fig:trip_timeline}} 
\end{figure}

Figure~\ref{fig:trip_timeline} illustrates the continuous-time interpretation of this discrete-time model. There are two trips: an $A$ to $A$ trip at time $0$ which takes $\dist(A,A) = 1$ unit of time, and an $A$ to $B$ trip at time $1$, which takes $\dist(A,B) = 2$ units of time and ends at time $t  = 3$. The time after the drop-off of the first rider at $A$ and the time for the second pick-up at $A$ is the time the driver takes to travel within location $A$ to pick up the second rider. In the discrete time model, both the drop-off and the pick-up are both considered to happen at time $t=1$.

\section{Proofs} \label{appx:proofs}

We provide in this section proofs that are omitted from the body of the paper.

\subsection{Proof of Lemma~\ref{lem:CE_plan_properties}} \label{appx:proof_lem_CE_plan}

\lemCEPlan*

\begin{proof} %
Assume that there exists $(a,b,t) \in \trips$ s.t. $\exists j \in \riders$ s.t. $(\origin_j, \dest_j, \tr_j) = (a,b,t)$ and $\price_{a,b,t} < 0$. Rider best response implies $x_j = 1$, thus there exists $i \in \driverSet$ s.t. $(\origin_j, \dest_j, \tr_j, j) \in \actpath_i$, and this is paid $\price_{a,b,t} < 0$ at time $t$. This violates driver best response, since keeping the rest of the action path unchanged, but choosing not to get paid for this trip, the driver would get a higher total payment.

This implies that $\price_{\origin_j, \dest_j,\tr_j} = \price_{\origin_j, \dest_j,\tr_j}^+$ holds for all $j \in \riders$, therefore given the dispatching $(x,\actpath)$, rider best response under prices $\price$ implies rider best response under prices $\price^+$. 
Denote the total payment to each driver and the utility of each driver given $(x, \actpath, \price)$ as $\dPayment_i^+$ and $\pi_i^+$.
We know that for each driver $i \in \driverSet$, $\dPayment_i^+ = \sum_{j \in \riders} \one{(\origin_j,\dest_j,\tr_j,j) \in \actpath_i} \price_{\origin_j,\dest_j,\tr_j}^+
= \sum_{j \in \riders} \one{(\origin_j,\dest_j,\tr_j,j) \in \actpath_i} \price_{\origin_j,\dest_j,\tr_j} = \dPayment_i$, thus $\pi_i^+ = \pi_i$. This implies driver best response: $\pi_i^+ = \pi_i = 
\max_{k = 0, \dots, |\pathSet_i|} \{ \sum_{(a,b,t) \in  \path_{i,k} } \max \{ \price_{a,b,t} , 0 \} - \pathCost_{i,k} \}
=\max_{k = 0, \dots, |\pathSet_i|}  \{ \sum_{(a,b,t) \in  \path_{i,k} } \max \{ \price_{a,b,t}^+ , 0 \} - \pathCost_{i,k} \}$,
This completes the proof that $(x, \actpath, \price)$ also forms a CE, and that the driver and rider payments and utilities under the two plans are identical. 
\end{proof}

\subsection{Proof of Lemma~\ref{thm:lp_integrality}} \label{appx:proof_thm_lp_integrality}

\subsubsection{Minimum Cost Flow Problems} \label{appx:mcf}

We first provide the formulation of the minimum cost flow (MCF) problem, and the reduction of an optimal dispatching problem to an MCF problem where drivers flow through a network with nodes corresponding to  the initial states of drivers and the (location, time) pairs, and the edge costs equal to the trip costs for drivers minus the rider values. 

\medskip

Let $\graph = (\nodeSet, \edgeSet)$ be a directed graph with a node set $\nodeSet$ and an edge set $\edgeSet$.
Let $\minCap:\edgeSet \rightarrow \setZ \cup \{-\infty \}$ be the lower capacity function, $\maxCap:\edgeSet \rightarrow \setZ \cup \{ +\infty \}$ be the upper capacity function, and let $\gamma: \edgeSet \rightarrow \setR$ be the cost function.
For each edge $\edge \in \edgeSet$, denote $\tail \edge \in \nodeSet$ as the initial (tail) node of $\edge$ and $\head \edge \in \nodeSet$ as the terminal (head) node of $\edge$. That is, $\tail \edge = \node_1$ and $\head \edge = \node_2$ for the edge $\edge = (\node_1, \node_2)$.

A feasible flow $\flow$ is a function $\flow: \edgeSet \rightarrow \setR$ such that $\minCap(\edge) \leq \flow(\edge) \leq \maxCap(\edge)$ for each $\edge \in \edgeSet$. Its boundary $\partial \flow: \nodeSet \rightarrow \setR$ is defined as
\begin{align}
	\partial \flow(\node) =  \sum \{ \flow(\edge) ~|~ \edge \in \edgeSet,~\tail \edge = \node\} -  \sum \{ \flow(\edge) ~|~ \edge \in \edgeSet,~\head \edge = \node\}.
\end{align}
A node $\node$ for which $\partial \flow(\node) > 0$ is a source of the flow, and a node $\node$ is a sink if $\partial \flow(\node) < 0$.
Let $\xi$ be a vector in $\setR^{|\nodeSet|}$, the minimum cost for any flow with boundary condition $\xi$ is:
\begin{align}
	\omega(\xi) = \inf_{\flow} \left\lbrace \left. \sum_{\edge \in \edgeSet}  \cost(\edge)\flow(\edge) \right|~ \flow: \text{ feasible flow with } \partial \flow = \xi \right\rbrace. \label{equ:mcf}
\end{align}

\subsubsection{Reducing Optimal Dispatching to MCF} \label{appx:planning_to_mcf}

Given an instance of the optimal dispatching problem with planning horizon $\horizon$, locations $\loc$, distances $\dist$, costs $\{ \cost_{a,b,t} \}_{(a,b,t) \in \trips}$ and $\{\exitCost_{\Delta}\}_{\Delta = 1,\dots, \horizon}$, riders $\riders$ and drivers $\driverSet$, we construct a corresponding MCF problem. Let $\graph = (\nodeSet, \edgeSet)$ be the graph, where the nodes $\nodeSet$ consists of (location, time) pairs, the nodes $\{\driverNode_i\}_{i \in \driverSet}$ modeling the initial states of drivers, and an additional ``sink" node $\sink$ representing the end of time:
\begin{align*}
	\nodeSet = \set{(a,t)}{a \in \loc,~ t \in \timeSet}\cup \set{\driverNode_i}{i \in \driverSet} \cup \{ \sink \}.
\end{align*}
The set of edges $\edgeSet = \edgeSet_1 \cup \edgeSet_2 \cup \edgeSet_3 \cup \edgeSet_4$ consists of the following parts:
\begin{enumerate}[$\bullet$]
	\item  $\edgeSet_1 = \set{\riderEdge_j}{ j \in \riders}$ corresponds to rider trips, %
where the edge $\riderEdge_j = ((\origin_j, \tr_j), ~(\dest_j, \tr_j + \dist(\origin_j, \dest_j)))$ corresponds to the trip requested by rider $j$ and has minimum capacity $\minCap(\riderEdge_j) = 0$, maximum capacity $\maxCap(\riderEdge_j)= 1$ and cost $\edgeCost(\riderEdge_j) = -\val_j + \cost_{\origin_j, \dest_j, \tr_j}$. Intuitively, if a unit of driver flows through the edge corresponding to rider $j$ (i.e. rider $j$ is picked up by a driver), we incur a cost of $\cost_{\origin_j, \dest_j, \tr_j}$, and  gain value $\val_j$.
	\item $\edgeSet_2$ consists of edges that are feasible relocating trips without riders:
\begin{align*}
	\edgeSet_2 = \set{((a, t),~(b, t+\dist(a,b)) }{(a,b,t) \in \trips}.
\end{align*}
Recall that $\trips = \{(a,b,t) ~|~ a,b \in\loc, t + \dist(a,b) \leq \horizon \}$ denotes the set of feasible trips within the planning horizon.
	There is no upper bound on capacities of these edges: $\forall \edge \in \edgeSet_2$, $\minCap(\edge) = 0$, $\maxCap(\edge) = +\infty$. The edge costs are $\edgeCost(\edge) = \cost_{a,b,t}$, if $\edge = ((a, t),~(b, t+\dist(a,b))$.
	\item $\edgeSet_3$ consists of edges that connect all nodes $(a,t)$ to the sink $\sink$, representing the exit of a driver from location $a$ and time $t$:
\begin{align*}
	\edgeSet_3 = \set{((a, t),~\sink)}{a \in \loc,~t \in \timeSet}.
\end{align*}
	Similar to $\edgeSet_2$, there is no upper capacity constraint: $\forall \edge \in \edgeSet_3$, $\minCap(\edge) = 0$, $\maxCap(\edge) = +\infty$. A driver exiting at time $t$ incurs an early exiting opportunity cost of $\exitCost_{\horizon - t}$, therefore $\edgeCost(\edge) = \exitCost_{\horizon - t}$ for all $\edge = ((a, t),~\sink) \in \edgeSet_3$.
	\item $\edgeSet_4$ consists of two sets of edges. The first set of edges allow each driver to enter the platform at their entrance location and time $(\re_i, \te_i)$, and the second set of edges allow drivers with $\driverEntrance_i = 0$ to not enter the platform at all:
	\begin{align*}
		\edgeSet_4 = \set{(\driverNode_i,~(\re_i, \te_i))}{i \in \driverSet} \cup \set{(\driverNode_i,~\sink)}{i \in \driverSet \txtst \driverEntrance_i = 0}.
	\end{align*}		
	There is no upper capacity constraint, and no additional cost for these edges: $\forall \edge \in \edgeSet_4$, $\minCap(\edge) = 0$, $\maxCap(\edge) = +\infty$, and $\edgeCost(\edge) = 0$.
\end{enumerate}

The boundary condition of the MCF problem is given by:
\begin{align*}
	\xi_{\driverNode_i} =&~ 1,~\forall i \in \driverSet, \\
	\xi_\node =&~ 0,~ \txtif \node = (a,t) \text{ for some } a \in \loc \txtand t \in \timeSet,\\
	\xi_\sink =&~ -\nd.
\end{align*}

\paragraph{Flow LP}

Given this construction, there are non-zero edge costs and upper flow capacity constraints only for edges in $\edgeSet_1$. The minimum cost flow problem \eqref{equ:mcf} can therefore be simplified and rewritten in the following form:
\begin{align}
\min_{\flow} ~ &  \sum_{j \in \riders} (\cost_{\origin_j,\dest_j,\tr_j}-\val_j) \flow(\riderEdge_j)  + \sum_{(a,b,t)\in \trips} \cost_{a,b,t} \flow(((a,t),~(b,t+\dist(a,b)))) \notag \\
&  + \sum_{a \in \loc, t \in \timeSet} \exitCost_{\horizon - t} \flow(((a,t),~\sink))
	 \label{equ:flow_LP}\\
	\txtst &  \sum_{\edge \in \edgeSet, ~\tail \edge = (a,t)} \flow(\edge) -   \sum_{\edge \in \edgeSet, ~\head \edge = (a,t)} \flow(\edge) = 0,  & \forall a \in \loc, ~\forall t \in \timeSet 	
	 \label{equ:flow_balance_constraint} \\
	& \sum_{\edge \in \edgeSet, ~\tail \edge = \driverNode_i} \flow(\edge) = 1,  & \forall i \in \driverSet  \label{equ:flow_driver_cnst}	 \\
	& \flow(\riderEdge_j) \leq 1, & \forall j \in \riders  \label{equ:flow_capacity_constraint} \\
	& \flow(\edge) \geq 0, & \forall \edge \in \edgeSet \label{equ:flow_non_neg_constraint}
\end{align}

Note that given \eqref{equ:flow_balance_constraint} and \eqref{equ:flow_driver_cnst}, the flow balance constraint at the sink, $\sum_{\edge \in \edgeSet,~\head \edge = \sink} \flow(\edge) = \nd$, is redundant, and therefore omitted from the above formulation in order to achieve better interpretability of the dual variables. 
Observing that minimizing the negation of the total value of riders that are picked up is equivalent to maximizing the total value of riders that are picked up, we can rewrite \eqref{equ:flow_LP} in the following form:
\begin{align}
\max_{\flow} ~ &  \sum_{j \in \riders} (\val_j - \cost_{\origin_j,\dest_j,\tr_j}) \flow(\riderEdge_j)  - \sum_{(a,b,t)\in \trips} \cost_{a,b,t} \flow(((a,t),~(b,t+\dist(a,b)))) \notag \\
&  - \sum_{a \in \loc, t \in \timeSet} \exitCost_{\horizon - t} \flow(((a,t),~\sink)) \label{equ:flow_LP_simp} \\
	\txtst &  \sum_{\edge \in \edgeSet, ~\tail \edge = (a,t)} \flow(\edge) -   \sum_{\edge \in \edgeSet, ~\head \edge = (a,t)} \flow(\edge) = 0,  & \forall a \in \loc, ~\forall t \in \timeSet \label{equ:flow_balance_cnst_simp} \\
	& \sum_{\edge \in \edgeSet, ~\tail \edge = \driverNode_i} \flow(\edge) = 1,  & \forall i \in \driverSet  \label{equ:flow_driver_cnst_simp}	 \\	
	& \flow(\riderEdge_j) \leq 1, & \forall j \in \riders \label{equ:flow_capacity_cnst_simp} \\
	& \flow(\edge) \geq 0, & \forall \edge \in \edgeSet \label{equ:flow_nonneg_cnst_simp}
\end{align}

We refer to \eqref{equ:flow_LP_simp} as the flow LP.

\subsubsection{Proof of Lemma~\ref{thm:lp_integrality}}

\thmLPIntegrality*

\begin{proof}

It is known that the MCF problems with certain structure have integral optimal solutions~\cite{murota2003discrete}.
The flow LP \eqref{equ:flow_LP_simp} is integral since (I) the flow balance constraints \eqref{equ:flow_balance_cnst_simp}  and \eqref{equ:flow_driver_cnst_simp} can be written in matrix form $\flowMatrix \flow = \xi$ where $\flowMatrix$ is total unimodular and $\xi$ has only integer entries and (II) the edge capacity constraints \eqref{equ:flow_capacity_cnst_simp} are all integral. See Section III.1.2 in~\cite{WolseyLaurenceA1999IaCO} for details on total unimodularity and the integrality of polyhedron.

\medskip

To prove the integrality of the original LP \eqref{equ:lp}, we show that 
\begin{enumerate}[(i)]
	\item for each feasible solution to the LP \eqref{equ:lp}, there exists a feasible solution to the flow LP \eqref{equ:flow_LP_simp} with the same objective, and
	\item for each integral feasible solution to the flow LP  \eqref{equ:flow_LP_simp}, there exists a corresponding integral feasible solution to the LP \eqref{equ:lp} with the same objective.
\end{enumerate}

The integrality of MCF then implies that there exists an integral optimal solution of \eqref{equ:lp}, since the optimal objective of the LP \eqref{equ:lp} cannot exceed the optimal objective of the MCF, which is achieved at some integral feasible solution of the MCF, and therefore also at some integral feasible solution of \eqref{equ:lp}.
We now prove (i) and (ii).

\bigskip

\noindent{}\textit{Part (i).} 
Let $(x,y)$ be a feasible solution to the LP \eqref{equ:lp}.  
A solution $\flow$ to the flow LP \eqref{equ:flow_LP_simp} can be constructed as follows:
\begin{enumerate}[$\bullet$]
	\item For each $j\in \riders$, let $\flow(\riderEdge_j)= x_j$. We know that $0 \leq \flow(\riderEdge_j) \leq 1$ for all $j \in \riders$. 
	\item For each $\edge =((a,t),~ (b, t+\dist(a,b)) \in \edgeSet_2$ corresponding to the relocation trip $(a,b,t) \in \trips$, let $\flow(\edge) = \sum_{i \in \driverSet} \sum_{k = 0}^{|\pathSet_i|} y_{i,k} \one{ (a,b,t) \in  \path_{i, k}} - \sum_{j \in \riders} x_j \one{(\origin_j, \dest_j, \tr_j) = (a,b,t)}.$
Constraint \eqref{equ:lp_trip_capacity_constraint} guarantees that $\flow(\edge) \geq 0$. 
	\item For each driver $i \in \driverSet$ s.t. $\driverEntrance_i = 1$, let $\flow((\driverNode_i,~(\re_i, \te_i))) = \sum_{k = 0}^{|\pathSet_i|}y_{i,k}$.
	For each $i \in \driverSet$ s.t. $\driverEntrance_i = 0$, let $\flow((\driverNode_i,~(\re_i, \te_i))) = \sum_{k = 1}^{|\pathSet_i|}y_{i,k}$, %
and $\flow((\driverNode_i,~\sink)) = y_{i,0}$. %
	\item For each $\edge = ((a, \horizon), \sink) \in \edgeSet_3$, let  $\flow(\edge) = \sum_{\edge' \in \edgeSet, ~\head \edge' = (a, \horizon)} \flow(\edge')$ to balance the flow in and out of $(a, \horizon)$. This is the total number of drivers that existed the platform from $(a,t)$. 
\end{enumerate}

The edge capacity constraints \eqref{equ:flow_capacity_cnst_simp} and \eqref{equ:flow_nonneg_cnst_simp} are satisfied by construction. Given constraint \eqref{equ:lp_driver_constraint} %
and the fact that each $\path_{i, k}$ is a feasible path, %
constraints \eqref{equ:flow_balance_cnst_simp} and \eqref{equ:flow_driver_cnst_simp} are satisfied. Moreover, it is obvious that the objective of the two linear programs coincide, thus $\flow$ is a feasible solution to the flow LP \eqref{equ:lp} with the same objective.

\bigskip

\noindent{}\textit{Part (ii).}
Given a feasible, integral solution $\flow$ to the flow LP \eqref{equ:flow_LP_simp}, we construct an integral feasible solution to the original LP. %
For the riders, let $x_j = \flow(\riderEdge_j)$ for all $j \in \riders$. For the drivers, from the standard flow decomposition arguments~\cite{bertsimas1997introduction}, the $\nd$ units of flow in $\flow$ that all converge in $\sink$ can be decomposed into $\nd$ paths of single units of flow, that correspond to each driver's feasible path taken over the entire planning horizon. This gives us a feasible solution to the original LP, and it is easy to see that the objectives are the same. This completes the proof of the lemma. 
\end{proof}

The reduction to MCF can also be used to solve the original LP efficiently. In the optimal dispatching problem, the number of feasible paths for each driver is exponential in $|\loc|$ and $\horizon$, thus there are exponentially many decision variables in the LP \eqref{equ:lp}. The numbers of decision variables and constraints of the flow LP are, in contrast, polynomial in $|\riders|$, $|\loc|$ and $\horizon$, and there are efficient algorithms for solving network flow problems (see~\cite{ahuja1993network}).

\subsection{Proof of Lemma~\ref{thm:optimal_plans}} \label{appx:proof_thm_opt_plan}

Before proving the lemma, we first state the complementary slackness (CS) conditions~\cite{bertsimas1997introduction}. 
Given a feasible solution $(x,y)$ to the primal LP \eqref{equ:lp}, and a feasible solution $(\price,\pi,u)$ to the dual LP \eqref{equ:dual}, both solutions are optimal if and only if the following conditions hold:

\begin{enumerate}[(\CS-1)]
	\setlength\itemsep{0.0em}
	\item \label{item:cs1} for all $j \in \riders$, $x_j > 0 \Rightarrow u_j = \val_j - \price_{\origin_j, \dest_j, \tr_j}$,
	\item \label{item:cs2} for all $j \in \riders$, $u_j > 0 \Rightarrow x_j = 1$,
	\item \label{item:cs4} for all $i \in \driverSet$ and all $k = 1, \dots, |\pathSet_i|$, $y_{i,k} > 0 \Rightarrow \pi_i = \sum_{(a,b,t) \in \path_{i,k}} \price_{a,b,t} - \pathCost_{i,k}$,
	\item \label{item:cs5} for all $(a,b,t) \in \trips$, \vspace{-0.8em}
\begin{align*}
	\price_{a,b,t}> 0 \Rightarrow \sum_{j \in \riders} x_j  \one{ (\origin_j, \dest_j, \tr_j) = (a,b,t) } = \sum_{i \in \driverSet} \sum_{k = 0}^{|\pathSet_i|}  y_{i,k} \one{(a,b,t) \in \path_{i, k}}.
\end{align*}	
\end{enumerate}

We also provide this following lemma, showing that given any CE outcome, %
trips with excess driver supply have non-positive prices.

\begin{lemma}\label{lem:excessive_supply_zero_price} Given any plan with anonymous trip prices $(x, \actpath, \price)$ that forms a CE, for any $(a,b,t) \in \trips$, if  there exists a driver $ i \in \driverSet$ s.t. $(a,b,t) \in \actpath_i$, then $\price_{a,b,t} \leq 0$.
\end{lemma}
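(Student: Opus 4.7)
The plan is to argue by contradiction using the driver best-response condition in the definition of CE. Suppose, toward contradiction, that $\price_{a,b,t} > 0$ and yet some driver $i \in \driverSet$ has $(a,b,t) \in \actpath_i$, i.e.~driver $i$ traverses this trip as an empty relocation. Let $\path_{i,k} \in \pathSet_i$ be the feasible path consistent with $\actpath_i$, so her realized utility under the plan is $\pi_i = \dPayment_i - \pathCost_{i,k}$, where $\dPayment_i = \sum_{j \in \riders : (\origin_j,\dest_j,\tr_j,j) \in \actpath_i} \price_{\origin_j,\dest_j,\tr_j}$ ranges only over rider pickups in $\actpath_i$.

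By condition~(ii) of Definition~\ref{defn:CE}, $\pi_i = \max_{k'} \{ \sum_{(a',b',t') \in \path_{i,k'}} \max\{\price_{a',b',t'},0\} - \pathCost_{i,k'} \}$, which in particular is at least the $k' = k$ value, namely $\sum_{(a',b',t') \in \path_{i,k}} \max\{\price_{a',b',t'},0\} - \pathCost_{i,k}$. The next step is a trip-by-trip comparison along $\path_{i,k}$ of $\dPayment_i$ to the upper-bound sum $\sum_{(a',b',t') \in \path_{i,k}} \max\{\price_{a',b',t'},0\}$: on each trip $(a',b',t')$ where $\actpath_i$ picks up a rider, the contribution is $\price_{a',b',t'}$ to the former and $\max\{\price_{a',b',t'},0\}$ to the latter, giving a non-negative per-trip gap (and indeed zero in view of Lemma~\ref{lem:CE_plan_properties}, which forces $\price_{a',b',t'} \geq 0$ on any trip served by a rider in a CE); on each empty relocation, the contribution is $0$ to the former and $\max\{\price_{a',b',t'},0\} \geq 0$ to the latter. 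Since $(a,b,t)$ is an empty relocation with $\price_{a,b,t} > 0$, that single trip opens a strictly positive gap, so $\dPayment_i < \sum_{(a',b',t') \in \path_{i,k}} \max\{\price_{a',b',t'},0\}$, and substituting back yields $\pi_i < \pi_i$, a contradiction.

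The only subtlety — and the main accounting point I would want to double-check — is the per-trip comparison in the degenerate case where a rider happens to be picked up on a trip with negative price; this is exactly the situation controlled by Lemma~\ref{lem:CE_plan_properties}, and even without invoking that lemma the per-trip gap $\max\{\price_{a',b',t'},0\} - \price_{a',b',t'}$ remains non-negative, so the inequality still goes through in our favor. Beyond this bookkeeping, the argument is essentially the observation that a driver who is already scheduled to traverse a positive-price trip would be strictly better off accepting a rider on it, which CE driver best response precludes.
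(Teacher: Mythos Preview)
Your proof is correct and follows the same approach as the paper: a contradiction via the driver best-response condition of Definition~\ref{defn:CE}. The paper's proof is a one-sentence version of your argument---it simply observes that a driver relocating along a positive-price trip without a rider is ``not getting paid for this trip,'' which violates best response since ``getting paid for this trip improves total payment''---while you spell out the trip-by-trip accounting that makes this precise.
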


The proof is straightforward. If there exists any trip $(a,b,t)$ with a positive price, and a driver that takes this trip as relocation without a rider, the driver is not getting paid for this trip. This violates driver best response, since getting paid for this trip improves total payment to the driver.

\bigskip

We are now ready to prove Lemma~\ref{thm:optimal_plans}.

\thmOptPlans*

\begin{proof} Given any welfare optimal dispatching $(x, \actpath)$, we can construct an integral optimal solution $(x,y)$ to the LP \eqref{equ:lp}, where for all $i \in \driverSet$, $y_{i,k} = 1$ if the movement of driver $i$ in space and time according to the action path $\actpath_i$ is consistent with the path $\path_{i,k}$. %
Given any CE plan $(x, \actpath, \price)$, Lemma~\ref{lem:CE_plan_properties} implies that $(x, \actpath, \price^+)$ also forms a CE, where $\price^+$ is defined s.t. $\price_{a,b,t}^+ = \max\{\price_{a,b,t}, 0\}$.
We prove the lemma in two steps:
\begin{enumerate}[$\bullet$]
	\item Step 1. Given any optimal dispatching $(x, \actpath)$, and any optimal solution $(\price, \pi, u)$ to the dual LP \eqref{equ:dual}, the CS conditions imply that $\pi$ and $u$ can be interpreted as drivers' and rider's utilities, if the anonymous trip prices is given by $\price$. Optimal dual conditions guarantee driver and rider best responses, thus the plan $(x, \actpath, \price)$ forms a CE.
	\item Step 2. Given a CE plan $(x, \actpath, \price)$, let $(x,y)$ be the corresponding primal solution, and construct a dual solution $(\price^+, \pi, u)$, where $\pi$ and $u$ are the corresponding driver and rider utilities. CS conditions are satisfied between $(x,y)$ and $(\price^+, \pi, u)$, thus $(x, \actpath, \price)$ is welfare optimal.
\end{enumerate}

This proves the correspondence between CE and optimal plans and the existence of CE. 
Lemma~\ref{lem:dual_payment_correspondence} in Appendix~\ref{appx:proof_lemma_opt_pes_plans} implies that the CE prices can be efficiently computed from solving the dual of the flow LP. 
Regarding the properties: rider IR and envy-freeness is guaranteed by anonymous trip prices and CE; strict budget balance is guaranteed by the definition of anonymous trip prices; for driver envy-freeness, given any two drivers with the same initial (starting location/time, whether the driver had entered the platform or not), they have the same set of feasible paths, therefore both get the same highest total utility among those paths. 

We now prove the above two steps.

\bigskip

\noindent{}\emph{Step 1: Optimal primal and dual solutions $\Rightarrow$ CE.} 

Given an optimal dispatch $(x, \actpath, \price)$, let $(x,y)$ be the corresponding optimal integral solution to the primal LP \eqref{equ:lp}, and let $(\price, \pi, u)$ be any optimal solution to the dual LP \eqref{equ:dual}.  
We first show that if the anonymous trip prices are given by $\price$, then the dual variables $\pi$ and $u$ correspond to drivers' and riders' utilities, respectively:
\begin{enumerate}[1.]
	\setlength\itemsep{0.0em}
	\item $x_j > 0 \Rightarrow u_j = \val_j - \price_{\origin_j, \dest_j, \tr_j}$ from \cs{1}, thus for riders that are picked up, %
$u_j$ represent the utilities of the rider, which is her value minus the price for her trip.  %
	\item $u_j > 0 \Rightarrow x_j = 1$ from \cs{2}, i.e. in order for a rider to have positive utility, the rider must be picked up. This implies that $x_j = 0 \Rightarrow u_j = 0$, i.e. riders that are not picked up have zero utilities. Combining 1. and 2., we know that $u_j$ correspond to the rider's utilities.
	\item $y_{i,k} > 0 \Rightarrow \pi_i = \sum_{(a,b,t) \in \path_{i,k}} \price_{a,b,t} - \pathCost_{i,k}$ from \cs{4}, i.e. if driver $i$ takes her $k\th$ feasible path, then $\pi_i$ equals the sum of the prices of each trip covered by this path minus the total cost of this path. $\sum_{(a,b,t) \in \path_{i,k}} \price_{a,b,t} $ is equal to the driver's total payment since (I) for any rider trip, i.e. $(a,b,t) \in \path_{i,k}$ s.t. $\exists j \in \riders$ s.t. $(a,b,t,j) \in \actpath_i$, the driver is paid $\price_{a,b,t}$, and (II), for $(a,b,t)$ where the driver relocates without a rider, \cs{5} implies that $\price_{a,b,t} = 0$, therefore $\price_{a,b,t}$ is also the driver's payment. As a result, $\pi_i$ coincides with the total utility of driver $i$.
\end{enumerate}

We now show that this outcome forms a CE. For rider best response: constraint $u_j \geq 0$ guarantees IR for riders, thus riders that are picked-up can afford the price; $\val_j - \price_{\origin_j, \dest_j, \tr_j} > 0 \Rightarrow u_j > 0 \Rightarrow x_j = 1$ implies that all riders that strictly prefer getting pickup up are dispatched to some driver.
For driver best response, the dual constraints \eqref{equ:dual_cnst_driver} and \eqref{equ:dual_price_nonneg} guarantee that for all $i \in \driverSet$, $ \pi_i \geq \max_{k = 0, \dots, |\pathSet_i|} \left\lbrace \sum_{(a,b,t) \in \path_{i,k}} \price_{a,b,t} - \pathCost_{i,k} \right\rbrace = \max_{k = 0, \dots, |\pathSet_i|} \left\lbrace \sum_{(a,b,t) \in \path_{i,k}} \max \{ \price_{a,b,t}, ~0 \} - \pathCost_{i,k} \right\rbrace $.

\bigskip

\noindent{}\emph{Step 2: CE $\Rightarrow$ Optimal primal and dual solutions.}

Let $(x, \actpath, \price)$ be a CE plan with anonymous trip prices, and let $(u, \pi)$ be riders' and drivers' utilities under this plan. The plan being feasible implies that corresponding $(x, y)$ is a feasible and integral primal solution. %
We show that  $(\price^+, \pi, u)$ is a feasible solution to the dual LP \eqref{equ:dual}: \eqref{equ:dual_price_nonneg} holds by definition of $\price^+$; Lemma~\ref{lem:CE_plan_properties} and rider best response implies \eqref{equ:dual_cnst_rider} and \eqref{equ:dual_util_nonneg}; driver best response %
implies \eqref{equ:dual_cnst_driver}.

We now prove that $(x,y)$ and $(\price^+, \pi, u)$ must both be optimal, by checking the CS conditions:
\begin{enumerate}[1.]
	\setlength\itemsep{0.0em}
	\item For \cs{1}: given any $j \in \riders$ s.t. $x_j > 0$, we know that the rider is picked up, pays $\price_{\origin_j, \dest_j, \tr_j}$ and gets utility $u_j = \val_j - \price_{\origin_j, \dest_j, \tr_j}$.
	\item For \cs{2}: for any rider $ j \in \riders$ that gets utility $u_j > 0$, she must be picked up since otherwise her utility would be zero, therefore $x_j = 1$.
	\item For \cs{4}, for each driver $i \in \driverSet$, $y_{i,k} > 0$ implies that driver $i$ takes her $k\th$ feasible path. For each trip $(a,b,t)$ on the path, the driver gets paid $\price_{a,b,t}$ regardless of whether she picks up a rider (see Lemma~\ref{lem:excessive_supply_zero_price}), which implies that trips with excess driver supply have zero prices. 
Therefore, her total utility is the sum of the prices minus her cost, implying $\pi_i = \sum_{(a,b,t) \in \path_{i,k}} \price_{a,b,t} - \pathCost_{i,k}$. 
	\item For \cs{5}, $\price_{a,b,t} > 0 \Rightarrow \sum_{j \in \riders} x_j \one{(\origin_j, \dest_j, \tr_j) = (a,b,t)} = \sum_{i \in \driverSet} \sum_{k = 0}^{|\path_i|}  y_{i,k} \one{(a,b,t) \in \path_{i, k}} $ is implied by Lemma~\ref{lem:excessive_supply_zero_price}--- otherwise, there is excess supply for trip $(a,b,t)$, implying $\price_{a,b,t} = 0$. 
\end{enumerate}
This completes the proof of the lemma.
\end{proof}

\subsection{Proof of Lemma~\ref{thm:opt_pes_plans}} \label{appx:proof_lemma_opt_pes_plans}

\subsubsection{Dual of the Flow LP}

Before proving  Lemma~\ref{thm:opt_pes_plans}, we first discuss the  dual of the flow LP, and its correspondence to the dual LP \eqref{equ:dual}.
Let $\varphi_{a,t}$, $\varphi_{\driverNode_i}$, and $\mu_j$ be the dual variables corresponding to constraints \eqref{equ:flow_balance_cnst_simp}, \eqref{equ:flow_driver_cnst_simp} and \eqref{equ:flow_capacity_cnst_simp}, respectively. The dual LP of \eqref{equ:flow_LP_simp} can be written as:
\begin{align}
	\min ~ &  \sum_{i \in \driverSet} \varphi_{\driverNode_i} + \sum_{j \in \riders}  \mu_j  \label{equ:flow_dual} \\
	\txtst & \varphi_{\origin_j, \tr_j} - \varphi_{\dest_j, \tr_j + \dist(\origin_j, \dest_j)}  + \mu_j \geq \val_j - \cost_{\origin_j, \dest_j, \tr_j} , & \forall j \in \riders \label{equ:flow_dual_rider_cnst}  \\ 
	 	& \varphi_{a,t} - \varphi_{b, t + \dist(a,b)} \geq -\cost_{a,b,t},  & \forall (a,b,t) \in \trips \label{equ:flow_dual_nonneg_1} \\
	 	& \varphi_{a, t} \geq - \exitCost_{\horizon - t},  & \forall a \in \loc, ~\forall t \in \timeSet  \label{equ:flow_dual_nonneg_2} \\
		& \varphi_{\driverNode_i} \geq \varphi_{\re_i, \te_i}, & \forall i \in \driverSet  \label{equ:flow_dual_driver_cnst_1}\\
		& \varphi_{\driverNode_i} \geq 0, & \forall i \in \driverSet \txtst \driverEntrance_i = 0, \label{equ:flow_dual_driver_cnst_2} \\	 	
	 	& \mu_j \geq 0, & \forall j \in \riders \label{equ:flow_dual_nonneg_3}
\end{align}

Given a solution $(\varphi, \mu)$ of \eqref{equ:flow_dual}, the $\varphi$ variables corresponding to the flow balance constraints are usually referred to as the \emph{potential} of the nodes, and we call $\varphi$ an \emph{optimal potential} of the MCF problem if there exist $\mu \in \setR^{|\riders|}$ s.t. $(\varphi, \mu)$ is an optimal solution of \eqref{equ:flow_dual}. 
The potential for each node can be interpreted as how ``useful" it is to have an additional unit of flow originating from this node, and $\mu_j$ for each $j$ can be interpreted as the utility of the rider $j$.

\paragraph{Complementary Slackness Conditions} Given a feasible solution $\flow$ to the flow primal LP~\eqref{equ:flow_LP_simp} and a feasible solution $(\varphi, \mu)$ to the flow dual LP~\eqref{equ:flow_dual}, both solutions are optimal if and only if the following complementary slackness conditions~\cite{bertsekas1990auction} are satisfied.

\begin{enumerate}[(\FCS-1)]
	\setlength\itemsep{0.0em}
	\item \label{item:fcs1} for all $j \in \riders$, $\flow(\riderEdge_j) > 0 \Rightarrow \varphi_{\origin_j, \tr_j} - \varphi_{\dest_j, \tr_j + \dist(\origin_j, \dest_j)}  + \mu_j = \val_j - \cost_{\origin_j,\dest_j,\tr_j}$,
	\item \label{item:fcs2} for all $j \in \riders$, $\mu_j > 0 \Rightarrow \flow(\riderEdge_j) = 1$,
	\item \label{item:fcs3} for all $(a,b,t) \in \trips$, $\flow(((a,t),(b,t+\dist(a,b))) > 0 \Rightarrow \varphi_{a,t} - \varphi_{b,t+\dist(a,b)} = -\cost_{a,b,t}$,
	\item \label{item:fcs4} for all $a \in \loc$ and $t \in \timeSet$, $\flow(((a, t), \sink)) > 0 \Rightarrow \varphi_{a, t} = -\exitCost_{\horizon - t}$.
	\item \label{item:fcs5} for all $i \in \driverSet$, $\flow((\driverNode_i,~(\re_i,\te_i))) > 0 \Rightarrow \varphi_{\driverNode_i} = \varphi_{\re_i,\te_i}$.
	\item \label{item:fcs6} for all $i \in \driverSet$ s.t. $\driverEntrance_i = 0$, $\flow((\driverNode_i,~\sink)) > 0 \Rightarrow \varphi_{\driverNode_i} = 0$.
\end{enumerate}

The following lemma establishes a one-to-one correspondence between the $\pi_i$ variables in optimal solutions to the dual LP \eqref{equ:dual}, and the $\varphi_{\driverNode_i}$ variables in optimal solutions to \eqref{equ:flow_dual}.

\begin{lemma} \label{lem:dual_payment_correspondence}  %
For any optimal solution $(\price, \pi, u)$ to the dual LP \eqref{equ:dual}, there exists an optimal solution $(\varphi, \mu)$ to the dual of the flow LP \eqref{equ:flow_dual} such that $\varphi_{\driverNode_i} = \pi_i$ for all $i \in \driverSet$, $u_j = \mu_j$ for all $j \in \riders$, and vise versa.
\end{lemma}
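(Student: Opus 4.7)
The plan is to exhibit explicit transformations in both directions, built around the substitution
\[
\price_{a,b,t} \;=\; \varphi_{a,t} - \varphi_{b,\,t+\dist(a,b)} + \cost_{a,b,t},\qquad \pi_i \;=\; \varphi_{\driverNode_i},\qquad u_j \;=\; \mu_j.
\]
The key algebraic tool is a telescoping identity: for any feasible path $\path_{i,k}$ of driver $i$ starting at $(\re_i,\te_i)$ and ending at some $(a_{\mathrm{end}}, t_{\mathrm{end}})$ with $t_{\mathrm{end}} \leq \horizon$, summing the above substitution along the trips of $\path_{i,k}$ gives
\[
\sum_{(a,b,t) \in \path_{i,k}} \price_{a,b,t} \;-\; \pathCost_{i,k} \;=\; \varphi_{\re_i,\te_i} \;-\; \varphi_{a_{\mathrm{end}}, t_{\mathrm{end}}} \;-\; \exitCost_{\horizon - t_{\mathrm{end}}},
\]
since the $\varphi$ terms cancel in pairs and the $\cost$ contributions in $\price$ exactly offset the trip-cost portion of $\pathCost_{i,k}$, leaving the exit-cost term.

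For the direction from \eqref{equ:flow_dual} to \eqref{equ:dual}, I would start from an optimal $(\varphi,\mu)$ and define $(\price,\pi,u)$ by the substitution above. Non-negativity $\price \geq 0$ is \eqref{equ:flow_dual_nonneg_1}; the rider constraint \eqref{equ:dual_cnst_rider} is literally \eqref{equ:flow_dual_rider_cnst} after substitution; and $u_j\geq 0$ is \eqref{equ:flow_dual_nonneg_3}. The driver constraint \eqref{equ:dual_cnst_driver}, via the telescoping identity, reduces to $\varphi_{\driverNode_i} \geq \varphi_{\re_i, \te_i} - \varphi_{a_{\mathrm{end}}, t_{\mathrm{end}}} - \exitCost_{\horizon - t_{\mathrm{end}}}$, which follows by chaining \eqref{equ:flow_dual_driver_cnst_1} and \eqref{equ:flow_dual_nonneg_2}; the trivial path $\path_{i,0}$ is handled separately using \eqref{equ:flow_dual_driver_cnst_2} (when $\driverEntrance_i = 0$) or the same chain of \eqref{equ:flow_dual_driver_cnst_1} and \eqref{equ:flow_dual_nonneg_2} (when $\driverEntrance_i = 1$). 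Since both primal LPs share the same optimal objective value (established in the proof of Lemma~\ref{thm:lp_integrality}), LP strong duality forces their duals to share the same optimal value; matching objectives on the boundary variables then suffices to conclude optimality of the constructed $(\price,\pi,u)$.

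For the reverse direction, given an optimal $(\price,\pi,u)$ of \eqref{equ:dual}, I construct the missing node potentials by backward induction on $t$: set $\varphi_{a,\horizon} := 0$ for all $a \in \loc$ and
\[
\varphi_{a,t} \;:=\; \max\!\Bigl\{\,-\exitCost_{\horizon - t},\;\; \max_{b \in \loc:\, t+\dist(a,b) \leq \horizon} \bigl\{\price_{a,b,t} - \cost_{a,b,t} + \varphi_{b,\,t+\dist(a,b)}\bigr\}\,\Bigr\},
\]
i.e.\ the maximum continuation utility a driver standing at $(a,t)$ can collect under prices $\price$. I then set $\varphi_{\driverNode_i} := \pi_i$ and $\mu_j := u_j$. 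Constraints \eqref{equ:flow_dual_nonneg_1} and \eqref{equ:flow_dual_nonneg_2} come directly from selecting the corresponding option inside the max; \eqref{equ:flow_dual_rider_cnst} follows by taking the trip $(\origin_j, \dest_j, \tr_j)$ option in the recursion for $\varphi_{\origin_j,\tr_j}$ and combining with $\mu_j = u_j \geq \val_j - \price_{\origin_j, \dest_j, \tr_j}$ from \eqref{equ:dual_cnst_rider}. For the boundary constraints \eqref{equ:flow_dual_driver_cnst_1} and \eqref{equ:flow_dual_driver_cnst_2}, I use that at optimality $\pi_i = \max_k\{\sum \price - \pathCost_{i,k}\}$; applying the telescoping identity in reverse (with the $\varphi$ just constructed, using that the recursion realizes equality along the optimal continuation path) shows that this maximum equals $\varphi_{\re_i,\te_i}$ when $\driverEntrance_i = 1$ and $\max\{0,\varphi_{\re_i,\te_i}\}$ when $\driverEntrance_i = 0$, which is consistent with $\varphi_{\driverNode_i} \geq \varphi_{\re_i,\te_i}$ and, when $\driverEntrance_i = 0$, with $\varphi_{\driverNode_i} \geq 0$.

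The main obstacle is this reverse direction: the dual LP carries no node potentials, so they must be reconstructed, and the longest-path recursion above is the natural candidate. The delicate verification is that the reconstructed boundary values $\varphi_{\driverNode_i}$ agree exactly with $\pi_i$ (not merely satisfy an inequality), which requires \emph{optimality} rather than just feasibility of the given dual LP solution, together with careful case analysis for the two cases $\driverEntrance_i \in \{0,1\}$ and the trivial paths $\path_{i,0}$ (non-entry or immediate exit). Once this is in place, the objective $\sum_i \varphi_{\driverNode_i} + \sum_j \mu_j = \sum_i \pi_i + \sum_j u_j$ matches the optimal value of \eqref{equ:dual}, and by the shared optimal value of the two duals, $(\varphi,\mu)$ is optimal for \eqref{equ:flow_dual}.
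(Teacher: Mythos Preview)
Your proposal is correct and follows essentially the same construction as the paper: the same substitution $\price_{a,b,t} = \varphi_{a,t} - \varphi_{b,t+\dist(a,b)} + \cost_{a,b,t}$ with telescoping for the flow-dual-to-LP-dual direction, and the same backward ``best continuation utility'' recursion for the node potentials in the reverse direction. The only organizational difference is that in the reverse direction the paper defines $\varphi_{\driverNode_i}$ and $\mu_j$ from the reconstructed potentials and proves optimality via complementary slackness (then deriving $\varphi_{\driverNode_i}=\pi_i$, $\mu_j=u_j$ as a consequence), whereas you set $\varphi_{\driverNode_i}:=\pi_i$, $\mu_j:=u_j$ directly and conclude optimality by matching objective values---both routes are valid and yield the same solution.
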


\begin{proof} 

We prove the following two directions by construction:

\begin{enumerate}[(i)]
	\item Given an optimal solution $(\price, \pi, u)$ to  \eqref{equ:dual}, there exists an optimal solution $(\varphi, \mu)$ to \eqref{equ:flow_dual} s.t. $\varphi_{\driverNode_i} = \pi_i$ for all $i \in \driverSet$ and that $\mu_j = u_j$ for all $ j \in \riders$.
	\item Given an optimal solution $(\varphi, \mu)$ to \eqref{equ:flow_dual}, there exists an optimal solution $(\price, \pi, u)$ to \eqref{equ:dual} s.t. $\pi_i = \varphi_{\driverNode_i} $ for all $i \in \driverSet$ and that $u_j = \mu_j$ for all $ j \in \riders$.	
\end{enumerate}

\noindent{}\textit{Part (i).}
Given any optimal solution $(\price, \pi, u)$ to \eqref{equ:dual}, we construct a solution $(\varphi, \mu)$ to \eqref{equ:flow_dual} from the prices $\price$ as follows, where $\varphi_{a,t}$ represents the highest continuation payoff for any driver from location $a$ and time $t$ onward, $\varphi_{\driverNode_i}$ represents the highest achievable payoff of driver $i$, and $\mu_j$ represents the highest achievable utility of rider $j$:
\begin{enumerate}[$\bullet$]
	\setlength\itemsep{0.0em}
	\item For all $a \in \loc$, let $\varphi_{a, \horizon} = - \exitCost_{\horizon - T} = 0$.
	\item For all $a \in \loc$ and all $t = \horizon-1, \horizon-2, \dots, 0$, let 
	\begin{align}
		\varphi_{a,t} =  \max\left\lbrace  \max_{b \in \loc \txtst t + \dist(a,b) \leq \horizon} \left\lbrace  \varphi_{b, t + \dist(a,b)} + \price_{a,b,t} - \cost_{a,b,t} \right\rbrace, ~ - \exitCost_{T - t} \right\rbrace. \label{equ:potential_construction}
	\end{align}
	\item For all $i \in \driverSet$, s.t. $\driverEntrance_i = 1$, let $\varphi_{\driverNode_i} = \varphi_{\re_i, \te_i}$; for $i \in \driverSet$ s.t. $\driverEntrance_i = 0$, let $\varphi_{\driverNode_i}= \max\{ \varphi_{\re_i, \te_i},~0\}$.
	\item For all $ j \in \riders$, let $\mu_j  = \max \{ \val_j - \price_{\origin_j, \dest_j, \tr_j}', ~0 \}$, where 
\begin{align}
	\price_{a,b,t}' \triangleq \varphi_{a,t} - \varphi_{b, t + \dist(a,b)} + \cost_{a,b,t}, ~\forall (a,b,t) \in \trips. \label{equ:price_construction}
\end{align}	
\end{enumerate}

Note that $\price_{a,b,t}' \geq \price_{a,b,t}$ holds for all $(a,b,t) \in \trips$, since $\varphi_{a,t} \geq \varphi_{b, t+ \dist(a,b)} + \price_{a,b,t} - \cost_{a,b,t}$ from \eqref{equ:potential_construction}. 
Moreover, we claim that for any optimal solution $(x,y)$ to the LP \eqref{equ:lp},
\begin{align}
	\price_{a,b,t} = \price'_{a,b,t} \text{ for all } (a,b,t) \in \trips \txtst \sum_{i \in \driverSet}  \sum_{k = 1}^{|\pathSet_i|}y_{i,k} \one{(a,b,t) \in \path_{i,k} } > 0, \label{equ:alt_price_identity}
\end{align}
i.e. the prices must coincide for any trips that is taken by at least one driver.

To prove \eqref{equ:alt_price_identity}, first observe that $\forall (a,t) \in \loc \mytimes \timeSet$, $\varphi_{a,t}$ as in \eqref{equ:potential_construction} is equal to the highest total utility among all possible paths starting from $(a,t)$ to the end of time, given the prices $\price$--- this is obvious for $t = \horizon$ (since $\varphi_{a,\horizon} = 0$ for al $a$) and also for $t < \horizon$ by induction.
Now consider any trip $(a,b,t)$ taken by some driver, say driver $1$. 
Since the outcome forms a CE (since $(x,y)$ and $(\price, \pi, u)$ are optimal primal and dual solutions), we know that the total utility to driver $1$ from time $t$ onward must be $\varphi_{a,t}$, the highest total utility among all possible paths starting from $(a,t)$. Similarly, the total utility to driver $1$ from location $b$ and time $t + \dist(a,b)$ onward is $\varphi_{b,{t + \dist(a,b)}}$. Since the $(a,b,t)$ trip pays the driver $\price_{a,b,t}$ and costs $\cost_{a,b,t}$, we know that $\varphi_{a,t} = \price_{a,b,t} + \varphi_{b,t+\dist(a,b)} - \cost_{a,b,t}$ must hold, which gives us $\price_{a,b,t}' = \varphi_{a,t} - \varphi_{b,t+\dist(a,b)} = \price_{a,b,t}$.

\smallskip

We now show that $(\varphi, \mu)$ forms an optimal solution to \eqref{equ:flow_dual}. 
Given the non-negativity of $price$, constraints \eqref{equ:flow_dual_rider_cnst} to \eqref{equ:flow_dual_nonneg_3} are satisfied by construction, thus what is left to prove is optimality. Let $(x,y)$ be some optimal integral solution to \eqref{equ:lp}. We know that $(x,y)$ and $(\price, \pi, u)$ satisfy the CS conditions \cs{1}-\cs{5}, and form a CE. We construct an optimal integral solution $\flow$ to \eqref{equ:flow_LP_simp} from $(x,y)$ in the same way as in the proof of Lemma~\ref{thm:lp_integrality}, and it is sufficient for the optimality to prove that \fcs{1}-\fcs{6} hold between $\flow$ and $(\varphi, \mu)$:

\begin{enumerate}[1.]
	\setlength\itemsep{0.0em}
	\item %
	To show \fcs{1}, first observe that for all $j \in \riders$, $\flow(\edge_j) > 0 \Rightarrow x_j > 0$ implies that rider $j$ is picked up, thus the trip $(\origin_j, \dest_j, \tr_j)$ is taken by some driver, thus $\price_{\origin_j, \dest_j, \tr_j} = \price_{\origin_j, \dest_j, \tr_j} '$ by \eqref{equ:alt_price_identity}. Moreover, $u_j = \val_j - \price_{\origin_j, \dest_j, \tr_j}$ from \cs{1}, implying $\val_j - \price_{\origin_j, \dest_j, \tr_j}' \geq 0$.
	 This gives us: $\flow(\edge_j) > 0 %
	 \Rightarrow \mu_j = \val_j - \price_{\origin_j, \dest_j, \tr_j}' \Rightarrow   \varphi_{\origin_j, \tr_j} - \varphi_{\dest_j, \tr_j + \dist(\origin_j, \dest_j)}  + \mu_j= \val_j - \cost_{\origin_j, \dest_j,\tr_j}$.  
	\item To show \fcs{2}, %
	recall that $\price'_{a,b,t} \geq \price_{a,b,t}$ for all $(a,b,t) \in \trips$. Therefore, given \eqref{equ:dual_cnst_rider} and \cs{2}, we know that for all $j \in \riders$, $\mu_j > 0 \Rightarrow \val_j - \price_{\origin_j, \dest_j, \tr_j}' > 0 \Rightarrow \val_j - \price_{\origin_j, \dest_j, \tr_j} > 0  \Rightarrow u_j > 0 \Rightarrow x_j = 1 \Rightarrow \flow(\riderEdge_j) = 1$.
	\item \fcs{3} %
	holds since $\flow(\edge) > 0$ only when there is excess supply in the dispatching $(x,y)$ for the trip $(a,b,t)$, therefore $\price_{a,b,t} = 0$ given \cs{5}. \eqref{equ:alt_price_identity} then implies that $ \price_{a,b,t}' = 0$, thus 
	$\varphi_{a,t} - \varphi_{b, t + \dist(a,b)} = \price_{a,b,t}'  - \cost_{a,b,t} = - \cost_{a,b,t}$.
	\item \fcs{4} holds, %
	since $\flow(((a,t), \sink)) > 0$ implies that given dispatch $(x,y)$, there exists at least one driver that exited the platform from $(a,t)$, therefore gets utility $- \exitCost_{T - t}$ from time $t$ onward. As a result, the highest utility for all paths from $(a,t)$ onward must be $\varphi_{a,t} =- \exitCost_{T - t}$ without violating CE. 
	\item \fcs{5} holds by construction for $i$ s.t. $\driverEntrance_i = 1$. For $i$ s.t. $\driverEntrance_0 = 0$, since when $\flow((\driverNode_i, ~ (\re_i, \te_i))) > 0$, driver $i$ entered the platform (instead of not entering and getting zero utility), thus her utility $\varphi_{\re_i, \te_i}$ from $(\re_i, \te_i)$ onward must not be negative. and as a result, $\varphi_{\driverNode_i} = \max\{\varphi_{\re_i, \te_i},~0\} = \varphi_{\re_i, \te_i}$. 	
	\item \fcs{6} holds, since when $\flow((\driverNode_i,~\sink)) > 0$, driver $i$ did not enter the platform at all according to the dispatch $(x,y)$. As a result, CE implies that entering must not give her positive utility, therefore $\varphi_{\re_i, \te_i} \leq 0$, and her utility $\varphi_{\driverNode_i} = \max \{ \varphi_{\re_i, \te_i}, 0 \} = 0$.
\end{enumerate}

What is left to show is that $\varphi_{\driverNode_i} = \pi_i$ for all $i \in \driverSet$ and that $\mu_j = u_j$ for all $j \in \driverSet$. $\varphi_{\driverNode_i} = \pi_i$ holds, since $\varphi_{\driverNode_i}$ as constructed is the highest achievable utility for driver $i$ among all feasible paths, and $\pi_i$ must take this value given CE. For $j \in \riders$ s.t. $x_j= 0$, we know that $\flow(\riderEdge_j) = 0$, and $u_j = \mu_j = 0$ must hold. For $j \in \riders$ s.t. $x_j = 1$, $\price_{\origin_j, \dest_j, \tr_j} = \price_{\origin_j, \dest_j, \tr_j}'$ holds given \eqref{equ:alt_price_identity}, therefore $\flow(\riderEdge_j) \Rightarrow \mu_j = \val_j - \price_{\origin_j, \dest_j, \tr_j}' = \val_j - \price_{\origin_j, \dest_j, \tr_j} = u_j$.

\bigskip

\noindent{}\textit{Part (ii).}  
Let $(\varphi, \mu)$ be an optimal solution to \eqref{equ:flow_dual}.
We now construct a solution $(\price, \pi, u)$ to \eqref{equ:dual}, where the price $\price_{a,b,t}$ is the loss of potential between the origin node $(a,t)$ and the destination node $(b,t+\dist(a,b))$ plus the trip cost $\cost_{a,b,t}$, and the driver and rider utilities are given by $\varphi_{\driverNode_i}$ and $\mu_i$:
\begin{align*}
	u_j & = \mu_j, & \forall j \in \riders  \\
	\pi_i & = \varphi_{\driverNode_i}, & \forall i \in \driverSet \\
	\price_{a,b,t} &= \varphi_{a,t} - \varphi_{b, t + \dist(a,b)} + \cost_{a,b,t}, & \forall (a,b,t) \in \trips
\end{align*}
We first show that $(\price, \pi, u)$ is a feasible solution to \eqref{equ:dual}.
\begin{enumerate}[1.]
	\setlength\itemsep{0.0em}
	\item From telescoping sum, for any feasible path $\path_{i,k}$ of driver $i$, starting at $(\re_i, \te_i)$ and ending at $(a', t)$ for some $a' \in \loc$ and some $t' \in \timeSet$, the total utility from taking the path is  $\sum_{(a,b,t) \in \path_{i,k}} \left( \price_{a,b,t} - \cost_{a,b,t} \right)  - \exitCost_{\horizon - t'} = \varphi_{\re_i, \te_i} - \varphi_{a', t'} - \exitCost_{\horizon - t'} $, which is at most $  \varphi_{\re_i, \te_i} $ (since $\varphi_{a', t'} \geq - \exitCost_{\horizon - t'}$ for all $a' \in \loc$ and $t' \in \timeSet$, guaranteed by \eqref{equ:flow_dual_nonneg_2}). This implies that the utility  $\pi_i = \varphi_{\driverNode_i} \geq \varphi_{\re_i, \te_i} \geq  \sum_{(a,b,t) \in \path_{i,k}} \left( \price_{a,b,t} - \cost_{a,b,t} \right)  - \exitCost_{\horizon - t'}  = \sum_{(a,b,t) \in \path_{i,k}}  \price_{a,b,t} - \pathCost_{i,k} $ for any $k \in \{ 1, \dots, |\pathSet_i| \}$, therefore \eqref{equ:dual_cnst_driver} holds.
	\item  \eqref{equ:flow_dual_rider_cnst} implies $u_j = \mu_j  \geq \val_j - (\varphi_{\origin_j, \tr_j} - \varphi_{\dest_j, \tr_j + \dist(\origin_j, \dest_j)} + \cost_{\origin_j,\dest_j,\tr_j}) = \val_j - \price_{\origin_j, \dest_j, \tr_j}$ thus \eqref{equ:dual_cnst_rider} holds.
	\item \eqref{equ:flow_dual_nonneg_1} implies that $\price_{a,b,t} \geq 0$ thus \eqref{equ:dual_price_nonneg} holds.
	\item Lastly, \eqref{equ:flow_dual_nonneg_3} implies $u_j = \mu_j \geq 0$, which is \eqref{equ:dual_util_nonneg}.
\end{enumerate}

Therefore, $(\price, \pi, u)$ is a feasible solution to \eqref{equ:dual}. 
Regarding the optimality of $(\price, \pi, u)$, we know by construction that the objective of \eqref{equ:dual} is equal to that of \eqref{equ:flow_dual}. 
Recall the correspondence of optimal solutions that we established in Appendix~\ref{appx:proof_thm_lp_integrality}, that the optimal objective of the flow LP \eqref{equ:flow_LP_simp} is equal to that of the original LP \eqref{equ:lp}. This implies that the optimal objective of the dual \eqref{equ:dual} is equal to the optimal objective of the LP \eqref{equ:lp}, therefore $(\price, \pi, u)$ is an optimal solution such that $\pi_i = \varphi_{\driverNode_i}$ for all $i \in \driverSet$ and that $u_j = \mu_j$ for all $ j \in \riders$.

This completes the proof of the lemma.
\end{proof}

\subsubsection{Proof of Lemma~\ref{thm:opt_pes_plans}}

\lemOptPesPlans*

\begin{proof}

Step~2 of the proof of Lemma~\ref{thm:optimal_plans} established that the set of possible driver utilities among all CE outcomes correspond to the $\pi$ variables among the set of optimal solutions $(\price, \pi, u)$ to the dual LP \eqref{equ:dual}. Since Lemma~\ref{lem:dual_payment_correspondence} established the correspondence between the $\pi$ variables and the $\varphi_{\driverNode_i}$ variables in optimal solutions to \eqref{equ:dual} and \eqref{equ:flow_dual}, what we need to show is the lattice structure of $\varphi$ in optimal solutions of \eqref{equ:flow_dual}, and that $\Phi$ and $\Psi$ reside on the bottom and the top of the lattice. %

\medskip

\noindent{}{\emph{Step 1. Proof of the Lattice Structure}}   

We first prove the lattice structure. Let $(\varphi, \mu)$ and $(\varphi', \mu')$ be two optimal solutions of \eqref{equ:flow_dual}. We prove that the join and the meet of $\varphi$ and $\varphi'$ are both optimal potentials. Let the join and the meet be defined as:
For all $(a,t) \in \loc \mytimes \timeSet$, let the join and the meet of the potentials be
\begin{align*}
	\barphi_{a,t} &\triangleq \max\left\lbrace \varphi_{a,t}, ~ \varphi_{a,t}'\right\rbrace,~\forall (a,t) \in \loc \mytimes \timeSet, ~ \barphi_{\driverNode_i} \triangleq \max\{\varphi_{\driverNode_i}, ~\varphi_{\driverNode_i}' \},~\forall i \in \driverSet, \\
	\undphi_{a,t} &\triangleq \min\left\lbrace \varphi_{a,t}, ~ \varphi_{a,t}'\right\rbrace, ~\forall (a,t) \in \loc \mytimes \timeSet, ~\undphi_{\driverNode_i} \triangleq \min \{\varphi_{\driverNode_i}, ~\varphi_{\driverNode_i}' \},~\forall i \in \driverSet, 
\end{align*}
For convenience of notation, for all $(a,b,t) \in \trips$, denote
\begin{align*}
	\price_{a,b,t} & \triangleq \varphi_{a,t} - \varphi_{b, t +\dist(a,b)} + \cost_{a,b,t}, \\
	\price_{a,b,t}' &\triangleq \varphi_{a,t}' - \varphi_{b, t +\dist(a,b)}' + \cost_{a,b,t},
\end{align*}
and let $\barp$ and $\undp$ be the prices constructed from the join and the meet of the potentials:
\begin{align*}
	\barp_{a,b,t} &\triangleq \barphi_{a,t} - \barphi_{b, t +\dist(a,b)} + \cost_{a,b,t}, \\
	\undp_{a,b,t} &\triangleq \undphi_{a,t} - \undphi_{b, t +\dist(a,b)} + \cost_{a,b,t}.
\end{align*}
Finally, for all $j \in \riders$, let
\begin{align*}
	\barmu_j &\triangleq \max\{ \val_j - \barp_{\origin_j, \dest_j, \tr_j}, ~0 \} , \\
	\undmu_j &\triangleq \max\{ \val_j - \undp_{\origin_j, \dest_j, \tr_j}, ~0 \}.
\end{align*}

We first prove that both $(\barphi, \barmu)$ and $(\undphi, \undmu)$ are feasible solutions to \eqref{equ:flow_dual}. Constraints \eqref{equ:flow_dual_rider_cnst},  \eqref{equ:flow_dual_nonneg_2}, \eqref{equ:flow_dual_driver_cnst_2} and \eqref{equ:flow_dual_nonneg_3} hold by construction. For constraint \eqref{equ:flow_dual_nonneg_1}, we first show that for all $(a,b,t) \in \trips$,
\begin{align}
	\barp_{a,b,t} & \in [\min\{\price_{a,b,t}, ~ \price_{a,b,t}' \}, ~\max\{\price_{a,b,t}, ~ \price_{a,b,t}' \} ], \label{equ:price_ineq_bar} \\
	\undp_{a,b,t} & \in [\min\{\price_{a,b,t}, ~ \price_{a,b,t}' \}, ~\max\{\price_{a,b,t}, ~ \price_{a,b,t}' \} ]. \label{equ:price_ineq_und}
\end{align}
We only prove $\barp_{a,b,t} \geq \min\{\price_{a,b,t}, ~ \price_{a,b,t}' \}$ here. The proof for the other three inequalities are very similar. Assume w.l.o.g. that $\varphi_{a,t} \geq \varphi_{a,t}'$. This implies $\barphi_{a,t} = \varphi_{a,t}$. 
Consider two scenarios: (I) $\varphi_{b,t+\dist(a,b)} \geq \varphi'_{b, t+\dist(a,b)}$ and (II) $\varphi_{b,t+\dist(a,b)} < \varphi'_{b, t+\dist(a,b)}$. 
For (I), $\barphi_{b, t+\dist(a,b)}  = \varphi_{b,t+\dist(a,b)}$ thus $\barp_{a,b,t} =  \barphi_{a,t} - \barphi_{b, t +\dist(a,b)} + \cost_{a,b,t} = \varphi_{a,t} - \varphi_{b, t +\dist(a,b)} + \cost_{a,b,t} = \price_{a,b,t} \geq \min\{\price_{a,b,t}, ~ \price_{a,b,t}' \}$. 
For (II), we know that $\barphi_{b,t+\dist(a,b)}  = \max \{\varphi_{b, t+\dist(a,b)}, ~\varphi'_{b, t+\dist(a,b)} \} = \varphi'_{b, t+\dist(a,b)}$, thus $\barp_{a,b,t} =  \barphi_{a,t} - \barphi_{b, t +\dist(a,b)} + \cost_{a,b,t} = \varphi_{a,t} - \varphi'_{b, t +\dist(a,b)} + \cost_{a,b,t} \geq \varphi'_{a,t} - \varphi'_{b, t +\dist(a,b)} + \cost_{a,b,t} = \price'_{a,b,t} \geq \min\{\price_{a,b,t}, ~ \price_{a,b,t}' \}$. %

$\varphi$ and $\varphi'$ satisfying \eqref{equ:flow_dual_nonneg_1} implies $\price_{a,b,t} \geq 0$ and $\price_{a,b,t}' \geq 0$, which means that $\min \{ \price_{a,b,t},~\price_{a,b,t}'\} \geq 0$ for all $(a,b,t) \in \trips$. Therefore $\barp_{a,b,t} \geq 0$ and $\undp_{a,b,t} \geq 0$ both hold. This proves $\barphi$ and $\undphi$ satisfy \eqref{equ:flow_dual_nonneg_1}. 
Constraint \eqref{equ:flow_dual_driver_cnst_1} also holds, since for all $i \in \driverSet$, $\barphi_{\driverNode_i} = \max\{ \varphi_{\driverNode_i}, \varphi_{\driverNode_i}' \} \geq \max\{ \varphi_{\re_i,\te_i}, \varphi_{\re_i, \te_i}'\} = \barphi_{\re_i, \te_i}'$, and $\undphi_{\driverNode_i} =  \min\{  \varphi_{\driverNode_i}, \varphi_{\driverNode_i}' \} \geq  \min\{ \varphi_{\re_i,\te_i}, \varphi_{\re_i, \te_i}'\}  = \undphi_{\re_i, \te_i}' $. Thus $(\barphi, \barmu)$ and $(\undphi, \undmu)$ are both feasible.

\medskip

Let $\flow$ be an integral optimal solution to the flow LP \eqref{equ:flow_LP_simp}.
We prove that $(\barphi, \barmu)$ and $(\undphi, \undmu)$ are both optimal solutions to \eqref{equ:flow_dual} by showing that the CS conditions \fcs{1}-\fcs{6} hold between $\flow$ and $(\barphi,  \barmu)$, and also between $\flow$ and $(\undphi, \undmu)$. First note that \fcs{1}-\fcs{6} hold in between $\flow$ and $(\varphi, \mu)$ and between $\flow$ and $(\varphi', \mu')$.

\begin{enumerate}[1.]
	\setlength\itemsep{0.0em}
	\item To show \fcs{1}, %
	note that \fcs{1} between $\flow$ and $(\varphi, \mu)$, $(\varphi',\mu')$ imply that if $\flow(\riderEdge_j) > 0$, $\mu_j = \val_j - \price_{\origin_j, \dest_j, \tr_j} \geq 0$ and $\mu'_j = \val_j - \price_{\origin_j, \dest_j, \tr_j}' \geq 0$. Applying \eqref{equ:price_ineq_bar} and \eqref{equ:price_ineq_und}, we get $\val_j - \barp_{\origin_j,\dest_j,\tr_j} \geq \val_j - \max\{\price_{\origin_j,\dest_j,\tr_j} ,~ \price_{\origin_j,\dest_j,\tr_j} '\} = \min\{\val_j - \price_{\origin_j, \dest_j, \tr_j},~ \val_j - \price_{\origin_j, \dest_j, \tr_j}' \} \geq 0$ and $ \val_j - \undp_{\origin_j,\dest_j,\tr_j}  \geq \val_j - \max\{\price_{\origin_j,\dest_j,\tr_j} ,~ \price_{\origin_j,\dest_j,\tr_j} '\} = \min\{\val_j - \price_{\origin_j, \dest_j, \tr_j},~ \val_j - \price_{\origin_j, \dest_j, \tr_j}' \} \geq 0$. The definitions of $\barmu_j$ and $\undmu_j$ then imply $\barmu_j = \val_j - \barp_{\origin_j,\dest_j,\tr_j}$ and $\undmu_j = \val_j - \undp_{\origin_j,\dest_j,\tr_j}$ hold.
	\item To show \fcs{2}, observe that for all $j \in \riders$, $\barmu_j > 0 \Rightarrow \val_j - \barp_{\origin_j, \dest_j, \tr_j} > 0  \Rightarrow \val_j - \min\{\price_{\origin_j, \dest_j, \tr_j},~\price_{\origin_j, \dest_j, \tr_j}' \} > 0 \Rightarrow \max\{  \val_j - \price_{\origin_j, \dest_j, \tr_j}, ~\val_j - \price'_{\origin_j, \dest_j, \tr_j} \} > 0 \Rightarrow  \max \{\mu_j, \mu_j'\} > 0 \Rightarrow \flow(\riderEdge_j) = 1$. Similarly, $\undmu_j > 0 \Rightarrow \flow(\riderEdge_j) = 1$. 
	\item We now consider \fcs{3}. For any $\edge = ((a,t),(b, t+ \dist(a,b))) \in \edgeSet_2$, $\flow(\edge) > 0 \Rightarrow \varphi_{a,t} - \varphi_{b, t + \dist(a,b)} = \varphi_{a,t}' - \varphi_{b, t + \dist(a,b)}' = -\cost_{a,b,t}$. This implies that $\price_{a,b,t} =\price_{a,b,t}' = 0$, and as a result, $ 0 \leq \barp_{a,b,t},~ \undp_{a,b,t} \leq 0$, therefore $\barphi_{a,t} - \barphi_{b, t + \dist(a,b)} = -\cost_{a,b,t}$ and $\undphi_{a,t} - \undphi_{b, t + \dist(a,b)} = -\cost_{a,b,t}$ both hold. 
	\item \fcs{4} %
	holds since for each $ \edge = ((a, t), \sink) \in \edgeSet$ for some $a\in\loc$ and $t\in \timeSet$, $\flow(\edge) > 0 \Rightarrow \varphi_{a, \horizon} = -\exitCost_{\horizon - t}$ and $\varphi_{a, \horizon}' = -\exitCost_{\horizon - t}$. Thus, $\barphi_{a, \horizon} = \undphi_{a, \horizon} = -\exitCost_{\horizon - t}$. 
	\item For \fcs{5}: $\flow((\driverNode_i,(\re_i,\te_i))) > 0$ implies $\varphi_{\driverNode_i} = \varphi_{\re_i,\te_i}$ and $\varphi_{\driverNode_i}' = \varphi_{\re_i,\te_i}'$, thus $\barphi_{\driverNode_i} = \max\{\varphi_{\driverNode_i},\varphi_{\driverNode_i}' \} = \max\{\varphi_{\re_i,\te_i},\varphi_{\re_i,\te_i}'  \} = \barphi_{\re_i,\te_i}$ and $\undphi_{\driverNode_i} = \min\{ \varphi_{\driverNode_i},\varphi_{\driverNode_i}'\} = \min \{\varphi_{\re_i,\te_i},\varphi_{\re_i,\te_i}' \} = \undphi_{\re_i,\te_i}$.
	\item For \fcs{6}, for $i \in \driverSet$ s.t. $\driverEntrance_i = 0$ and $\flow((\driverNode_i,\sink)) > 0$, $\varphi_{\driverNode_i} = \varphi_{\driverNode_i}' = 0$, thus $\barphi_{\driverNode_i} = \undphi_{\driverNode_i} = 0$.
\end{enumerate}

This completes the proof of the lattice structure of drivers' total utilities. 

\bigskip

\noindent{}\emph{Step 2. Driver Optimal and Pessimal Plans.}

We now prove the correspondence between the welfare changes and the top and bottom of the lattice.
Recall that $\Phi_{\driverNode_i}$ and $\Psi_{\driverNode_i}$ are the welfare gain/loss from replicating/losing driver $i$, respectively, and  $\Phi_{a,t}$  is the welfare gain from adding another driver (that has entered) to location $a$ and time $t$.
Here we define the welfare loss from losing one driver from location $a$ at time $t$ as:
\begin{align}
	\Psi_{a,t} \triangleq \sw(\driverSet, \riders)- \sw(\driverSet \backslash \{(1, a, t, \horizon)\}, \riders), \label{equ:welfare_loss_general_appx}
\end{align}
where $\sw(\driverSet \backslash \{(1, a, t, \horizon)\})$ is the highest achievable social welfare, if one of the drivers in $\driverSet$ who was supposed to exit the platform at time $\horizon$ now needs to exit the platform at location $a$ at time $t$. Note that this does not specify which particular driver exits, and can be considered as the objective of the flow LP where we simply subtract 1 from the boundary condition $\xi_{a,t}$ at the node $(a,t)$. %

We first show via standard arguments with the residual graph that $\Phi$ and $\Psi$ as we defined in \eqref{equ:welfare_gain} and \eqref{equ:welfare_loss_general_appx} are optimal potentials for the flow LP. We then show via subgradient arguments that $\Phi$ and $\Psi$ are the bottom and the top of the lattice of the potentials, respectively. %
Given Lemma~\ref{lem:dual_payment_correspondence}, and the fact that driver payments among CE outcomes correspond to the optimal solutions of the dual LP \eqref{equ:dual}, we know $\Phi_{\driverNode_i}$ and $\Psi_{\driverNode_i}$ correspond to the bottom and the top of the lattice of driver's total payments among all CE outcomes, hence Lemma~\ref{thm:opt_pes_plans}.

\medskip

\noindent{}\textit{Step 2.1. $\Phi$ and $\Psi$ are Optimal Potentials:}

Given the MCF problem \eqref{equ:flow_LP} with graph $\graph = (\nodeSet, \edgeSet)$ and an optimal integral solution $\flow$ (which is guaranteed to exist), we first construct the standard residual graph $\resGraph = (\nodeSet, \resEdgeSet)$ where the set of nodes remains the same, and the set of edges $\resEdgeSet = \resEdgeSet_1 \cup \resEdgeSet_2 \cup \resEdgeSet_3 \cup \resEdgeSet_4$ consists of:
\begin{enumerate}[$\bullet$]
	\item $\resEdgeSet_1 = \set{\riderEdge_j}{j \in \riders,~\flow(\riderEdge_j) = 0} \cup \set{\resRiderEdge_j}{j \in \riders,~\flow(\riderEdge_j) = 1}$, where $\riderEdge_j = ((\origin_j, \tr_j),~(\dest_j, \tr_j + \dist(\origin_j, \dest_j)))$ is the edge corresponding to rider $j$ with $\edgeCost(\riderEdge_j) = -\val_j + \cost_{\origin_j,\dest_j,\tr_j}$, $\minCap(\riderEdge_j) = 0$, and $\maxCap(\resRiderEdge_j)= 1$; $\resEdge_j = ((\dest_j, \tr_j + \dist(\origin_j, \dest_j),~(\origin_j, \tr_j)))$ is the reversed edge corresponding to rider $j$ s.t. $\flow(\riderEdge_j) = 1$, with $\edgeCost(\resRiderEdge_j) = \val_j- \cost_{\origin_j, \dest_j,\tr_j}$  , $\minCap(\resRiderEdge_j) = 0$ and $\maxCap(\resRiderEdge_j) = 1$.
	\item $\resEdgeSet_2 = \edgeSet_2 \cup \set{\resEdge}{\edge \in \edgeSet_2,~\flow(\edge) > 0}$, where for each $\edge = ((a,t),~(b,t+\dist(a,b))) \in \edgeSet_2$ with $\flow(\edge) > 0$, $\resEdge  = ((b, t+\dist(a,b)),~ (a,t))$, and has $\edgeCost(\resEdge) = -\cost_{a,b,t}$, $\minCap(\resEdge) = 0$ and $\maxCap(\resEdge) = \flow(\edge)$.
	\item $\resEdgeSet_3 = \edgeSet_3 \cup \set{\resEdge}{\edge \in \edgeSet_3, \flow(\edge) > 0}$ where for each $\edge = ((a, t), ~\sink) \in \edgeSet_3$, $\resEdge  = (\sink,~(a, t))$ with $\edgeCost(\resEdge) = -\exitCost_{\horizon - t}$, $\minCap(\resEdge) = 0$ and $\maxCap(\resEdge) = \flow(\edge)$.
	\item $\resEdgeSet_4 = \set{(\driverNode_i, (\re_i,\te_i))}{i \in \driverSet, ~\flow((\driverNode_i, (\re_i,\te_i))) = 0}  
	\cup \set {((\re_i,\te_i),\driverNode_i) }{i \in \driverSet,  ~\flow((\driverNode_i, (\re_i,\te_i))) = 1}$  
	$\cup \set{(\driverNode_i, \sink)}{i \in \driverSet, ~\driverEntrance_i = 0,~ \flow((\driverNode_i, \sink)) = 0} \cup \set {(\sink,\driverNode_i)}{i \in \driverSet,~ \driverEntrance_i = 0,\flow((\driverNode_i, \sink)) = 1}$. For the forward edges, i.e. $\edge \in \resEdgeSet_4$ s.t. $\edge = (\driverNode_i, (\re_i,\te_i))$ or $\edge = (\driverNode_i, \sink)$, we have $\edgeCost(\edge) = 0$, $\minCap(\resEdge) = 0$.  and $\maxCap(\resEdge) = +\infty$. For each $\edge = ((\re_i,\te_i),\driverNode_i) \in \resEdgeSet_4$, we have $\edgeCost(\edge) = 0$, $\minCap(\resEdge) = 0$, and $\maxCap(\resEdge) = \flow((\driverNode_i, (\re_i,\te_i)))$, and for each $\edge = (\sink,\driverNode_i) \in \resEdgeSet_4$, we have $\edgeCost(\edge) = 0$, $\minCap(\resEdge) = 0$, and $\maxCap(\resEdge) = \flow((\driverNode_i, \sink))$.
\end{enumerate}

From the standard argument on the residual graphs~\cite{ahuja1993network}, we know that the cost of a feasible flow in the residual graph is equal to the incremental cost of the same flow in the original graph. 
For any node $\node = \nodeSet$, the ``shortest distance" from this node to the sink $\sink$ refers to the smallest total cost among all paths from $\node$ to $\sink$ in the residual graph. 
Since the edge costs are equal to driver costs minus rider values, the shortest distance corresponds to the negation of the maximum incremental welfare created by an additional unit of driver flow starting from $\node$, i.e. $-\Phi_{a,t}$ at node $(a,t)$, or $-\Phi_{\driverNode_i}$ at node $\driverNode_i$.
Given $-\Phi$ as the (negation of the) shortest distances, define:
\begin{align*}
	\price_{a,b,t} & \triangleq \Phi_{a,t} - \Phi_{b,t + \dist(a,b)} + \cost_{a,b,t},~\forall (a,b,t) \in \trips, \\
	\mu_j & \triangleq \max \{ \val_j - \price_{\origin_j, \dest_j, \tr_j}, ~0 \},~\forall j \in \riders,
\end{align*}
we show that $(\Phi, \mu)$ forms an optimal solution to \eqref{equ:flow_dual}. The argument is very similar to that of the reduced cost optimality, however, we include the proof here for completeness. %
We first show the feasibility of $(\Phi, \mu)$:
\begin{enumerate}[1.]
	\setlength\itemsep{0.0em}
	\item Constraint \eqref{equ:flow_dual_rider_cnst} holds by definition of $\mu$. 
	\item For \eqref{equ:flow_dual_nonneg_1}, observe that for all $(a,b,t) \in \trips$, there exists an edge $((a,t),~(b, t+ \dist(a,b))) \in \resEdgeSet$ with cost $\cost_{a,b,t}$, thus the shortest distance from $(a,t)$ to $\sink$ is at most $\cost_{a,b,t}$ plus the shortest distance from $(b, t+\dist(a,b))$ to $\sink$, implying $-\Phi_{a,t} \leq -\Phi_{b, t+ \dist(a,b)} + \cost_{a,b,t} \Rightarrow \Phi_{a,t} - \Phi_{b, t+ \dist(a,b)} \geq -\cost_{a,b,t}$.
	\item For \eqref{equ:flow_dual_nonneg_2}, note that $\forall a \in \loc$ and $ \forall t \in \timeSet$, there exists $((a,t), \sink) \in \resEdgeSet$ with cost $\edgeCost(\edge) = \exitCost_{\horizon - t}$. Therefore, the shortest distance $-\Phi_{a, \horizon}$ between $(a, t)$ and $\sink$ is at most $\exitCost_{\horizon - t}$, i.e. $-\Phi_{a, \horizon} \leq \exitCost_{\horizon - t} \Rightarrow \Phi_{a, \horizon} \geq - \exitCost_{\horizon - t}$.
	\item For \eqref{equ:flow_dual_driver_cnst_1}, we know that for each $i \in \driverSet$, there exists an edge $(\driverNode_i, (\re_i, \te_i)) \in \resEdgeSet$ with unlimited capacity and zero cost, therefore the shortest path from $\driverNode_i$ to the sink $\sink$ satisfies $-\Phi_{\driverNode_i} \leq -\Phi_{\re_i, \te_i} \Rightarrow \Phi_{\driverNode_i} \geq \Phi_{\re_i, \te_i}$.
	\item For \eqref{equ:flow_dual_driver_cnst_2}, since for each $i\in \driverSet$ s.t. $\driverEntrance_i = 0$, there exists $\edge = (\driverNode, \sink) \in \resEdgeSet$ with unlimited capacity and zero cost, thus $-\Phi_{\driverNode_i} \leq 0 \Rightarrow = \Phi_{\driverNode_i} \geq 0$. 
	\item \eqref{equ:flow_dual_nonneg_3} holds by definition of $\mu$.
\end{enumerate}

We now show the optimality by examining that the CS conditions \fcs{1}-\fcs{6} hold between the optimal integral flow $\flow$ and $(\Phi, \mu)$:
\begin{enumerate}[1.]
	\setlength\itemsep{0.0em}
	\item To show \fcs{1}, %
	given how $\mu_j$ is defined, we only need to show that when $\flow(\riderEdge_j) > 0$, $\val_j - (\Phi_{\origin_j, \tr_j} - \Phi_{\dest_j, \tr_j + \dist(\origin_j, \dest_j)} + \cost_{\origin_j, \dest_j, \tr_j}) \geq 0$. %
	This holds, since in $\resGraph$, there exists edge $\resRiderEdge_j$ from $(\dest_j, \tr_j + \dist(\origin_j, \dest_j))$ to $(\origin_j, \tr_j)$ with cost $\val_j - \cost_{\origin_j,\dest_j,\tr_j}$, thus the shortest distances must satisfy: $-\Phi_{\dest_j, \tr_j + \dist(\origin_j, \dest_j)} \leq \val_j - \cost_{\origin_j, \dest_j + \tr_j} - \Phi_{\origin_j, \tr_j} \Rightarrow \val_j - (\Phi_{\origin_j, \tr_j} - \Phi_{\dest_j, \tr_j + \dist(\origin_j, \dest_j)} + \cost_{\origin_j, \dest_j, \tr_j}) \geq 0$.
		\item Now consider \fcs{2}. %
		Observe that when $\mu_j > 0$, we must have $\val_j - (\Phi_{\origin_j,\tr_j} - \Phi_{\dest_j, \tr_j + \dist(\origin_j, \dest_j)} + \cost_{\origin_j, \dest_j, \tr_j} ) > 0$, implying $-\Phi_{\origin_j,\tr_j} > - \Phi_{\dest_j,\tr_j + \dist(\origin_j,\dest_j)} - \val_j + \cost_{\origin_j,\dest_j, \tr_j}$, i.e. in the residual graph, the shortest distance from $(\origin_j,\tr_j)$ to the sink is longer than $-\val_j + \cost_{\origin_j,\dest_j,\tr_j}$ plus the shortest distance from $(\dest_j, \tr_j+\dist(\origin_j, \dest_j))$ to the sink. This means that the edge $\riderEdge_j = ((\origin_j, \tr_j), (\dest_j, \tr_j + \dist(\origin_j, \dest_j)))$ with capacity $1$ and cost $-\val_j + \cost_{\origin_j,\dest_j,\tr_j}$ cannot be present in the residual graph, which is the case only if $\flow(\riderEdge_j) = 1$.				
	\item For \fcs{3}: %
	we proved $\Phi_{a,t} - \Phi_{b, t + \dist(a,b)} \geq -\cost_{a,b,t}$ above for feasibility, thus we only need to show the other direction of the inequality. Observing that with $\flow((a,t),~(b, t + \dist(a,b)))> 0$, there exists an edge from $(b, t + \dist(a,b))$ to $(a,t)$ in the residual graph with cost $-\cost_{a,b,t}$ and non-zero capacity, thus the shortest distance from $(b,t+\dist(a,b))$ to the sink is at most $ -\Phi_{a,t} - \cost_{a,b,t}$, implying $\Phi_{a,t} - \Phi_{b, t+\dist(a,b)} \leq -\cost_{a,b,t}$,
	\item Assume \fcs{4} does not hold and given feasibility, %
	we know that there exists $a \in \loc$ and $t \in \timeSet$ s.t. $\flow(((a,t),\sink)) > 0$ and $-\Phi_{a, \horizon} < \exitCost_{\horizon- t}$. This implies that the minimum cost for an extra unit of flow from $(a,t)$ to the sink is lower than $\exitCost_{\horizon-t}$, and the objective of the flow LP can be improved by routing one unit of flow that goes form $(a,t)$ directly to the $\sink$ through this alternative shortest path. This contradicts the optimality of $\flow$.	
	\item For \fcs{5}: given $i \in \driverSet$ s.t. $\flow((\driverNode_i, (\re_i,\te_i))) > 0$, there exists  an edge  from $(\re_i,\te_i)$ to $\driverNode_i$ with zero cost, thus $-\Phi_{\re_i,\te_i} \leq -\Phi_{\driverNode_i} + 0 \Rightarrow \Phi_{\driverNode_i} \leq  \Phi_{\re_i,\te_i}$. Together with \eqref{equ:flow_dual_driver_cnst_1}, we know $\varphi_{\driverNode_i} = \Phi_{\re_i,\te_i}$. 
	\item For \fcs{6}, given $\flow((\driverNode_i, \sink)) > 0$, we know that there's a unit of flow from $\driverNode_i$ to $\sink$ generating a total cost of zero. If $-\Phi_{\driverNode_i} < 0$, there exists a path from $\driverNode_i$ to $\sink$ for which routing a unit of driver flow improves the objective (in comparison to going directly from $\driverNode_i$ to $\sink$). This contradicts the optimality, thus $-\Phi_{\driverNode_i} \geq 0 \Rightarrow \Phi_{\driverNode_i} \leq 0$. Given feasibility, we know $ \Phi_{\driverNode_i} = 0$.
	
\end{enumerate}

This completes the argument that $(\Phi, \mu)$ form an optimal solution to \eqref{equ:flow_LP_simp}, thus the unit replica welfare gain $\{\Phi_{\driverNode_i}\}_{i \in \driverSet}$ is indeed a CE driver utility profile. Similarly, we can show that $-\Psi_{a,t}$ and $-\Psi_{\driverNode_i}$ corresponds to the shortest distance from the sink $\sink$ to the node $(a,t)$ and the node $\driverNode_i$, respectively, and that there exists $\mu' \in \setR^{|\riders|}$ (can be constructed in similar ways as the above $\mu$) s.t. $(\Psi, \mu')$  forms an optimal solution to \eqref{equ:flow_LP_simp}.

\medskip

\noindent{}\textit{Step 2.2. $\Phi$ and $\Psi$ are the Bottom and Top of the Potential Lattice:}

What is left to show is that $\Phi$ and $\Psi$ must be the bottom and top of the lattice formed by all optimal potentials of~\eqref{equ:flow_LP_simp}. For convenience of notation, we now work with the dual of the original flow LP \eqref{equ:flow_LP} where the objective is to minimize the negation of the total social welfare: let $\psi_{a,t}$, $\psi_{\driverNode_i}$ and $\eta_j$ be the dual variables corresponding to the constraints \eqref{equ:flow_balance_constraint}-\eqref{equ:flow_capacity_constraint}, respectively, the dual of \eqref{equ:flow_LP} can be written in the following form:
\begin{align}
	\max ~ & \sum_{i \in \driverSet} \psi_{\driverNode_i} + \sum_{j \in \riders} \eta_j \label{equ:flow_dual_original} \\
	\txtst & \psi_{\origin_j, \tr_j} - \psi_{\dest_j, \tr_j + \dist(\origin_j, \dest_j)} + \eta_j \leq -\val_j + \cost_{\origin_j,\dest_j,\tr_j}, & \forall j \in \riders \notag \\ 
	 	& \psi_{a,t} - \psi_{b, t + \dist(a,b)} \leq \cost_{a,b,t},  & \forall (a,b,t) \in \trips  \notag \\
	 	& \psi_{a,t} \leq \exitCost_{\horizon, t}, & \forall a \in \loc,~ t \in \timeSet \notag \\ 
	 	& \psi_{\driverNode_i} \leq \psi_{\re_i, \te_i}, & \forall i \in \driverSet \notag 	\\ 
		& \psi_{\driverNode_i} \leq 0, & \forall i \in \driverSet \txtst \driverEntrance_i =  0 \notag \\
	 	& \eta_j \leq 0, & \forall j \in \riders \notag
\end{align}

For any optimal solution $(\varphi, \mu)$ to \eqref{equ:flow_dual},  $(\psi, \eta)$ where $\psi = -\varphi$ and $\eta = -\mu$ is an optimal solution to \eqref{equ:flow_dual_original}, and vice versa. Thus we know the $\psi$ variables among optimal solutions of \eqref{equ:flow_dual_original} also form a lattice, and what is left to show is that $-\Phi$ and $-\Psi$ must be the top and bottom of the lattice formed by all optimal potentials of~\eqref{equ:flow_dual_original}.

Recall that for a MCF problem, $\xi$ denotes the boundary condition, so that for each node $\node \in \nodeSet$, $\xi_n$ is the number of the units of flow that enters (or exits, if negative) the network from node $n$. For our problem, $\xi$ is a $|\driverSet| + |\loc|(\horizon + 1)$ dimensional vector, where $\xi_{\driverNode_i} = 1$ and $\xi_{(a,t)} = 0$ (recall that the condition $\xi_{\sink} = -|\driverSet|$ is redundant given the flow balance constraints, therefore is omitted).
Keeping everything else the same, the optimal objective of \eqref{equ:flow_LP} can be thought of as a function of the boundary condition $\xi$, which we denote as $\omega(\xi)$. 
It is known that any potential from the set of all optimal solutions of \eqref{equ:flow_dual_original} must be a subgradient of the function $\omega(\xi)$ (see the proof of Theorem~5.2 in \cite{bertsimas1997introduction}), but we still include the proof here for completeness. 
First, $\omega$ is a convex function of $\xi$ (Theorem~5.1 in~\cite{bertsimas1997introduction} can easily be generalized to incorporate inequality constraints).
Recall that a vector $\psi$ is a \emph{subgradient} of a convex function $\omega$ at $\xi$ if for all $\xi'$,
\begin{align*}
	\omega(\xi) + \psi \cdot (\xi'-\xi) \leq \omega(\xi').
\end{align*}
Let $(\psi, \eta)$ be an optimal solution to \eqref{equ:flow_dual_original}. The strong duality implies $\psi \cdot \xi + \eta \cdot \vec{1} = \omega(\xi)$. Now consider any arbitrary $\xi'$. For any feasible flow $\flow$ given the boundary condition $\xi'$, weak duality implies $\psi \cdot \xi' + \eta \cdot \vec{1} \leq \flow \cdot \gamma $ where $\gamma$ is the vector of all edge costs. Taking the minimum over all feasible flow $\flow$, we obtain $\psi \cdot \xi' + \eta \cdot \vec{1}  \leq \omega(\xi')$. Hence $ \psi \cdot \xi' + \eta \cdot \vec{1}  - ( \psi \cdot \xi + \eta \cdot \vec{1}) \leq \omega(\xi') -\omega(\xi) \Leftrightarrow \omega(\xi) + \psi(\xi'-\xi) \leq \omega(\xi')$, i.e. $\psi$ is a subgradient of $\omega$ at $\xi$.

\medskip

Now we show that for any subgradient $\psi$ of $\omega$ at $\xi$, the entries $\psi_{\driverNode_i}$ is bounded by $-\Psi_{\driverNode_i} \leq \psi_{\driverNode_i} \leq - \Phi_{\driverNode_i}$.\footnote{This is a result of the convexity of $\omega$ and the relationship between directional derivatives and subgradients (see Theorem 3.1.14 in~\cite{nesterov2013introductory}). We include a simple proof here for completeness.} %
Let $\chi_{\driverNode_i}$ be $|\driverSet|+|\loc| (\horizon+1)$ by 1 vector which takes value $0$ except for the $\driverNode_i$ entry, and $\chi_{\driverNode_i} = 1$. 
We know that for any subgradient $\psi$, $\omega(\xi) + \psi \cdot \chi_{\driverNode_i} \leq \omega(\xi + \chi_{\driverNode_i}) \Rightarrow \omega(\xi) + \psi_{\driverNode_i} \leq \omega(\xi + \chi_{\driverNode_i}) \Rightarrow \psi_{\driverNode_i} \leq \omega(\xi + \chi_{\driverNode_i}) - \omega(\xi) = -\Phi_{\driverNode_i}$. 
The last equality holds since the objective $\omega$ is the negation of the optimal total welfare achievable by the vector $\xi$ of driver inflow. Similarly, $\omega(\xi) + \psi \cdot (-\chi_{\driverNode_i}) \leq \omega(\xi - \chi_{\driverNode_i}) \Rightarrow \omega(\xi) - \psi_{\driverNode_i} \leq \omega(\xi - \chi_{\driverNode_i}) \Rightarrow \psi_{\driverNode_i} \geq \omega(\xi) - \omega(\xi - \chi_{\driverNode_i}) = -\Psi_{\driverNode_i}$. We can similarly prove $-\Psi_{a,t}\leq \psi_{a,t} \leq -\Phi_{a,t}$.
This implies that $-\Phi$ and $-\Psi$ are the top and the bottom of the lattice formed by the optimal potentials of \eqref{equ:flow_LP}, respectively, and therefore completes the proof of the lemma. 
\end{proof}

\subsection{Proof of Lemma~\ref{lem:core_equal_CE}} \label{appx:proof_lem_core}

\lemCoreCE*

\begin{proof} 

We first prove that every CE plan is in the core. Let $(x, \actpath, \price)$ be a CE plan with anonymous trip price $\price$, where the rider and driver utilities are given by $u$ and $\pi$. Fix any coalition $(\driverSet', ~\riders')$ of riders and drivers for some $\driverSet'\subseteq \driverSet$ and $\riders' \subseteq \riders$. We prove that
\begin{align}
	\sum_{i \in \driverSet'} \pi_i + \sum_{j \in \riders'} u_j \geq \sw(\driverSet', ~\riders'), \label{equ:coalitional_welfare}
\end{align}
meaning the total utilities for all drivers and riders in the coalition, under the CE plan, is weakly higher than the highest achievable welfare among themselves. This implies that there is no way for the coalition to make an alternative plan, so that everyone has weakly higher utilities, and at least one driver or rider is strictly better off.

We now prove \eqref{equ:coalitional_welfare}. Let $(x', \actpath')$ be an optimal dispatch that achieves the highest coalitional welfare $\sw(\driverSet', ~\riders')$. For all $j \in \riders'$ s.t. $x_j'=1$, let her payment be $\rPayment_j' = \price_{\origin_j, \dest_j, \tr_j}$, the anonymous trip price for the trip according to the original CE plan $(x, \actpath, \price)$. Accordingly, let the payment to each driver  $i \in \driverSet'$ be $\dPayment_i' = \sum_{j \in \riders'} \one{(\origin_j, \dest_j, \tr_j, j) \in \actpath_i'} \price_{\origin_j, \dest_j, \tr_j}$. 
Under this new plan $(x', \actpath', \rPayment', \dPayment')$,
the utility of each rider $j \in \riders'$ is therefore $u_j' = x_j'(\val_j - \rPayment_j')$, and the utility of driver $i \in \driverSet'$ is $\pi_i' = \dPayment_i' - \pathCost_{i,k}$, if the dispatched action path $\actpath_i'$ is consistent with the $k\th$ feasible path of driver $i$ and has total cost of $\pathCost_{i, k}$.

Note that the plan $(x', \actpath', \rPayment', \dPayment')$ is strictly budget balanced, therefore the utility of drivers and riders under this plan add up to the welfare: $\sum_{i \in \driverSet'} \pi_i' + \sum_{j \in \riders'} u_j' = \sw(\driverSet', ~\riders')$. What is left to show is that $u_j'\leq u_j$ for all $j \in \riders'$ and $\pi_i' \leq \pi_i$ for all $i \in \driverSet'$ both hold. This is a consequence of the original plan forming a CE. For the riders, if $x_j' = 0$, then $u_j' = 0 \leq u_j$; if $x_j' = 1$, then $u_j' = \val_j - \price_{\origin_j, \dest_j, \tr_j} \leq u_j$. For each  $i \in \driverSet'$ with $\actpath_i'$ consistent with the $k\th$ feasible path of driver $i$,
\begin{align*}
	\pi_i \geq  \sum_{(a,b,t) \in  \path_{i,k} } \max \{ \price_{a,b,t} , 0 \} - \pathCost_{i,k}  \geq \sum_{j \in \riders'} \one{(\origin_j, \dest_j, \tr_j, j) \in \actpath_i'} \price_{\origin_j, \dest_j, \tr_j} - \pathCost_{i,k}  = \pi_i'.
\end{align*}
This completes the proof of \eqref{equ:coalitional_welfare}, thus all CE plans are in the core.

\bigskip

We now prove the second part of this lemma, that every core outcome that balances budget can be ``priced" in CE. Let $(x, \actpath, \rPayment, \dPayment)$ be a budget balanced core outcome. The following are immediate implications of a core outcome: \vspace{-0.3em}
\begin{enumerate}[1.]
	\setlength\itemsep{0.0em}
	\item The outcome $(x, \actpath, \rPayment, \dPayment)$ must be strictly budget balanced, i.e. $\sum_{j \in \riders} \rPayment_j = \sum_{i \in \driverSet} \dPayment_i$, since otherwise, the entire economy $(\driverSet, \riders)$ will be a blocking coalition.
	\item The outcome $(x, \actpath, \rPayment, \dPayment)$ must be welfare-optimal and achieve $\sw(\driverSet, \riders)$, otherwise the entire economy $(\driverSet, \riders)$ is blocking since an improvement of total utilities is possible.
	\item The plan is individually rational for riders, i.e. $x_j = 0 \Rightarrow \rPayment_j \leq 0$ and $x_j = 1 \Rightarrow \rPayment_j \leq \val_j$  for each $j \in \riders$, otherwise dropping out improves their utilities. 	
\end{enumerate}

We now claim that $\forall i \in \driverSet$, $\sum_{j \in \riders} \rPayment_j \one{(\origin_j, \dest_j, \tr_j, j) \in \actpath_i} = \dPayment_i$, i.e. we must have strict budget balance among driver $i$ and the riders that she picked up. First, $\sum_{j \in \riders} \rPayment_j \one{(\origin_j, \dest_j, \tr_j, j) \in \actpath_i} \leq \dPayment_i$  holds since otherwise this set of driver and riders will be blocking. Since this inequality holds for all $i \in \driverSet$, the plan is budget balanced, and $\rPayment_j \leq 0$ for $j$ s.t. $x_j = 0$, we have 
\begin{align*}
	\sum_{i \in \driverSet} \dPayment_i   = & \sum_{j \in \riders} \rPayment_j  = \sum_{i \in \driverSet} \sum_{j \in \riders} \rPayment_j \one{(\origin_j, \dest_j, \tr_j, j) \in \actpath_i} + \sum_{j \in \riders, x_j = 0} \rPayment_j  \\
	\leq &  \sum_{i \in \driverSet} \sum_{j \in \riders} \rPayment_j \one{(\origin_j, \dest_j, \tr_j, j) \in \actpath_i} \leq \sum_{i \in \driverSet} \dPayment_i,
\end{align*}
which requires that all inequalities hold with equality. This implies  that $\sum_{j \in \riders} \rPayment_j \one{(\origin_j, \dest_j, \tr_j, j) \in \actpath_i} = \dPayment_i$ for all $ i \in \driverSet$, and moreover, $\rPayment_j = 0$ for all $j \in \riders$ s.t. $x_j = 0$.

We now show that prices must be anonymous for the riders, i.e. for any two riders $j \neq j'$ s.t. $(\origin_j,\dest_j,\tr_j) = (\origin_{j'},\dest_{j'},\tr_{j'})$ and $x_j = x_{j'} = 1$, we must have $\rPayment_j = \rPayment_{j'}$. Otherwise, assume w.l.o.g. that $\rPayment_j < \rPayment_{j'}$, and that riders $j$ and $j'$ are picked up by drivers $i$ and $i'$ respectively, we know that rider $j'$, driver $i$, and all of the riders picked up by driver $i$ except for rider $j$, would form a blocking coalition. 
We now construct a set of anonymous trip prices $\price$. For any trip $(a,b,t) \in \trips$,  \vspace{-0.3em}
\begin{enumerate}[(i)]
	\setlength\itemsep{0.0em}
	\item if no rider requests this trip, i.e. $(\origin_j,\dest_j,\tr_j) \neq (a,b,t)$ for all $j \in \riders$, then let $\price_{a,b,t} = 0$.
	\item if some rider requests this trip, but no rider is picked up, then let the price be the highest value for this trip: $\price_{a,b,t} = \max_{j \in \riders,~(\origin_j, \dest_j, \tr_j) = (a,b,t)} \val_j.$
	\item if some rider is picked up, i.e. if $\exists j \in \riders$ s.t. $(\origin_j,\dest_j,\tr_j) = (a,b,t)$ and $ x_j = 1$, let $\price_{a,b,t} = \rPayment_j$. 
\end{enumerate}

Given the anonymity that we proved above, for any rider that is picked up, she pays $\price_{\origin_j, \dest_j, \tr_j}$.  
We claim that $\price_{a,b,t} \geq 0$ for all $(a,b,t) \in \trips$. This is obvious for cases (i) and (ii) above. For case (iii), we only need to show that payments made by riders that are picked up must be non-negative, i.e. $\rPayment_j \geq 0$. This holds, since otherwise the driver who picks up this rider, together with the rest of the riders that this driver picks up, will form a blocking coalition. 

We also claim that $\price_{a,b,t} = 0$ for trips with excessive supply, i.e. if $(a,b,t) \in \actpath_i$ for some $i \in \driverSet$. Consider some trip with $\price_{a,b,t} > 0$. We know that either  case (ii) holds, where there is some rider $j$ willing to pay up to $\price_{a,b,t}$ but is not picked up, or case (iii) holds, where some rider $j$ is paying $\price_{a,b,t}$ to be picked up. In both cases, rider $j$, driver $i$ (who takes the trip $(a,b,t)$ without a rider), and the rest of the riders picked up by driver $i$ will form a blocking coalition--- the rest of the riders can pay the same amounts, driver $i$ can get a higher payment, whereas rider $j$ either gets picked up (case (ii)) or pays less (case (iii)).

We now prove that under plan with anonymous trip prices $(x, \actpath, \price)$, the rider and driver total payments (and therefore utilities) coincide with the original plan. This is obvious for the riders. For each driver, the total payment under plan $(x, \actpath, \price)$ is equal to $\sum_{(a,b,t) \in \path_{i,k}} \max \{ \price_{a,b,t}, ~0 \}$, if $ \actpath$ is consistent with the $k\th$ path of driver $i$. Given the non-negativity of $\price$, and the fact that trips with excessive supply has zero prices, we know that driver $i$ is paid $\sum_{(a,b,t) \in \path_{i,k}} \max \{ \price_{a,b,t}, ~0 \} = \sum_{(a,b,t) \in \path_{i,k}}  \price_{a,b,t} =  \sum_{j \in \riders} \price_{\origin_j, \dest_j, \tr_j} \one{(\origin_j, \dest_j, \tr_j, j) \in \actpath_i} = \sum_{j \in \riders} \rPayment_j \one{(\origin_j, \dest_j, \tr_j, j) \in \actpath_i} = \dPayment_i$.

We complete the proof of this lemma by showing that $(x, \actpath, \price)$ forms a CE. 
Rider best-response is implied by IR for riders requesting trips where no rider is picked up (case (ii)), and for riders that are already picked up (case (iii)). For rider $j$ s.t. $x_j = 0$ but there exists $j'$ s.t. $(\origin_j,\dest_j,\tr_j) = (\origin_{j'},\dest_{j'},\tr_{j'})$ and $x_{j'} = 1$, we claim $\val_j \leq \price_{\origin_j,\dest_j,\tr_j}$. Otherwise, assume that rider $j'$ is picked up by driver $i$, we know rider $j$, driver $i$ and the rest of the riders picked up by driver $i$ will form a blocking coalition. 

What is left to show is driver best response. Assume that driver best response doesn't hold for driver $i$, we know that if $\actpath_i$ is consistent with $\path_{i,k}$, there exists an alternative path $\path_{i,k'} \neq \path_{i,k}$ s.t. $\sum_{(a,b,t) \in \path_{i,k'}} \max \{ \price_{a,b,t}, ~0 \} - \pathCost_{i,k'}  > \sum_{(a,b,t) \in \path_{i,k}} \max \{ \price_{a,b,t}, ~0 \} - \pathCost_{i,k}$. For each $(a,b,t) \in \path_{i,k'}$ s.t. $\price_{a,b,t} > 0$, there exists a rider that is either paying $\price_{a,b,t}$ to be picked up, or is not picked up but willing to pay $\price_{a,b,t}$. Driver $i$, together with all of these riders, will form a blocking coalition. This proves that $(x, \actpath, \price)$ forms a CE, and therefore completes the proof of this theorem. 
\end{proof}

Note that a core outcome does not necessarily use anonymous trip prices. The following example shows that the CE plan with anonymous trip prices $(x, \actpath, \price)$ constructed from a core outcome $(x, \actpath, \rPayment, \dPayment)$ may not pay $\dPayment_{i,t}$ to driver $i$ at time $t$, and we can only guarantee utility equivalence, i.e. the total payment to each driver is equal to $\dPayment_i$. 

\begin{example} Consider an economy with a single location $A$, two time periods, one driver and four riders with $\origin_j = A$, $\dest_j = A$, and $\val_j = 4$ for all $j \in \riders$. Moreover, $\tr_1 = \tr_2 = 0$ and  $\tr_3 = \tr_4 = 1$. Assume all costs are zero. Consider the plan $(x, \actpath, \rPayment, \dPayment)$, where riders $1$ and $3$ are picked up and each pays $4$: $x_1 = x_3 = 1$ and  $\rPayment_1 = \rPayment_3 = 4$. The driver takes action path $\actpath_1 = ((A,A,0,1),~(A,A,1,3))$, however $\dPayment_{1,0} = 2$  and $\dPayment_{1,1} = 6$, i.e. the driver is paid $2$ at time $0$ and $6$ at time $1$. It is easy to see that the outcome is in the core, however, given any CE plan with anonymous trip prices, $\price_{A,A,0} = \price_{A,A,1} = 4$, so the driver needs to be paid $4$ at each of time $0$ and time $1$. \qed
\end{example}

\subsection{Proof of Theorem~\ref{thm:SPE}} \label{appx:proof_thm_spe}

\spe*

\begin{proof}

As is outlined in the body of the paper, what is left to show is incentive alignment. We first show a correspondence of drivers' continuation utilities and the unit replica welfare gains (which implies that the plan determined by the STP mechanism at any time forms a competitive competitive equilibrium), then we show that there is no useful single deviation, implying that always accepting the mechanism's dispatches forms an SPE.

\bigskip

\noindent{}\emph{Step 1.}
Let $(x, \actpath)$ be the optimal dispatch determined by the STP mechanism, and let $\flow$ be a corresponding optimal solution to the flow LP \eqref{equ:flow_LP_simp}, constructed in the same way as in the proof of Lemma~\ref{thm:lp_integrality}.
Setting $u_j = \max\{\val_j - \price_{\origin_j, \dest_j, \tr_j}, ~0 \}$, we know from the proof of Lemma~\ref{thm:opt_pes_plans} that $(\Phi, u)$ forms an optimal solution to the dual of the flow LP \eqref{equ:flow_dual}, and satisfies the CS conditions with $\flow$.

Consider any driver $i \in \driverSet$, who is in the platform, and is available at some location $a$ and time $t$. Assume that the dispatched action path $\actpath_i$ is consistent with $\path_{i,k}$, the $k\th$ feasible path of driver $i$, and assume that path $\path_{i,k}$ ends at location $a'$ and time $t'$ (i.e. the driver is dispatched to exit the platform at $(a',t')$). 
Assuming that all drivers follow the dispatches of the platform at all times, the total payment to driver $i$ from time $t$ onward is:
\begin{align*}
	\sum_{j \in \riders, ~\tr_j \geq t}  \one{(\origin_j, \dest_j, \tr_j, j) \in \actpath_i}\price_{\origin_j, \dest_j, \tr_j} 
	 = \sum_{t'' \geq t}  \one{(a'',b'',t'') \in \path_{i,k}} \price_{a'',b'',t''},
\end{align*}
since when the driver takes a relocation trip $(a'', b'', t'')$ without a rider, $\flow(((a'',t''),~(b'', t+ \dist(a'',b'')))) > 0$, and the complementary slackness condition \fcs{3} implies that the trip price $\price_{a'',b'',t''} = 0$. Moreover, when there exists a driver exiting from $(a',t')$, $\flow((a',t'),~\sink) > 0$, \fcs{4} implies $\Phi_{a',t'} = - \exitCost_{\horizon-t'}$. As a result, the utility of driver $i$ from time $t$ onward is %
\begin{align*}
	& \sum_{j \in \riders,~\tr_j \geq t}  \one{(\origin_j, \dest_j, \tr_j, j) \in \actpath_i}\price_{\origin_j, \dest_j, \tr_j}  - \sum_{t'' \geq t} \one{(a'',b'',t'') \in \path_{i,k}}  \cost_{a'',b'',t''} - \exitCost_{\horizon-t'} \\
	 = & \sum_{t'' \geq t} \one{(a'',b'',t'') \in \path_{i,k}} (\price_{a'',b'',t''} - \cost_{a'',b'',t''} ) - \exitCost_{\horizon-t'} \\
	  = & \sum_{t'' \geq t} \one{(a'',b'',t'') \in \path_{i,k}} \left( \Phi_{a'',t''} - \Phi_{b'', t''+\dist(a'',b'')} + \cost_{a'',b'',t''} - \cost_{a'',b'',t''} \right) - \exitCost_{\horizon-t'} \\
	  =& \Phi_{a,t}  - \Phi_{a',t'} - \exitCost_{\horizon-t'} \\
	  = & \Phi_{a,t}.
\end{align*}

A first implication is that for a driver with $\driverEntrance_i = 1$, her total utility over the planning horizon is $\pi_i = \Phi_{\re_i,\te_i}$. This is equal to $\Phi_{\driverNode_i}$ give \fcs{5}, since $\flow((\driverNode_i, (\re_i,\te_i))) = 1 \rightarrow \Phi_{\driverNode_i} = \Phi_{\re_i,\te_i}$. For a driver with $\driverEntrance_i = 0$ but was dispatched to enter the platform, $\pi_i = \Phi_{\driverNode_i} = \Phi_{\re_i,\te_i}$ holds for the same reason. Drivers with $\driverEntrance_i = 0$ and were not dispatched to enter the platform get $\pi_i = 0$, which is also equal to $\Phi_{\driverNode_i}$, since $\flow((\driverNode_i, \sink)) > 0 \Rightarrow \Phi_{\driverNode_i} = 0$. 
Therefore, $\pi_i = \Phi_{\driverNode_i}$ holds for all $ i \in \driverSet$. Lemma~\ref{thm:optimal_plans} and Lemma~\ref{lem:dual_payment_correspondence} then imply that the plan determined by the STP mechanism forms a CE.

\bigskip

\noindent{}\emph{Step 2.} We now prove that a single deviation from the mechanism's dispatches by any driver at any time is not useful. For drivers who are (at time $t$ given state $\state_t$) \emph{en route}, or have already exited, or has not yet entered, there is effectively only one actions that is available to them, so there is no useful deviation. Therefore we only need to consider a driver that is at time $t$ available.

Given any time $t'$ and state $\state_{t'}$, let $\Phi_{a, t}^{(t')}(\state_{t'})$ be the welfare gain from adding an additional driver (who is already in the platform) at time $t \geq t'$ in location $a$, to the economy starting at time $t'$ and state $\state_{t'}$:
\begin{align}
	\Phi_{a, t}^{(t')}(\state_{t'}) \triangleq \sw( \driverSet^{(t')} (\state_{t'} ) \cup \{(1, a, t - t', \horizon - t')\})  -  \sw( \driverSet^{(t')} (\state_{t'} )). \label{equ:welfare_gain_t}
\end{align}
Here, $(1, a, t - t', \horizon - t')$ is the type of the additional driver that starts at $a$ at time $t$ in the original economy, and therefore at time $t - t'$ in the time-shifted economy $\econ^{(t')}(\state_{t'})$ (where the length of the planning horizon is $\horizon - t'$).

Assume that the current plan $(x\suptprime(\state_{t'}), z\suptprime(\state_{t'}), \rPayment\suptprime(\state_{t'}), \pi\suptprime(\state_{t'}))$ is computed at time $t'$ given state $\state_{t'}$, and that no driver had deviated from the plan since time $t'$. 
Fix any time $t \geq t'$ and let $\state_t$ be the state of the platform at time $t$, if all drivers followed the plan up to time $t$. Consider a driver, say driver $i$, that is available at time $t$ at location $a$, i.e. $\state_{i,t} = (1,a,t)$ or $\state_{i,t} = (0,a,t)$. %
We first argue that deviating from the dispatch to exit (when dispatched to stay) or not enter (when dispatched to enter) is not a useful deviation. This is because exiting or not entering is equivalent to the driver's choosing a path different than the one determined by the plan, and the plan forming a CE implies that no alternative path is more profitable.

What is left to consider is the case where a driver deviated from the dispatches (regardless of which action she is dispatched to take), and did not exit the platform. The only possible deviation action that the driver can take in this case is to relocate to some location $b \in \loc$ that is within reach (i.e. $b \in \loc$ s.t. $t + \dist(a,b) \leq \horizon$).
From Step~1, we know that if all drivers follow the plan until the end of the planning horizon, then a driver with $\state_{i,t} = (1,a,t)$ gets utility $\Phi_{a, t}^{(t')}(\state_{t'})$ in the remaining time periods, and a driver with $\state_{i,t} = (0,a,t)$ gets utility $\max \{ \Phi_{a, t}^{(t')}(\state_{t'}), ~0 \}$. 
The rest of the proof of this theorem shows that by deviating to drive to $b$, the utility of the driver from time $t$ onward is upper bounded by $\Phi_{a, t}^{(t')}(\state_{t'})$, thus this is not a useful deviation.

If all drivers followed the plan at time $t$, denote the state of the platform at time $t+1$ as $\state_{t+1}$.
Now, at state $\state_{t}$, consider the scenario where the rest of the drivers all follow the plan at time $t$, but driver $i$ deviates and relocates to some location $b \in \loc$. Denote the state of the platform at time $t+1$ as $\tilde{\state}_{t+1} \triangleq (\state_{-i,t+1}, (1, b, t+\dist(a,b)))$--- the states of the rest of the drivers are the same as the case if all drivers follow the plan, and driver $i$ will be available at location $b$ at time $t + \dist(a,b)$. Driver $i$ is not paid at time $t$, but incurs cost $\cost_{a,b,t}$ from driving toward $b$. The mechanism replans at time $t+1$, and from time $t+1$ onward, driver $i$'s total utility under $\strategy^\ast$ would be $\Phi_{b, t + \dist(a,b)}^{(t+1)}(\tilde{\state}_{t+1})$, the welfare gain from replicating the driver at $(b, t + \dist(a,b))$, computed at time $t+1$ given state $\tilde{\state}_{t+1}$.
We prove $\Phi_{a, t}^{(t')}(\state_{t'}) \geq \Phi_{b, t + \dist(a,b)}^{(t+1)}(\tilde{\state}_{t+1}) -\cost_{a,b,t}$  by showing: 
\begin{enumerate}[(i)]
	\item $\Phi_{a, t}^{(t')}(\state_{t'}) \geq \Phi_{a, t}^{(t)}(\state_{t}) $,
	\item  $\Phi_{a, t}^{(t)}(\state_{t})  \geq \Phi_{b, t + \dist(a,b)}^{(t+1)}(\state_{t+1}) - \cost_{a,b,t}$ for all $b \in \loc$ s.t. $t + \dist(a,b) \leq \horizon$, and 
	\item $\Phi_{b, t + \dist(a,b)}^{(t+1)}(\state_{t+1}) \geq  \Phi_{b, t + \dist(a,b)}^{(t+1)}(\tilde{\state}_{t+1})$.
\end{enumerate}

\medskip

\noindent{}\textit{Part (i): $\Phi_{a, t}^{(t')}(\state_{t'}) \geq \Phi_{a, t}^{(t)}(\state_{t}) $.} The inequality trivially holds if $t = t'$. Consider $t > t'$. The highest achievable welfare at state $\state_{t'}$ with an additional driver at $(a,t)$ is weakly higher than the welfare of the scenario where all drivers follow the original plan %
until time $t$, and then optimize at time $t$ with all the existing drivers (whose states are now $\state_t$) and the additional driver at $(a,t)$: 
\begin{align*}
	\sw( \driverSet^{(t')} (\state_{t'} ) \cup \{(1, a, t - t', \horizon - t')\}) \geq \left[  \sw( \driverSet^{(t')} (\state_{t'} )) -  \sw( \driverSet^{(t)} (\state_{t} )) \right] + \sw( \driverSet^{(t)} (\state_{t} ) \cup \{(1, a, 0, \horizon - t)\} ).
\end{align*}
Here, %
$(1, a, 0, \horizon-t) = (1, a, t-t, \horizon-t)$ is the type of the additional driver entering at $(a,t)$, in the time-shifted economy starting from $\state_t$. The unit-replica welfare gain is therefore
\begin{align*}
	\Phi_{a, t}^{(t')}(\state_{t'}) = &  \sw( \driverSet^{(t')} (\state_{t'} ) \cup \{(1, a, t-t', \horizon - t')\} ) - \sw( \driverSet^{(t')} (\state_{t'} ))  \\
	\geq& \sw( \driverSet^{(t)} (\state_{t} ) \cup \{(1, a, 0, \horizon - t)\} )  - \sw( \driverSet^{(t)} (\state_{t} ))  \\
	= & \Phi_{a, t}^{(t)}(\state_{t}).
\end{align*}

\medskip

\noindent{}\textit{Part (ii): $\Phi_{a, t}^{(t)}(\state_{t})  \geq \Phi_{b, t + \dist(a,b)}^{(t+1)}(\state_{t+1}) - \cost_{a,b,t}$ for all $b \in \loc$ s.t. $t + \dist(a,b) \leq \horizon$.} This is similar to part (i), observing that at state $\state_t$, the additional driver at $(a,t)$ can relocate to $b$ at a cost of $\cost_{a,b,t}$ %
while the rest of the drivers follow the original plan at time $t$, and then optimize at time $t+1$:
\begin{align*}
	& \sw( \driverSet^{(t)} (\state_{t} ) \cup \{(1, a, 0, \horizon-t)\}) \\
	 \geq &  \left[ \sw( \driverSet^{(t)} (\state_{t} )) -  \sw( \driverSet^{(t+1)} (\state_{t+1} )) \right] - \cost_{a,b,t} + \sw( \driverSet^{(t+1)} (\state_{t+1} ) \cup \{ (1, b, \dist(a,b) -1, \horizon - (t+1)) \}). 
\end{align*}
Here, $(1, b, \dist(a,b) - 1, \horizon - (t+1)) = (1, b, t + \dist(a,b) - (t+1), \horizon - (t+1)) $ is the type of the additional driver at $(b, t+ \dist(a,b))$, time shifted by $t+1$. This gives us:
\begin{align*}
	 \Phi_{a, t}^{(t)}(\state_{t}) 
	 =& \sw( \driverSet^{(t)} (\state_{t} ) \cup \{(1, a, 0, \horizon-t)\} )  - \sw( \driverSet^{(t)} (\state_{t} ) ) \\
	 \geq & \sw( \driverSet^{(t+1)} (\state_{t+1} ) \cup \{(1, b,\dist(a,b)-1, \horizon -(t+1))\} ) - \sw( \driverSet^{(t+1)} (\state_{t+1} )) - \cost_{a,b,t} \\
	 = &\Phi_{b, t + \dist(a,b)}^{(t+1)}(\state_{t+1})  - \cost_{a,b,t} .
\end{align*}

\medskip

\noindent{}\textit{Part (iii): $\Phi_{b, t + \dist(a,b)}^{(t+1)}(\state_{t+1}) \geq  \Phi_{b, t + \dist(a,b)}^{(t+1)}(\tilde{\state}_{t+1})$.} First, observe that the only possible difference between $\state_{t+1}$ and $\tilde{\state}_{t+1}$ is the state of driver $i$. 
Fixing the state of the rest of the riders as $\state_{-i, t+1}$, $\Phi_{b, t + \dist(a,b)}^{(t+1)}(\state_{t+1})$ is the welfare gain from adding an additional driver at $(b, t + \dist(a,b))$ where driver $i$ is at $\state_{i,t+1}$ (the state of driver $i$ if she followed the dispatch at time $t$), whereas $\Phi_{b, t + \dist(a,b)}^{(t+1)}(\tilde{\state}_{t+1})$ is the welfare gain from adding an additional driver at $(b, t + \dist(a,b))$ where driver $i$ is at $(b, t + \dist(a,b))$ (the state of driver $i$ that had deviated while the replan happens at time $t+1$).

When $\state_{i,t+1} = (b, t + \dist(a,b))$ (i.e. the driver's deviation resulted in the same future state at time $t+1$ as in the scenario that she didn't deviate, e.g. instead of picking up rider $j$ who travels to $\dest_j$, the driver relocates with an empty car to $\dest_j$ instead), the inequality trivially holds. When $\state_{i,t+1} = \phi$, i.e. when the driver is asked to exit (or not enter) at time $t$, then $\Phi_{b, t + \dist(a,b)}^{(t+1)}(\state_{t+1}) \geq  \Phi_{b, t + \dist(a,b)}^{(t+1)}(\tilde{\state}_{t+1})$ is implied by the fact that drivers are substitutes, i.e. the more drivers there are, the smaller the marginal welfare contribution of each driver. 
When $\state_{i,t+1} \neq (b, t + \dist(a,b))$ and when $\state_{i,t+1} \neq \phi$, intuitively, the marginal value of an available driver when there is another available driver at the same location is smaller than the marginal value of an available driver when the existing available driver at some other location, i.e. there is stronger substitution among drivers at the same locations, in comparison to that among drivers at different locations.

More formally, let $\xi^\ast \in \setZ^{|\driverSet|+|\loc|(\horizon+1)}$ be the vector of sources of driver flow given state $\state_{-i, t+1}$, s.t. for all $a' \in \loc$, for all $t' \in \timeSet$, let
\begin{align*}
	\xi_{a',t'}^\ast = & \sum_{i' \neq i}\one{\state_{i', t+1} = (1, a', t')} +\sum_{i' \neq i} \one{\state_{i', t+1} = (a'', a', t''),~t'' + \dist(a'',a') = t'}  \\
	& + \sum_{i' \neq i} \one{\state_{i', t+1} = (\origin_j, \dest_j, \tr_j, j),~\dest_j = a', ~\tr_j + \dist(\origin_j, \dest_j) = t'},
\end{align*}
and for or each driver $i' \neq i$, let $\xi_{\driverNode_{i'}}^\ast = \one{\te_i \geq t+1}$. %
Intuitively, $\xi_{a',t'}^\ast$ is the number of drivers in $\driverSet \backslash \{i\}$ who are in the platform and available at $(a',t')$, plus the number of drivers who are en-route relocating to $(a', t')$, plus the number of drivers who are driving a rider to $(a', t')$.

Let $\omega(\xi)$ be the objective of the flow LP \eqref{equ:flow_LP_simp} where the flow boundary condition is given by $\xi$, and $\chi_\node$ be the vector of all zeros but a single $1$ for the entry corresponding to node $\node$. 
If driver $i$ is dispatched to exit (or not enter) the platform at time $t$, i.e. $\state_{i, t+1} = \phi$, the desired property $\Phi_{b, t + \dist(a,b)}^{(t+1)}(\state_{t+1}) \geq  \Phi_{b, t + \dist(a,b)}^{(t+1)}(\tilde{\state}_{t+1})$ is equivalent to
\begin{align*}
	& \omega(\xi^\ast + \chi_{(b, t + \dist(a,b))}) -  \omega(\xi^\ast ) \geq \omega(\xi^\ast + 2 \chi_{(b, t + \dist(a,b))}) -  \omega(\xi^\ast +\chi_{b, t + \dist(a,b)} ). %
\end{align*}
This identity corresponds to the first local exchange property of $M^\natural$ concave functions (equation (4.5) of Theorem~4.1 in~\cite{murota2016discrete}), and that the objective function of MCF problems %
is $M^\natural$ concave (see Example~5 in Section 4.6 of~\cite{murota2016discrete}).\footnote{See~\cite{murota2016discrete} for a general introduction of $M^\natural$ concavity, and also Chapter~9 of~\cite{murota2003discrete} for the related properties of the objectives of network flow problems.} The objective of the flow problem there is defined as a function of the ``sink" nodes in the flow graph, however, the roles of sinks and sources are symmetric: our MCF problem can also be formulated as having a source node at the end of time, where edges go back in time, and the node corresponding to the entering location/time of each driver sinks at most one unit of flow.  

Finally, for the case where driver $i$ is not dispatched to exit the platform at time $t$, let $(a', t')$ be the location and time where driver $i$ will become available again if she followed the dispatch.

The identity that we need to prove $\Phi_{b, t + \dist(a,b)}^{(t+1)}(\state_{t+1}) \geq  \Phi_{b, t + \dist(a,b)}^{(t+1)}(\tilde{\state}_{t+1})$ can now be written as
\begin{align*}
	& \omega(\xi^\ast + \chi_{(a', t')} + \chi_{(b, t + \dist(a,b))}) -  \omega(\xi^\ast + \chi_{(a', t')} ) \geq \omega(\xi^\ast + 2 \chi_{(b, t + \dist(a,b))}) -  \omega(\xi^\ast +\chi_{(b, t + \dist(a,b))} ). %
\end{align*}
This corresponds to the third local exchange property of $M^\natural$ concave functions (equation (4.7) of Theorem~4.1 in~\cite{murota2016discrete}), which intuitively means that there is stronger substitution among drivers at the same location and time, in comparison to drivers that are at different locations and time. 
This completes the proof of the theorem. 
\end{proof}

\section{Additional Discussions and Examples} \label{appx:examples}

We provide in this section additional examples and discussions omitted from the body of the paper.

\subsection{LP Integrality and Existence of CE}  \label{appx:integrality}

We show via the following two examples that when either of the assumptions (S1) and (S2) is violated, the LP relaxation \eqref{equ:lp} of the ILP~\eqref{equ:ilp} may no longer be integral, and that welfare-optimal competitive equilibrium outcomes as defined in Definition~\ref{defn:CE} may not exist. 
We first examine the case where drivers may have different times of exiting the platform.

\begin{example}[Different driver exit times] \label{exmp:different_exit_times}
Consider the economy as shown in Figure~\ref{fig:exmp_diff_exit_times} with three locations $\loc = \{A,B,C \}$ and three time periods. The distances are symmetric and given by $\dist(A,A) = \dist(B,B) = \dist(C,C) = \dist(A,B) = \dist(B,C) = 1$, and $\dist(A,C) = 2$, and assume all trip costs and exit costs are zero. There are three drivers, entering and exiting at:
\vspace{-0.5em}
\begin{multicols}{2}
\begin{itemize}
	\setlength\itemsep{0.0em}
	\item $\re_1 = A$, $\te_1 = 0$, $\tl_1 = 3$,
	\item $\re_2 = B$, $\te_2 = 0$, $\tl_2 = 2$,
	\item $\re_3 = B$, $\te_3 = 1$, $\tl_3 = 3$,	
\end{itemize}
\end{multicols}
\vspace{-0.5em}
\noindent{}
and there are six riders with types:
\vspace{-0.5em}
\begin{multicols}{2}
\begin{itemize}
	\setlength\itemsep{0.0em}
	\item $\origin_1 = A$, $\dest_1 = C$, $\tr_1 = 0$, $\val_1 = 5$,
	\item $\origin_2 = A$, $\dest_2 = B$, $\tr_2 = 1$, $\val_2 = 7$,
	\item $\origin_3 = A$, $\dest_3 = B$, $\tr_3 = 1$, $\val_3 = 1$,
	\item $\origin_4 = B$, $\dest_4 = A$, $\tr_4 = 1$, $\val_4 = 2$,
	\item $\origin_5 = B$, $\dest_5 = A$, $\tr_5 = 1$, $\val_5 = 5$,
	\item $\origin_6 = B$, $\dest_6 = A$, $\tr_6 = 2$, $\val_6 = 4$.	
\end{itemize}
\end{multicols}
\vspace{-0.5em}

\newcommand{\nodeScaleIV}{0.8}
\newcommand{\hdistIV}{4.5}
\newcommand{\vdistIV}{1.8}	

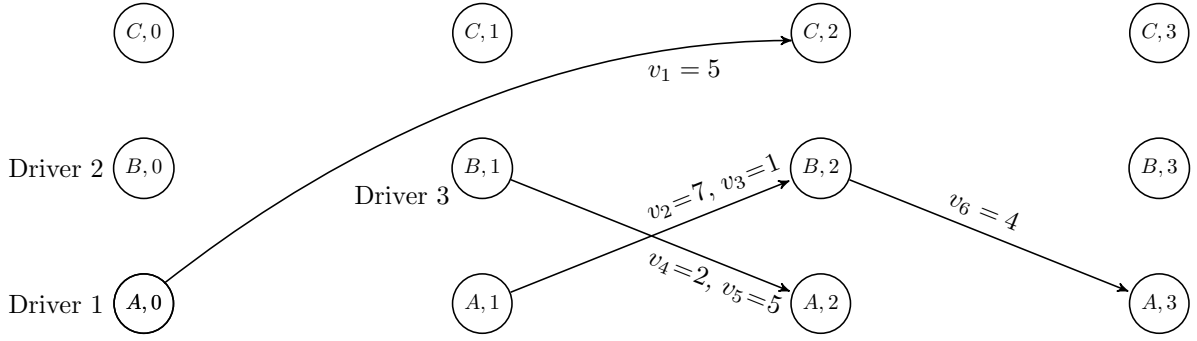
\begin{figure}[t!]
\centering
\begin{tikzpicture}[->,>=stealth',shorten >=1pt, auto, node distance=2cm,semithick][font = \small]
\tikzstyle{vertex} = [fill=white,draw=black,text=black,scale=0.9]

\node[state]         (A0) [scale = \nodeScaleIV] {$A,0$};
\node[state]         (B0) [above of=A0, node distance = \vdistIV cm, scale=\nodeScaleIV] {$B,0$};
\node[state]         (A0) [scale = \nodeScaleIV] {$A,0$};
\node[state]         (C0) [above of=B0, node distance = \vdistIV cm, scale=\nodeScaleIV] {$C,0$};

\node[state]         (A1) [right of=A0, node distance = \hdistIV cm, scale=\nodeScaleIV] {$A,1$};
\node[state]         (B1) [right of=B0, node distance = \hdistIV cm, scale=\nodeScaleIV] {$B,1$};
\node[state]         (C1) [right of=C0, node distance = \hdistIV cm, scale=\nodeScaleIV] {$C,1$};

\node[state]         (A2) [right of=A1, node distance = \hdistIV cm, scale=\nodeScaleIV] {$A,2$}; 
\node[state]         (B2) [right of=B1, node distance = \hdistIV cm, scale=\nodeScaleIV] {$B,2$};
\node[state]         (C2) [right of=C1, node distance = \hdistIV cm, scale=\nodeScaleIV] {$C,2$};

\node[state]         (A3) [right of=A2, node distance = \hdistIV cm, scale=\nodeScaleIV] {$A,3$}; 
\node[state]         (B3) [right of=B2, node distance = \hdistIV cm, scale=\nodeScaleIV] {$B,3$};
\node[state]         (C3) [right of=C2, node distance = \hdistIV cm, scale=\nodeScaleIV] {$C,3$};

\draw[->]  (0.27, 0.27) parabola[bend at end] (8.65, 3.5);
\node[text width=3cm] at (8.2, 3.1) {$\val_1=5$};

\path (A1) edge	node[pos=0.75, sloped, above] {$\val_2\shorteq7$, $\val_3 \shorteq 1$} (B2);
\path (B1) edge	node[pos=0.75, sloped, below] {$\val_4\shorteq2$, $\val_5 \shorteq 5$} (A2);

\path (B2) edge	node[pos=0.45, sloped, above] {$\val_6 = 4$} (A3);

\node[text width=3cm] at (-0.3, 0) {Driver 1};
\node[text width=3cm] at (-0.3, \vdistIV) {Driver 2};
\node[text width=3cm] at (-0.2 + \hdistIV, \vdistIV - 0.35) {Driver 3};
\end{tikzpicture}

\caption{The economy in Example~\ref{exmp:different_exit_times} with three locations $A$, $B$, $C$, three time periods, 6 riders, three drivers starting at $(A,0)$, $(B,0)$ and $(B,1)$, where driver $2$ exits the platform at time $2$. \label{fig:exmp_diff_exit_times}}
\end{figure}

\noindent{}In the unique optimal integral solution, Driver 1 takes the path $z_1^\ast = ((A,A,0),~(A,B,1),~(B,A,2))$ and picks up riders $2$ and $6$. Driver $2$ takes the path $z_2^\ast = ((B,B,0),~(B,A,1))$ and picks up rider $4$. Driver $3$ takes the path $z_3^\ast = ((B,A,1),~(A,A,2))$ and picks up rider $5$. The total social welfare is $\val_2 + \val_6 + \val_4 + \val_5 = 18$. 
The optimal solution of the LP, however, is not integral. Each driver $i$ takes each of their two paths $z_i$ and $z_i'$ with probably 0.5:
\begin{itemize}
	\setlength\itemsep{0.0em}
	\item $z_1 = ((A,C,0),~(C,C,2))$, $z_1' = ((A,A,0),~(A,B,1),~(B,A,2))$,
	\item $z_2 = ((B,B,0),~(B,A,1))$, $z_2' = ((B,A,0),~(A,B,1))$,
	\item $z_3 = ((B,A,1),~(A,A,2))$, $z_3' = ((B,B,1),~(B,A,2))$.
\end{itemize}
The riders $2$, $5$ and $6$ are picked up with probability $1$, whereas rider $1$ is picked up with probability 0.5. The total social welfare is $0.5v_1 + \val_2  + \val_5 + \val_6 = 18.5 > 18$. 
There is a unique solution to the dual LP~\eqref{equ:dual}, which implies anonymous trip prices of $\price_{A,C,0} = 5$ and $\price_{A,B,1} = \price_{B,A,1} = \price_{B,A,2} = 2.5$. These prices do not support the optimal integral solution, since rider $4$ is willing to pay only $\val_2 = 2$ but is picked up and charged $2.5$.

\medskip 

Moreover, we show that no anonymous origin-destination prices support the optimal integral dispatch in competitive equilibrium.\footnote{In general, the non-existence of CE does not imply that there do not exist dynamic ridesharing mechanisms that are SPIC, since a mechanism determining a CE plan is not necessary for the mechanism to be incentive compatible.}
First, rider $1$ with value $5$ is not picked up, therefore the price for the $(A,C,0)$ trip needs to be at least $\price_{A,C,0} \geq 5$. In order for driver $1$ to not regret not taking this trip, the prices for her trips need to be at least $\price_{A,B,1} + \price_{B,A,2} \geq 5$.

Since rider $4$ with value $2$ is picked up, the price for trip $(B,A,1)$ can be at most $2$. As a consequence, the price for the trip $(A,B,1)$ cannot exceed $2$ either, since otherwise, driver $2$ would have incentive to take the path $((B,A,0),~(A,B,1))$ instead. This implies that the price for the trip $(B,A,2)$ needs to be at least $3$.
Note that driver $3$ now prefers taking the path $((B,B,1),~(B,A,2))$ and get paid at least $3$, in comparison to the dispatched trip $((B,A,1),~(A,A,2))$ and gets paid at most $2$. This is a contradiction, and shows that no anonymous OD price supports the welfare-optimal outcome in competitive equilibrium. 
\qed
\end{example}

The reason integrality fails is that the ridesharing problem can no longer be reduced to an MCF problem in the way that we discuss in Appendix~\ref{appx:planning_to_mcf} without loss of generality. In the standard MCF problem, there is a single type of flow flowing through the network, and the optimal flow is guaranteed to be integral.
When drivers have different exiting times, if all units of flow are still treated as homogeneous, the resulting decomposed flow may not send the correct drivers to leave at the correct times. As an example, the optimal homogeneous flow with the same boundary condition in this example can be decomposed into the following three paths: $((A,C,0))$, $((B,A,0),(A,B,1),(B,A,2)), ((B,A,1),(A,A,2))$ with a total social welfare of $21$. However, it cannot be implemented since it is driver 2 who enters at $(B,0)$ and needs to exit at time $2$, but in this decomposition, the flow that corresponds to driver $1$ exits at time $2$. 

When drivers have different exiting times, the MCF problem has heterogeneous flows. 
Similarly, we can construct examples showing that the optimal solutions to the ILP and LP do not coincide if drivers have preference over which location to they end up with at the end of the planning horizon, unless all drivers start at the same time and location.

\bigskip

The following example examines the case where riders are patient and may be willing to wait.

\begin{example}[Patient riders] \label{exmp:patient_riders}

Consider the economy as illustrated in Figure~\ref{fig:exmp_patient_riders} with three locations $\loc= \{A,B,C\}$ and three time periods. The pairwise distances are symmetric and given by $\dist(A,A) = \dist(B,B) = \dist(C,C) = \dist(B,C) = 1$ whereas $\dist(A,B) = \dist(A,C) = 2$. Assume all trip costs and early exit costs are zero. There is a single driver entering at location $B$ at time $0$ who would stay until the end of the planning horizon. There are two riders. Rider $1$ is impatient, requesting a trip at time $0$ from $B$ to $A$, and is willing to pay $\val_1 = 9$. Rider $2$ is willing to pay $\val_2 = 5$ for a trip from $B$ to $C$ at time $0$, but is willing to wait for at most two time periods.

\newcommand{\nodeScaleV}{0.8}
\newcommand{\hdistV}{3}
\newcommand{\vdistV}{1.8}	

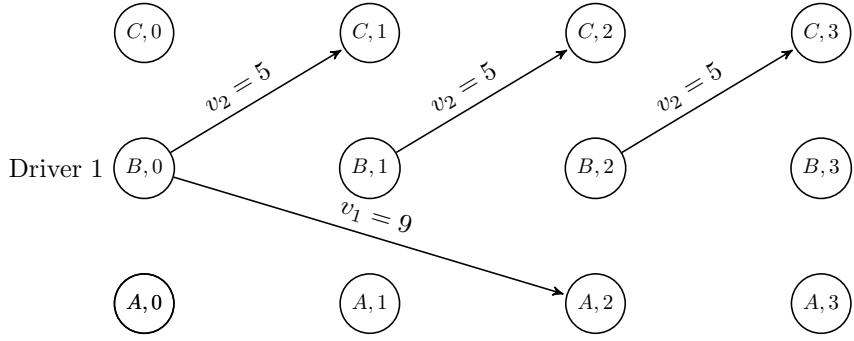
\begin{figure}[t!]
\centering
\begin{tikzpicture}[->,>=stealth',shorten >=1pt, auto, node distance=2cm,semithick][font = \small]
\tikzstyle{vertex} = [fill=white,draw=black,text=black,scale=0.9]

\node[state]         (A0) [scale = \nodeScaleV] {$A,0$};
\node[state]         (B0) [above of=A0, node distance = \vdistV cm, scale=\nodeScaleV] {$B,0$};
\node[state]         (A0) [scale = \nodeScaleV] {$A,0$};
\node[state]         (C0) [above of=B0, node distance = \vdistV cm, scale=\nodeScaleV] {$C,0$};

\node[state]         (A1) [right of=A0, node distance = \hdistV cm, scale=\nodeScaleV] {$A,1$};
\node[state]         (B1) [right of=B0, node distance = \hdistV cm, scale=\nodeScaleV] {$B,1$};
\node[state]         (C1) [right of=C0, node distance = \hdistV cm, scale=\nodeScaleV] {$C,1$};

\node[state]         (A2) [right of=A1, node distance = \hdistV cm, scale=\nodeScaleV] {$A,2$}; 
\node[state]         (B2) [right of=B1, node distance = \hdistV cm, scale=\nodeScaleV] {$B,2$};
\node[state]         (C2) [right of=C1, node distance = \hdistV cm, scale=\nodeScaleV] {$C,2$};

\node[state]         (A3) [right of=A2, node distance = \hdistV cm, scale=\nodeScaleV] {$A,3$}; 
\node[state]         (B3) [right of=B2, node distance = \hdistV cm, scale=\nodeScaleV] {$B,3$};
\node[state]         (C3) [right of=C2, node distance = \hdistV cm, scale=\nodeScaleV] {$C,3$};

\path (B0) edge	node[pos=0.5, sloped, above] {$\val_1=9$} (A2);

\path (B0) edge	node[pos=0.45, sloped, above] {$\val_2 = 5$} (C1);
\path (B1) edge	node[pos=0.45, sloped, above] {$\val_2 = 5$} (C2);
\path (B2) edge	node[pos=0.45, sloped, above] {$\val_2 = 5$} (C3);

\node[text width=3cm] at (-0.3, \vdistV) {Driver 1};

\end{tikzpicture}
\caption{The economy in Example~\ref{exmp:patient_riders} with three locations $A$, $B$, $C$, three time periods, 2 riders, and one driver entering the platform at $(B,0)$.   
\label{fig:exmp_patient_riders}}
\end{figure}

In the optimal integral solution, driver $1$ takes the path $((B,A,0),~(A,A,2))$, picks up rider $1$, and achieves total social welfare of $\val_1 = 9$. In the optimal solution to the LP~\eqref{equ:lp}, however, the driver takes the paths $((B,A,0),~(A,A,2))$ and $((B,C,0),~(C,B,1),~(B,C,2))$, each with probability 0.5. The path $((B,C,0),~(C,B,1),~(B,C,2))$ seemingly have a total value of $10$, therefore the objective of the LP would be $10\times 0.5 + 9 \times 0.5 = 9.5 > 9$. 
The optimal integral solution is not supported by anonymous OD prices in CE either--- since rider $2$ is not picked up, the prices for the $(BC)$ trips starting at times $0,~1$ and $2$ need to be at least $5$. Thus the total payment for the path $((B,C,0),~(C,B,1),~(B,C,2))$ is at least $10$, however, the driver's dispatched trip $(B,A,0)$ pays at most $\val_2 = 9$.
\qed
\end{example}

Similar to the case when drivers have different exit times, integrality fails with patient riders also because there is no direct way of reducing the problem to an integral MCF problem without loss.
In the MCF problem, each rider corresponds to a single edge in the flow graph with edge cost equal to the trip cost minus the rider's value. If the rider is patient, there may be multiple edges that  correspond to the same rider, and there is no easy way expressing the constraint that a rider cannot be picked up more than once without breaking the integrality of the MCF problem.

\subsection{Rider Incentives} \label{appx:rider_incentives}

The following example illustrates (i) the trip-prices and rider-utilities under all CE outcomes may not have lattice structure 
 and (ii) the rider-side VCG prices do not coincide with the prices in the driver-pessimal CE plan,
 and (iii) no welfare-optimal CE mechanism, including the STP mechanism, incentivizes riders to truthfully report their values.

\begin{example} %
\label{exmp:toy_econ_1_continued}

Consider the economy in Figure~\ref{fig:toy_econ_1_appx}, where all trip costs and early exit costs are assumed to be zero. Driver $1$ enters at location $A$ and time $0$, and stays until the end of the planning horizon.
Under the welfare-optimal dispatching, the driver takes the path $((A,A,0), (A,A,1))$ and picks up riders $1$ and $2$, achieving  social welfare $\val_1 + \val_2 = 11$. In the driver pessimal CE plan, the prices for the trips are be $\price_{A,B,0} = 8$, and $\price_{A,A,0} + \price_{A,A,1} = 8$, therefore $\price_{A,A,0} = 5$, $\price_{A,A,1} = 3$ and $\price_{A,A,0} = 2$ and $\price_{A,A,1} = 6$ would both support the driver-pessimal CE outcome.

\newcommand{\nodeScaleI}{0.8}
\newcommand{\hdistI}{3cm}
\newcommand{\vdistI}{2cm}	

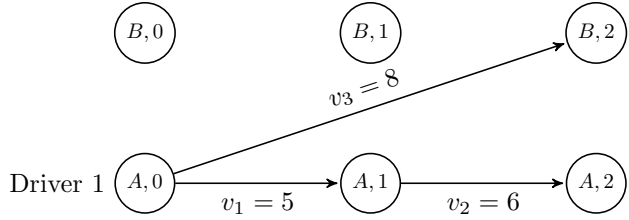
\begin{figure}[t!]
\centering
\begin{tikzpicture}[->,>=stealth',shorten >=1pt,auto,node distance=2cm,semithick][font = \small]
\tikzstyle{vertex}=[fill=white,draw=black,text=black,scale=0.9]

\node[state]         (A0) [scale = \nodeScaleI] {$A,0$};
\node[state]         (B0) [above of=A0, node distance = \vdistI, scale=\nodeScaleI] {$B,0$};
\node[state]         (A1) [right of=A0, node distance = \hdistI, scale=\nodeScaleI] {$A,1$};
\node[state]         (B1) [right of=B0, node distance = \hdistI, scale=\nodeScaleI] {$B,1$};
\node[state]         (A2) [right of=A1, node distance = \hdistI, scale=\nodeScaleI] {$A,2$}; 
\node[state]         (B2) [right of=B1, node distance = \hdistI, scale=\nodeScaleI] {$B,2$};

\path (A0) edge	node[pos=0.5, sloped, below] {$\val_1=5$} (A1);
\path (A1) edge	node[pos=0.5, sloped, below] {$\val_2=6$} (A2);
\path (A0) edge	node[pos=0.5, sloped, above] {$\val_3=8$} (B2);

\node[text width=3cm] at (-0.3, 0) {Driver 1};

\end{tikzpicture}
\caption{The economy in Example~\ref{exmp:toy_econ_1_continued}, with two locations $A$, $B$, two time periods and three riders. 
\label{fig:toy_econ_1_appx}}
\end{figure}

\noindent{}\textit{Lattice Structure: }
It is easy to check that the lowest prices for the trips $(A,A,0)$ and $(A,A,1)$ under all CE outcomes are be $2$ and $3$ respectively. However, setting $\price_{A,A,0} = 2$ and $\price_{A,A,1} = 3$ would not form a CE, since rider $3$ is willing to pay $8$, thus $\price_{A,B,0} \geq 8$ and this violates driver best-response. This implies that trip prices under all CE outcomes do not form a lattice. As a consequence, riders' utilities under all CE outcomes do not form a lattice either.

\medskip
\noindent{}\textit{Rider-side VCG Prices:}
Moreover, we can check that $\price_{A,A,0} = 2$ is what rider $1$ should be charged under the rider-side VCG payment rule: if rider $1$ is not present, rider $3$ gets picked up, thus the total welfare for the rest of the economy increases from $\val_2 = 6$ to $\val_3 = 8$. Similarly, rider $2$'s VCG payment would be $\price_{A,A,1} = 3$. This shows that the VCG payment on the rider side may not support a welfare-optimal outcome in CE.

\medskip
\noindent{}\textit{Rider-side IC:}
This example also implies that the STP mechanism is not incentive compatible on the rider's side. Under any driver-pessimal outcome, we know that one of the riders $1$ and $2$ would be charged a payment that is higher than their VCG price. A simple analysis would show that if the rider who is charged higher than the VCG price reports the VCG price as her value, then her payment under the STP mechanism would be exactly her VCG price. This is a useful deviation. 
More generally, this shows that no welfare-optimal CE outcome would be incentive compatible on the rider's side, since $\price_{A,A,0} + \price_{A,A,1} \geq 8$ under any CE outcome.  
\qed
\end{example}

It is not a coincidence that the lowest possible prices for each rider under all CE outcomes is equal to their rider-side VCG prices. The following theorem shows that the minimum CE prices and the rider-side VCG prices always coincide.

\begin{theorem}[Minimum CE = rider-side VCG] \label{thm:minCE_riderVCG}
For any rider that is picked up in some welfare-optimal dispatching, her rider-side VCG price is equal to the minimum price for her trip among all CE outcomes.
\end{theorem}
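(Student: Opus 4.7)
The plan is to reduce the problem to the LP-duality machinery developed for Lemmas~\ref{thm:optimal_plans}, \ref{thm:opt_pes_plans}, and \ref{lem:dual_payment_correspondence}, and to mirror the driver-side arguments on the rider side. First, fix a rider $j$ that is picked up in some welfare-optimal dispatching $(x,\actpath)$. Lemma~\ref{thm:optimal_plans} identifies CE plans with optimal primal-dual pairs of \eqref{equ:lp}--\eqref{equ:dual}, and because some optimal primal has $x_j=1$, complementary slackness \cs{1} forces $u_j=\val_j-\price_{\origin_j,\dest_j,\tr_j}$ in \emph{every} optimal dual $(\price,\pi,u)$. Thus
\begin{align*}
	\min_{\text{CE}} \price_{\origin_j,\dest_j,\tr_j} \;=\; \val_j - \max_{\text{CE}} u_j.
\end{align*}
The rider-side VCG payment is by definition $\price_j^{\text{VCG}}=\val_j-[\sw(\driverSet,\riders)-\sw(\driverSet,\riders\setminus\{j\})]$, so the theorem reduces to showing that $\max_{\text{CE}} u_j$ equals the welfare loss $\sw(\driverSet,\riders)-\sw(\driverSet,\riders\setminus\{j\})$ from removing rider $j$ from the economy.

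Next, I would move from \eqref{equ:dual} to the flow-LP dual \eqref{equ:flow_dual} via Lemma~\ref{lem:dual_payment_correspondence}, which identifies each $u_j$ with the dual variable $\mu_j$ attached to the capacity constraint $\flow(\riderEdge_j)\le 1$. Mirroring Step~1 of the proof of Lemma~\ref{thm:opt_pes_plans}, I would check that for any two optimal solutions $(\varphi,\mu)$ and $(\varphi',\mu')$ of \eqref{equ:flow_dual}, taking coordinatewise joins and meets in \emph{both} the potentials and the rider duals again yields optimal duals: feasibility in the $\mu$ coordinate is immediate from the formulas $\barmu_j=\max\{\val_j-\barp_{\origin_j,\dest_j,\tr_j},0\}$ and $\undmu_j=\max\{\val_j-\undp_{\origin_j,\dest_j,\tr_j},0\}$ that already appear there, and optimality follows by rechecking \fcs{1}--\fcs{6} against an integer optimal flow. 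Consequently the attainable values of $\mu_j$ over all optimal duals form an interval in $\setR$.

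The final step is a sensitivity argument analogous to Step~2.2 of the proof of Lemma~\ref{thm:opt_pes_plans}. View the optimal flow-LP objective $\omega^{\star}(\kappa)$ as a function of the rider-edge capacity vector $\kappa$, with our instance corresponding to $\kappa_j=1$ for every $j\in\riders$. The maximization form \eqref{equ:flow_LP_simp} is concave in $\kappa$, and at any optimal dual $(\varphi,\mu)$ the coordinate $\mu_j\ge 0$ is a supergradient of $\omega^{\star}$ in the $j$-th direction. Lowering $\kappa_j$ from $1$ to $0$ is exactly equivalent to deleting rider $j$ from the economy, so the left derivative of $\omega^{\star}$ in the $j$-th direction equals $\sw(\driverSet,\riders)-\sw(\driverSet,\riders\setminus\{j\})$; since the left derivative of a concave function is the largest supergradient, this value coincides with $\max_{\text{opt dual}}\mu_j=\max_{\text{CE}}u_j$. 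Chaining the equalities then gives $\min_{\text{CE}}\price_{\origin_j,\dest_j,\tr_j}=\price_j^{\text{VCG}}$. The main obstacle will be keeping the convexity/concavity signs consistent between the minimization form \eqref{equ:flow_LP} used in Lemma~\ref{thm:opt_pes_plans} and the maximization form \eqref{equ:flow_LP_simp}, and confirming that rider-edge capacities play the symmetric role to driver boundary conditions in the subgradient argument; once that correspondence is set up, both the lattice step and the sensitivity step essentially duplicate the driver-side arguments already established.
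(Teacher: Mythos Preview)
Your approach is genuinely different from the paper's and is sound in outline, but there is a real gap at the heart of the sensitivity step.

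\textbf{The gap.} You assert that the left derivative of $\omega^{\star}(\kappa_j)$ at $\kappa_j=1$ equals the finite difference $\omega^{\star}(1)-\omega^{\star}(0)=\sw(\driverSet,\riders)-\sw(\driverSet,\riders\setminus\{j\})$. For a general concave function $g$, one only has $g'_{-}(1)\le g(1)-g(0)$ (the supergradient inequality at $1$ applied to the point $0$), with equality if and only if $g$ is affine on $[0,1]$. Your supergradient argument therefore proves only $\max_{\text{CE}}u_j\le \sw(\driverSet,\riders)-\sw(\driverSet,\riders\setminus\{j\})$, i.e.\ $\min_{\text{CE}}\price_{\origin_j,\dest_j,\tr_j}\ge \price_j^{\text{VCG}}$. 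The achievability direction is exactly where the affineness is needed, and you do not justify it. It \emph{is} true here because of MCF integrality: starting from an integral optimal flow with $f^{\star}(\riderEdge_j)=1$, every edge on the cheapest residual cycle through the backward copy of $\riderEdge_j$ has integer residual capacity $\ge 1$, so a full unit can be rerouted at constant marginal cost, giving affineness on $[0,1]$. But this argument (or an equivalent one) must be supplied; it is not a consequence of concavity alone, and it is the crux of the direction you are trying to prove.

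\textbf{A secondary issue.} Your lattice step is unnecessary (the optimal dual set is a polytope, so its projection onto the $\mu_j$ coordinate is automatically an interval) and is misdescribed: in the proof of Lemma~\ref{thm:opt_pes_plans} the quantities $\barmu_j,\undmu_j$ are \emph{defined} from the new potentials via $\max\{\val_j-\barp_{\origin_j,\dest_j,\tr_j},0\}$, not as coordinatewise max/min of the original $\mu_j,\mu_j'$. Indeed, the paper remarks right after Lemma~\ref{thm:opt_pes_plans} that rider utilities do \emph{not} form a lattice, so ``coordinatewise join/meet in both potentials and rider duals'' is the wrong framing.

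\textbf{Comparison with the paper.} The paper's Part~(i) (every CE price is at least VCG) also uses a subgradient bound, so that direction is close to yours. For Part~(ii) (achievability), the paper avoids the affineness issue entirely by a constructive detour: it replaces $\val_j$ by $\price_j^{\text{VCG}}$, observes that an optimal plan in the modified economy still picks up $j$, takes any CE prices $\price'$ there (so $\price'_{\origin_j,\dest_j,\tr_j}\le \price_j^{\text{VCG}}$), and checks directly that $(x',y',\price')$ is also a CE in the original economy. Your LP-sensitivity route is more uniform and closer in spirit to the driver-side analysis of Lemma~\ref{thm:opt_pes_plans}; the paper's route is more elementary and sidesteps the parametric-MCF fact you would otherwise need to establish.
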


\begin{proof} 
For simplicity of notation, assume that driver $j \in \riders$ requests the trip $(a,b,t)$, has value $\val_j$, and is picked up under some welfare-optimal dispatching. We are going to prove:
\begin{enumerate}[(i)]
	\item the price $\price_{a,b,t}$ under any CE outcome is at least the rider-side VCG payment for rider $j$, and
	\item there exists an CE outcome where rider $j$'s trip price is at most her rider-side VCG payment.
\end{enumerate}
Combining (i) and (ii), we know that the rider-side VCG payment has to be the lowest CE price for the trip among all CE outcomes.

\medskip

Let $\tilde{\sw}(\driverSet, \riders)$ be the optimal welfare achieved by the set of drivers $\driverSet$ and the set of riders $\riders$, i.e. the optimal objective of \eqref{equ:ilp}. Moreover, we denote $\tilde{\sw}(\driverSet \cup \{(a,b,t)\}, \riders)$ as the optimal objective of \eqref{equ:ilp} if the trip capacity constraint \eqref{equ:ilp_trip_capacity} for this specific trip is relaxed by $1$, i.e. where there is an additional driver that is only able to complete an $(a,b,t)$ trip.

Similarly, denote $\sw(\driverSet, \riders)$ as the optimal objective of \eqref{equ:lp} and $\sw(\driverSet \cup \{(a,b,t)\}, \riders)$ as the optimal objective of \eqref{equ:lp} with an extra $(a,b,t)$ trip capacity. From the integrality of the LP \eqref{equ:lp} under (S1) and (S2), we know that $\tilde{\sw}(\driverSet, \riders) = \sw(\driverSet, \riders)$, however, we only know $\tilde{\sw}(\driverSet \cup \{(a,b,t)\}, \riders) \leq \sw(\driverSet \cup \{(a,b,t)\}, \riders)$ since with the additional capacity of $1$ for the $(a,b,t)$ trip, it is not obvious whether the LP relaxation would still integral.

\medskip

\noindent{}\textit{Part (i):} To prove (i), first observe that with the same argument on subgradients as in the proof of Lemma~\ref{thm:opt_pes_plans}, we can show that under any CE outcome, the price $\price_{a,b,t}$ as the subgradient w.r.t. the RHS of capacity constraint \eqref{equ:lp_trip_capacity_constraint} in the LP \eqref{equ:lp}, is lower bounded by the welfare gain from relaxing the capacity constraint by 1, i.e. $\price_{a,b,t} \geq \sw(\driverSet \cup \{(a,b,t)\}, \riders) - \sw(\driverSet, \riders)$. This implies that $\price_{a,b,t} \geq \tilde{\sw}(\driverSet \cup \{(a,b,t)\}, \riders) - \tilde{\sw}(\driverSet, \riders)$, i.e. any CE price must be at least the welfare contribution of an additional $(a,b,t)$ trip to the original economy at no cost.

What is left to prove is that $\tilde{\sw}(\driverSet \cup \{(a,b,t)\}, \riders) - \tilde{\sw}(\driverSet, \riders) \geq \price_{a,b,t}^{vcg}$, where $\price_{a,b,t}^{vcg} = \tilde{\sw}(\driverSet, \riders \backslash\{j\}) - (\tilde{\sw}(\driverSet, \riders) - \val_j)$, i.e. the optimal welfare in the economy without rider $j$ minus the welfare of the rest of the riders in the economy with rider $j$. This holds since:
\begin{align*}
	& \tilde{\sw}(\driverSet \cup \{(a,b,t)\}, \riders) - \tilde{\sw}(\driverSet, \riders) -  \price_{a,b,t}^{vcg} \\
	= & \tilde{\sw}(\driverSet \cup \{(a,b,t)\}, \riders) - \tilde{\sw}(\driverSet, \riders)  - (\tilde{\sw}(\driverSet, \riders \backslash\{j\}) - (\tilde{\sw}(\driverSet, \riders) - \val_j)) \\
	=& \tilde{\sw}(\driverSet \cup \{(a,b,t)\}, \riders) - (\tilde{\sw}(\driverSet, \riders \backslash\{j\}) + \val_j)\\
	 \geq & 0.
\end{align*}
The last inequality holds because $\tilde{\sw}(\driverSet \cup \{(a,b,t)\}, \riders)$, the optimal welfare from adding both a trip $(a,b,t)$ (at no cost) and a rider $j$ to the economy $(\driverSet, \riders \backslash \{j\})$, is weakly higher than assigning the trip $(a,b,t)$ to rider $j$ and keeping the plan for the rest of the economy unchanged.

This completes the proof of part (i), that any CE price is weakly higher than the VCG payment. 

\medskip

\noindent{}\textit{Part (ii):} Given $\driverSet$ and $\riders$, we construct an alternative economy $\econ' = (\driverSet, \riders')$ where $\riders'$ and $\riders$ coincide, except for the value of rider $j$: instead of having value $\val_j$, we change her value to her VCG payment in the original economy, i.e. 
\begin{align*}
	\val_j' = \price_{a,b,t}^{vcg}  =  \tilde{\sw}(\driverSet, \riders \backslash\{j\}) - (\tilde{\sw}(\driverSet, \riders) - \val_j). %
\end{align*}

Now consider the optimal dispatching in the economy $\econ'$. If rider $j'$ is not picked up, the optimal welfare is equal to $\tilde{\sw}(\driverSet, \riders \backslash\{j\})$, the highest welfare achievable for the rest of the economy. If rider $j'$ is picked up, the highest achievable welfare for the rest of the economy is equal to $(\tilde{\sw}(\driverSet, \riders) - \val_j)$, therefore the total welfare is $\val_j' + (\tilde{\sw}(\driverSet, \riders) - \val_j) = \tilde{\sw}(\driverSet, \riders \backslash\{j\})$. This implies that in at least one of the optimal plans in economy $\econ'$, rider $j'$ is picked up. 
Let $(x', y')$ be an optimal dispatching in economy $\econ'$ where rider $j'$ is picked up, and let $\price'$ be any CE prices. First, $\price_{a,b,t}' \leq \val_j' = \price_{a,b,t}^{vcg}$ holds, since the outcome forms a CE and rider $j'$ is picked up and must have non-negative utility.

We also claim that the plan with anonymous prices $(x',y', \price')$ also forms a CE in the original economy. 
Since $(x', y', \price')$ is a CE in economy $\econ'$, we know that for drivers, the dispatched paths under $y'$ gives them the highest total payment given prices $\price'$. We also know that trips with excess supply have zero prices. 
For any rider other than $j$, her values in $\econ'$ and $\econ$ are the same thus rider best-response holds. For rider $j$, her value in $\econ$ is $\val_j \geq \price_{a,b,t}^{vcg}  =  \val_j' \geq \price_{a,b,t}'$ thus she weakly prefers  getting picked up $x_j' = 1$ and is also best-responding.
This shows that there exists a CE outcome in $\econ$ where the price for the $(a,b,t)$ trip is at most $\val_j'$, rider $j$'s VCG payment. This completes the proof of part (ii), and also the theorem.
\end{proof}

Finally, we show via the following example that a mechanism that always computes an welfare-optimal dispatching together with rider-side VCG prices (at the beginning of the planning horizon, and also after any driver deviation), is not SPIC for the drivers. This is not implied by Theorem~\ref{thm:minCE_riderVCG} since a mechanism's plan  forming a CE is not necessary for a mechanism being SPIC for drivers.

\begin{example} \label{exmp:spic_dsic}

Consider the economy in Figure~\ref{fig:exmp_dsic_spic}, with $\horizon = 2$ and two locations $\loc = \{A, B \}$ with unit distances: $\dist(a,b) = 1$, $\forall a,b \in \loc$. Assume all trip costs and exit costs are zero. 
There is one driver entering at time $\te_1 = 0$ at location $\re_1 = A$ and leaves at time $\tl_1 = 2$. There are four riders: 
\vspace{-0.5em}
\begin{multicols}{2}
\begin{itemize}
	\setlength\itemsep{0.0em}
	\item Rider 1: $\origin_1 = A$, $\dest_1 = A$, $\tr_1 = 0$, $\val_1=5$, 
	\item Rider 2: $\origin_2 = A$, $\dest_2 = A$, $\tr_2 = 1$, $\val_2=6$,
	\item Rider 3: $\origin_3 = B$, $\dest_3 = B$, $\tr_3 = 1$, $\val_3 = 8$,
	\item Rider 4: $\origin_4 = B$, $\dest_4 = B$, $\tr_4 = 1$, $\val_4 = 8$.
\end{itemize}
\end{multicols}
\vspace{-0.5em}

\newcommand{\nodeScaleI}{0.8}
\newcommand{\hdistI}{4cm}
\newcommand{\vdistI}{1.5cm}	

\begin{figure}[t!]
\centering
\begin{tikzpicture}[->,>=stealth',shorten >=1pt,auto,node distance=2cm,semithick][font = \small]
\tikzstyle{vertex}=[fill=white,draw=black,text=black,scale=0.9]

\node[state]         (A0) [scale = \nodeScaleI] {$A,0$};
\node[state]         (B0) [above of=A0, node distance = \vdistI, scale=\nodeScaleI] {$B,0$};
\node[state]         (A1) [right of=A0, node distance = \hdistI, scale=\nodeScaleI] {$A,1$};
\node[state]         (B1) [right of=B0, node distance = \hdistI, scale=\nodeScaleI] {$B,1$};
\node[state]         (A2) [right of=A1, node distance = \hdistI, scale=\nodeScaleI] {$A,2$}; 
\node[state]         (B2) [right of=B1, node distance = \hdistI, scale=\nodeScaleI] {$B,2$};

\path (A0) edge	node[pos=0.5, sloped, below] {$\val_1=5$} (A1);
\path (A1) edge	node[pos=0.5, sloped, below] {$\val_2=6$} (A2);

\path (B1) edge	node[pos=0.5, sloped, above] {$\val_3=8$, $\val_4 = 8$} (B2);

\node[text width=3cm] at (-0.3, 0) {Driver 1};

\end{tikzpicture}
\caption{The economy in Example~\ref{exmp:spic_dsic}, with two locations $A$, $B$, two time periods and four riders. 
\label{fig:exmp_dsic_spic}}
\end{figure}
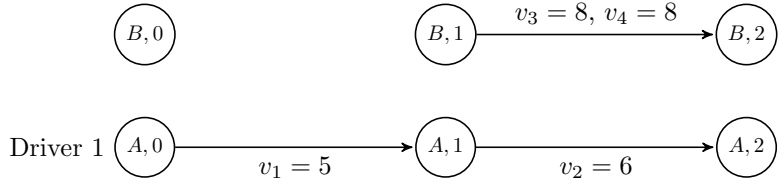

The optimal plan computed at time $0$ has driver $1$ taking the path $((A,A,0),~(A,A,1))$ and picking up riders $1$ and $2$. The rider-side VCG prices for riders $1$ and $2$ would be $2$ and $3$ respectively, thus the driver's total payment, if she follows the dispatches at all times, would be $5$.
Now consider the scenario where the driver relocates to location $B$ at time $0$ instead. When time $1$ comes, the updated plan would dispatch driver $1$ to pick up one of riders $3$ or $4$, and the updated VCG payment for this trip would be $8$. This is a useful deviation, thus the mechanism is not SPIC. \qed
\end{example}

\subsection{Truthful Reporting of Driver Entrance} \label{appx:driver_entrance}

Throughout the paper, we assumed a complete information model, where the mechanism knows about the entering location and time for all the drivers. In this section, we discuss the scenario where the location and time where a driver first becomes available to pick up are drivers' private information, and the mechanism needs to ask the drivers to report their entrance information. %
Here, we still assume that all drivers stay until at least the end of the planning horizon.

\begin{theorem} Under the STP mechanism, for driver $i$ who is available to pick up at location $\re_i$ starting at time $\te_i$, it is not useful for her to report some entrance location and time $(\hat{\te_i}, \hat{\re_i}) \in \loc \times \timeSet$
where $\hat{\te_i} \geq \te_i + \dist(\re_i, \hat{\re_i})$, and then enter the platform at $(\hat{\re_i}, \hat{\te_i})$. 
\end{theorem}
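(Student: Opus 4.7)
The plan is to bound the maximum utility driver $i$ can obtain from the misreport and compare it against her utility from truthful reporting. Let $c_{\mathrm{rel}}$ denote the total cost that driver $i$ incurs to physically get from $(\re_i, \te_i)$ to $(\hat{\re_i}, \hat{\te_i})$, namely the trip cost $\cost_{\re_i, \hat{\re_i}, \te_i}$ plus the cost of staying at $\hat{\re_i}$ during the $\hat{\te_i} - \te_i - \dist(\re_i, \hat{\re_i}) \geq 0$ waiting periods. Let $\econ$ be the true economy and let $\hat{\econ}$ be the economy the mechanism operates on under the misreport, which differs from $\econ$ only in that driver $i$'s entry is replaced by $(\hat{\re_i}, \hat{\te_i})$.

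First I would apply Step~1 of the proof of Theorem~\ref{thm:SPE} inside $\hat{\econ}$, together with the SPIC guarantee of STP on $\hat{\econ}$, to conclude that once driver $i$ arrives at $(\hat{\re_i}, \hat{\te_i})$ her continuation utility is at most $\hat{\Phi}_{\hat{\re_i}, \hat{\te_i}}^{\hat{\econ}}$, the welfare gain from replicating a driver at $(\hat{\re_i}, \hat{\te_i})$ in $\hat{\econ}$. Since driver $i$ can always refuse to relocate (obtaining zero), her maximum utility from the misreport is $\max\{0,\, \hat{\Phi}_{\hat{\re_i}, \hat{\te_i}}^{\hat{\econ}} - c_{\mathrm{rel}} \}$. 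Truthful reporting yields $\Phi_{\re_i, \te_i}^{\econ} \geq 0$, so the theorem reduces to showing
\begin{align*}
    \Phi_{\re_i, \te_i}^{\econ} \; \geq \; \hat{\Phi}_{\hat{\re_i}, \hat{\te_i}}^{\hat{\econ}} - c_{\mathrm{rel}}.
\end{align*}

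My plan is then to decompose this through a quantity inside the true economy,
\begin{align*}
    \Phi_{\re_i, \te_i}^{\econ} \; \geq \; \Phi_{\hat{\re_i}, \hat{\te_i}}^{\econ} - c_{\mathrm{rel}} \; \geq \; \hat{\Phi}_{\hat{\re_i}, \hat{\te_i}}^{\hat{\econ}} - c_{\mathrm{rel}}.
\end{align*}
The left inequality will follow by iterating part~(ii) of the proof of Theorem~\ref{thm:SPE} along the single-period steps corresponding to relocating from $\re_i$ to $\hat{\re_i}$ at time $\te_i$ and then staying at $\hat{\re_i}$ through time $\hat{\te_i}$; this chain is well defined because the hypothesis $\hat{\te_i} \geq \te_i + \dist(\re_i, \hat{\re_i})$ makes the path feasible.

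The hard part will be the right inequality $\Phi_{\hat{\re_i}, \hat{\te_i}}^{\econ} \geq \hat{\Phi}_{\hat{\re_i}, \hat{\te_i}}^{\hat{\econ}}$, which compares replica welfare gains across two distinct economies. Writing $\sw(\xi)$ for the optimal welfare on the location-time graph with flow-source vector $\xi$, and letting $\chi_1 = \chi_{(\re_i, \te_i)}$ and $\chi_2 = \chi_{(\hat{\re_i}, \hat{\te_i})}$, we have $\xi_{\hat{\econ}} = \xi_{\econ} - \chi_1 + \chi_2$, and the inequality rearranges to
\begin{align*}
    \sw(\xi_{\econ} - \chi_1 + \chi_2) + \sw(\xi_{\econ} + \chi_2) \; \geq \; \sw(\xi_{\econ} - \chi_1 + 2\chi_2) + \sw(\xi_{\econ}).
\end{align*}
I would derive this by applying the local exchange property of $M^\natural$-concave functions (recall that $\sw$ is $M^\natural$-concave in $\xi$, as invoked in the proof of Theorem~\ref{thm:SPE}) to $x = \xi_{\econ} - \chi_1 + 2\chi_2$, $y = \xi_{\econ}$, and index $u = (\hat{\re_i}, \hat{\te_i}) \in \mathrm{supp}^{+}(x-y)$. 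A direct check shows that both possible conclusions of the exchange property---the unilateral one and the complementary exchange with $v = (\re_i, \te_i) \in \mathrm{supp}^{-}(x-y)$---yield exactly the displayed inequality, completing the proof.
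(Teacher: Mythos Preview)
Your proposal is correct and is essentially the paper's own proof, with the two key steps taken in the opposite order. Writing $\xi^\ast$ for the flow boundary without driver~$i$, $\chi_1=\chi_{(\re_i,\te_i)}$, $\chi_2=\chi_{(\hat{\re}_i,\hat{\te}_i)}$, the paper shows
\[
\sw(\xi^\ast+2\chi_1)-\sw(\xi^\ast+\chi_1)\;\ge\;\sw(\xi^\ast+2\chi_2)-\sw(\xi^\ast+\chi_2)-g
\]
by first applying the $M^\natural$ local exchange property (Murota, Eq.~(4.7)) to get $\sw(\xi^\ast+2\chi_2)-\sw(\xi^\ast+\chi_2)\le\sw(\xi^\ast+\chi_1+\chi_2)-\sw(\xi^\ast+\chi_1)$, and then using the direct relocation bound $\sw(\xi^\ast+2\chi_1)\ge\sw(\xi^\ast+\chi_1+\chi_2)-g$. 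Your Step~B is exactly the first of these, and your Step~A is exactly the second; the intermediate quantity $\Phi^{\econ}_{\hat{\re}_i,\hat{\te}_i}=\sw(\xi^\ast+\chi_1+\chi_2)-\sw(\xi^\ast+\chi_1)$ that you route through is the same one the paper implicitly lands on after its first inequality.

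One small point: for Step~A you do not need to ``iterate part~(ii) of Theorem~\ref{thm:SPE}'' through single-period states. Part~(ii) compares $\Phi^{(t)}$ at different time snapshots and would require chaining with part~(i), which goes the wrong way. The one-line argument the paper uses---an extra driver added at $(\re_i,\te_i)$ in the time-0 economy can always choose to relocate to $(\hat{\re}_i,\hat{\te}_i)$ at cost $c_{\mathrm{rel}}$, hence $\sw(\xi^\ast+2\chi_1)\ge\sw(\xi^\ast+\chi_1+\chi_2)-c_{\mathrm{rel}}$---already gives your left inequality directly, with no iteration.
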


\begin{proof}

First, observe that for driver $i$ whose true entering location and time is $(\re_i, \te_i)$, the driver is only able to enter at $(\re_i, \te_i)$, or at some $(\hat{\re_i}, \hat{\te_i}) \in \loc \times \timeSet$ where $\hat{\te_i} \geq \te_i + \dist(\re_i, \hat{\re_i})$. Assume that drivers all follow the SPE once they entered the platform, and always accepts the dispatches of the mechanisms. If driver $i$ reports truthfully, her total payment would be $\Phi_{\driverNode_i} = \max\{ \Phi_{\re_i, \te_i}, ~0 \}$, the welfare gain of the economy from replicating this driver.

Following the same notation as in the proof of Theorem~\ref{thm:SPE}, we use $\xi^\ast$ to denote the boundary condition (given the initial state) for the flow problem of the economy except for driver $i$, i.e. $\xi^\ast_{\driverNode_{i'}} = 1$ for all $i'\neq i$, and $\xi_{\node}^\ast = 0$ for all other node $ \node \in \nodeSet$. 
Let $\omega(\cdot)$ be the optimal objective of the corresponding flow problem, $\Phi_{\re_i, \te_i}$ can be written as:
\begin{align*}
	\Phi_{\re_i, \te_i} = \omega(\xi^\ast + 2\chi_{(\re_i, \te_i)}) - \sw(\xi^\ast + \chi_{(\re_i, \te_i)}).
\end{align*}
Here, $\chi_{(\re_i, \te_i)}$ is a $|\loc|( \horizon + 1) + |\driverSet|$ by $1$ vector with all zero entries, except a single $1$ at the $(\re_i, \te_i)$ entry. 
If the driver reports $(\hat{\re_i}, \hat{\te_i})$ as her entering location and time, and actually enters at $(\hat{\re_i}, \hat{\te_i})$, her equilibrium payoff for the rest of the planning horizon can be written as
\begin{align*}
	\omega(\xi^\ast + 2 \chi_{(\hat{\re_i}, \hat{\te_i})}) - \sw(\xi^\ast + \chi_{(\hat{\re_i}, \hat{\te_i})} ).
\end{align*}
Let $g$ be the lowest cost that that the driver has to incur, while moving from $(\re_i, \te_i)$ to $(\hat{\re_i}, \hat{\te_i})$. 
By reporting and entering at $(\hat{\re_i}, \hat{\te_i})$, the agent's total utility is at most $\omega(\xi^\ast + 2 \chi_{(\hat{\re_i}, \hat{\te_i})}) - \sw(\xi^\ast + \chi_{(\hat{\re_i}, \hat{\te_i})} ) - g$. 
We show that this is not a useful deviation, since
\begin{align*}
	& \sw(\xi + 2\chi_{\re_i, \te_i}) - \sw(\xi + \chi_{\re_i, \te_i}) - (\sw(\xi + 2\chi_{\hat{\re_i}, \hat{\te_i}}) - \sw(\xi^\ast + \chi_{(\hat{\re_i}, \hat{\te_i})} ) - g ) \\
	\geq & \sw(\xi + 2\chi_{\re_i, \te_i}) - \sw(\xi + \chi_{\re_i, \te_i}) - (\sw(\xi + \chi_{\re_i, \te_i} + \chi_{\hat{\re_i}, \hat{\te_i}}) - \sw(\xi + \chi_{\re_i, \te_i}) - g)  \\
	=  & \sw(\xi + 2\chi_{\re_i, \te_i})  - \sw(\xi + \chi_{\re_i, \te_i} + \chi_{\hat{\re_i}, \hat{\te_i}}) - g \\ 
	\geq & 0.
\end{align*}
The first inequality holds due to the local exchange property of the $M^\natural$ concave functions (Equation~(4.7) in~\cite{murota2016discrete}), and the last inequality holds since the highest achievable welfare achievable from two additional drivers at $(\re_i, \te_i)$ is (weakly) higher than the scenario where one of these drivers has to move to $(\hat{\re_i}, \hat{\te_i})$ (at the lowest possible cost $g$). %
\end{proof}

\bigskip

This result on the truthfulness of driver entrance reports, however, considers only the scenario that the driver actually enters at the location and time as she reported. We may also consider a mechanism that takes the drivers' reports of entering location and time, plans accordingly at the beginning of the planning horizon, but replans if any driver's entrance action turns out to be different from expected/reported, without penalizing any driver that had deviated. 
The following example shows that when allowing arbitrary driver entrance regardless of their report, the STP mechanism does not incentivize the drivers to truthfully report their entering location and time. %

\begin{example} \label{exmp:driver_entrance}

Consider the economy as shown in Figure~\ref{fig:exmp_driver_entrance}. The planning horizon is $\horizon = 3$ and there are two locations $\loc = \{A, B\}$ with unit distances $\dist(a,b) = 1$ for all $a,b \in \loc$.  Assume that all trip costs and early exiting costs are zero.
There is one driver entering at time $\te_1 = 0$ at location $\re_1 = A$ and leaves at time $\tl_1 = 3$. There is another driver, whose true entering time and location is $\te_2 = 2$ and $\re_2 = B$.
There are four riders with type: 
\vspace{-0.5em}
\begin{multicols}{2}
\begin{itemize}
	\setlength\itemsep{0.0em}
	\item Rider 1: $\origin_1 = B$, $\dest_1 = B$, $\tr_1 = 1$, $\val_1 = 10$, 
	\item Rider 2: $\origin_2 = A$, $\dest_2 = A$, $\tr_2 = 1$, $\val_2 = 8$,
	\item Rider 3: $\origin_3 = B$, $\dest_3 = B$, $\tr_3 = 2$, $\val_3 = 5$,
	\item Rider 4: $\origin_4 = B$, $\dest_4 = B$, $\tr_4 = 2$, $\val_4 = 4$.
\end{itemize}
\end{multicols}
\vspace{-0.5em}

\newcommand{\nodeScaleI}{0.8}
\newcommand{\hdistI}{3.5cm}
\newcommand{\vdistI}{1.5cm}	

\begin{figure}[t!]
\centering
\begin{tikzpicture}[->,>=stealth',shorten >=1pt,auto,node distance=2cm,semithick][font = \small]
\tikzstyle{vertex}=[fill=white,draw=black,text=black,scale=0.9]

\node[state]         (A0) [scale = \nodeScaleI] {$A,0$};
\node[state]         (B0) [above of=A0, node distance = \vdistI, scale=\nodeScaleI] {$B,0$};

\node[state]         (A1) [right of=A0, node distance = \hdistI, scale=\nodeScaleI] {$A,1$};
\node[state]         (B1) [right of=B0, node distance = \hdistI, scale=\nodeScaleI] {$B,1$};

\node[state]         (A2) [right of=A1, node distance = \hdistI, scale=\nodeScaleI] {$A,2$}; 
\node[state]         (B2) [right of=B1, node distance = \hdistI, scale=\nodeScaleI] {$B,2$};
\node[state]         (A3) [right of=A2, node distance = \hdistI, scale=\nodeScaleI] {$A,3$}; 
\node[state]         (B3) [right of=B2, node distance = \hdistI, scale=\nodeScaleI] {$B,3$};

\path (B1) edge	node[pos=0.5, sloped, below] {$\val_1=10$} (B2);
\path (A1) edge	node[pos=0.5, sloped, below] {$\val_2=8$} (A2);
\path (B2) edge	node[pos=0.5, sloped, below] {$\val_3=5$, $\val_4 = 4$} (B3);

\node[text width=3cm] at (-0.3, 0) {Driver 1};
\node[text width=3cm] at (8, 2.2) {Driver 2};

\end{tikzpicture}
\caption{The economy in Example~\ref{exmp:driver_entrance}, with two locations, three time periods and four riders. 
\label{fig:exmp_driver_entrance}}
\end{figure}
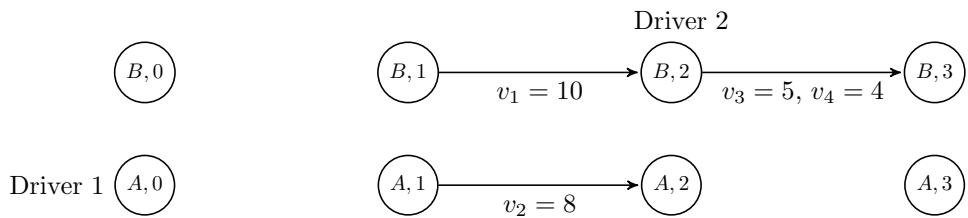

Under the STP mechanism, if both drivers report their entrance location and time truthfully, the welfare-optimal plan dispatches driver $1$ to take the path $((A,B,0),~(B,B,1),~(B,B,2))$ and pick up riders $1$ and $3$. Driver $2$ is dispatched to pick up rider $4$, and her payment would be $0$, the welfare gain from an additional driver entering at $(B,2)$.

However, if driver $2$ reports $(B,1)$ as her entering time and location, then under the STP mechanism, driver $1$ would be dispatched to go to $(A,1)$ to pick up rider $2$ at time $1$. When time $1$ comes, driver $2$ fails to enter at $(B,1)$, and regardless of any future entrance of driver $2$, the optimal plan at time $1$ is for driver $1$ to pick up rider $2$. When time $2$ comes, driver $2$ can then decide to actually enter at location $B$. The mechanism would replan again, dispatching driver $2$ to pick up rider $3$. The welfare gain from an additional driver at $(B,2)$ would now be $4$, thus the driver's new payment would be $4$, and this is a useful deviation. 
\qed
\end{example}

\subsection{Naive Update of Static Plans}

The last example in this section shows that a mechanism that always re-computes a driver-optimal plan at all times is not envy-free for the drivers, and may not be incentive compatible for drivers, depending on how the mechanism breaks ties among different driver-optimal plans.

\begin{example}[Repeated driver-optimal static CE mechanism]\label{exmp:always_compute_driver_opt} 

\newcommand{\nodeScaleV}{0.8}
\newcommand{\hdistV}{2.8}
\newcommand{\vdistV}{1.8}	

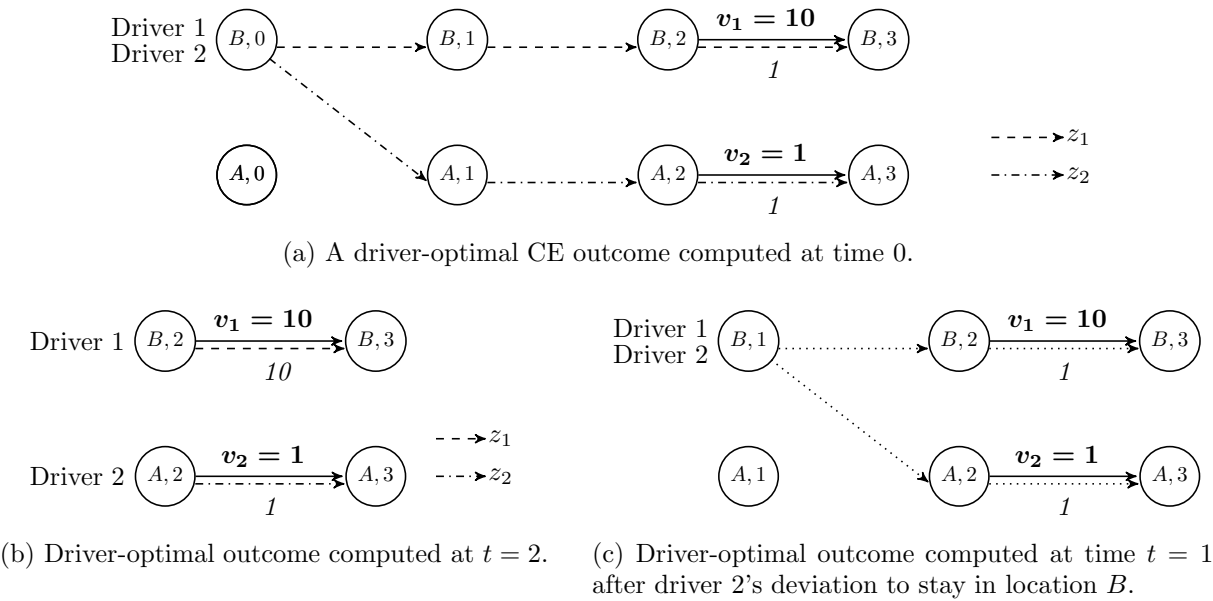
\begin{figure}[htpb!]
\centering
    
\begin{subfigure}[t]{1\textwidth}
\centering
\begin{tikzpicture}[->,>=stealth',shorten >=1pt, auto, node distance=2cm,semithick][font = \small]
\tikzstyle{vertex} = [fill=white,draw=black,text=black,scale=0.9]

\node[state]         (A0) [scale = \nodeScaleV] {$A,0$};
\node[state]         (B0) [above of=A0, node distance = \vdistV cm, scale=\nodeScaleV] {$B,0$};
\node[state]         (A0) [scale = \nodeScaleV] {$A,0$};

\node[state]         (A1) [right of=A0, node distance = \hdistV cm, scale=\nodeScaleV] {$A,1$};
\node[state]         (B1) [right of=B0, node distance = \hdistV cm, scale=\nodeScaleV] {$B,1$};

\node[state]         (A2) [right of=A1, node distance = \hdistV cm, scale=\nodeScaleV] {$A,2$}; 
\node[state]         (B2) [right of=B1, node distance = \hdistV cm, scale=\nodeScaleV] {$B,2$};

\node[state]         (A3) [right of=A2, node distance = \hdistV cm, scale=\nodeScaleV] {$A,3$}; 
\node[state]         (B3) [right of=B2, node distance = \hdistV cm, scale=\nodeScaleV] {$B,3$};

\path (B2) edge	node[pos=0.45, sloped, above] {$\boldsymbol{\val_1 = 10}$} (B3);
\path (A2) edge	node[pos=0.45, sloped, above] {$\boldsymbol{\val_2 = 1}$} (A3);

\node[text width=3cm] at (-0.3, \vdistV + 0.17) {Driver 1};
\node[text width=3cm] at (-0.3, \vdistV -0.17) {Driver 2};

\draw[dashed] (0.4, \vdistV - 0.1) -- (\hdistV - 0.38, \vdistV  - 0.1);
\draw[dashed] (\hdistV + 0.4, \vdistV - 0.1) -- (\hdistV * 2 - 0.38, \vdistV  - 0.1);
\draw[dashed] (\hdistV *2 + 0.4, \vdistV - 0.1) -- (\hdistV * 3 - 0.38, \vdistV  - 0.1);

\draw[dashdotted] ( 0.3, \vdistV  - 0.25) -- (\hdistV *1 - 0.38, - 0.1);
\draw[dashdotted] (\hdistV + 0.4, - 0.1) -- (\hdistV * 2 - 0.38, - 0.1);
\draw[dashdotted] (\hdistV * 2 + 0.4, - 0.1) -- (\hdistV * 3 - 0.38, - 0.1);

\draw[dashed](\hdistV * 3 + 1.5, 0.5) -- (\hdistV * 3 + 2.5, 0.5);
\draw[dashdotted](\hdistV * 3 + 1.5, 0) -- (\hdistV * 3 + 2.5, 0);

\node[text width=0.4cm] at (\hdistV * 3 + 2.7, 0.5) {$z_1$};
\node[text width=0.4cm] at (\hdistV * 3 + 2.7, 0) {$z_2$};

\node[text width=1cm] at (\hdistV * 2 + 1.8, \vdistV - 0.4) {\emph{1}};
\node[text width=1cm] at (\hdistV * 2 + 1.8, -0.4) {\emph{1}};
  
\end{tikzpicture}
\caption{A driver-optimal CE outcome computed at time $0$. \label{fig:exmp_repOpt_time0Plan}}
\end{subfigure}

\vspace{1em}
\begin{subfigure}[t]{0.45\textwidth}
\centering
\begin{tikzpicture}[->,>=stealth',shorten >=1pt, auto, node distance=2cm,semithick][font = \small]
\tikzstyle{vertex} = [fill=white,draw=black,text=black,scale=0.9]

\node[state]         (A2) [scale = \nodeScaleV] {$A,2$};
\node[state]         (B2) [above of=A2, node distance = \vdistV cm, scale=\nodeScaleV] {$B,2$};

\node[state]         (A3) [right of=A2, node distance = \hdistV cm, scale=\nodeScaleV] {$A,3$}; 
\node[state]         (B3) [right of=B2, node distance = \hdistV cm, scale=\nodeScaleV] {$B,3$};

\path (B2) edge	node[pos=0.45, sloped, above] {$\boldsymbol{\val_1 = 10}$} (B3);
\path (A2) edge	node[pos=0.45, sloped, above] {$\boldsymbol{\val_2 = 1}$} (A3);

\node[text width=3cm] at (-0.3, \vdistV) {Driver 1};
\node[text width=3cm] at (-0.3, 0) {Driver 2};

\draw[dashed] (0.4, \vdistV - 0.1) -- (\hdistV - 0.38, \vdistV  - 0.1);
\draw[dashdotted] (0.4, - 0.1) -- (\hdistV - 0.38, - 0.1);

\draw[dashed](\hdistV + 0.8, 0.5) -- (\hdistV * 1 + 1.5, 0.5);
\draw[dashdotted](\hdistV + 0.8, 0) -- (\hdistV * 1 + 1.5, 0);

\node[text width=0.4cm] at (\hdistV  + 1.7, 0.5) {$z_1$};
\node[text width=0.4cm] at (\hdistV + 1.7, 0) {$z_2$};

\node[text width=1cm] at ( 1.8, \vdistV - 0.4) {\emph{10}};
\node[text width=1cm] at (1.8, -0.4) {\emph{1}};

\end{tikzpicture}
\caption{Driver-optimal outcome computed at $t=2$.\label{fig:exmp_repOpt_time2Plan}}
\end{subfigure}%
\hspace{1em}
\begin{subfigure}[t]{0.5\textwidth}
\centering
\begin{tikzpicture}[->,>=stealth',shorten >=1pt, auto, node distance=2cm,semithick][font = \small]
\tikzstyle{vertex} = [fill=white,draw=black,text=black,scale=0.9]

\node[state]         (A1) [scale = \nodeScaleV] {$A,1$};
\node[state]         (B1) [above of=A1, node distance = \vdistV cm, scale=\nodeScaleV] {$B,1$};

\node[state]         (A2) [right of=A1, node distance = \hdistV cm, scale=\nodeScaleV] {$A,2$}; 
\node[state]         (B2) [right of=B1, node distance = \hdistV cm, scale=\nodeScaleV] {$B,2$};

\node[state]         (A3) [right of=A2, node distance = \hdistV cm, scale=\nodeScaleV] {$A,3$}; 
\node[state]         (B3) [right of=B2, node distance = \hdistV cm, scale=\nodeScaleV] {$B,3$};

\path (B2) edge	node[pos=0.45, sloped, above] {$\boldsymbol{\val_1 = 10}$} (B3);
\path (A2) edge	node[pos=0.45, sloped, above] {$\boldsymbol{\val_2 = 1}$} (A3);

\node[text width=3cm] at (-0.3, \vdistV+0.17) {Driver 1};
\node[text width=3cm] at (-0.3, \vdistV-0.17) {Driver 2};

\draw[dotted] ( 0.4, \vdistV - 0.1) -- (\hdistV  - 0.38, \vdistV  - 0.1);
\draw[dotted] (\hdistV *1 + 0.4, \vdistV - 0.1) -- (\hdistV * 2 - 0.38, \vdistV  - 0.1);

\draw[dotted] ( 0.3, \vdistV  - 0.25) -- (\hdistV *1 - 0.38, - 0.1);
\draw[dotted] (\hdistV *1 + 0.4, - 0.1) -- (\hdistV * 2 - 0.38, - 0.1);

\node[text width=1cm] at (\hdistV * 1 + 1.8, \vdistV - 0.4) {\emph{1}};
\node[text width=1cm] at (\hdistV * 1 + 1.8, -0.4) {\emph{1}};

\end{tikzpicture}
\caption{Driver-optimal outcome computed at time $t=1$ after driver 2's deviation to stay in location $B$. \label{fig:exmp_repOpt_deviation}}
\end{subfigure}%
\caption{The economy in Example~\ref{exmp:always_compute_driver_opt}, and driver-optimal CE plans computed at different states.\label{fig:exmp_always_compute_driver_opt} }    
\end{figure}

Consider the economy as illustrated in Figure~\ref{fig:exmp_always_compute_driver_opt} where all costs are zero, and a mechanism that repeatedly computes a driver-optimal CE outcome at all times, regardless of whether any deviation happened. Under the CE outcome computed at time $0$ as in Figure~\ref{fig:exmp_repOpt_time0Plan}, both riders $1$ and $2$ are picked up, and the prices for the trips are both $\price_{B,B,2} = \price_{A,A,2} = 1$. 

Assume that both drivers follow the mechanism until time $2$, the new driver-optimal outcome computed at time $2$ is as illustrated in Figure~\ref{fig:exmp_repOpt_time2Plan}, where the price for trip $(B,B,2)$ becomes $10$, the highest market-clearing price at this state. This shows that the ``time $0$" plan under the mechanism, i.e. the actual outcome where all drivers follow the mechanism's dispatch at all times, is not envy-free for drivers, since the total payment to driver $1$ is $10$, higher than that of driver $2$.

Now consider the scenario where driver $2$ stayed in location $B$ at time $0$ instead of following the dispatch and relocate to $A$. Under the optimal CE outcome from time $1$ onward as in Figure~\ref{fig:exmp_repOpt_deviation}, the two drivers take the paths $((B,B,1),~(B,B,2))$ and $((B,A,1),~(A,A,2))$ respectively, and pick up both riders. The IC property of the mechanism now depends on how the mechanism breaks ties among driver-optimal CE plans, but as long as the mechanism dispatches driver $2$ to take the path $((B,B,1),~(B,B,2))$ with non-zero probability (which would be the case if ties are broken uniformly at random), this would be a useful deviation for driver $2$. Once driver $2$ followed the plan and reach $(B,2)$ as driver $1$ arrived at $(A,2)$, the newly updated price for the trip $(B,B,2)$ would again become $10$, higher than driver $2$'s original payment. \qed
\end{example}

\subsection{On the Smoothness of Prices} \label{appx:price_smoothness}

Competitive equilibrium requires that all feasible paths of the a driver have the same total payoff. Under the myopic pricing mechanism, big gaps in prices for neighboring locations and times violate this and incentivize drivers to strategize, whereas the competitive equilibrium prices are ``more smooth'' in space and time (see Example~\ref{exmp:super_bowl} and the simulation results presented in Appendix~\ref{sec:illustrations}).

Under our model, it is neither necessary nor sufficient for prices to be smooth in the sense that the price gaps are bounded in space and time. 
It is possible to construct economies to show that there does not exist an upper bound on how much the price can change. Consider the example as illustrated in Figure~\ref{fig:gap_time}, where it is welfare-optimal for the only driver to pick up riders $1$ and $2$. The price of the $(A,B)$ trip is upper bounded by $\epsilon$ at time $0$, but needs to be at least $1$ at time $1$ since rider $3$ is not picked up. Similarly, the example illustrated in Figure~\ref{fig:gap_space} shows that big price gap in space is necessary. 
	
\newcommand{\nodeScaleI}{0.8}
\newcommand{\hdistI}{2.5cm}
\newcommand{\vdistI}{1.4cm}

\begin{figure}[t!]
\centering
\begin{subfigure}[t]{0.49\textwidth}
\begin{tikzpicture}[->,>=stealth',shorten >=1pt,auto,node distance=2cm,semithick][font = \small]
\tikzstyle{vertex}=[fill=white,draw=black,text=black,scale=0.9]

\node[state]         (A0) [scale = \nodeScaleI] {$A,0$};
\node[state]         (B0) [above of=A0, node distance = \vdistI, scale=\nodeScaleI] {$B,0$};
\node[state]         (A1) [right of=A0, node distance = \hdistI, scale=\nodeScaleI] {$A,1$};
\node[state]         (B1) [right of=B0, node distance = \hdistI, scale=\nodeScaleI] {$B,1$};
\node[state]         (A2) [right of=A1, node distance = \hdistI, scale=\nodeScaleI] {$A,2$}; 
\node[state]         (B2) [right of=B1, node distance = \hdistI, scale=\nodeScaleI] {$B,2$};

\path (A0) edge	node[pos=0.5, sloped, above] {$\val_1=\epsilon$} (B1);
\path (B1) edge	node[pos=0.5, sloped, above] {$\val_2 = 1$} (B2);
\path (A1) edge	node[pos=0.35, sloped, above] {$\val_3 = 1$} (B2);

\node[text width=3cm] at (-0.3, 0) {Driver 1};

\end{tikzpicture}
\caption{$\price_{A,B,1} \gg \price_{A,B,0}$.\label{fig:gap_time}}
\end{subfigure}
\begin{subfigure}[t]{0.49\textwidth}
\begin{tikzpicture}[->,>=stealth',shorten >=1pt,auto,node distance=2cm,semithick][font = \small]
\tikzstyle{vertex}=[fill=white,draw=black,text=black,scale=0.9]

\node[state]         (A0) [scale = \nodeScaleI] {$A,0$};
\node[state]         (B0) [above of=A0, node distance = \vdistI, scale=\nodeScaleI] {$B,0$};
\node[state]         (A1) [right of=A0, node distance = \hdistI, scale=\nodeScaleI] {$A,1$};
\node[state]         (B1) [right of=B0, node distance = \hdistI, scale=\nodeScaleI] {$B,1$};
\node[state]         (A2) [right of=A1, node distance = \hdistI, scale=\nodeScaleI] {$A,2$}; 
\node[state]         (B2) [right of=B1, node distance = \hdistI, scale=\nodeScaleI] {$B,2$};

\path (A0) edge	node[pos=0.5, sloped, above] {$\val_1 = 1$} (B1);
\path (B1) edge	node[pos=0.5, sloped, above] {$\val_2 = \epsilon$} (B2);
\path (A1) edge	node[pos=0.35, sloped, above] {$\val_3 = 1$} (B2);

\node[text width=3cm] at (-0.3, 0) {Driver 1};

\end{tikzpicture}
\caption{$\price_{A,B,1} \gg \price_{B,B,1}$. \label{fig:gap_space}}
\end{subfigure}
\caption{Example economies illustrating that price gaps can be unbounded in space and time.  
\label{fig:no_bound_on_price_gap}}
\end{figure}

\subsection{Asymmetry and Triangle Inequality}%
\label{appx:triangle_ineq}

For simplicity of notation, we assumed in the body of the paper that the distance $\dist(a,b)$ between a pair of locations $a, b \in \loc$ is fixed. 
Alternatively, we can  allow the distance between a pair of locations to change over time, modeling the changes in traffic conditions, i.e. a trip from $a$ to $b$ starting at time $t$ ends at time $t + \dist(a,b,t)$. This does not affect the results presented in this paper. %

The triangle inequalities in distance and trip costs, i.e.  $\dist(a, a') \leq \dist(a, b) + \dist(b, a')$ and $\cost_{a,a',t} \leq \cost_{a,b,t} + \cost_{b,a',t+\dist(a,b)}$ for all $a, a', b \in \loc$ and all $t \in \timeSet$, are not necessary for our results on drivers' incentives. One concern in practice is that riders may try to break a long trip into a few shorter trips in order to get a lower total price. %
With the triangle inequalities, such strategies are not useful, since  the shorter trips together take a longer time and incur a higher cost, and the total payment, which equals the difference in the welfare gain from an extra driver at the (origin, starting time) and the (destination, ending time), plus the total costs, is higher if the ending time is later.

\section{Relation to the Literature} \label{appx:literature}

\subsection{Dynamic VCG} \label{appx:dynamic_vcg}

The dynamic VCG mechanisms~\cite{bergemann2010dynamic,cavallo2009efficient} truthfully implement efficient decision policies, where agent receive private information over time. Under the dynamic VCG mechanisms, the payment to agent $i$ in each period is equal to the flow marginal externality imposed on the other agents by its presence in this period %
 only~\cite{cavallo2009efficient}. 

The dynamic VCG mechanism can be adapted for the ridesharing problem, where there is no uncertainty in the transition of states (the actions taken by all drivers at time $t$ fully determines the state of the platform at time $t+1$) and no private information from the drivers' side (the location of the driver is known to the mechanism and we assume homogeneous driver costs and no location preferences). We actually show that a variation of the driver-optimal dynamic mechanism that we discussed in Section~\ref{sec:st_pricing}, where driver payments are ``shifted" over time, is equivalent to the dynamic VCG mechanism. 

The dynamic VCG mechanism for ridesharing, however, fails to be incentive compatible, since some drivers may be paid a negative payment for certain periods of time, and the drivers would have incentive to decline the dispatch at such times to avoid making the payments. This is because the existence of some driver for only one period of time may exert negative externality on the rest of the economy by inducing seemingly efficient actions that result in suboptimal positioning of the rest of the drivers in the subsequent time periods. 

\bigskip

We illustrate this via analyzing the economy introduced in Example~\ref{exmp:toy_econ_2}, as shown in Figure~\ref{fig:toy_econ_2}.

\addtocounter{example}{-8}
\begin{example}[Continued]

 Without driver $1$, driver $2$ would be dispatched to pick up riders $1$ and $2$ and achieve a total welfare of $6$. With driver $1$, one %
 welfare-optimal dispatching plan sends driver $1$ to take the path $((B,C,0),~(C,C,1)~,(C,C,2))$ and sends driver $2$ to take the path $((B,A,0),~(A,A,2))$. 

At time $0$, driver $1$ takes trip $(B,C,0)$ and driver $2$ takes trip $(B,A,0)$. Driver $1$ contributes $0$ to welfare at time $0$ since she did not pick up any driver.
When time $1$ comes, if driver $1$ appears for only one period of time, the optimal welfare achieved by the rest of the economy would only be $1$--- driver $1$ disappears and driver $2$ picked up rider $3$. Therefore, the payment to driver $1$ at time $0$ would be $-5$, since exerted a negative externality of $5$ on the rest of the economy by appearing only at time $0$. Similarly, we can compute that the payment to driver $1$ at times $1$ and $2$ would be $1$ and $5$ respectively, giving her a total payment of $-5+ 1 + 5 = 1$.

Now consider the scenario where driver $1$ declines the dispatch, refuses to make the payment and stays in location $B$, and assume that driver $2$ still followed the mechanism and drove to location $A$. When time $1$ comes, driver $1$ would again be dispatched to drive to $C$ at time $1$ and pick up rider $2$ at time $2$. We can check that the payment to driver $1$ at time $1$ would be $0$, and the payment to driver $1$ at time $2$ would be $5$--- the amount the rest of the economy gains from the existence of driver $2$ at that time. This is a useful deviation, thus the dynamic VCG mechanism where drivers are allowed to freely decline trips is not IC.
\qed
\end{example}
\addtocounter{example}{7}

\subsection{Trading Networks} \label{appx:trading_network}

The literature on trading networks studies economic models where agents in a network can trade via bilateral contracts~\citep{hatfield2013stability,hatfield2015chain,ostrovsky2008stability}. Efficient, competitive equilibrium outcomes exist when agents' valuation functions satisfy the \emph{full substitution} property, and the utilities of agents on either end of the trading network form lattices.

Assume that all trip costs and early exit costs are zero, the %
computation of optimal dispatching in a ridesharing platform can be formulated as a trading network problem in the following way: 
\begin{enumerate}[$\bullet$]
	\setlength\itemsep{0.0em}
	\item For each driver or rider, there is a node in the network. 
	\item For each driver $i \in \driverSet$ and each rider $j \in \riders$, there is an edge from $i$ to $j$ if $\tr_j \geq \te_i + \dist(\re_i, \origin_j)$, i.e. driver $i$ is able to pick up rider $j$ if she drivers directly to $\origin_j$ after entering.
	\item For any two riders $j$ and $j'$ in $\driverSet$, there is an edge from $j$ to $j'$ if (i) $\tr_j + \dist(\origin_j, \dest_j) + \dist(\dest_j, \origin_{j'}) \leq \tr_{j'}$ assuming $\dest_j \neq \origin_{j'}$ or (ii) $\tr_j + \dist(\origin_j, \dest_j) \leq \tr_{j'}$ if $\dest_j = \origin_{j'}$. Intuitively, riders $j$ can trade to rider $j'$  if a driver is still able to pick up rider $j'$ after dropping off rider $j$.
\end{enumerate}

What is being traded in the network is the right to use the car over the rest of the planning horizon. Each driver is able to trade with at most one rider. A driver's utilities is zero if she does not trade, and her utility is equal to the her payment if she did trade. Each rider values buying the right to use at most one car, and values it at $\val_j$. If she did buy the right to use one car, her utility is $\val_j$ minus the price she paid to buy the right to use the car, plus the payment she collected from the rider who bought the right to use the car from her. Riders that did not buy a car cannot sell, and this can be modeled by the riders valuing such contracts at $-\infty$.

Drivers costs can also be handled, by introducing additional vertices corresponding to location and time pairs (which can be interpreted as intermediaries in the market at a particular location and time, who need to buy and sell the rights to use the same number of cars, from this location and time onward).

With existing results in the trading network literature, we can show the existence of welfare-optimal, competitive equilibrium outcomes in the trading network, and we can also establish the lattice structure of drivers' utility under all CE outcomes.
But this does not help with proving our main result on the incentive compatibility of the STP mechanism. In the trading network, a driver makes only one decision, which is whether to sell the right to use the car for the rest of the planning horizon, and if so, to which rider or intermediary in the market. This is different from our setting, where the driver actively makes a decision on how to act every period of time (unless she's en-route driving a rider to the destination). 
For the same reason, this mapping to a trading network 
would also complicate arguments in regard to establishing the existence of anonymous CE prices and the structure of CE, and in regard to proving core-equivalence.

\section{Additional Simulation Results} \label{appx:additional_simulations}

We present in this section the additional simulation results that are omitted from the body of the paper.

\subsection{Morning Rush Hour}

Figures~\ref{fig:ex2_drivers_appx} and~\ref{fig:ex2_prices_appx} show the average number of drivers and average prices for each of the five trips that are not analyzed in Section~\ref{sec:sim_morning_rush} for the morning rush hour scenario.

\begin{figure}[hpbt!]
\centering
\begin{subfigure}[t]{0.4\textwidth}
	\centering
    \includegraphics[height= \imageHeight in]{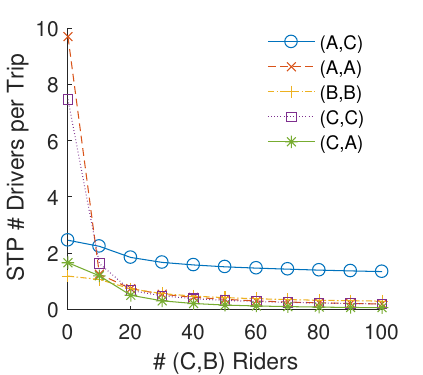}
    \caption{The STP mechanism. \label{fig:ex2_morning_stp_drivers_appx}}
\end{subfigure}%
\begin{subfigure}[t]{0.4\textwidth}
	\centering
    \includegraphics[height=\imageHeight in]{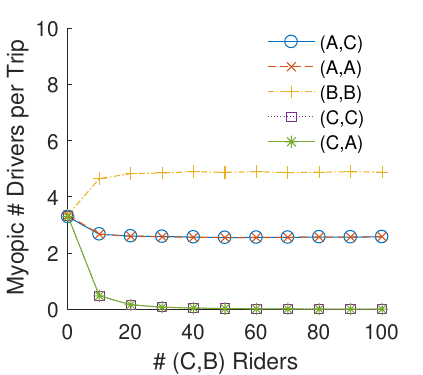}
    \caption{The myopic pricing mechanism.  \label{fig:ex2_morning_myopic_drivers_appx}}
\end{subfigure}%
\caption{Additional comparison of the number of drivers per trip for the morning rush hour. \label{fig:ex2_drivers_appx} }    
\end{figure}

The STP mechanism dispatches a reasonably high number of drivers to the $(A,C)$ trip since there is a high demand for drivers at $C$ (see Figure~\ref{fig:ex2_drivers_appx}).
In contrast, though the myopic pricing mechanism is not sending too many drivers from $C$ to $C$ or $A$, many drivers linger around $B$ due to the excessive supply, and the mechanism did not relocate more driver from $A$ to $C$ than from $A$ to $A$, despite the imbalance in demand in these locations. Prices as shown in Figure~\ref{fig:ex2_prices_appx} are also intuitive and easy to interpret.

\begin{figure}[hpbt!]
\centering
\begin{subfigure}[t]{0.4\textwidth}
	\centering
    \includegraphics[height= \imageHeight in]{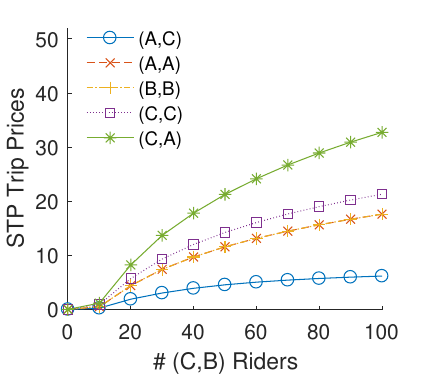}
    \caption{The STP mechanism.\label{fig:ex2_morning_stp_prices_appx}}
\end{subfigure}%
\begin{subfigure}[t]{0.4\textwidth}
	\centering
    \includegraphics[height=\imageHeight in]{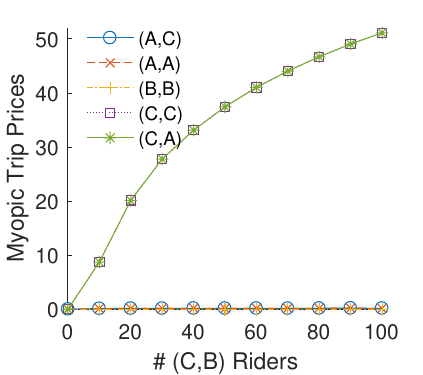}
    \caption{The myopic pricing mechanism. \label{fig:ex2_morning_myopic_prices_appx}}
\end{subfigure}%
\caption{Additional comparison trip prices for the morning rush hour. \label{fig:ex2_prices_appx} }
\end{figure}

\end{document}